\documentclass[twoside,11pt]{article}

\usepackage{blindtext}

\usepackage{amsmath,amsthm,amssymb}
\usepackage{float}
\usepackage[abbrvbib, preprint]{jmlr2e}
\usepackage{breqn}
\usepackage{multirow}
\usepackage{makecell}
\usepackage{booktabs}
\usepackage{mathtools}
\usepackage{url}
\usepackage{bm}
\usepackage{graphicx}
\usepackage{setspace}
\usepackage{enumitem}
\usepackage{tabularx}
\usepackage{pifont}
\usepackage{algorithm}
\usepackage{algpseudocode}
\usepackage{mathrsfs}
\usepackage[scr=boondox]{mathalpha}
\usepackage{array}
\newcolumntype{L}[1]{>{\raggedright\arraybackslash}m{#1}}
\newcolumntype{C}[1]{>{\centering\arraybackslash}m{#1}}
\hypersetup{hidelinks}

\usepackage{aliascnt}
\usepackage[capitalize,nameinlink]{cleveref}

\crefname{appendix}{appendix}{appendices}
\Crefname{appendix}{Appendix}{Appendices}
\crefname{subappendix}{appendix}{appendices}
\Crefname{subappendix}{Appendix}{Appendices}
\crefname{subsubappendix}{appendix}{appendices}
\Crefname{subsubappendix}{Appendix}{Appendices}

\crefname{onlineappendix}{Online Appendix}{Online Appendices}
\Crefname{onlineappendix}{Online Appendix}{Online Appendices}

\theoremstyle{plain}

\newaliascnt{lemma}{theorem}
\newtheorem{lemma}[lemma]{Lemma}
\aliascntresetthe{lemma}

\newaliascnt{proposition}{theorem}
\newtheorem{proposition}[proposition]{Proposition}
\aliascntresetthe{proposition}

\newaliascnt{corollary}{theorem}
\newtheorem{corollary}[corollary]{Corollary}
\aliascntresetthe{corollary}

\newaliascnt{definition}{theorem}
\newtheorem{definition}[definition]{Definition}
\aliascntresetthe{definition}

\newaliascnt{assumption}{theorem}
\newtheorem{assumption}[assumption]{Assumption}
\aliascntresetthe{assumption}

\theoremstyle{definition}

\newaliascnt{remark}{theorem}
\newtheorem{remark}[remark]{Remark}
\aliascntresetthe{remark}

\theoremstyle{plain}

\crefname{theorem}{Theorem}{Theorems}
\crefname{lemma}{Lemma}{Lemmas}
\crefname{proposition}{Proposition}{Propositions}
\crefname{corollary}{Corollary}{Corollaries}
\crefname{definition}{Definition}{Definitions}
\crefname{assumption}{Assumption}{Assumptions}
\crefname{remark}{Remark}{Remarks}

\Crefname{theorem}{Theorem}{Theorems}
\Crefname{lemma}{Lemma}{Lemmas}
\Crefname{proposition}{Proposition}{Propositions}
\Crefname{corollary}{Corollary}{Corollaries}
\Crefname{definition}{Definition}{Definitions}
\Crefname{assumption}{Assumption}{Assumptions}
\Crefname{remark}{Remark}{Remarks}
\crefname{equation}{eq.}{eqs.}
\Crefname{equation}{Eq.}{Eqs.}

\usepackage{lastpage}
\jmlrheading{27}{2026}{1-\pageref{LastPage}}{7/26}{}{26-0000}{Dayi Li}

\ShortHeadings{Sampling-Based Adaptive Active Learning}{Li}
\firstpageno{1}

\newcommand{\para}{\bm \theta}
\newcommand{\dataseq}{\mathbf X}

\newcommand{\eps}{\varepsilon}

\newcommand{\dee}{\mathrm{d}}

\newcommand{\TV}{\mathrm{TV}}

\DeclareMathOperator{\EE}{\mathbb{E}}
\DeclareMathOperator{\PP}{\mathbb{P}}

\DeclareMathOperator{\argmax}{\arg\max}

\newcommand{\Reals}{\mathbb{R}}
\newcommand{\Nats}{\mathbb{N}}

\def\[#1\]{\begin{equation}\begin{aligned}#1\end{aligned}\end{equation}}
\def\*[#1\]{\begin{align*}#1\end{align*}}

\begin{document}

\title{Robust Surrogate-Based Bayesian Inference via Sampling-Based Adaptive Active Learning (SALE)}

\author{\name Dayi Li \email dayi.li@mail.utoronto.ca \\
       \addr Department of Statistical Sciences,\\
       David A. Dunlap Department of Astronomy \& Astrophysics \\
        University of Toronto\\
        700 University Avenue, 9th Floor\\
        Toronto, ON M5G 1Z5, Canada}

\editor{}

\maketitle

\begin{abstract}
Bayesian inference is difficult when likelihood evaluations are expensive and
budgets are limited. We propose sampling-based adaptive active learning (SALE),
a Gaussian-process (GP) framework for surrogate-based Bayesian inference. SALE
uses the expected posterior (EP) induced by normalised GP sample paths as a
common sequential-design measure: it defines a posterior-guided search region
and weights uncertainty reduction (UR). A state-dependent rule allocates
evaluations between Bayesian optimisation (BO) for localisation and UR for
calibration. For BO, an annealed objective interpolates between the EP and
Thompson sampling while regularising the query law against surrogate-path
perturbations. For UR, we introduce an ideal EP-weighted rule and a
computationally feasible proxy. Under a Bayesian GP framework, we characterise
the annealed objective's stability--bias trade-off through perturbation and
Bayesian regret bounds, derive explicit budget-dependent expected
total-variation control for the ideal EP-weighted UR, and establish an expected
total-variation rate for the implemented proxy. Across analytic benchmarks and
simulated likelihoods, SALE reduces total-variation error across all considered settings while avoiding severe failures seen under several external baselines. Econometric and astrophysical examples demonstrate its practical value.
\end{abstract}

\begin{keywords}
active learning, Bayesian inference, expensive likelihood, Gaussian process,
surrogate models
\end{keywords}

\section{Introduction}
\label{sec:intro}

Standard Bayesian inference methods, such as Markov chain Monte Carlo (MCMC), become difficult when the likelihood or log-posterior evaluation is expensive. This occurs when evaluating the likelihood requires a simulator
\citep{pellejeroibanez2020cosmological,zeghal2025sbiwl,lovell2025learning},
a parameter-dependent normalising constant in the likelihood
\citep{park2018bayesian,park2020function,matsubara2024generalized}, repeated
integration over latent structure \citep{bivand2014approximate,gomez2018markov},
or the numerical solution of an inverse problem
\citep{lan2023spatiotemporal,bingham2024inverse,kim2025enhancing}. Standard
MCMC may require many more evaluations than the available
budget permits.

Let \(f(\para)\), \(\para\in\Omega\subseteq\Reals^d\), denote the expensive
log-unnormalised posterior. Surrogate-based Bayesian inference learns a cheap
approximation to \(f\) from a limited evaluation budget, commonly using a
Gaussian process \citep[GP;][]{rasmussen2006gaussian} as the surrogate model. Active learning \citep[AL;][]{settles2012active} then uses the evolving GP posterior to sequentially select where \(f\) is evaluated. Under this framework, inference has been pursued through direct posterior estimation \citep{gutmann2016bolfi,kandasamy2017query,wang2018adaptive,
elgammal2023gpry,li2026boss}, variational posterior approximation
\citep{acerbi2018vbmc}, and GP-emulated MCMC
\citep{jarvenpaa2024approximate}.

In practice, computational budgets typically permit only one sequential design run, with little opportunity for restarts or problem-specific tuning. The key is that accurate posterior inference generally does not require uniform convergence of the surrogate on $\Omega$, instead it should only evaluate $f$ at inference-relevant regions. Robust finite-budget inference therefore needs to solve two learning problems. \emph{Localisation} identifies regions carrying appreciable posterior mass. \emph{Calibration} ensures approximation accuracy throughout those regions. 

Within direct posterior estimation, one line of work uses Bayesian optimisation \citep[BO;][]{shahriari2016review} to locate high log-posterior values and the posterior mode \citep{gutmann2016bolfi,kim2025enhancing,li2026boss}. These methods can be effective when posterior mass is concentrated around a dominant mode. However, they typically inherit generic BO strategies that treat $f$ as a black-box objective, so they do not exploit the log-density structure of $f$ and may explore inference-irrelevant regions. Standard optimisation guarantees also do not control the resulting surrogate posterior approximation.

A second line of work uses uncertainty reduction (UR) for calibration.
Posterior-aware UR criteria either weight GP uncertainty by the current
surrogate posterior
\citep{kandasamy2017query,wang2018adaptive,jarvenpaa2019efficient,
jarvenpaa2021parallel,llorente2020adaptive,surer2024sequential,
villani2024posterior,lartaud2024sequential}
or restrict UR to an estimated high-posterior region
\citep{picheny2010adaptive,bect2019supermartingale,lartaud2024sequential}. However, these strategies provide no common uncertainty-propagated sequential design measure for coordinating localisation and calibration. Tension also exists between theory and finite-budget practice. Existing budget-dependent analyses typically rely on space-filling or prescribed random design \citep{llorente2020adaptive,helin2024introduction,kim2025enhancing}.
Since coverage is controlled independently of the evolving surrogate,
convergence rates are tractable, but empirical performance may be suboptimal. Posterior-weighted UR is generally more efficient, but its decision depends on the evolving surrogate, creating a feedback loop that is difficult to control theoretically. To our knowledge, no previous analysis gives an explicit
budget-dependent bound on the approximation error of the final surrogate
posterior under a sequential posterior-weighted UR design.

The exclusive use of BO or UR causes a further limitation. Before localisation, posterior-weighted UR can be misled by a highly uncertain weighting measure and spend evaluations in regions with negligible true posterior mass. BO is therefore needed for localisation. After localisation, UR is needed to ensure calibration in other inference-relevant regions. A reliable method must adapt its allocation between BO and UR as the surrogate evolves.

We propose \emph{sampling-based adaptive active learning} (SALE), an AL
framework for GP-surrogate posterior approximation. At iteration \(t\), each GP posterior sample path induces a posterior law on \(\Omega\), and averaging these laws gives the \emph{expected posterior} (EP), denoted by \(\bar\pi_t\). The EP has previously been studied as an uncertainty-propagated posterior estimator and, under the Bayesian GP
setup, minimises conditional expected Kullback--Leibler (KL) loss
\citep{reiser2025uncertainty,roberts2026propagating,
roberts2026surrogatebased}. SALE instead uses the EP as a common
sequential design measure for localisation and calibration. To our knowledge, SALE is the first framework to use the EP as a sequential design law.

\subsection{Contributions}

The main contributions of this paper are as follows.
\begin{enumerate}[itemsep=2pt,topsep=2pt]
\item
We use the EP as a sequential design measure for AL. Samples from \(\bar\pi_t\) define a posterior-guided search region for BO and supply the candidate set and weighting measure for UR. EP therefore guides both localisation and calibration.

\item
We introduce a state-dependent BO--UR allocation rule based on local design
resolution around the incumbent maximiser. It favours BO while resolution is
low and shifts toward UR as it improves. For the allocation rule, we show that UR receives a positive asymptotic fraction of evaluations.

\item
For BO, we introduce the annealed objective (AO), which averages the
\(\tau\)-tempered posterior laws induced by GP sample paths. AO equals
\(\bar\pi_t\) at \(\tau=1\) and approaches Thompson sampling
\citep[TS;][]{russo2018tutorial} as \(\tau\downarrow0\). We show that positive
$\tau$-tempering regularises the AO query law against surrogate-path perturbations,
while the corresponding TV continuity need not hold for TS. We develop
exchange simulated annealing to target AO and derive a Bayesian regret bound
quantifying the associated optimisation penalty of AO.

\item
We define an ideal EP-weighted UR rule and show that its one-step
objective is the expected decrease of a variance functional controlling
the conditional expected total-variation (TV) error between the EP and the
true posterior. This yields an explicit budget-dependent bound on the
expected TV error of the final EP produced from a sequential
posterior-weighted UR design. The ideal rule motivates a practical UR proxy, and we establish an expected-TV rate with the same polynomial budget order as the ideal UR rule, up to additional logarithmic factors.
\end{enumerate}

For empirical assessment, we focus on the robustness of a method, that is, reliable finite-budget posterior accuracy across different problems. We assess it through joint-TV distance reduction, performance variations across runs, and the frequency of severe failures.

We evaluate SALE on analytic benchmarks, simulated likelihoods, and applications in econometrics and astrophysics. Among the methods considered, none is best on every problem in terms of median convergence rate. SALE's main empirical advantage is robustness: it reduces TV across all settings and avoids several severe failures seen under the considered external baselines. The ablation study identifies EP-guided region restriction as the main stabilising component, with state-dependent allocation, AO, and the practical UR proxy providing further gains.

The paper is organised as follows. \Cref{sec:preliminaries} defines the
inferential objects and the standing theoretical framework.
\Cref{sec:related_work} reviews related methods. \Cref{sec:methods} presents
SALE and its main theoretical results. \Cref{sec: empirical} reports the
benchmark and simulation studies, and \Cref{sec: applications} presents the
two applications. \Cref{sec:discussion} concludes. \Cref{app: impl} contains core
implementation and metric details, while \Cref{app: proofs} contains proofs and
auxiliary theoretical results. Extended implementation and reproducibility
details and additional empirical results are provided in the \hyperref[oa:start]{Online Appendix}.

The source code, scripts, data, and result files required to reproduce the numerical results in this paper are archived at \url{https://doi.org/10.5281/zenodo.20319901}. An actively maintained \texttt{R} package is at \url{https://github.com/davidolohowski/SALE}.
\section{Problem Setup and Theoretical Framework}
\label{sec:preliminaries}
\label{sec:setup}

Let \(\dataseq=(\mathbf x_1,\ldots,\mathbf x_n)\in\mathscr X^n\) denote the
fixed observed data. Consider a parametric model
\(
\mathcal P=\{P_{\para}:\para\in\Omega\}
\), where \(\Omega\subset\mathbb R^d\), and suppose the model admits a
likelihood \(\mathcal L(\para\mid\dataseq)\) with respect to a common
dominating measure on \(\mathscr X^n\). Let $\mathscr{B}(\Omega)$ denote the Borel $\sigma$-algebra on $\Omega$. Let \(p_\circ\) be the prior density with respect to the
Lebesgue measure \(\lambda\) on \(\Omega\). Define
\begin{equation*}
    f(\para):=
\log\mathcal L(\para\mid\dataseq)+\log p_\circ(\para), \qquad
\mathcal Z(f)
:=
\int_\Omega \exp\{f(\bm u)\}\,\lambda(\dee\bm u).
\end{equation*}
We refer to $f$ as the \emph{objective function}. The posterior law and
its Lebesgue density are
\begin{equation*}
    \pi_\dataseq(\dee\para)
=
p_\dataseq(\para)\lambda(\dee\para),
\qquad
p_\dataseq(\para)
:= \exp\{f(\para)\}/\mathcal Z(f).
\end{equation*}

We consider cases where \(f(\para)\) can be evaluated, possibly with
noise, but each evaluation is expensive. The goal is to construct a cheap surrogate of $f$ that can be used for subsequent inference by evaluating $f$ at only a small set of design points.

For probability laws \(\pi\) and \(\rho\) on \(\Omega\), write \(
{\mathrm{TV}}(\pi, \rho)
:=
\sup_{A\in\mathscr{B}(\Omega)}|\pi(A)-\rho(A)|\). Additionally, write $[n] = \{1, \dots, n\}$ for $n \in \Nats^+$ and $[n]_0 = \{0, 1, \dots, n\}$ for $n \in \Nats$.

\subsection{GP Surrogate and Sequential Design}
\label{subsec: GP}

We place a GP prior on \(f\),
\(
f\sim\mathcal{GP}(0,\kappa),
\)
where \(\kappa:\Omega\times\Omega\to\mathbb R\) is positive definite. Suppose the observations on $f$ are made sequentially, possibly with $n_0\ge 0$ initial observations. After \(t\) active iterations, \(n_t=n_0+t\) and the design is
\(
\mathcal D_t
=
\{(\para_i,y_i)\}_{i=1}^{n_t}\), with
\(y_i=f(\para_i)+\varepsilon_i\),
\(\varepsilon_i\stackrel{\mathrm{iid}}{\sim}\mathcal N(0,\varsigma^2),
\)
and \(\varsigma^2\ge0\). Write
\(
\bm y_t=(y_1,\ldots,y_{n_t})^\top\),
\(\bm K_t=\{\kappa(\para_i,\para_j)\}_{i,j=1}^{n_t}\),
\(\bm k_t(\para)
=
[\kappa(\para,\para_1),\ldots,\kappa(\para,\para_{n_t})]^\top.
\)
The GP posterior given \(\mathcal D_t\) has mean, variance, and covariance
\*[
\mu_t(\para)
&=
\bm k_t(\para)^\top
(\bm K_t+\varsigma^2 \bm I_{n_t})^{-1}\bm y_t,
\\
\sigma_t^2(\para)
&=
\kappa(\para,\para)
-
\bm k_t(\para)^\top
(\bm K_t+\varsigma^2 \bm I_{n_t})^{-1}\bm k_t(\para),
\\
\kappa_t(\para,\para')
&=
\kappa(\para,\para')
-
\bm k_t(\para)^\top
(\bm K_t+\varsigma^2 \bm I_{n_t})^{-1}\bm k_t(\para').
\]

In this paper, we use squared-exponential (SE) kernels with
coordinate-specific length scales, yielding automatic relevance determination
\citep{rasmussen2006gaussian}. The method also applies to sufficiently smooth
Mat\'ern kernels.

GP-based AL uses the current GP posterior to select the next
query \(\para_{n_t+1}\) through an acquisition rule
\citep{settles2012active}. After observing \(y_{n_t+1}\), \((\para_{n_t+1},y_{n_t+1})\) is added to \(\mathcal D_t\), the GP posterior is updated, and the process repeats until the evaluation budget is exhausted. For concise theoretical notation, after conditioning on \(\mathcal D_0\) we relabel the active query \(\para_{n_t+1}\) as \(\para_{t+1}\); algorithmic
statements retain the original data indexing.

Let
\(
\mathcal H_t:=\sigma(\mathcal D_t)
\)
be the information available before the next query is generated. Conditional
on \(\mathcal H_t\), write
\(
\PP_t(\dee g):=\PP(\dee g\mid\mathcal H_t)
\)
for the GP posterior law and let \(f_t\sim\PP_t\) denote a sample
path drawn from \(\PP_t\).

\subsection{Standing GP Assumptions}
We use the following common framework for theoretical analyses. Assumptions and conditions pertaining only to particular components of the analyses are stated in \Cref{app: proofs}.

\begin{assumption}
\label{ass:main_gp_setup}
Unless stated otherwise, the following conditions hold.
\begin{enumerate}[label=(A\arabic*),leftmargin=*,itemsep=1pt]
\item
The domain \(\Omega\subset\mathbb R^d\) is
compact and convex and has nonempty interior. \(d\) is fixed. There are constants
\(c_\Omega,r_\Omega>0\) such that, with
\(
\mathbb B(\para,r)=\{\bm u:\|\bm u-\para\|<r\}
\),
\begin{equation*}
\lambda\{\mathbb B(\para,r)\cap\Omega\}
\ge c_\Omega r^d,
\quad
\para\in\Omega,\quad 0<r\le r_\Omega.
\end{equation*}

\item
$f$ is generated from the GP prior \(\mathcal{GP}(0,\kappa)\) used by the algorithm, and 
\begin{equation*}
y_i=f(\para_i)+\varepsilon_i,
\quad
\varepsilon_i\stackrel{\mathrm{iid}}{\sim}\mathcal N(0,\varsigma^2),
\end{equation*}
where the $\varepsilon_i$ are independent of \(f\) and of the algorithmic
randomness. The kernel hyperparameters and \(\varsigma^2\) are fixed, and GP
updating is exact. If \(\varsigma^2=0\), redundant observations are
removed before GP conditioning.

\item
At iteration \(t\), the next query is measurable with respect to
\(\mathcal H_t\) and fresh algorithmic randomness that is independent of
\(f\), $\varepsilon_i$, and earlier algorithmic randomness. Thus the
design is non-anticipating. Moreover, conditional on \(\mathcal H_t\), \(f, f_t \stackrel{\mathrm{iid}}{\sim} \PP_t\).

\item 
For every \(t\), the posterior path \(f_t\) has a unique maximiser over
\(\Omega\), \(\PP_t\)-almost surely.

\item
For an open set \(U\supset\Omega\), \(\kappa\) extends to a covariance
kernel on \(U\times U\), satisfies
\(
B_\kappa := \sup_{\para\in\Omega}\kappa(\para,\para)\in(0,\infty)
\),
and the corresponding prior GP on \(U\) has a version with \(C^1(U)\)
sample paths a.s.
Let
\(
\mathscr d_0^2(\para,\para')
:=
\operatorname{Var}\{f(\para)-f(\para')\}
\)
and, for \(j\in[d]\), let
\(
\mathscr d_{0,j}^2(\para,\para')
:=
\operatorname{Var}\{\partial_jf(\para)-\partial_jf(\para')\}.
\)
Writing \(N(\epsilon,\Omega,\mathscr d)\) for the covering number under a
pseudometric \(\mathscr d\), assume
\begin{align*}
J_0 &:= \int_0^{\operatorname{diam}(\Omega,\mathscr d_0)}
\sqrt{\log N(\epsilon,\Omega,\mathscr d_0)}\,\dee\epsilon
<\infty,
\\
J_0^{(j)} &:= \int_0^{\operatorname{diam}(\Omega,\mathscr d_{0,j})}
\sqrt{\log N(\epsilon,\Omega,\mathscr d_{0,j})}\,\dee\epsilon
<\infty,
\qquad j\in [d],
\end{align*}
and
\(
B_{\partial,j}:= \sup_{\para\in\Omega}\operatorname{Var}\{\partial_jf(\para)\}<\infty
\)
for every \(j\). Moreover, for every \(t\), almost surely, \(\mu_t\in C^1(U)\) and
\(
\partial_j\mu_t(\para)
=
\EE\{\partial_jf(\para)\mid\mathcal H_t\}\), \(\forall\para\in\Omega\), \(j\in[d].
\)
\end{enumerate}
\end{assumption}

\Cref{ass:main_gp_setup} describes the theoretical Bayesian experiment. In
practice, GP hyperparameters are estimated and several posterior
calculations are approximated. The theory may also be read conditional on the initial design \(\mathcal D_0\), with active time \(t\) starting after \(\mathcal D_0\). Standard SE kernels satisfy \Cref{ass:main_gp_setup}, as do Mat\'ern kernels with smoothness \(\nu > 1\).

\subsection{Posterior Laws Induced by the Surrogate}
\label{subsec:surrogate_posterior_laws}

\begin{definition}
\label{def: expected posterior}
For a continuous function \(g:\Omega\to\mathbb R\), define
\begin{equation*}
\pi(\dee\para\mid g)
=
\frac{\exp\{g(\para)\}}{\mathcal Z(g)}\,\lambda(\dee\para),
\quad
\mathcal Z(g)
:=
\int_\Omega\exp\{g(\bm u)\}\,\lambda(\dee\bm u).
\end{equation*}
The posterior law based on the GP posterior mean is
\(
\pi_{\mu_t}(\dee\para):=\pi(\dee\para\mid\mu_t).
\)
The \emph{expected posterior (EP)} is
\begin{equation*}
    \bar\pi_t(A)
:=
\int \pi(A\mid g)\,\PP_t(\dee g),
\quad
A\in\mathscr B(\Omega).
\end{equation*}
Equivalently,
\(
\bar\pi_t(\dee\para)
=
\EE_{f_t\sim\PP_t}\{\pi(\dee\para\mid f_t)\}
\). Its Lebesgue density is
\begin{equation*}
    \bar p_t(\para)
:=
\frac{\dee\bar\pi_t}{\dee\lambda}(\para)
=
\int
\frac{\exp\{g(\para)\}}{\mathcal Z(g)}
\,\PP_t(\dee g)
=
\EE\!\left\{
p_\dataseq(\para)
\,\middle|\,
\mathcal H_t
\right\}.
\end{equation*}
\end{definition}

\(\bar\pi_t\) was studied as an uncertainty-propagated inferential object by \citet{reiser2025uncertainty} and \citet{roberts2026propagating,roberts2026surrogatebased}. Its use in SALE is supported by the following result, which is Proposition~1 of \citet{roberts2026propagating} applied conditionally on \(\mathcal H_t\).

\begin{proposition}
\label{prop:pi1_kl_optimal_main}
Under \Cref{ass:main_gp_setup}, let \(\widetilde\pi_t\) be any
\(\mathcal H_t\)-measurable probability law on \(\Omega\). Then, provided the KL loss is well defined,
\begin{equation*}
\EE\!\left\{
\operatorname{KL}(\pi_\dataseq\|\widetilde\pi_t)
\,\middle|\,
\mathcal H_t
\right\}
\ge
\EE\!\left\{
\operatorname{KL}(\pi_\dataseq\|\bar\pi_t)
\,\middle|\,
\mathcal H_t
\right\}
\quad\text{a.s.}
\end{equation*}
\end{proposition}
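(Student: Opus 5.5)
The plan is to reduce the claim to Gibbs' inequality, conditionally on \(\mathcal H_t\). Write the random true posterior density as \(p_\dataseq(\para)=\exp\{f(\para)\}/\mathcal Z(f)\). Under \Cref{ass:main_gp_setup}, \(f\sim\PP_t\) given \(\mathcal H_t\), so \(p_\dataseq\) is a random density, while \(\widetilde\pi_t\) and \(\bar\pi_t\) are fixed given \(\mathcal H_t\). We may assume \(\widetilde\pi_t\ll\lambda\) with density \(\widetilde p_t\). Otherwise, since \(\pi_\dataseq\) has an everywhere positive density (because \(f\) is continuous on compact \(\Omega\)), absolute continuity fails on a set that is null for \(\widetilde\pi_t\)'s singular part. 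Only the absolutely continuous part of \(\widetilde\pi_t\) enters \(\mathrm{KL}(\pi_\dataseq\|\widetilde\pi_t)\), and replacing \(\widetilde p_t\) by that part with total mass \(\le 1\) only makes the argument below easier.

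The first step is to decompose, whenever the integrals are finite,
\begin{equation*}
\mathrm{KL}(\pi_\dataseq\|\widetilde\pi_t)=\int p_\dataseq\log p_\dataseq\,\dee\lambda-\int p_\dataseq\log\widetilde p_t\,\dee\lambda .
\end{equation*}
The first term does not depend on \(\widetilde\pi_t\). Taking conditional expectations and applying Tonelli/Fubini, which is where the proviso that the KL loss be well defined is used, gives
\begin{equation*}
\EE\{\mathrm{KL}(\pi_\dataseq\|\widetilde\pi_t)\mid\mathcal H_t\}=C_t-\int \EE\{p_\dataseq(\para)\mid\mathcal H_t\}\log\widetilde p_t(\para)\,\lambda(\dee\para).
\end{equation*}
Here \(C_t\) is the conditional expected negative entropy. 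By \Cref{def: expected posterior}, \(\EE\{p_\dataseq(\para)\mid\mathcal H_t\}=\bar p_t(\para)\). The same identity with \(\widetilde p_t\) replaced by \(\bar p_t\) holds for \(\bar\pi_t\).

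Subtracting the two identities, the difference of conditional risks equals
\begin{equation*}
\int\bar p_t\log(\bar p_t/\widetilde p_t)\,\dee\lambda=\mathrm{KL}(\bar\pi_t\|\widetilde\pi_t)\ge0,
\end{equation*}
by Gibbs' inequality (Jensen applied to \(-\log\)). This yields the claimed inequality almost surely, with equality iff \(\widetilde\pi_t=\bar\pi_t\) a.s.

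The main obstacle is justifying the interchange and the subtraction. The term \(-\int p_\dataseq\log\widetilde p_t\) may be \(+\infty\), and \(C_t\) must be finite for the subtraction to be legitimate. To handle this, I would first treat the case where the right-hand risk is infinite; it is trivial, since then \(\EE\{\mathrm{KL}(\pi_\dataseq\|\bar\pi_t)\mid\mathcal H_t\}=\infty\) forces nothing unless the left side is also infinite. Better is to work directly with the nonnegative integrand. Write
\begin{equation*}
p_\dataseq\log(p_\dataseq/\widetilde p_t)=p_\dataseq\log(p_\dataseq/\bar p_t)+p_\dataseq\log(\bar p_t/\widetilde p_t).
\end{equation*}
Integrate against \(\lambda\otimes\PP_t\). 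Fubini is applied to the second term after splitting it into positive and negative parts; the negative part is bounded using \(\bar p_t\log(\bar p_t/\widetilde p_t)\ge \bar p_t-\widetilde p_t\), which is integrable. Continuity of \(f\), which makes \(\mathcal Z(f)\in(0,\infty)\) and \(p_\dataseq>0\) everywhere, together with \(\bar p_t>0\), ensures all logarithms are defined. This route avoids needing \(C_t<\infty\) separately.
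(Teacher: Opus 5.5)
Your argument is correct. The paper does not prove this proposition itself. It imports Proposition~1 of \citet{roberts2026propagating} and applies it conditionally on \(\mathcal H_t\), so your Gibbs-inequality argument is a self-contained version of the standard cross-entropy argument for that result rather than a competing route.

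Your final version gives slightly more than the statement. You split the integrand through \(\bar p_t\) and apply Tonelli to positive and negative parts on \(\lambda\otimes\PP_t\). This yields the exact identity
\begin{equation*}
\EE\{\mathrm{KL}(\pi_\dataseq\|\widetilde\pi_t)\mid\mathcal H_t\}
=
\EE\{\mathrm{KL}(\pi_\dataseq\|\bar\pi_t)\mid\mathcal H_t\}
+\mathrm{KL}(\bar\pi_t\|\widetilde\pi_t)
\quad\text{in }[0,\infty].
\end{equation*}
It holds when either side is infinite, so no finiteness proviso is needed. The citation spares the paper the integrability bookkeeping; your route makes that bookkeeping explicit and exhibits the excess risk as \(\mathrm{KL}(\bar\pi_t\|\widetilde\pi_t)\).

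A few slips to tidy:
\begin{itemize}
\item The singular part of \(\widetilde\pi_t\) can be ignored because \(\pi_\dataseq\ll\lambda\), so that part lives on a \(\lambda\)-null set. Positivity of \(p_\dataseq\) is not the reason.
\item No sub-probability reduction is needed. With \(\widetilde p_t\) the absolutely continuous density, \(\mathrm{KL}(\bar\pi_t\|\widetilde\pi_t)=\int\bar p_t\log(\bar p_t/\widetilde p_t)\,\dee\lambda\) already, so Gibbs' inequality applies to the probability law \(\widetilde\pi_t\) itself.
\item Your remark that an infinite right-hand risk is ``trivial'' is backwards. In that case you must show the left-hand side is infinite, which the identity above delivers.
\item The splitting also needs the negative part of \(p_\dataseq\log(p_\dataseq/\bar p_t)\) to be integrable. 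This follows from \(p_\dataseq\log(p_\dataseq/\bar p_t)\ge p_\dataseq-\bar p_t\), the same inequality you use for the other term.
\item Your equality clause (\(\widetilde\pi_t=\bar\pi_t\)) requires the right-hand side to be finite. That holds here: by convexity of KL in its second argument and the bound in the proof of \Cref{lem:tv_from_weighted_variance}, it is at most \(V_t\).
\end{itemize}
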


Thus, among estimators of $\pi_\dataseq$ given \(\mathcal H_t\), \(\bar\pi_t\)
minimises the conditional expected forward-KL loss. Instead of using \(\bar\pi_t\) only as an inferential output, SALE uses it to guide sequential design, and the main calibration theory concerns \(\bar\pi_t\). To our knowledge, the use of \(\bar\pi_t\) as a common sequential design measure for region restriction, BO, and UR is unexplored. For empirical comparisons, \Cref{sec: empirical} uses \(\pi_{\mu_t}\) as a cheap deterministic summary for all methods.

\(\bar\pi_t\) is also dynamically coherent as a weighting law, as seen in the following proposition.

\begin{proposition}
\label{prop:barpi_martingale_main}
Under \Cref{ass:main_gp_setup},
\(\{\bar\pi_t\}_{t\geq0}\) is a measure-valued martingale. Equivalently, for
every bounded measurable \(h:\Omega\to\mathbb R\),
\begin{equation*}
M_t(h)
:=
\int_\Omega h(\para)\,\bar\pi_t(\dee\para)
=
\EE\!\left\{
\int_\Omega h(\para)\,\pi_\dataseq(\dee\para)
\,\middle|\,
\mathcal H_t
\right\},
\end{equation*}
is a bounded martingale. Thus, there exists a probability law
\(\bar\pi_\infty\) such that
\(
\bar\pi_t\Longrightarrow\bar\pi_\infty
\)
a.s.
\end{proposition}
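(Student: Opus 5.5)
The plan is to first establish the identity $M_t(h)=\EE\{\int h\,\dee\pi_\dataseq\mid\mathcal H_t\}$, then read off the martingale property from the tower rule, and finally pass to a limiting law. For the identity, fix bounded measurable $h$. By \Cref{def: expected posterior}, $\bar\pi_t(A)=\int\pi(A\mid g)\,\PP_t(\dee g)$. Since $\PP_t$ is the regular conditional law of $f$ given $\mathcal H_t$ (assumption (A2)–(A3), exact GP updating with the true prior), and $\pi_\dataseq=\pi(\cdot\mid f)$, we get $\bar\pi_t(A)=\EE\{\pi_\dataseq(A)\mid\mathcal H_t\}$. Measurability of $g\mapsto\pi(A\mid g)$ on $C(\Omega)$ follows because $g\mapsto\int_A e^{g}\,\dee\lambda$ and $g\mapsto\mathcal Z(g)\in(0,\infty)$ are continuous in sup-norm ($\Omega$ compact, paths continuous by (A5)). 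Extending from indicators to simple $h$ and then to bounded measurable $h$ by monotone class and dominated convergence gives the displayed identity. The random variable $\int h\,\dee\pi_\dataseq$ is bounded by $\|h\|_\infty$.

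Martingality then comes from the filtration. The non-anticipating design in (A3) guarantees that $\mathcal H_t=\sigma(\mathcal D_t)\subseteq\mathcal H_{t+1}$, since $\mathcal D_{t+1}$ contains $\mathcal D_t$. Hence $M_t(h)$ is a Doob martingale of a bounded variable, and so is a bounded martingale. One subtlety is that conditioning on $\mathcal H_t$ alone, rather than on the algorithmic randomness too, is legitimate. I would argue this using (A3): the fresh randomness is independent of $f$, so the conditional law of $f$ given the data and past randomness coincides with the GP posterior. Alternatively, I would simply enlarge the filtration, which leaves the martingale statement unchanged.

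For the limit, bounded martingale convergence gives $M_t(h)\to M_\infty(h)=\EE\{\int h\,\dee\pi_\dataseq\mid\mathcal H_\infty\}$ a.s. for each fixed $h$. To upgrade this to weak convergence of measures, I take a countable convergence-determining family $\{h_k\}$ that is dense in $C(\Omega)$ ($\Omega$ compact, so $C(\Omega)$ is separable). On a single full-measure event, $M_t(h_k)$ converges for all $k$. Density together with $|M_t(h)-M_t(h_k)|\le\|h-h_k\|_\infty$ then gives convergence of $\int h\,\dee\bar\pi_t$ for every $h\in C(\Omega)$. The limit is a positive linear functional with value $1$ at the constant function, so by Riesz it is represented by a probability law $\bar\pi_\infty$; equivalently one can use Prokhorov, since tightness is automatic on compact $\Omega$. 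Moreover, $\bar\pi_\infty$ can be identified as a version of the conditional law $\EE\{\pi_\dataseq\mid\mathcal H_\infty\}$.

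I expect the main obstacle to be this last step: choosing a single null set that works for all test functions and identifying the limit functional as a genuine probability measure. Compactness of $\Omega$ and separability of $C(\Omega)$ are exactly what resolve it. The measurability check in the first step is routine but also needs care.
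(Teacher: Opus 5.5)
Your proof is correct and follows essentially the same route as the paper. The tower property over the increasing filtration $\{\mathcal H_t\}$ gives a bounded martingale $M_t(h)$ for each $h$, and a countable family of test functions provides a single null set off which the weak limit exists. The differences are minor. You identify the limit through density in $C(\Omega)$ and the Riesz representation theorem, whereas the paper uses compactness of $\mathcal P(\Omega)$ and the fact that all subsequential limits agree on a convergence-determining class. You also explicitly verify the identity $\bar\pi_t(A)=\EE\{\pi_\dataseq(A)\mid\mathcal H_t\}$, including measurability of $g\mapsto\pi(A\mid g)$, which the paper takes as built into its definition of $\bar\pi_t$.
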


The proof is given in \Cref{appdx:ur_activation}. Conditional on \(\mathcal H_t\), the expected value of \(\bar\pi_t\) after the next query is its
current value. The realised law can change when a new response is observed,
but it has no conditional drift under the ideal Bayesian model. This
property underlies the one-step $\bar\pi_t$-weighted variance identity in
\Cref{prop:ur_calibration_link_main}. The martingale property does not
assert \(\bar\pi_\infty\overset{d}{=}\pi_\dataseq\); error control of \(\bar\pi_t\) is
provided by \Cref{thm:sale_main_calibration,thm:sale_practical_score_calibration}.
\section{Related Work}
\label{sec:related_work}

This section positions SALE within the literature on surrogate-based Bayesian inference by comparing what existing approaches target, how they select evaluations, and the inferential objects and guarantees they provide.

\subsection{Bayesian Optimisation for Surrogate Inference}
\label{subsec:rw_bo}

Among pointwise BO methods for surrogate inference, \citet{gutmann2016bolfi} model discrepancies in summary statistics for synthetic likelihoods, while \citet{li2026boss} apply a GP-upper confidence bound (GP-UCB) to the log-posterior and use the resulting GP to construct a surrogate posterior. These are useful when the main task is locating a small region of high posterior mass. Standard BO regret results, however, do not by themselves control surrogate-posterior error.

\citet{kim2025enhancing} augment GP-UCB with random exploration from a fixed external measure \(P_0\). Under noiseless evaluations and a reproducing kernel Hilbert space (RKHS) assumption, the random exploration yields expected-Hellinger convergence rates for \(\pi_{\mu_t}\). However, \(P_0\) does not adapt to the evolving surrogate posterior, and its random evaluations need not focus on inference-relevant regions at finite budgets.

A distributional BO formulation is given by \citet{oliveira2021noregret}. Their KL-UCB algorithm draws evaluations from an evolving posterior induced by a UCB surface. Under an RKHS assumption, they establish a finite-budget KL bound and show that the minimum KL error attained among the first \(T\) UCB-induced posteriors converges to zero. However, this does not establish convergence of the final surrogate posterior, and the KL-minimising iterate cannot be identified without access to the true posterior.

\subsection{Posterior-Focused Uncertainty Reduction}
\label{subsec:rw_ur}

\citet{kandasamy2017query} model the log-likelihood with a GP and select
queries using pointwise uncertainty in the exponentiated GP surrogate.
\citet{wang2018adaptive} use a log-GP entropy criterion, while
\citet{jarvenpaa2019efficient,jarvenpaa2021parallel} develop expected-loss
and variance-based rules for noisy likelihoods, including batch designs. These
methods focus evaluations on regions judged important by the current surrogate
posterior and are generally more efficient than reducing uncertainty over the
full domain \(\Omega\). Theoretically, these methods do not establish convergence of the final surrogate posterior.

\citet{llorente2020adaptive} sequentially interpolate the unnormalised posterior and select evaluations using its current interpolant and a diversity term. However, their theoretical control only applies to the pure space-filling specialisation, not the posterior-weighted acquisition.

For Bayesian inverse problems, \citet{villani2024posterior} use the current
surrogate posterior to weight an error measure and greedily refine the surrogate. Their analysis bounds surrogate posterior error through this measure but derives no budget-dependent bound. \citet{lartaud2024sequential} either restrict evaluations to a Mahalanobis region around the estimated posterior mode or use stepwise uncertainty reduction (SUR). They prove almost-sure convergence of a weighted surrogate uncertainty functional to zero under their SUR design. Both inverse-problem methods emulate the forward map under Gaussian observation noise, and do not directly target a generic log-posterior. Related region-restriction and SUR ideas appear in level-set estimation and general sequential design \citep{picheny2010adaptive,bect2019supermartingale}.

\subsection{Variational Surrogate Inference}
\label{subsec:rw_uncertainty_propagation}

Variational Bayesian Monte Carlo
\citep[VBMC;][]{acerbi2018vbmc,acerbi2019exploration} combines variational
inference, GP modelling, and active-sampling Bayesian quadrature to return a
variational posterior and model-evidence estimate. The aims of VBMC are closely related to those of SALE. However, its design is built around a variational and quadrature objective, instead of direct approximation of the posterior surface.

\subsection{GP-Emulated Markov Chain Monte Carlo}
\label{subsec:rw_gp_mcmc}

\citet{jarvenpaa2024approximate} place a GP on the noisy log-likelihood
and use it to approximate the acceptance decisions along a Metropolis--Hastings (MH) path. They request additional log-likelihood evaluations until the GP-based accept/reject decision is sufficiently reliable. Their design is therefore transition-centred and returns an approximate MCMC sample; its analysis controls individual accept/reject decisions and the evaluations required to resolve them, rather than the accuracy of the resulting MCMC sample or the resulting surrogate posterior.

\subsection{Position of SALE}
\label{subsec:rw_positioning}

SALE's main distinction is its adaptive design architecture centred on \(\bar\pi_t\). At iteration \(t\), \(\bar\pi_t\) is the common design measure for BO and UR; a state-dependent rule allocates the next evaluation between the branches; and AO or the practical UR criterion selects the query. AO exploits the log-density structure by averaging normalised tempered laws induced by GP paths, while an ideal \(\bar\pi_t\)-weighted UR targets posterior calibration. For AO, the theory provides perturbation-stability and Bayesian regret bounds. For UR, the theory provides an explicit budget-dependent bound on the expected TV error of the final EP under the ideal rule and an explicit expected-TV rate for the implemented UR score.
\section{Methods and Theory}\label{sec:methods}

We now develop the architecture of SALE, beginning with the EP-guided localisation.

\subsection{EP-Guided Localisation}
\label{subsec:posterior_relevance}

To exploit the information in \(\bar\pi_t\), one only requires a sample from it. SALE therefore relies on an approximate sample
\(
\mathcal X_t=\{\para_s\}_{s=1}^S
\) from \(\bar\pi_t\). We generate this sample using the stochastic surrogate MCMC (ssMCMC) motivated by \citet{roberts2026surrogatebased}. \Cref{alg:ssmcmc} summarises the algorithm; implementation details are in \Cref{app:ssMCMC}.

\begin{algorithm}[t]
\begingroup
\footnotesize
\caption{Stochastic surrogate MCMC (ssMCMC) for approximating
\(\bar\pi_t\)}
\label{alg:ssmcmc}
\begin{algorithmic}[1]
\State \textbf{Input:} Sample size \(S\); number of inner
Metropolis--Hastings (MH) steps \(L\); GP posterior law \(\PP_t\);
proposal kernel \(q(\para'\mid\para)\); initial state \(\para_0\).
\vspace{0.25em}
\For{\(s=1,\dots,S\)}
  \State Draw a GP posterior sample path \(f_t^{(s)}\sim\PP_t\).
  \State Set \(\para_s^{(0)}\gets\para_{s-1}\).
  \For{\(\ell=1,\dots,L\)}
    \State Propose
    \(\para'\sim q\{\cdot\mid\para_s^{(\ell-1)}\}\).
    \State Accept or reject \(\para'\) using the MH ratio with target
    \(\pi(\cdot\mid f_t^{(s)})\).
    \State If accepted, set \(\para_s^{(\ell)}\gets\para'\); otherwise,
    set \(\para_s^{(\ell)}\gets\para_s^{(\ell-1)}\).
  \EndFor
  \State Set \(\para_s\gets\para_s^{(L)}\).
\EndFor
\vspace{0.25em}
\State \textbf{Output:} Sample
\(\mathcal X_t=\{\para_s\}_{s=1}^S\) approximating \(\bar\pi_t\).
\end{algorithmic}
\endgroup
\end{algorithm}

The first use of \(\mathcal X_t\) is to construct a BO
search region \(\Omega_t\). In practice, \(\Omega_t\) is obtained from \(\mathcal X_t\) by covariance adjustment and marginal tail trimming; see \Cref{oa:search_region}. For analysis, let \(\mathscr R\) denote the
population analogue of this construction and define
\(
\Omega_t^\circ:=\mathscr R(\bar\pi_t)\subseteq\Omega.
\)
BO is conducted over \(\Omega_t\) in the implementation and over
\(\Omega_t^\circ\) in analysis.

To reduce computational overhead, \(\mathcal X_t\) and \(\Omega_t\) are
refreshed intermittently according to a dynamic schedule in
\Cref{subsec:integrated_workflow}. Let \(r(t)\leq t\) denote the most recent
refresh time. \(\mathcal X_{r(t)}\) and \(\Omega_{r(t)}\) are the
current working sample and BO search region.

\Cref{prop:localisation_tv_transfer} quantifies how localisation under
\(\bar\pi_t\) transfers to the true posterior. It follows immediately from
the variational characterisation of TV distance.

\begin{proposition}
\label{prop:localisation_tv_transfer}
Let \(\Omega_t^\circ\subseteq\Omega\) be measurable. Suppose that, for some
\(\eta_t,\delta_t\in[0,1]\),
\(
\bar\pi_t(\Omega_t^\circ)\geq 1-\eta_t\) and
\({\mathrm{TV}}(\bar\pi_t,\pi_\dataseq)
\leq\delta_t.
\)
Then
\(
\pi_\dataseq(\Omega_t^\circ)
\geq 1-\eta_t-\delta_t.
\)
\end{proposition}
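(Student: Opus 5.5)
The argument is a one-line consequence of the variational (supremum) definition of total variation recorded in \Cref{sec:preliminaries}. No property of the GP, the sampler, or the construction \(\mathscr R\) is needed beyond measurability of \(\Omega_t^\circ\). The inequalities \(\bar\pi_t(\Omega_t^\circ)\ge 1-\eta_t\) and \(\mathrm{TV}(\bar\pi_t,\pi_\dataseq)\le\delta_t\) are taken as given, deterministic or pathwise in the realisation of \(\mathcal H_t\). The conclusion is then a pointwise statement about two fixed probability laws on \((\Omega,\mathscr B(\Omega))\).

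First, since \(\Omega_t^\circ\in\mathscr B(\Omega)\), it is admissible in the supremum defining TV. This gives
\begin{equation*}
|\bar\pi_t(\Omega_t^\circ)-\pi_\dataseq(\Omega_t^\circ)|\le \sup_{A\in\mathscr B(\Omega)}|\bar\pi_t(A)-\pi_\dataseq(A)| = \mathrm{TV}(\bar\pi_t,\pi_\dataseq)\le\delta_t .
\end{equation*}
Second, I drop the absolute value in the relevant direction, which yields \(\pi_\dataseq(\Omega_t^\circ)\ge\bar\pi_t(\Omega_t^\circ)-\delta_t\). Third, I insert the localisation hypothesis to obtain \(\pi_\dataseq(\Omega_t^\circ)\ge 1-\eta_t-\delta_t\), as claimed.

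There is essentially no obstacle. The only point worth stating is the bookkeeping around the hypothesis. If it is meant conditionally on \(\mathcal H_t\), with \(\pi_\dataseq\) random through \(f\), the argument applies on the event where both inequalities hold. One could also remark that the bound is trivial when \(\eta_t+\delta_t\ge1\), and that the same reasoning gives the complementary statement \(\pi_\dataseq(\Omega\setminus\Omega_t^\circ)\le\eta_t+\delta_t\).
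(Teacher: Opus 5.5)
Your proof is correct and follows the paper's own justification: the paper also states that the result follows immediately from the variational characterisation of TV, which is your evaluation of the supremum at the measurable set \(\Omega_t^\circ\) followed by the localisation hypothesis. Your remark that the conclusion holds pathwise on the event where both hypotheses hold is an accurate reading of the statement.
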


Thus, a region omitting at most \(\eta_t\) posterior mass under
\(\bar\pi_t\) omits at most \(\eta_t+\delta_t\) under the true posterior.
This makes the connection between localisation and calibration explicit:
the UR branch in \Cref{subsec:ur_calibration} targets the TV discrepancy
\(\delta_t\).

\subsection{State-Dependent BO--UR Allocation}\label{subsec: state}

At each iteration, SALE decides whether the next evaluation is for
localisation or calibration. BO is favoured while local design resolution
around the incumbent is low; as it improves, UR is favoured to calibrate the
surrogate over regions relevant to inference.

We formalise this decision through a latent binary state
\(\mathbf S_t\in\{0,1\}\), where \(\mathbf S_t=0\) denotes unresolved value
discovery and \(\mathbf S_t=1\) denotes resolution. Fix an integer
\(k\geq1\), and suppose that the initial design $\mathcal D_0$ contains at least \(k+1\) distinct
input locations. Define
\(
\para_t^\dagger
\in
\arg\max_{\para_i:\,(\para_i,y_i)\in\mathcal D_t}
\mu_t(\para_i),
\)
using deterministic tie-breaking. For a GP posterior draw
\(f_t\sim\PP_t\), define
\(f_t^+:=f_t(\para_t^\dagger)\). Given an improvement threshold
\(\varepsilon_{\rm imp}>0\), define \(\mathbf S_t=0\) if
\(
\sup_{\para\in\Omega}f_t(\para)
>
f_t^++\varepsilon_{\rm imp},
\)
and set \(\mathbf S_t=1\) otherwise. The ideal probability for conducting BO
at iteration \(t\) is therefore
\begin{equation*}
p_t^\star
:=
\PP(\mathbf S_t=0\mid\mathcal H_t)
=
\PP_t\left\{
\sup_{\para\in\Omega}f_t(\para)
>
f_t^++\varepsilon_{\rm imp}
\right\}.
\end{equation*}

\citet{wilson2024stopping} uses \(p_t^\star\) to terminate a BO algorithm; SALE instead uses the same event to motivate BO--UR allocation. Computing \(p_t^\star\), however, is impractical: as in \citet{wilson2024stopping}, estimating \(p_t^\star\) requires repeatedly optimising GP posterior draws. Moreover, \(\varepsilon_{\rm imp}\) is on the scale of \(f\) and is therefore problem-dependent.

Rather than directly estimate \(p_t^\star\), SALE uses a cheap proxy for \(p_t^\star\). The idea is to monitor local design resolution around \(\para_t^\dagger\), treating further improvement as more likely when the design remains sparse around \(\para_t^\dagger\) and less likely once its neighbourhood is densely sampled. Let
\(
\mathbf H_t
=
\nabla^2\mu_t(\para_t^\dagger)\),
\(\mathbf C_t
=
\mathcal M\{-(\mathbf H_t+\mathbf H_t^\top)/2\},
\)
where \(\mathcal M\) denotes the numerical positive-definite stabilisation map
used in computation. For each
\((\para_i,y_i)\in\mathcal D_t\) with
\(\para_i\neq\para_t^\dagger\), define the curvature-adjusted distance
\(
d_{t,i}
=
\{
(\para_i-\para_t^\dagger)^\top
\mathbf C_t
(\para_i-\para_t^\dagger)
\}^{1/2}.
\)
Let \(\Delta_{t,k}^{\mathrm{loc}}\) be the mean of the \(k\) smallest positive
\(d_{t,i}\). The initial design $\mathcal{D}_0$ ensures that these \(k\) distances
are available at every $t$. We set
\begin{equation}
\label{eq:p_t_proxy}
\widehat p_{t,k}
=
\tanh\left(
C\Delta_{t,k}^{\mathrm{loc}}/a
\right)
\in[0,1],
\end{equation}
with \(C=\log(3)/2\), so that \(\widehat p_{t,k}=1/2\) when
\(\Delta_{t,k}^{\mathrm{loc}}=a\). The default for $k$ is $k=5$. The parameter \(a\) determines how dense the design must be around \(\para_t^\dagger\) before SALE transitions from BO to UR. The curvature adjustment by \(\mathbf C_t\) makes \(\Delta_{t,k}^{\mathrm{loc}}\), and hence \(a\), dimensionless and gives \(a\) a stable interpretation across problems. SALE draws \(B_t\sim\operatorname{Bernoulli}(\widehat p_{t,k})\), conducting BO when \(B_t=1\) and UR when \(B_t=0\).

More detailed construction of \(\widehat p_{t,k}\) is in \Cref{appdx:pt_proxy}. The
connection between \(p_t^\star\) and \(\widehat p_{t,k}\) is recorded in \Cref{oa:allocation_proxy}.

\subsection{Annealed Objective for Optimisation}\label{sec: AF for BO and UR}

The BO branch of SALE uses an annealed version of
\(\bar\pi_t\).

\begin{definition}
\label{def:annealed_objective}
For \(g\in C(\Omega)\) and \(\tau \in (0,1]\), define
\begin{equation*}
    \pi_\tau(\dee\para\mid g)
=
\exp\{g(\para)/\tau\}\lambda(\dee\para)
/
{\mathcal Z_\tau(g)},
\quad
\mathcal Z_\tau(g)
:=
\int_\Omega
\exp\{g(\bm u)/\tau\}\lambda(\dee\bm u).
\end{equation*}
The \emph{annealed objective (AO)} induced by a Borel
probability law \(\mathbb Q\) on \(C(\Omega)\) is
\begin{equation*}
    \mathcal A_\tau(\mathbb Q)(A)
:=
\int_{C(\Omega)}
\pi_\tau(A\mid g)\,\mathbb Q(\dee g),
\quad
A\in\mathscr{B}(\Omega).
\end{equation*}
Given \(\mathcal H_t\) and \(\tau\in(0,1]\), AO at iteration \(t\)
is \(
    \pi_{\tau,t}(\dee\para)
:=
\mathcal A_\tau(\PP_t)(\dee\para)
=
\EE_{f_t\sim\PP_t}
\left\{
\pi_\tau(\dee\para\mid f_t)
\right\}.\)
\end{definition}

At \(\tau=1\), \(\pi_{1,t}=\bar\pi_t\). Decreasing \(\tau\) concentrates each conditional law \(\pi_\tau(\cdot\mid f_t)\) around high values of \(f_t\). Averaging
these laws gives an acquisition distribution that is tilted toward high posterior values. The next active query is drawn as
\(
\para_{n_t+1}\sim\pi_{\tau,t}.
\)

As \(\tau\downarrow0\), AO approaches TS. TS draws
\(f_t\sim\PP_t\) and selects
\(
\para_{f_t}^\star
:=
\argmax_{\para\in\Omega}f_t(\para).
\)
Its selection distribution is therefore
\(
\mathrm{Law}(\para_{f_t}^\star\mid f_t\sim\PP_t).
\)
The following proposition makes this connection precise; the proof is in
\Cref{appdx:proof_opt}.

\begin{proposition}
\label{prop:ao_ts_limit_main}
Under Assumption~\ref{ass:main_gp_setup}, given
\(\mathcal H_t\),
\(
\pi_{\tau,t}
\Longrightarrow
\operatorname{Law}
(
\para_{f_t}^\star
\mid
f_t\sim\PP_t
) \) as \(\tau\downarrow0. \)
\end{proposition}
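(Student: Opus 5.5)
The plan is a pathwise Laplace-type concentration argument followed by dominated convergence over the GP law \(\PP_t\). Fix \(\mathcal H_t\). It suffices to show that for every bounded continuous \(h:\Omega\to\Reals\),
\[
\int h\,\dee\pi_{\tau,t}=\EE_{f_t\sim\PP_t}\Bigl\{\int_\Omega h(\para)\,\pi_\tau(\dee\para\mid f_t)\Bigr\}\longrightarrow \EE_{f_t\sim\PP_t}\{h(\para_{f_t}^\star)\}
\]
as \(\tau\downarrow0\). The first equality is Fubini applied to \Cref{def:annealed_objective}. Since \(|\int h\,\dee\pi_\tau(\cdot\mid g)|\le\|h\|_\infty\), dominated convergence reduces the claim to a deterministic statement. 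For \(\PP_t\)-a.e.\ \(g\), namely every continuous \(g\) with a unique maximiser (guaranteed by (A4)), we need \(\pi_\tau(\cdot\mid g)\Longrightarrow\delta_{\para_g^\star}\). Measurability of \(g\mapsto\para_g^\star\) on this set can be handled by noting it is the a.s.\ limit of measurable maps, or simply by taking the law in the statement as given.

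For the deterministic step, fix continuous \(g\) on compact \(\Omega\) with unique maximiser \(\para^\star\) and maximum \(M\). Fix \(r>0\) and set \(U_r=\mathbb B(\para^\star,r)\cap\Omega\). By compactness of \(\Omega\setminus U_r\), continuity, and uniqueness of the maximiser, there is \(\gamma>0\) with \(g\le M-2\gamma\) off \(U_r\). By continuity, there is \(\rho\in(0,\min(r,r_\Omega)]\) with \(g\ge M-\gamma\) on \(U_\rho\). Then
\[
\pi_\tau(\Omega\setminus U_r\mid g)\le\frac{\lambda(\Omega)e^{(M-2\gamma)/\tau}}{\lambda(U_\rho)e^{(M-\gamma)/\tau}}\le\frac{\lambda(\Omega)}{c_\Omega\rho^d}e^{-\gamma/\tau}\to0,
\]
using the volume condition in (A1) to keep \(\lambda(U_\rho)\) bounded below even when \(\para^\star\) lies on the boundary. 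Now take \(h\) bounded and continuous, and use uniform continuity of \(h\) on compact \(\Omega\). This gives \(|\int h\,\dee\pi_\tau(\cdot\mid g)-h(\para^\star)|\le\omega_h(r)+2\|h\|_\infty\pi_\tau(\Omega\setminus U_r\mid g)\), where \(\omega_h\) is the modulus of continuity of \(h\). Letting \(\tau\downarrow0\) and then \(r\downarrow0\) yields the pathwise weak convergence.

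Combining the two steps gives the claimed weak convergence, for any sequence \(\tau_n\downarrow0\) and hence for the continuous parameter. The main obstacle is mostly bookkeeping. We need a.s.\ continuity of paths, which follows from the \(C^1\) version in (A5). We need the unique-maximiser event to have full \(\PP_t\)-measure, which is (A4). Finally, we need positivity of \(\lambda(U_\rho)\) near boundary maximisers, which is where (A1) enters. No uniformity in \(g\) is needed, because dominated convergence requires only pointwise convergence of the bounded integrands.
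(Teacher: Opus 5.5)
Your proposal is correct and takes the same approach as the paper: a pathwise zero-temperature Laplace concentration $\pi_\tau(\cdot\mid f_t)\Rightarrow\delta_{\para^\star_{f_t}}$ for $\PP_t$-a.e.\ path, using continuity and the unique maximiser from (A4), followed by conditional dominated convergence with the bound $\|h\|_\infty$. The only difference is that the paper cites Hwang's (1980) Laplace principle, relying on full support of normalised Lebesgue measure on $\Omega$, whereas you prove the concentration step directly from the volume lower bound in (A1), which is a valid self-contained substitute.
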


The purpose of AO is not, however, to approximate TS by taking \(\tau\)
arbitrarily close to zero. Its practical advantage arises at positive
\(\tau\). For fixed \(\tau>0\), the map
\(
g\mapsto\pi_\tau(\cdot\mid g)
\)
is stable under perturbations of \(g\), while the argmax map need not
be.

To see this, equip \(C(\Omega)\) with the supremum norm and its Borel
\(\sigma\)-algebra \(\mathscr B\{C(\Omega)\}\). Consider path laws \(\mathbb Q_1,\mathbb Q_2\) on $C(\Omega)$ with finite first moment in sup-norm:
\begin{equation}
\label{eq: path_law_moment}
    \int_{C(\Omega)}
\|g\|_\infty\,\mathbb Q_j(\dee g)
<\infty,
\quad j=1,2.
\end{equation}
Let \(\Gamma(\mathbb Q_1,\mathbb Q_2)\) be the set
of all probability laws \(\gamma\) on \(C(\Omega)\times C(\Omega)\) with marginals \(\mathbb Q_1\) and \(\mathbb Q_2\), that is, for \(A,B\in\mathscr B\{C(\Omega)\}\), \(
    \gamma\{A\times C(\Omega)\}=\mathbb Q_1(A),\) and
\(\gamma\{C(\Omega)\times B\}=\mathbb Q_2(B).
\) Write
\begin{equation*}
    W_{1,\infty}(\mathbb Q_1,\mathbb Q_2)
:=
\inf_{\gamma\in\Gamma(\mathbb Q_1,\mathbb Q_2)}
\int
\|g_1-g_2\|_\infty\,
\gamma(\dee g_1,\dee g_2).
\end{equation*}
\Cref{ass:main_gp_setup} \textup{(A5)} ensures \Cref{eq: path_law_moment} for \(\PP_t\), almost surely. The following lemma quantifies the stability of the AO law; the proof
is in \Cref{appdx:proof_opt}.

\begin{lemma}
\label{lem:ao_perturb}
Let \(g_1,g_2\in C(\Omega)\), $\varrho(g_1, g_2) := \|g_1-g_2\|_\infty$, and denote \(p_{\tau,j}\) the Lebesgue
density of \(\pi_\tau(\cdot\mid g_j)\), \(j=1,2\). For every
\(\para\in\Omega\) and \(\tau>0\),
\begin{equation*}
    \left|
\log p_{\tau,1}(\para)-\log p_{\tau,2}(\para)
\right|
\leq 2\varrho(g_1, g_2)/\tau,
\quad
\TV\left\{\pi_\tau(\cdot\mid g_1),\pi_\tau(\cdot\mid g_2)\right\}
\leq
\tanh\!\left\{\varrho(g_1, g_2)/\tau
\right\}.
\end{equation*}
Consequently, for any path laws \(\mathbb Q_1,\mathbb Q_2\) satisfying \Cref{eq: path_law_moment},
\begin{equation*}
\TV\left\{
\mathcal A_\tau(\mathbb Q_1),
\mathcal A_\tau(\mathbb Q_2)\right\}
\leq
W_{1,\infty}(\mathbb Q_1,\mathbb Q_2)/\tau.
\end{equation*}
\end{lemma}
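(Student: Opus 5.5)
The plan has three parts, in order: a pointwise log-density bound, a TV bound between two single tempered laws, and then a coupling argument to pass to the path laws. Throughout, write \(\varrho=\varrho(g_1,g_2)\) and \(h=(g_1-g_2)/\tau\), so that \(\|h\|_\infty\le\varrho/\tau\).

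\textbf{Log-density bound.} Since \(\log p_{\tau,j}(\para)=g_j(\para)/\tau-\log\mathcal Z_\tau(g_j)\), the difference of log-densities is
\[
h(\para)-\log\{\mathcal Z_\tau(g_1)/\mathcal Z_\tau(g_2)\}.
\]
The ratio \(\mathcal Z_\tau(g_1)/\mathcal Z_\tau(g_2)=\int e^{h}\,\dee\pi_\tau(\cdot\mid g_2)\) lies in \([e^{-\varrho/\tau},e^{\varrho/\tau}\}]\). Each term is therefore at most \(\varrho/\tau\) in absolute value, which gives the bound \(2\varrho/\tau\).

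\textbf{TV bound via a sharper argument.} The \(\tanh\) bound cannot come from the crude \(2\varrho/\tau\) estimate, so I would argue directly. Write \(p_{\tau,1}=p_{\tau,2}e^{h}/c\), where \(c=\EE_2 e^{h}\) and \(\EE_2\) denotes expectation under \(\pi_\tau(\cdot\mid g_2)\). Then
\[
\TV=\tfrac12\EE_2\left|e^{h}/c-1\right|.
\]
The map \(h\mapsto \EE_2|e^h/\EE_2e^h-1|\) has to be maximised over measurable \(h\) with \(h\in[-m,m]\), \(m=\varrho/\tau\). For fixed \(c\), the integrand is convex in \(e^h\), so the extremal \(h\) takes only the two values \(\pm m\). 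Suppose \(h=m\) on a set of \(\pi_\tau(\cdot\mid g_2)\)-mass \(q\) and \(h=-m\) elsewhere. Then
\[
\TV=\frac{q(1-q)(e^m-e^{-m})}{qe^m+(1-q)e^{-m}}.
\]
Maximising over \(q\) gives the optimum \(q=e^{-m}/(e^m+e^{-m})\). Substituting yields
\[
\frac{e^m-e^{-m}}{(e^{m/2}+e^{-m/2})^2}=\tanh(m/2)\le\tanh(m).
\]
This actually gives the stronger bound \(\tanh(\varrho/(2\tau))\), which implies the stated one. The reduction to two-valued \(h\) is the step I expect to be the main obstacle. An alternative route is the elementary inequality \(\TV\le\sup_A|\pi_1(A)-\pi_2(A)|\) with \(\pi_1(A)=\EE_2[e^h1_A]/\EE_2 e^h\). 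For fixed \(\pi_2(A)=s\), this is maximised by putting \(h=m\) on \(A\) and \(h=-m\) off \(A\), which gives the same expression and avoids the convexity argument.

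\textbf{Mixture bound.} Take any coupling \(\gamma\in\Gamma(\mathbb Q_1,\mathbb Q_2)\). For a Borel set \(A\),
\[
|\mathcal A_\tau(\mathbb Q_1)(A)-\mathcal A_\tau(\mathbb Q_2)(A)|\le\int|\pi_\tau(A\mid g_1)-\pi_\tau(A\mid g_2)|\,\gamma(\dee g_1,\dee g_2).
\]
By the TV bound, the integrand is at most \(\tanh(\|g_1-g_2\|_\infty/\tau)\le\|g_1-g_2\|_\infty/\tau\). Taking the supremum over \(A\) and then the infimum over \(\gamma\) gives the claim.

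Two measurability facts need checking. First, \(g\mapsto\pi_\tau(A\mid g)\) is measurable on \(C(\Omega)\): it is continuous in sup-norm for fixed \(A\), by the log-density bound. Second, the moment condition \Cref{eq: path_law_moment} ensures \(W_{1,\infty}\) is finite, although the inequality holds trivially if it is infinite.
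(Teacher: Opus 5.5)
Your proof is correct. The first step (the normalising-constant sandwich giving \(2\varrho/\tau\)) and the third step (coupling, then \(\tanh u\le u\), supremum over \(A\), infimum over \(\gamma\)) are the same as the paper's. The TV step takes a genuinely different route.

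\textbf{How the paper gets the TV bound.} It obtains \(\tanh(\varrho/\tau)\) directly from the crude log-ratio bound, using the generic fact that \(e^{-a}\le \dee\pi/\dee\rho\le e^{a}\) implies \(\TV(\pi,\rho)\le\tanh(a/2)\). That fact is proved by testing \(x\le e^{a}y\) and \(1-y\le e^{a}(1-x)\) on \(B=\{\dee\pi/\dee\rho\ge1\}\), and the paper takes \(a=2\varrho/\tau\). So your remark that the \(\tanh\) bound ``cannot come from the crude \(2\varrho/\tau\) estimate'' is mistaken.

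\textbf{What your route buys.} Your extremal argument exploits more structure. The log-ratio \(\log(\dee\pi_1/\dee\pi_2)=h-\log c\) has oscillation at most \(2m\), whereas the sandwich bound only controls its sup-norm by \(2m\), which allows oscillation up to \(4m\). This is why you obtain the sharper, and in fact tight, constant \(\tanh\{\varrho/(2\tau)\}\). The paper's route is shorter and applies to any pair of laws with a bounded density ratio; yours is sharper and identifies the extremal configuration. (Running the paper's two-constraint argument with asymmetric bounds \(\alpha\le\dee\pi/\dee\rho\le\beta\), \(\beta/\alpha\le e^{2m}\), would also recover your constant.)

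\textbf{Which reduction to use.} Rely on your \(\sup_A\) version. For fixed \(\rho(A)=s\), \(\pi(A)\) is increasing in \(\EE_2[e^{h}\mathbf 1_A]\) and decreasing in \(\EE_2[e^{h}\mathbf 1_{A^c}]\). That argument is fully rigorous. The convexity and extreme-point sketch would additionally need non-atomicity of \(\pi_\tau(\cdot\mid g_2)\) to be made precise.

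\textbf{One computational slip.} The maximiser of \(q(1-q)(e^{m}-e^{-m})/\{qe^{m}+(1-q)e^{-m}\}\) is \(q=1/(1+e^{m})\), not \(e^{-m}/(e^{m}+e^{-m})=1/(1+e^{2m})\). Your stated \(q\) yields only \(\tfrac12\tanh m\). The value \(\tanh(m/2)\) that you then report is the one attained at the correct maximiser, so the conclusion stands.

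Your explicit measurability check, via sup-norm continuity of \(g\mapsto\pi_\tau(A\mid g)\), is a small addition that the paper leaves implicit.
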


In particular, if \(\PP^\prime_t\) is an alternative law to $\PP_t$ satisfying \Cref{eq: path_law_moment} and
\(\pi^\prime_{\tau,t}
:=
\mathcal A_\tau(\PP^\prime_t),
\)
then
\begin{equation}\label{eq: AO_Pt_bound}
\TV(
\pi_{\tau,t},\pi^\prime_{\tau,t})
\leq 
W_{1,\infty}(\PP_t,\PP^\prime_t)/\tau.
\end{equation}

By definition of the TV distance, \Cref{eq: AO_Pt_bound} shows that AO assigns similar
next-query probabilities to a region whenever the surrogate path laws can be coupled
so that their paths are uniformly close on average. Thus, positive $\tau$ stabilises the query distribution even when the maximiser of an individual surrogate path is sensitive to perturbation.

No analogous \(W_{1,\infty}\)-to-TV continuity holds for TS in general. This is particularly relevant when the surrogate is highly uncertain since individual GP posterior paths may contain artificially high, narrow spikes: TS switches to such a spike if it is the pathwise maximiser, while positive-$\tau$ AO weights both its height and local volume. If the spike is narrow in different directions, its local volume shrinks, allowing AO to tolerate a larger spurious height difference before moving its query mass. \Cref{oa:wide_spike} gives an explicit illustration of this mechanism and its possible dependence on dimension.

Positive \(\tau\) therefore trades concentration around the sampled-path maximiser for robustness to surrogate-law perturbations. The associated optimisation cost of AO is quantified through Bayesian regret \citep{russo2014learning} in \Cref{thm:ao_main_text}.

\begin{definition}
\label{def:ao_regret}
Let \(\para^\star\) be a measurable maximiser of \(f\) over \(\Omega\), and
let \(\mathcal D_0=\emptyset\). After \(T\ge1\) evaluations, the simple regret
is
\(
s_T
=
f(\para^\star)
-
\max_{0\le t\le T-1}f(\para_{t+1}).
\)
The cumulative regret is
\(
R_T
:=
\sum_{t=0}^{T-1}
\left\{
f(\para^\star)-f(\para_{t+1})
\right\}.
\)
The Bayesian regret is
\(
\mathfrak R_T:=\EE[R_T],
\)
where the expectation is over the GP prior, observation noise, and
algorithmic randomness.
\end{definition}

\begin{theorem}
\label{thm:ao_main_text}
Under Assumption~\ref{ass:main_gp_setup}, let
\(\mathcal D_0=\emptyset\), \(\varsigma^2\geq0\), and fix
\(\tau\in(0,1]\). Fix any \(\upsilon^2>0\) satisfying
\(\upsilon^2\geq\varsigma^2\). Suppose that
\(
\para_{t+1}\sim\pi_{\tau,t}
\)
over \(\Omega\). Then, for every \(T\geq1\),
\begin{equation}
\label{eq:ao_main_text_generic}
\mathfrak R_T
=
\mathcal O\left(
\sqrt{T\gamma_T(\upsilon)\log T}
\right)
+
\mathcal O\left(\Delta_T^\tau\right),
\end{equation}
where \(\gamma_T(\upsilon)\) is the GP maximal information gain with regularisation $\upsilon^2$
\citep{whitehouse2023sublinear} and
\(
\Delta_T^\tau
:=
T\tau
\left\{
1+\log_+(1/\tau)
\right\}\), where \(\log_+(x) = \max\{\log x, 0\}\).
\end{theorem}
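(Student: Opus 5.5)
We split the regret into a Thompson-sampling part and a tempering part. Draw \(\para_{t+1}\sim\pi_{\tau,t}\) by first drawing \(f_t\sim\PP_t\) and then \(\para_{t+1}\sim\pi_\tau(\cdot\mid f_t)\). Write \(\para^\star_{f_t}\) for the maximiser of \(f_t\); it is unique a.s. by (A4). Then
\begin{equation*}
f(\para^\star)-f(\para_{t+1})
=
\{f(\para^\star)-f_t(\para^\star_{f_t})\}
+\{f_t(\para^\star_{f_t})-f_t(\para_{t+1})\}
+\{f_t(\para_{t+1})-f(\para_{t+1})\}.
\end{equation*}
The first term has zero conditional mean given \(\mathcal H_t\). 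This is because \(f,f_t\) are i.i.d.\ under \(\PP_t\) by (A3), so \(f(\para^\star)=\max f\) and \(f_t(\para^\star_{f_t})=\max f_t\) have the same conditional law. This is the standard Russo--Van Roy device.

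For the third term, use an upper confidence sequence \(U_t=\mu_t+\beta_t^{1/2}\sigma_t\). Split \(f_t(\para_{t+1})-f(\para_{t+1})\) as \(\{f_t-U_t\}(\para_{t+1})+\{U_t-f\}(\para_{t+1})\). The first piece is controlled in expectation by Gaussian tails on a discretisation of \(\Omega\). Here we use (A5): the Dudley integrals \(J_0,J_0^{(j)}\) and \(C^1\) paths give a Lipschitz constant with finite moments, so a grid of size polynomial in \(t\) suffices with \(\beta_t\asymp\log t\). There is one subtlety: \(\para_{t+1}\) depends on \(f_t\), not on \(f\). We therefore bound \(\EE\sup_{\para}\{f_t-U_t\}_+\) and \(\EE\sup_\para\{f-U_t\}_-\)-type quantities directly rather than pointwise at the query. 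The second piece gives \(\EE\sum_t 2\beta_t^{1/2}\sigma_t(\para_{t+1})\). By Cauchy--Schwarz and the information-gain bound \(\sum_t\sigma_t^2(\para_{t+1})\lesssim\gamma_T(\upsilon)\), this is \(\mathcal O(\sqrt{T\gamma_T(\upsilon)\log T})\). The bound holds with regularisation \(\upsilon^2\geq\varsigma^2\) and is valid even for \(\varsigma^2=0\), as in \citet{whitehouse2023sublinear}, since \(\sigma_t\) computed with larger nugget dominates.

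The second term is the tempering penalty. It is the only new ingredient and the main obstacle. Conditional on \(f_t\), we need \(\EE_{\para\sim\pi_\tau(\cdot\mid g)}\{\max g-g(\para)\}\lesssim \tau\{1+\log_+(1/\tau)\}\) times a random constant with finite expectation. The plan is a Laplace/volume argument. Let \(M=\max g\) and \(L_g=\|\nabla g\|_\infty\). By (A1), the set \(\{g\ge M-\epsilon\}\) contains \(\mathbb B(\para^\star_g,\epsilon/L_g)\cap\Omega\), so it has Lebesgue mass at least \(c_\Omega(\epsilon/L_g)^d\) for small \(\epsilon\). Hence \(\mathcal Z_\tau(g)\ge e^{(M-\epsilon)/\tau}c_\Omega(\epsilon/L_g)^d\). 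Then \(\pi_\tau\{g\le M-u\}\le\lambda(\Omega)e^{-(u-\epsilon)/\tau}/\{c_\Omega(\epsilon/L_g)^d\}\). Choose \(\epsilon=\tau\) and integrate the tail in \(u\). This gives
\begin{equation*}
\EE\{M-g(\para)\}\le \tau+\tau\,\{d\log_+(L_g/\tau)+\log_+(\lambda(\Omega)/c_\Omega)+1\}.
\end{equation*}
When \(\epsilon/L_g>r_\Omega\), the ball lower bound must be capped. This is absorbed by also using the trivial bound \(2\|g\|_\infty\). Taking expectation over \(f_t\) requires \(\EE\log_+ L_{f_t}<\infty\) and \(\EE\|f_t\|_\infty<\infty\) uniformly in \(t\). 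These follow from (A5) by Borell--TIS applied to \(f\) and \(\partial_j f\), since the posterior sup-moments are controlled via the tower property by the prior ones. Summing over \(t\le T\) yields \(\mathcal O(T\tau\{1+\log_+(1/\tau)\})=\mathcal O(\Delta_T^\tau)\), with \(d\) and the moment constants in the implied constant. Combining the three terms gives \Cref{eq:ao_main_text_generic}.
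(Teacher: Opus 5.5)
Your proposal is correct and is essentially the paper's proof. Your first term is the paper's posterior-identity step, and your third term has conditional mean equal to the paper's selection bias $b^{\mathrm{sel}}_{t+1}=\EE_t\{G_t(\para_{t+1})\}$, handled by grid confidence bounds plus the information-gain variance sum. Your tempering term is the paper's Gibbs-softening lemma; the paper proves it via $\tau\log\mathcal Z_\tau(g)=\int g\,p_{\tau,g}\,\dee\lambda+\tau\,\mathbb H(p_{\tau,g})$ with $\mathbb H\le\log|\Omega|$ rather than by tail integration. For the third term it also uses a horizon-wide high-probability event rather than your direct bound on $\EE\sup_{\para}\{f_t-U_t\}_+$.

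One step needs fixing: in the capped regime $L_g<\tau/r_\Omega$, falling back on $2\|g\|_\infty$ gives a bound that does not scale with $\tau$. Instead, use the radius $r_\Omega\wedge\tau/L_g$ as the paper does, or note that there $\max g-g\le L_g\operatorname{diam}(\Omega)<\tau\operatorname{diam}(\Omega)/r_\Omega$. Either choice keeps the constant in $\mathcal O(\Delta_T^\tau)$ free of $\tau$.
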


\begin{algorithm}[t]
\begingroup
\footnotesize
\caption{Exchange simulated annealing (ESA) for AO}
\label{alg:ESA}
\begin{algorithmic}[1]

\Require Design \(\mathcal D_t\); search domain \(\Omega\); GP posterior
law \(\PP_t\); function proposal kernel
\(
Q_f(f_t,\dee f_t')
=
q_f(f_t'\mid f_t)\PP_t(\dee f_t');
\)
parameter proposal density \(q_{\para}(\para'\mid\para)\); cooling schedule
\(\{\tau_\ell\}_{\ell=1}^L\) satisfying
\(1=\tau_1>\cdots>\tau_L=\tau>0\); iteration counts
\(L_{\para},L_{\bm u},L_f\); levelwise sweep counts
\(\{s_\ell\}_{\ell=1}^L\).

\State Initialise \(\para\in\Omega\).
\State Draw \(f_t\sim\PP_t\).

\For{\(\ell=1,\dots,L\)}
    \Comment{Temperature \(\tau_\ell\)}
    \For{\(j=1,\dots,s_\ell\)}
        \For{\(j_{\para}=1,\dots,L_{\para}\)}
            \Comment{Update \(\para\) conditional on \(f_t\)}
            \State Propose
            \(\para'\sim q_{\para}(\cdot\mid\para)\).
            \State Set
            \*[
            \alpha_{\para}
            =
            \min\left\{
            1,\,
            \frac{
                \exp\{f_t(\para')/\tau_\ell\}
                q_{\para}(\para\mid\para')
            }{
                \exp\{f_t(\para)/\tau_\ell\}
                q_{\para}(\para'\mid\para)
            }
            \right\}.
            \]
            \State With probability \(\alpha_{\para}\), set
            \(\para\gets\para'\).
        \EndFor

        \For{\(j_f=1,\dots,L_f\)}
            \Comment{Exchange update of \(f_t\)}
            \State Propose \(f_t'\sim Q_f(f_t,\cdot)\).
            \State Approximately draw
            \(
            \bm u\sim\pi_{\tau_\ell}(\cdot\mid f_t')
            \propto\exp\{f_t'(\cdot)/\tau_\ell\}
            \)
            using \(L_{\bm u}\) MH updates.
            \State Set
            \*[
            \alpha_f
            =
            \min\left\{
            1,\,
            \frac{
                q_f(f_t\mid f_t')
                \exp\{f_t'(\para)/\tau_\ell\}
                \exp\{f_t(\bm u)/\tau_\ell\}
            }{
                q_f(f_t'\mid f_t)
                \exp\{f_t(\para)/\tau_\ell\}
                \exp\{f_t'(\bm u)/\tau_\ell\}
            }
            \right\}.
            \]
            \State With probability \(\alpha_f\), set
            \(f_t\gets f_t'\).
        \EndFor
    \EndFor
\EndFor

\State Set
\((\widetilde\para_t,\widetilde f_t)\gets(\para,f_t)\).
\State \Return
\((\widetilde\para_t,\widetilde f_t)\).

\end{algorithmic}
\endgroup
\end{algorithm}

\begin{remark}
The first term in \Cref{eq:ao_main_text_generic} is the standard GP-BO complexity term. When \(\varsigma^2>0\), one may take \(\upsilon^2=\varsigma^2\); when \(\varsigma^2=0\), \(\upsilon^2\) is an arbitrary fixed analysis parameter. For fixed dimension and any fixed \(\upsilon^2>0\), the information-gain orders for SE and Mat\'ern kernels with smoothness \(\nu > 1/2\) established by \citet[Theorem~7 and Corollary~8]{iwazaki2025improvedregret} imply that \(\sqrt{T\gamma_T(\upsilon)\log T}=o(T)\). The second term is the optimisation penalty induced by a fixed \(\tau>0\), and it is \(\mathcal O(T)\). Thus \Cref{thm:ao_main_text} is not a no-regret claim for fixed $\tau$; together with \Cref{lem:ao_perturb}, it quantifies the trade-off between AO law stability and optimisation bias.
\end{remark}

\Cref{thm:ao_main_text} concerns exact AO sampling over the fixed domain \(\Omega\).
In SALE, AO is instead restricted to the current BO search region
\(\Omega_{r(t)}\). Unless otherwise stated, we continue to write
\(\pi_{\tau,t}\) for this restricted acquisition law. 

Sampling from \(\pi_{\tau,t}\) is difficult when \(\tau\) is small because
the target can be highly concentrated and multimodal. We thus introduce
\emph{exchange simulated annealing} (ESA; \Cref{alg:ESA}), based on simulated annealing
\citep{delahaye2019simulated}. ESA follows a temperature schedule
\(
1=\tau_1>\cdots>\tau_L=\tau>0
\)
and carries its state between successive temperatures.

At a fixed \(\tau_\ell\), the ideal ESA kernel targets the joint
law
\(
\Pi_{\tau_\ell,t}(\dee\para,\dee f_t)
=
\pi_{\tau_\ell}(\dee\para\mid f_t)\PP_t(\dee f_t),
\)
whose \(\para\)-marginal is \(\pi_{\tau_\ell,t}\). ESA alternates between an
MH update of \(\para\) conditional on \(f_t\) and an exchange update of
\(f_t\). The exchange step introduces an auxiliary draw to cancel the
intractable normalising constant \(\mathcal Z_{\tau_\ell}(f_t)\)
\citep{murray2006mcmc}. With an exact auxiliary draw, the exchange kernel leaves \(\Pi_{\tau_\ell,t}\) invariant. The implemented algorithm approximates the auxiliary draw using MH updates and represents GP paths using the random Fourier features (RFF) and Matheron's representation \citep{rahimi2007random, wilson2021pathwise}. Core implementation details are in \Cref{app:esa}; the proposal kernels, temperature schedule, and numerical settings are in \Cref{oa:esa_settings}. \Cref{oa:rff_esa_convergence} analyses the levelwise convergence for the RFF--Matheron representation.

\subsection{EP-Weighted Calibration}\label{subsec:ur_calibration}

The UR branch reduces GP uncertainty in regions that are plausible under $\bar\pi_t$.
\begin{definition}
\label{def:weighted_variance_reduction}
Define the $\bar\pi_t$-weighted integrated mean squared prediction
error by
\(
V_t
:= 
\EE_{\bm u \sim \bar\pi_{t}}\{\sigma_{t}^2(\bm u)\}.
\) The one-step $\bar\pi_t$-weighted integrated variance-reduction is
\begin{equation*}
    \Delta_t(\para)
:=
\int_\Omega
\frac{\kappa_t(\bm u,\para)^2}
{\sigma_t^2(\para)+\varsigma^2}
\bar\pi_t(\dee\bm u), \quad \para\in\Omega.
\end{equation*}
When both
\(\varsigma^2=0\) and \(\sigma_t^2(\para)=0\), the integrand is defined to
be zero.
\end{definition}
\begin{proposition}
\label{prop:ur_calibration_link_main}
Under Assumption~\ref{ass:main_gp_setup}, let
\(V_{t+1}^{(\para)}\) denote \(V_{t+1}\) obtained when the
next query is fixed at \(\para \in \Omega \) without observing its response.
Then
\(
\EE\{
V_{t+1}^{(\para)}
\mid
\mathcal H_t
\}
=
V_t-\Delta_t(\para).
\)
Moreover,
\begin{equation*}
    \EE\!\left\{
{\mathrm{TV}}(\bar\pi_t ,\pi_\dataseq)
\,\middle|\,
\mathcal H_t
\right\}
\leq
\left(V_t/2\right)^{1/2},
\quad\text{and}\quad
\EE\!\left\{
{\mathrm{TV}}(\bar\pi_t ,\pi_\dataseq)
\right\}
\leq
\left\{\EE[V_t]/2\right\}^{1/2}.
\end{equation*}
\end{proposition}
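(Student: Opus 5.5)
The plan has two parts: a deterministic-variance identity, which follows from the martingale property, and a TV bound, which follows from Pinsker's inequality plus a Gaussian integration-by-parts bound on a symmetrised KL divergence.

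\textbf{Variance identity.} Fix \(\para\in\Omega\) and add the query \(\para\) to \(\mathcal D_t\). By the standard rank-one GP update,
\(\sigma_{t+1}^2(\bm u)=\sigma_t^2(\bm u)-\kappa_t(\bm u,\para)^2/\{\sigma_t^2(\para)+\varsigma^2\}\).
This update does not depend on the unobserved response, so \(h_{\para}:=\sigma_{t+1}^2\) is \(\mathcal H_t\)-measurable and bounded by \(B_\kappa\). Only \(\bar\pi_{t+1}\) is random given \(\mathcal H_t\), through \(y_{t+1}\).

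Exactly as in the proof of \Cref{prop:barpi_martingale_main}, write \(\bar\pi_{t+1}(A)=\EE\{\pi_\dataseq(A)\mid\mathcal H_t,y_{t+1}\}\), with \(\para\) fixed. The tower property then gives \(\EE\{\int h\,\dee\bar\pi_{t+1}\mid\mathcal H_t\}=\int h\,\dee\bar\pi_t\) for \(\mathcal H_t\)-measurable bounded \(h\). Applying this with \(h=h_{\para}\) yields \(\EE\{V_{t+1}^{(\para)}\mid\mathcal H_t\}=\int(\sigma_t^2-\kappa_t(\cdot,\para)^2/\{\sigma_t^2(\para)+\varsigma^2\})\,\dee\bar\pi_t=V_t-\Delta_t(\para)\). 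The convention for \(\varsigma^2=\sigma_t^2(\para)=0\) matches the removal of redundant observations in (A2).

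\textbf{TV bound.} By (A3), conditional on \(\mathcal H_t\) we have \(f,f_t\) i.i.d. \(\sim\PP_t\), with \(\pi_\dataseq=\pi(\cdot\mid f)\) and \(\bar\pi_t=\EE_{f_t}\pi(\cdot\mid f_t)\). The argument is a chain of three inequalities.
\begin{itemize}
\item Since \(\bar\pi_t\) is a mixture, convexity of TV gives \({\mathrm{TV}}(\bar\pi_t,\pi_\dataseq)\le\EE_{f_t}[{\mathrm{TV}}\{\pi(\cdot\mid f_t),\pi(\cdot\mid f)\}]\).
\item Pinsker's inequality bounds each term by \(\sqrt{\operatorname{KL}\{\pi(\cdot\mid f)\|\pi(\cdot\mid f_t)\}/2}\).
\item Jensen's inequality over \((f,f_t)\) then gives \(\EE\{{\mathrm{TV}}\mid\mathcal H_t\}\le(\mathcal K/2)^{1/2}\), where \(\mathcal K:=\EE[\operatorname{KL}\{\pi(\cdot\mid f)\|\pi(\cdot\mid f_t)\}\mid\mathcal H_t]\).
\end{itemize}
It therefore suffices to show \(\mathcal K\le V_t\).

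\textbf{Bounding \(\mathcal K\).} By exchangeability of \((f,f_t)\), \(\mathcal K\) equals half the expected symmetrised KL divergence. In the symmetrised divergence the normalising constants cancel, which gives
\(\mathcal K=\tfrac12\EE\int(f-f_t)(p_f-p_{f_t})\,\dee\lambda=\EE\int(f-f_t)p_f\,\dee\lambda\), writing \(p_g:=\dee\pi(\cdot\mid g)/\dee\lambda\). Since \(f_t\) is independent of \(f\) given \(\mathcal H_t\), this becomes \(\mathcal K=\int_\Omega\EE[\{f(\bm u)-\mu_t(\bm u)\}p_f(\bm u)\mid\mathcal H_t]\,\lambda(\dee\bm u)\).

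Now apply Gaussian integration by parts (Stein's identity) with covariance \(\kappa_t\). The functional derivative is \(\delta p_f(\bm u)/\delta f(\bm v)=p_f(\bm u)\delta_{\bm u}(\bm v)-p_f(\bm u)p_f(\bm v)\), so
\[
\EE[\{f(\bm u)-\mu_t(\bm u)\}p_f(\bm u)\mid\mathcal H_t]=\sigma_t^2(\bm u)\bar p_t(\bm u)-\EE\Big[p_f(\bm u)\int\kappa_t(\bm u,\bm v)p_f(\bm v)\,\dee\bm v\,\Big|\,\mathcal H_t\Big].
\]
Integrating over \(\bm u\) gives \(\mathcal K=V_t-\EE[\iint\kappa_t(\bm u,\bm v)p_f(\bm u)p_f(\bm v)\,\dee\bm u\,\dee\bm v\mid\mathcal H_t]\le V_t\), because \(\kappa_t\) is positive semidefinite. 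The unconditional bound follows by taking expectations and applying Jensen's inequality to the concave square root.

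\textbf{Main obstacle.} The hard step is justifying Stein's identity for the infinite-dimensional functional \(f\mapsto p_f(\bm u)\). I would first prove it for finite-dimensional projections: restrict to a finite grid, or use a Karhunen--Lo\`eve truncation of \(f-\mu_t\), where Riemann sums approximate \(\mathcal Z\). In finite dimensions the identity \(\EE[Xg(X)]=\Sigma\,\EE[\nabla g(X)]\) applies, since \(g\) is smooth with derivatives bounded by multiples of \(p\). I would then pass to the limit by dominated convergence. The continuity and moment conditions in (A5), in particular Fernique-type integrability of \(\|f\|_\infty\), should supply the required integrable envelopes for \(p_f\le e^{2\|f\|_\infty}/\lambda(\Omega)\).
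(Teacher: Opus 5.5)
Your proposal is correct and follows essentially the same route as the paper. For the identity, both arguments use the tower property together with the response-independent rank-one variance update. For the TV bound, both apply convexity of TV, Pinsker, and Jensen, reduce the expected KL to \(\EE_t\int_\Omega (f-\mu_t)\,\dee\pi(\cdot\mid f)\), and then use a finite-partition Gaussian integration-by-parts (softmax Hessian) identity with a dominated-convergence limit, where the cross term is nonnegative because \(\kappa_t\) is positive semidefinite. The only cosmetic difference is how the normalising constants are removed. You cancel them pathwise through the symmetrised KL and exchangeability of \((f,f_t)\). The paper instead cancels the expected \(\log\mathcal Z\) terms directly, since \(f\) and \(f_t\) share the conditional law \(\PP_t\).
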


The first identity remains exact even though the weighting law changes from
\(\bar\pi_t\) to \(\bar\pi_{t+1}\): this follows from the martingale property
in \Cref{prop:barpi_martingale_main} and the response-independent GP variance
update. The two claims are proved in \Cref{appdx:ideal_sale_delta_calibration}. 

The ideal UR rule selects
\(
\para_{t+1}
=
\arg\max_{\para\in\Omega}\Delta_t(\para).
\)
Recall \(B_t\) from
\Cref{subsec: state}, and let
\(
N_T^{\rm UR}=\sum_{t=0}^{T-1}(1-B_t)
\)
be the number of UR evaluations in \(T\) iterations. \Cref{thm:sale_main_calibration} achieves finite-budget expected-TV control indexed by realised \(N_T^{\rm UR}\); its proof is in \Cref{appdx:ideal_sale_delta_calibration}.

\begin{theorem}
\label{thm:sale_main_calibration}
Under \Cref{ass:main_gp_setup}, suppose that \(\kappa\) is either a
Mat\'ern-\(\nu\) kernel, \(\nu>0\), or an SE kernel, satisfying,
respectively, condition \textup{(i)} or \textup{(ii)} of
\Cref{ass:prior_spectral_decay}.
The UR branch selects
\(
\para_{t+1}
=
\arg\max_{\para\in\Omega}\Delta_t(\para)
\).
At non-UR iterations, the query may be selected by any rule satisfying
\Cref{ass:main_gp_setup} \textup{(A3)}. For \(T\geq1\) and \(c>0\), define
\begin{align*}
\mathscr O^{+}(T)
&:=
\begin{cases}
(T+1)^{-\nu/(2\nu+d)},
&\text{for a Mat\'ern-\(\nu\) kernel},\\[1mm]
\{1+\log(T+1)\}^{d/2}(T+1)^{-1/2},
&\text{for an SE kernel},
\end{cases}
\\
\mathscr O^{0}_c(T)
&:=
\begin{cases}
(T+1)^{-\nu/d},
&\text{for a Mat\'ern-\(\nu\) kernel},\\[1mm]
\exp\left\{-c(T+1)^{1/(d+1)}\right\},
&\text{for an SE kernel}.
\end{cases}
\end{align*}
Then, for any fixed \(\varsigma^2\geq0\) and every
\(\delta,q\in(0,1)\), there exists \(C_{\delta,q}<\infty\), independent
of \(T\), such that
\begin{equation}
\label{eq:tv_bound_by_ur_count}
    \EE\!\left\{
{\mathrm{TV}}(\bar\pi_T ,\pi_\dataseq)
\right\}
\leq \begin{cases}
C_{\delta,q}\mathscr O^{+}(T)
+
C_{\delta,q}
\PP\!\left(N_T^{\rm UR}<qT\right)^{1/2}
+
\delta,\\
C_{\delta,q}\mathscr O^{0}_{c_{\delta,q}}(T)
+
C_{\delta,q}
\PP\!\left(N_T^{\rm UR}<qT\right)^{1/2}
+
\delta, \qquad \text{if $\varsigma^2 = 0$}.
\end{cases}
\end{equation}
If, in addition, 
\(
\liminf_{T\to\infty}N_T^{\rm UR}/T
>0
\) a.s.,
then, for every \(\delta\in(0,1)\), there exist
\(C_\delta<\infty\), \(c_\delta>0\), and \(T_\delta<\infty\),
independent of \(T\), such that, for every \(T\geq T_\delta\),
\begin{equation}
\label{eq:tv_bound}
\EE\!\left\{
{\mathrm{TV}}(\bar\pi_T,\pi_\dataseq)
\right\}
\leq \begin{cases}
    C_\delta\mathscr O^{+}(T)+\delta, \\
C_\delta\mathscr O^{0}_{c_\delta}(T)+\delta,
\qquad\text{if \(\varsigma^2=0\)}.
\end{cases}
\end{equation}
\end{theorem}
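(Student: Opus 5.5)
By \Cref{prop:ur_calibration_link_main}, $\EE\{\TV(\bar\pi_T,\pi_\dataseq)\}\le\{\EE[V_T]/2\}^{1/2}$, so the whole proof reduces to bounding $\EE[V_T]$, where $V_T=\EE_{\bm u\sim\bar\pi_T}\{\sigma_T^2(\bm u)\}$. The difficulty is that $\bar\pi_T$ is random and data-dependent. I would therefore split $V_T$ on a high-probability event on which $\bar\pi_T$ has a bounded Lebesgue density. On that event, $V_T$ is dominated by a constant times the Lebesgue-integrated variance $\int_\Omega\sigma_T^2\,\dee\lambda$, or by its analogue over a compact sub-level set.

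\textbf{Step 1 (density envelope).} Using (A5) and Borell--TIS applied to $\sup_\Omega|f|$, I would show that for every $\delta'>0$ there is $M_{\delta'}$ with $\PP(\sup_\Omega|f|>M_{\delta'})\le\delta'$. By the tower property, $\bar p_T(\para)=\EE\{p_\dataseq(\para)\mid\mathcal H_T\}$, and $p_\dataseq\le e^{2\sup|f|}/\lambda(\Omega)$. A conditional Markov inequality, using $\EE[e^{2\sup|f|}]<\infty$, then gives an event $E_{\delta'}$ of probability at least $1-\delta'$ on which $\|\bar p_T\|_\infty\le L_{\delta'}$. Since $\sigma_T^2\le B_\kappa$, on $E^c$ we have $V_T\le B_\kappa$, and this piece contributes at most $B_\kappa\delta'$. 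This is the source of the additive $\delta$ after taking square roots, choosing $\delta'\asymp\delta^2$.

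\textbf{Step 2 (greedy Lebesgue-free bound).} On $E_{\delta'}$, $V_T\le L_{\delta'}\int_\Omega\sigma_T^2\,\dee\lambda$. Variances are monotonically non-increasing under additional conditioning, and the UR queries form a non-anticipating sequence. I would then follow the standard argument for greedy weighted variance reduction. Each UR step chooses $\para_{t+1}$ maximising $\Delta_t$, and $\Delta_t$ is the expected one-step decrease of $V_t$ (\Cref{prop:ur_calibration_link_main}). Combining this with the martingale property of \Cref{prop:barpi_martingale_main} gives
\[
\EE[V_{t+1}\mid\mathcal H_t]=V_t-\Delta_t(\para_{t+1})\le V_t-\sup_{\para}\Delta_t(\para)
\]
at UR steps, and $\EE[V_{t+1}\mid\mathcal H_t]\le V_t$ at non-UR steps, again by (A3) and the martingale identity.

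To turn this into a rate, I would compare $\sup_\para\Delta_t$ with the decrease obtainable from a fixed quasi-uniform design. Under \Cref{ass:prior_spectral_decay}, Mercer eigenvalue decay gives $\int\sigma^2\,\dee\lambda$ rates for $n$ well-spread points: $n^{-2\nu/(2\nu+d)}$ (noisy Matérn), $n^{-2\nu/d}$ (noiseless Matérn), $\{\log n\}^d/n$ (noisy SE), and $\exp(-cn^{1/(d+1)})$ (noiseless SE). Squaring these yields exactly $\mathscr O^+$ and $\mathscr O^0_c$. The comparison would go through a one-step averaging argument: the best single point does at least as well as the average over a batch of points. In the greedy/submodular style, this yields $V$ after $N$ UR steps of order $\inf_n\{\text{rate}(n)\}$, up to constants depending on $L_{\delta'}$.

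\textbf{Step 3 (counting UR steps).} Restricting to $\{N_T^{\rm UR}\ge qT\}$, the monotonicity of $V$ gives the rate with $T$ replaced by $qT$, and $q$ is absorbed into $C_{\delta,q}$. On the complement, Cauchy--Schwarz gives the bound $\EE[V_T\mathbf 1\{N_T^{\rm UR}<qT\}]\le B_\kappa\PP(N_T^{\rm UR}<qT)$, which becomes the $\PP^{1/2}$ term after the square root. For \eqref{eq:tv_bound}, the almost-sure $\liminf$ hypothesis implies $\PP(N_T^{\rm UR}<qT)\to0$ for small enough $q$. Absorbing this term into $\delta$ for $T\ge T_\delta$ completes the argument.

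\textbf{Main obstacle.} The hard step is Step 2: the weighting law $\bar\pi_t$ changes with $t$, and the greedy decrease must be compared with deterministic space-filling rates while we only control the density of $\bar\pi_t$ from above, not from below. I would first try to work throughout with the fixed dominating measure $L_{\delta'}\lambda$. The difficulty is that the greedy rule optimises $\Delta_t$ under $\bar\pi_t$, not under $\lambda$. To handle this, I would bound the relevant quantity by $\bar\pi_t$-weighted information-gain or elliptical-potential sums: summing $\Delta_t\ge V_t^2/(\text{const})$, where the constant comes from Cauchy--Schwarz on the kernel, yields a recursion on $\EE[V_t]$. Inverting that recursion gives the stated orders via eigenvalue tail bounds.
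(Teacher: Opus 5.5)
Your reduction to $\EE[V_T]$ and the splitting off of a low-probability bad event are fine. Step~2, however, carries the whole rate and does not work as proposed.

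\textbf{The Lebesgue comparison fails.} The bound $V_T\le L_{\delta'}\int_\Omega\sigma_T^2\,\dee\lambda$ is true but useless. The greedy rule maximises the $\bar\pi_t$-weighted reduction $\Delta_t$, and you control $\bar p_t$ only from above. Nothing forces the design to cover regions where $\bar\pi_t$ has little mass, so $\int_\Omega\sigma_T^2\,\dee\lambda$ need not decay at all. The proposed comparison with a quasi-uniform batch would also need a submodularity-type property of the (moving-weight) variance reduction, which you neither state nor prove.

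\textbf{The fallback needs a $V_t$-dependent constant.} Your ``$\Delta_t\ge V_t^2/\mathrm{const}$'' cannot hold with a $V_t$-free constant; it would give $V\lesssim 1/n$ for every kernel and noise level. The missing key lemma is spectral. Let $\lambda_{t,j}$ be the eigenvalues of the integral operator of $\kappa_t$ on $L^2(\bar\pi_t)$, so $V_t=\sum_j\lambda_{t,j}$. Then
\[
\sup_{\para}\Delta_t(\para)\ge\int_\Omega\Delta_t\,\dee\bar\pi_t\ge(B_\kappa+\varsigma^2)^{-1}\sum_j\lambda_{t,j}^2 .
\]
On the density event, $\lambda_{t,j}\le M\lambda^\lambda_{0,j}$. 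This follows by comparing approximation numbers of the RKHS embeddings into $L^2(\bar\pi_t)$ and $L^2(\lambda)$, then using $\kappa_t\preceq\kappa$ and Courant--Fischer. Truncate at $m_t\asymp V_t^{-d/(2\nu)}$ (resp.\ $m_t\asymp\{1+\log(C/V_t)\}^d$ for SE) so that the tail is at most $V_t/2$. Cauchy--Schwarz then gives $\sup\Delta_t\ge c\,g(V_t)$ with $g(v)=v^{2+d/(2\nu)}$ (resp.\ $g(v)=v^2\{1+\log(C/v)\}^{-d}$). For $\varsigma^2=0$, average instead against the variance-tilted law $\sigma_t^2\bar\pi_t/V_t$. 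This removes one power of $V_t$ and is what produces the faster $\mathscr O^0_c$ rate.

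\textbf{The density envelope must be time-uniform.} The drift bound is needed at every UR time $t<T$, on $\mathcal H_t$-measurable nested events $\mathcal E_t$. Only then is $W_t:=V_t\mathbf 1_{\mathcal E_t}$ a supermartingale whose conditional drift, given $\mathcal H_t$ and the branch label, is at most $-(1-B_t)c\,g(W_t)$. Your conditional Markov bound at the single time $T$ does not give this, and a union bound over $t$ loses a factor $T$. The paper gets a horizon-uniform event from Doob's maximal inequality applied to the martingale $\EE_t[\sup_\Omega p_\dataseq]$.

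\textbf{$V_t$ is not pathwise monotone.} Because $\bar\pi_t$ moves with the data, ``monotonicity of $V$'' on $\{N_T^{\rm UR}\ge qT\}$ is not available. A recursion ``on $\EE[V_t]$'' indexed by $t$ also does not count UR steps. You need to pass to the UR completion times $\rho_n$. Use optional sampling for these possibly infinite stopping times together with Jensen's inequality for the concave nondecreasing $\phi(x)=x-c_0g(x)$; this is where convexity of $g$ enters. That gives $z_{n+1}\le z_n-c_0g(z_n)$ for $z_n=\EE[W_{\rho_n}]$. Then apply bounded optional sampling at $\rho_{n_T}\wedge T$, using $\{N_T^{\rm UR}\ge n_T\}=\{\rho_{n_T}\le T\}$.

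\textbf{Minor points.} Your listed space-filling rates are $\{\mathscr O^+\}^2$ and $\{\mathscr O^0_c\}^2$, not their square roots. The $\liminf$ hypothesis only makes $\PP(N_T^{\rm UR}<qT)$ eventually small for small $q$, not vanishing. That suffices only because the coefficient of this term, $\sqrt{B_\kappa/2}$, does not depend on $q$.
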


\begin{corollary}
\label{cor:ur_activation_main}
Assume the setup and ideal UR rule of
\Cref{thm:sale_main_calibration}. Suppose additionally that
\Cref{ass:delta_state_proxy_smoothness} holds, that $\mathcal{D}_0$ contains at least \(k+1\) distinct input locations, and that
\(\widehat p_{t,k}\) is computed as in
\Cref{subsec: state}. Then there exists a random variable
\(\rho_\star<1\) such that
\(
\sup_{t\geq0}\widehat p_{t,k}
\leq
\rho_\star\) and
\(\liminf_{T\to\infty}N_T^{\rm UR}/T
\geq
1-\rho_\star
>0
\) a.s.
Consequently, for every \(\delta\in(0,1)\), there exists
\(T_\delta<\infty\) such that \Cref{eq:tv_bound} holds for all
\(T\geq T_\delta\). 
\end{corollary}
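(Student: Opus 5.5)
The key step is an almost-sure uniform bound $\sup_{t\ge0}\widehat p_{t,k}\le\rho_\star<1$. Given that bound, the liminf statement is a martingale strong law, and the last claim is \Cref{thm:sale_main_calibration}, second part.

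\emph{Step 1: bounding the proxy.} Since $\tanh$ is increasing and $\tanh(x)<1$ for finite $x$, it suffices to show $\sup_t\Delta^{\rm loc}_{t,k}<\infty$ a.s. By construction, $\Delta^{\rm loc}_{t,k}$ is the mean of the $k$ smallest positive curvature-adjusted distances from $\para_t^\dagger$ to design points. Because $\mathcal D_0\subseteq\mathcal D_t$ contains at least $k+1$ distinct locations, at least $k$ of them differ from $\para_t^\dagger$. Hence
\[
\Delta^{\rm loc}_{t,k}\le \lambda_{\max}(\mathbf C_t)^{1/2}\,\operatorname{diam}(\Omega),
\]
where $\operatorname{diam}(\Omega)<\infty$ by compactness in (A1). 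It remains to bound $\sup_t\lambda_{\max}(\mathbf C_t)$ a.s. This is where \Cref{ass:delta_state_proxy_smoothness} enters. I expect it to give a.s. uniform control of second derivatives of $\mu_t$ on $\Omega$, together with boundedness of the stabilisation map $\mathcal M$ on bounded sets. The natural route is to write $\partial_{ij}\mu_t=\EE\{\partial_{ij}f\mid\mathcal H_t\}$, which is a martingale. Doob's maximal inequality, combined with a chaining bound on $\EE\sup_\Omega|\partial_{ij}f|$ (the analogue of the Dudley integrals in (A5)), then gives $\EE\sup_t\sup_{\para}|\partial_{ij}\mu_t(\para)|<\infty$, hence a finite random bound $H_\star$. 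Setting $\rho_\star:=\tanh\{C\,\mathcal M_\star(H_\star)^{1/2}\operatorname{diam}(\Omega)/a\}<1$ completes this step, where $\mathcal M_\star$ bounds $\lambda_{\max}(\mathcal M(\cdot))$ on matrices of norm at most $H_\star$.

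\emph{Step 2: the UR fraction.} By (A3), $B_t\mid\mathcal H_t\sim\mathrm{Bernoulli}(\widehat p_{t,k})$ with $\widehat p_{t,k}$ being $\mathcal H_t$-measurable. Therefore $M_T:=\sum_{t<T}\{(1-B_t)-(1-\widehat p_{t,k})\}$ is a martingale with increments bounded by $1$. The martingale strong law (via Azuma and Borel--Cantelli) gives $M_T/T\to0$ a.s. Consequently,
\[
\liminf_T N^{\rm UR}_T/T=\liminf_T T^{-1}\sum_{t<T}(1-\widehat p_{t,k})\ge1-\rho_\star>0 \quad\text{a.s.}
\]
Note that $\rho_\star$ is random; the strong law is applied pathwise and needs no uniformity in $\rho_\star$.

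\emph{Step 3: the TV bound.} The hypothesis of the second part of \Cref{thm:sale_main_calibration} now holds, so \Cref{eq:tv_bound} follows for $T\ge T_\delta$.

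The main obstacle is Step 1: obtaining a bound on the Hessian of $\mu_t$ that is uniform in $t$ and evaluated at the data-dependent point $\para_t^\dagger$. If \Cref{ass:delta_state_proxy_smoothness} instead directly assumes bounded $C^2$ norms of $\mu_t$, this step is immediate. Otherwise, the martingale/maximal-inequality argument above is what I would try first.
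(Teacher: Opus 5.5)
Your proposal is correct and follows essentially the same route as the paper. The paper (via \Cref{lem:allocation_proxy_control}) likewise bounds $\Delta^{\rm loc}_{t,k}\le D_\Omega\|\mathbf C_t\|_{\rm op}^{1/2}$ using the $k+1$ distinct initial points. It then controls $\|\mathbf C_t\|_{\rm op}\le C_{\mathcal M}(1+Z_2)$, where $Z_2=\sup_t\EE(\sup_\Omega\|\nabla^2 f\|_{\rm op}\mid\mathcal H_t)$ is finite a.s.\ by conditional Jensen, the moments from \Cref{lem:conditional_gp_supremum}, and Doob's $L^r$ maximal inequality. Finally, it applies the martingale strong law to $B_t-\widehat p_{t,k}$ and invokes the last part of \Cref{thm:sale_main_calibration}.

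Two details to tighten. First, Doob needs an $L^r$ moment with $r>1$, which \Cref{lem:conditional_gp_supremum} supplies, if you claim $\EE\sup_t$ rather than just a.s.\ finiteness. Second, the martingale in Step 2 must be taken with respect to a pre-branch filtration that also contains $B_0,\ldots,B_{t-1}$, not $\mathcal H_t$ alone.
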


\begin{remark}
\label{rem:ideal_ur_calibration_scope}
On bounded domains with the standard boundary regularity,
\citet[Theorem~15]{santin2016approximation} gives
\(\lambda_{0,j}^\lambda\lesssim j^{-(1+2\nu/d)}\) for a
Mat\'ern-\(\nu\) kernel and
\(\lambda_{0,j}^\lambda\lesssim\exp(-c j^{1/d})\) for an SE kernel, where $\lambda_{0,j}^\lambda$ is the eigenvalue of the prior covariance operator in \Cref{ass:prior_spectral_decay}.
Thus \Cref{ass:prior_spectral_decay} \textup{(i)} holds for Mat\'ern-$\nu$ kernels, and \Cref{ass:prior_spectral_decay} \textup{(ii)} holds for SE kernels.
The Mat\'ern smoothness must, in addition, be large enough for
\Cref{ass:main_gp_setup}, that is, $\nu > 1$. The additional \Cref{ass:delta_state_proxy_smoothness} and initial-design conditions in \Cref{cor:ur_activation_main} are used only to verify
\(N_T^{\rm UR}\); they are not needed for
\Cref{eq:tv_bound_by_ur_count}.
If \(N_T^{\rm UR}\equiv T\),
\Cref{eq:tv_bound}
hold for every \(T\geq1\).
\end{remark}

\Cref{prop:ur_calibration_link_main} connects sequential design to
posterior accuracy: \(\Delta_t\) controls reduction in
\(V_t\), and \(V_t\) controls the TV error. The spectral-decay
condition then converts repeated \(\Delta_t\)-maximisation into a
budget-dependent bound on the final EP's expected TV error. To our knowledge,
\Cref{thm:sale_main_calibration} gives the first explicit
budget-dependent bound on the expected TV error of the final
surrogate posterior under sequential posterior-weighted UR. 

The ideal $\Delta_t$ rule, however, is not used directly in SALE. As mentioned in \Cref{sec:intro}, before localisation, $\Delta_t$ inherits the uncertainty in $\bar\pi_t$ and can spend evaluations in inference-irrelevant regions \citep{seo2000gaussian,gramacy2020surrogates}. Moreover, evaluating \(\Delta_t\) requires repeated integration against \(\bar\pi_t\), covariance calculations, and continuous optimisation over \(\Omega\). SALE therefore
uses a state-dependent computational proxy motivated by $\Delta_t$.

Let \(\mathcal X_t=\{\para_s\}_{s=1}^S\) be an approximate sample from
\(\bar\pi_t\). The Monte Carlo version of \(\Delta_t\) is
\begin{equation*}
    \widehat\Delta_t^{\rm MC}(\para)
=
\frac{\sum_{s=1}^S
\kappa_t(\para_s,\para)^2}
{S\{\sigma_t^2(\para)+\varsigma^2\}}.
\end{equation*}

When localisation remains unresolved, SALE tempers $\bar\pi_t$ using \(\mathbf S_t\) from \Cref{subsec: state}. For \(p\in[0,1]\), let
\(
\rho_t^{(p)}(\dee\para)
\propto 
\exp\{p\mu_t(\para)\}\bar\pi_t(\dee\para).
\)
The state-specific laws are
\(
\rho_{t,1}(\dee\para)
:=
\rho_t^{(0)}(\dee\para)
=
\bar\pi_t(\dee\para),
\)
and
\(
\rho_{t,0}(\dee\para)
:=
\rho_t^{(1)}(\dee\para)
\propto
\exp\{\mu_t(\para)\}\bar\pi_t(\dee\para).
\)
Thus, \(\mathbf S_t=1\) gives pure \(\bar\pi_t\)-weighted calibration, while \(\mathbf S_t=0\) tilts the weighting measure toward high GP posterior means. Given \(
p_t^\star
=
\PP(\mathbf S_t=0\mid\mathcal H_t),
\) the Bayes action under posterior expected reverse-KL loss is \(\rho_t^{(p_t^\star)}\); see \Cref{app:reverse_kl_calibration} for derivation.

Replacing \(p_t^\star\) by the allocation proxy \(\widehat p_{t,k}\) gives the tempered Monte Carlo criterion
\begin{equation}
\label{eq:tempered_mc_criterion}
\widehat\Delta_{t,\widehat p_{t,k}}^{\mathrm{MC}}(\para)
=
\frac{
\sum_{s=1}^S
w_{t,s}^{(\widehat p_{t,k})}
\kappa_t(\para_s,\para)^2
}{
\sigma_t^2(\para)+\varsigma^2
}, \quad w_{t,s}^{(\widehat p_{t,k})}
\propto
\exp\{\widehat p_{t,k}\mu_t(\para_s)\}, \quad \sum_{s= 1}^S w_{t,s}^{(\widehat p_{t,k})} = 1.
\end{equation}

\(\widehat\Delta_{t,\widehat p_{t,k}}^{\mathrm{MC}}\) is a first-principles bridge from \(\Delta_t\) to practical implementation. It identifies two ingredients that should be retained in practice: state-tempered posterior weights and GP uncertainty reduction. We convert these ingredients into a cheaper proxy in three steps.

First, UR candidates are restricted to
\(
\para \in \mathcal U_t
:=
\operatorname{unique}(\mathcal X_t).
\)
This removes the need for continuous optimisation, while approximately retaining the $\bar\pi_t$-weighting in $\Delta_t$.

Second, SALE reuses the most recently refreshed sample
\(\mathcal X_{r(t)}\), where
\(
t-r(t)\leq T_{\max},
\) for a fixed \(T_{\max}\), rather than regenerating
\(\mathcal X_t\) at every active time \(t\). 

Third, evaluating all cross-covariance terms in
\Cref{eq:tempered_mc_criterion} remains expensive. We therefore replace the
full covariance calculation by a local ranking proxy. On the log scale, \(w_{t,s}^{(\widehat p_{t,k})}\) contributes \(\widehat p_{t,k}\mu_t(\para)\), while \(\log\sigma_t(\para)\) supplies a local GP-uncertainty ranking. This gives
\begin{equation}
\label{eq: UR_t_main}
\mathrm{UR}_t(\para)
:=
\widehat p_{t,k}\mu_t(\para)+\log\sigma_t(\para).
\end{equation}
The implemented UR rule is therefore
\(
\para_{n_t+1}
=
\arg\max_{\para\in\mathcal U_{r(t)}}
\mathrm{UR}_t(\para)\),
\(\mathcal U_{r(t)}
=
\operatorname{unique}\{\mathcal X_{r(t)}\}.
\)
\(\mathcal U_{r(t)}\) restricts UR to the current $\bar\pi_t$ sample,
\(\log\sigma_t(\para)\) ranks local uncertainty, and
\(\widehat p_{t,k}\mu_t(\para)\) provides a value correction while localisation
remains unresolved.

\Cref{thm:sale_practical_score_calibration} gives a budget-dependent
expected-TV rate for the implemented UR score in \Cref{eq: UR_t_main};
the proof is in \Cref{appdx:fixed_sample_calibration}. Let
\(\mathfrak r_0=0<\mathfrak r_1<\mathfrak r_2<\cdots\) be the refresh
times, so that
\(
r(t)=\max\{\mathfrak r_i:\mathfrak r_i \leq t\}.
\)
At \(\mathfrak r_i\),
\(\mathcal X_{\mathfrak r_i}
=\{\para_{\mathfrak r_i,s}\}_{s=1}^S\) is a sample such that
\(
\para_{\mathfrak r_i,s}\mid\mathcal H_{\mathfrak r_i}
\sim\bar\pi_{\mathfrak r_i}\), \(1\leq s\leq S
\). \(\mathcal X_{\mathfrak r_i}\) is held fixed until the next refresh. Define
\begin{equation*}
    J_t
\in
\arg\max_{1\leq s\leq S}
\left\{
\widehat p_{t,k}\mu_t(\para_{r(t),s})
+
\log\sigma_t(\para_{r(t),s})
\right\},
\end{equation*}
using deterministic tie-breaking and setting \(\log0=-\infty\).
The UR query is
\(\para_{t+1}=\para_{r(t),J_t}\).

\begin{theorem}
\label{thm:sale_practical_score_calibration}
Suppose \Cref{ass:main_gp_setup} holds, except that its fresh-randomness
condition \textup{(A3)} is replaced by the persistent-sample condition in
\Cref{ass:bounded_lag_refresh}, and suppose
\Cref{ass:delta_state_proxy_smoothness} holds. Assume that
\(\mathcal D_0\) contains at least \(k+1\) distinct input locations. Fix \(S\geq1\) and \(\upsilon^2>0\) with
\(\upsilon^2\geq\varsigma^2 \ge 0\). Suppose that
\(
\Pr(B_t=1\mid\mathcal H_t,\mathcal X_{r(t)})
=
\widehat p_{t,k}
\)
and that the UR branch uses the fixed-sample rule above. Then, as
\(T\to\infty\),
\begin{equation*}
\mathbb E\!\left\{
{\mathrm{TV}}(\bar\pi_T,\pi_\dataseq)
\right\}
=
\begin{cases}
\mathcal O\!\left(
T^{-\frac{\nu}{2\nu+d}}
\{\log(T+1)\}^{\frac{4\nu+d}{2(2\nu+d)}}
\right),
& \text{for a Mat\'ern kernel with \(\nu > 1/2\)},\\[1mm]
\mathcal O\!\left(
T^{-1/2}\{\log(T+1)\}^{\frac{d+1}{2}}
\right),
& \text{for an SE kernel}.
\end{cases}
\end{equation*}
\end{theorem}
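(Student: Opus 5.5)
Write \(\mathfrak s_t:=\sup_{\para\in\Omega}\sigma_t^2(\para)\). The plan is to route everything through \Cref{prop:ur_calibration_link_main}. It gives \(\EE\{\TV(\bar\pi_T,\pi_\dataseq)\}\le\{\EE[V_T]/2\}^{1/2}\), and since \(V_T\le\mathfrak s_T\), it suffices to bound the expected \(\bar\pi_T\)-weighted variance. Monotonicity of GP variance in the design means \(V_T\) is dominated by \(V_T\) computed from the UR queries alone. The task thus splits into two parts: (a) controlling the number of UR queries, and (b) controlling the variance reduction each fixed-sample UR query achieves relative to the ideal \(\Delta_t\).

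For (a), I use the argument behind \Cref{cor:ur_activation_main}. Under \Cref{ass:delta_state_proxy_smoothness} and the \(k+1\) initial distinct points, \(\sup_t\widehat p_{t,k}\le\rho_\star<1\) a.s. With \(\Pr(B_t=1\mid\mathcal H_t,\mathcal X_{r(t)})=\widehat p_{t,k}\), a martingale (Freedman/Azuma) bound on \(\sum_t\{(1-B_t)-(1-\widehat p_{t,k})\}\) gives \(N_T^{\rm UR}\ge(1-\rho_\star)T/2\) outside an event of probability decaying in \(T\). To make \(\rho_\star\) usable in expectations, I truncate on \(\{\rho_\star\le 1-q\}\); the complementary mass is absorbed as in \Cref{eq:tv_bound_by_ur_count}. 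Since that only yields an \(o(1)\) slack, I would rather have the truncation enter as a deterministic constant. I therefore expect to need \(\rho_\star\) bounded by a constant, which I would extract from \Cref{ass:delta_state_proxy_smoothness}.

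For (b), the UR query maximises \(\widehat p\mu_t+\log\sigma_t\) over \(S\) points drawn from \(\bar\pi_{r(t)}\). Because \(|\widehat p\mu_t|\) varies over the candidates by at most \(2\sup|\mu_t|\), which is \(\mathcal O(\sqrt{\log T})\) with high probability by (A5) and a Borell--TIS bound, the selected point satisfies \(\sigma_t^2(\para_{t+1})\ge e^{-\mathcal O(\sqrt{\log T})}\max_s\sigma_t^2(\para_{r(t),s})\) on that event. This is too lossy, so I would instead bound directly. The chosen points have \(\sigma_t^2\) comparable, up to a factor \(e^{2\sup|\mu_t|}\), to the largest posterior variance at a \(\bar\pi_{r(t)}\)-draw. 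Summing over UR times and using the elliptical-potential argument,
\[\sum_{t\in\rm UR}\min\{1,\sigma_t^2(\para_{t+1})/\upsilon^2\}\le C\gamma_T(\upsilon),\]
gives an average sampled-point variance of order \(\gamma_T/T\), inflated by logarithmic factors.

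The key transfer step is from variance at sample points to \(V_T\). Because \(\para_{r(t),s}\sim\bar\pi_{r(t)}\) conditionally and \(\bar\pi_t\) is a martingale (\Cref{prop:barpi_martingale_main}), \(\EE\{\sigma_t^2(\para_{r(t),s})\mid\mathcal H_{r(t)}\}\ge\EE\{V_t\mid\mathcal H_{r(t)}\}\), since variance is nonincreasing for \(t\ge r(t)\). The obstacle is that the sample is reused, so it is not independent of the later choices. To handle this, I use bounded lag \(T_{\max}\) from \Cref{ass:bounded_lag_refresh} and lower-bound \(\max_s\) by a single fixed index \(s=1\), which is legitimately distributed as \(\bar\pi_{r(t)}\). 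Then \(V_T\le V_t\) in expectation (via monotonicity and the supermartingale property of \(V_t\) from \Cref{prop:ur_calibration_link_main}), so \(\EE V_T\le (N_T^{\rm UR})^{-1}\sum_{t\in\rm UR}\EE V_t\lesssim e^{\mathcal O(\sqrt{\log T})}\gamma_T/T\).

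To get exactly the stated log exponents, I would replace the sup-norm factor by a Mat\'ern/SE smoothness argument. This bounds \(\mu_t\) differences between nearby candidates by local Lipschitz constants of order \(\sqrt{\log T}\), which keeps losses polylogarithmic. The final step substitutes \(\gamma_T\): \(T^{d/(2\nu+d)}(\log T)^{2\nu/(2\nu+d)}\) for Mat\'ern and \((\log T)^{d+1}\) for SE. Taking square roots yields the claimed rates. This substitution step, and in particular avoiding the \(e^{\sup|\mu_t|}\) loss, is the main obstacle.
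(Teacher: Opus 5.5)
There is a genuine gap, and it is at the transfer step you already flagged as the main difficulty. Your argument needs every UR query to capture a constant fraction of \(V_t\), via \(\EE\{\sigma_t^2(\para_{r(t),1})\mid\mathcal H_{r(t)}\}\ge\EE\{V_t\mid\mathcal H_{r(t)}\}\). This fails for \(t>r(t)\), for three reasons:
\begin{itemize}
\item \(\sigma_t^2\) depends on \(\mathcal X_{r(t)}\) through the queries already placed at sample points.
\item The martingale identity for \(\bar\pi_t\) only moves integrals of \(\mathcal H_{r(t)}\)-measurable integrands, and \(\sigma_t^2\) is not one.
\item Monotonicity \(\sigma_t^2\le\sigma_{r(t)}^2\) gives upper bounds, not the lower bound you need.
\end{itemize}
A concrete counterexample is \(S=1\), \(\varsigma^2=0\), which the theorem allows. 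Every UR query in a refresh window is then the single sample point, so after the first one \(\sigma_t^2(\para_{t+1})=0\) while \(V_t>0\). Hence no argument that credits all \(N_T^{\rm UR}\) UR steps can work. Averaging over a random index set with the random normaliser \(N_T^{\rm UR}\) inside an expectation is also unjustified.

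The paper uses only the refresh times \(\mathfrak r_i\), where the sample is fresh, so \(\EE\{\max_s\sigma^2_{\mathfrak r_i}(\para_{\mathfrak r_i,s})\mid\mathcal H_{\mathfrak r_i}\}\ge V_{\mathfrak r_i}\). The score comparison, together with the UR probability \(1-\widehat p_{\mathfrak r_i,k}\), then gives \(\EE\{\sigma^2_{\mathfrak r_i}(\para_{\mathfrak r_i+1})\mid\mathcal H_{\mathfrak r_i}\}\ge q_iV_{\mathfrak r_i}\), where \(q_i=(1-\widehat p_{\mathfrak r_i,k})\exp\{-2\widehat p_{\mathfrak r_i,k}M^\mu_{\mathfrak r_i}\}\). \(V_t\) is a bounded supermartingale, and bounded lag places at least \(\mathcal I_T=\lceil T/(T_{\max}+1)\rceil\) refresh times in \([T-1]_0\). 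Optional sampling therefore gives \(\mathcal I_T\EE(V_T)\le\sum_{i<\mathcal I_T}\EE(V_{\mathfrak r_i})\), and no control of \(N_T^{\rm UR}\) is needed.

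Your handling of the constants does not close either.
\begin{itemize}
\item \(\rho_\star=\tanh(c_aD_\Omega\sqrt{C_\star})\), with \(C_\star=C_{\mathcal M}(1+Z_2)\), cannot be bounded away from one by a deterministic constant, because \(Z_2\) is unbounded.
\item Truncating on \(\{\rho_\star\le1-q\}\) leaves either an \(o(1)\) remainder or a \(T\)-dependent inflation of the main term.
\item The local-Lipschitz repair of the \(e^{2\widehat p\sup|\mu_t|}\) factor is unavailable, since the \(S\) candidates are independent \(\bar\pi_{r(t)}\) draws, not nearby points.
\end{itemize}
The missing observation is that none of these losses grows with \(T\). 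We have \(M_t^\mu\le2\EE(S_0\mid\mathcal H_t)\le2Z_0\) with \(Z_0=\sup_t\EE(S_0\mid\mathcal H_t)\). Conditional Jensen plus Doob's \(L^r\) inequality, applied to the martingale \(\EE\{\exp(bS_0)\mid\mathcal H_t\}\), give \(Z_0\), and likewise \(Z_2\), exponential moments of all orders. Combined with \(\{1-\tanh x\}^{-1}\le e^{2x}\), this yields \(q_i\ge Q_\star\) for a single \(T\)-free random variable \(Q_\star\) with \(\EE(Q_\star^{-1})<\infty\).

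Since \(q_i\) is \(\mathcal H_{\mathfrak r_i}\)-measurable, \(\EE(V_{\mathfrak r_i})\le\EE\{q_i^{-1}\sigma^2_{\mathfrak r_i}(\para_{\mathfrak r_i+1})\}\). The \emph{pathwise} bound \(\sum_{t<T}\sigma_t^2(\para_{t+1})\le C_{\gamma,\upsilon}\gamma_T(\upsilon)\) then lets \(\EE(Q_\star^{-1})\) factor out, giving \(\EE\{\TV(\bar\pi_T,\pi_\dataseq)\}\le\{C_{\gamma,\upsilon}\EE(Q_\star^{-1})\gamma_T(\upsilon)/(2\mathcal I_T)\}^{1/2}\). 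Substituting the information-gain bounds gives the stated rates.
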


\begin{remark} 
The Mat\'ern and SE rates follow from the information-gain bounds from \citet[Theorem~7 and Corollary~8; see \Cref{appdx:fixed_sample_calibration}]{iwazaki2025improvedregret}. Relative to the ideal-UR benchmark \(\mathscr O^{+}(T)\) in \Cref{thm:sale_main_calibration}, these rates have the same polynomial powers of \(T\), but incur logarithmic losses. The Mat\'ern rate acquires the displayed logarithmic factor, while the SE rate has one additional factor \(\sqrt{\log(T+1)}\). These losses arise from the maximal-information-gain bound used in the proof. The allocation argument only requires \(\widehat p_{t,k}\) to remain uniformly below one.
\end{remark}
\subsection{Integrated Workflow of SALE}
\label{subsec:integrated_workflow}

\begin{algorithm}[t]
\caption{Sampling-based adaptive active learning (SALE)}
\label{alg:sale}
\begingroup
\footnotesize
\setlength{\abovedisplayskip}{3pt}
\setlength{\belowdisplayskip}{3pt}
\setlength{\abovedisplayshortskip}{2pt}
\setlength{\belowdisplayshortskip}{2pt}
\setlength{\textfloatsep}{5pt}
\setlength{\intextsep}{5pt}

\begin{algorithmic}[1]
\Require Objective \(f\); search domain \(\Omega\); nearest-neighbour count
\(k\geq1\); initial design size \(n_0 \geq k+1\); sequential evaluation budget \(T\geq 1\); ssMCMC sample size \(S\); BO stopping threshold \(p_{\rm BO}\); PSRF tolerance \(\varepsilon_R\); refresh limits \(T_{\min}<T_{\max}\).

\State Generate a distinct initial space-filling design
\(\{\para_i\}_{i=1}^{n_0}\subset\Omega\).
\State Evaluate
\(y_i=f(\para_i)+\varepsilon_i\), \(i\in [n_0]\), and set
\(
\mathcal D_0
=
\{(\para_i,y_i)\}_{i=1}^{n_0}.
\)
\State Estimate the GP hyperparameters and fit the GP posterior \(\PP_0\).
\State Compute the allocation proxy \(\widehat p_{0,k}\) using $k$-NN curvature adjusted distance.
\State Generate \(\mathcal X_0\) using \Cref{alg:ssmcmc}, approximately
targeting \(\bar\pi_0\), and construct \(\Omega_0\).
\State Initialise the refresh interval, set \(r(0)\gets0\), and set
\(\widehat R\gets\infty\).

\For{\(t=0,\dots,T-1\)}
    \State Set the working sample and BO search region to
    \(
    \mathcal X_t^w\gets\mathcal X_{r(t)}\),
    \(\Omega_t^w\gets\Omega_{r(t)}.
    \)
    \State Draw
    \(B_t\sim\operatorname{Bernoulli}(\widehat p_{t,k})\).

    \If{\(B_t=1\)}
        \State Generate \(\para_{n_t+1}\) from the AO target restricted to
        \(\Omega_t^w\) using ESA; see \Cref{alg:ESA}.
    \Else
        \State Set
        \(\mathcal U_t^w\gets\operatorname{unique}(\mathcal X_t^w)\).  Select
        \(
        \para_{n_t+1}
        \in
        \arg\max_{\para\in\mathcal U_t^w}
        \widehat p_{t,k}\mu_t(\para)+\log\sigma_t(\para).
        \)
    \EndIf

    \State Evaluate
    \(y_{n_t+1}=f(\para_{n_t+1})+\varepsilon_{n_t+1}\) and set
    \(
    \mathcal D_{t+1}
    =
    \mathcal D_t
    \cup
    \{(\para_{n_t+1},y_{n_t+1})\}.
    \)
    \State Update the GP posterior \(\PP_{t+1}\) conditional on
    \(\mathcal D_{t+1}\).
    \State Update the refresh interval using the moving calibration
    diagnostic.

    \If{a refresh is triggered}
        \State Set \(r(t+1)\gets t+1\).
        \State Re-estimate the GP hyperparameters and refit
        \(\PP_{t+1}\).
        \State Generate \(\mathcal X_{t+1}\) using
        \Cref{alg:ssmcmc}, approximately targeting
        \(\bar\pi_{t+1}\).
        \State Construct \(\Omega_{t+1}\) from
        \(\mathcal X_{t+1}\).
        \State Recompute the allocation proxy \(\widehat p_{t+1,k}\) using $k$-NN curvature adjusted distance.
        \State Compute the maximum coordinatewise PSRF \(\widehat R\)
        between \(\mathcal X_{t+1}\) and \(\mathcal X_t^w\).

        \If{\(\widehat p_{t+1,k}<p_{\rm BO}\) and
        \(\widehat R<1+\varepsilon_R\)}
            \State \textbf{break}
        \EndIf
    \Else
        \State Set \(r(t+1)\gets r(t)\).
        \State Recompute the allocation proxy \(\widehat p_{t+1,k}\).
    \EndIf
\EndFor

\State \Return The final design, GP posterior, ssMCMC sample, and BO search region.

\end{algorithmic}
\endgroup
\end{algorithm}

\Cref{alg:sale} summarises the implemented SALE workflow. SALE refreshes
\(\mathcal X_t\) and \(\Omega_t\) intermittently, jointly with GP
hyperparameter updates. The refresh schedule is controlled by a moving
diagnostic of one-step-ahead GP predictive calibration. A refresh is allowed after \(T_{\min}\) iterations since the previous refresh and is forced
after \(T_{\max}\) iterations, ensuring a bounded refresh lag.

A practical convergence assessment is conducted at refresh times, when the current and previous ssMCMC samples can be compared. SALE stops when \(\widehat p_{t,k}<p_{\rm BO}\) and the maximum coordinatewise potential scale reduction factor \citep[PSRF;][]{vehtari2021rank}, \(\widehat R\), satisfies \(\widehat R<1+\varepsilon_R\). \(p_{\rm BO}\) and \(\varepsilon_R\) are user-defined. The first condition indicates that further BO is no longer favoured; the second indicates that the $\bar\pi_t$ sample has stabilised. Otherwise, SALE continues until the evaluation budget is exhausted. 

GP hyperparameter fitting, the predictive-calibration refresh diagnostic, the stopping-rule computation, and all numerical constants are specified in \Cref{oa:sale_workflow_settings}.

\section{Numerical Experiments}\label{sec: empirical}

We use analytic benchmarks and simulated-likelihood examples as controlled
settings for evaluating finite-budget surrogate posterior approximation. The
examples cover several common posterior geometries and admit reference
posteriors.

As mentioned in \Cref{sec:intro}, we assess robustness by TV reduction across problems, run-to-run variability, and the frequency of severe failures under common settings and fixed evaluation budgets. For all assessments in this section, we exclude the initial design $\mathcal{D}_0$.

\subsection{Analytic Benchmarks}\label{subsec: simulation}

We consider four analytic log-unnormalised posterior targets on
\(
\para:=(\theta_1,\dots,\theta_d)\in\Omega=[-5,5]^d\),
\(d\in\{4,6,8\},
\)
with constants omitted. For the Quartic-tail Gaussian and Bimodal Mixture targets, define
\(
\sigma_j
=
0.30\times6^{(j-1)/(d-1)}\),
\(\mathbf D_\sigma
=
\operatorname{diag}(\sigma_1,\dots,\sigma_d)\), 
\([\mathbf R]_{ij}
=
0.5^{|i-j|}.
\)

\begin{itemize}[itemsep=2pt,topsep=2pt]

\item \emph{Quartic-tail Gaussian.}
Let \(\bm z=\mathbf D_\sigma^{-1}\para\) and
\(
f(\para)
=
-\frac12\bm z^\top\mathbf R^{-1}\bm z
-\lambda_{\rm tail}\sum_{j=1}^d s_j(\bm z)^2,
\)
where
\(
s_j(\bm z)
=
\tau_{\rm tail}
\operatorname{softplus}
\{(z_j^2-r^2)/\tau_{\rm tail}\}\),
\(r=2\),
\(
\tau_{\rm tail}=0.35\),
\(\lambda_{\rm tail}=0.10,
\)
and
\(
\operatorname{softplus}(x)=\log\{1+\exp(x)\}.
\)
The target is approximately Gaussian if the scaled coordinates remain
within \(|z_j|\lesssim r\). Beyond this region, the \(s_j^2\) terms
produce rapidly decaying quartic tails. This mimics problems that have scale-separated likelihoods.

\item \emph{Neal's Funnel.}
The target is
\(
f(\para)
=
-\theta_1^2/2
-(d-1)\theta_1/2
-\exp(-\theta_1)
\sum_{j=2}^d\theta_j^2/2.
\)
This is Neal's funnel truncated to \(\Omega\), with \(\theta_1\) controlling
the conditional scale of \(\theta_2,\dots,\theta_d\).

\item \emph{Rosenbrock.}
The target is
\(
f(\para)
=
-\sum_{j=1}^{d-1}
\{(\theta_j-1)^2/100
+
(\theta_j^2-\theta_{j+1})^2
\}.
\)

\item \emph{Bimodal Mixture.}
The target is
\(
f(\para)
=
\log\left[
\exp\{\ell_1(\para)\}
+
\exp\{\ell_2(\para)\}
\right],
\)
where \(\ell_1\) is the Quartic-tail Gaussian log target above and
\(
\ell_2(\para)
=
\eta_d
-\bm z_1^\top \mathbf R^{-1}\bm z_1/2\),
\(\bm z_1
=
(0.85\mathbf D_\sigma)^{-1}(\para-\bm\mu_d),
\)
with
\(
\bm\mu_d
=
\sqrt{8/d}\,\bm1_d\),
\(\eta_d
=
-0.5+(3-d)\log(0.85).
\)
This choice gives approximate component masses of \(72\%\) for
\(\exp\{\ell_1\}\) and \(28\%\) for \(\exp\{\ell_2\}\) across the dimensions.
\end{itemize}

\begin{figure}[t]
    \centering
    \includegraphics[width=0.95\linewidth]
    {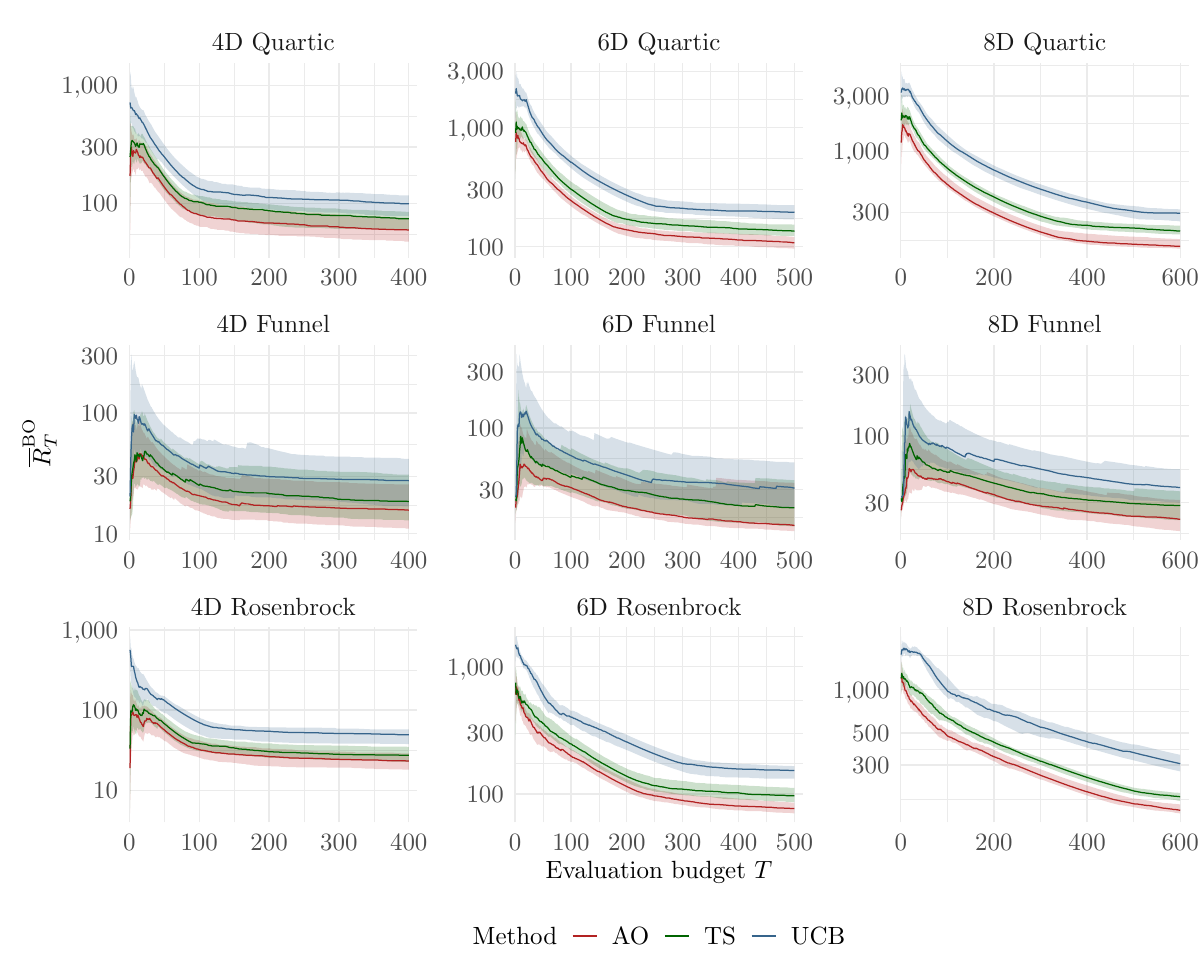}
    \caption{\small Average BO-branch regret
    \(\bar R_T^{\mathrm{BO}}\) for AO, TS, and UCB.
    Curves show medians over \(50\) runs at each evaluation budget;
    shading shows interquartile ranges. Lower is better.}
    \label{fig:average regret}
\end{figure}

\subsubsection{Optimisation Performance}\label{sec:opt performance}

We first compare AO, TS, and UCB as BO strategies within SALE, keeping all other
components and tuning choices fixed. Average BO-branch regret is evaluated on
the three unimodal targets. The Bimodal Mixture is instead evaluated by TV
distance.

Recall \(B_t\) from
\Cref{subsec: state}, and let
\(
N_T^{\rm BO}=\sum_{t=0}^{T-1}B_t
\)
be the number of BO evaluations.
Excluding the initial design, the average BO-branch regret is
\begin{equation*}
    \bar R_T^{\mathrm{BO}}
:=
\frac{1}{N_T^{\mathrm{BO}}}
\sum_{t=0}^{T-1}
B_t
\left\{
f(\para^\star)-f(\para_{t+1})
\right\},
\qquad
N_T^{\mathrm{BO}}>0.
\end{equation*} 
This measures the mean optimisation loss over the BO evaluations by
budget \(T\). Normalising by \(N_T^{\mathrm{BO}}\) removes the
effect of differences in the number of BO evaluations. Computational details
for the three strategies are given in \Cref{oa:implementation_details}. For each target-dimension combination, we use \(50\) independent replicates. Within each dimension and replicate, the same 20-point Latin hypercube initial design is used across methods and targets.

\Cref{fig:average regret} gives the median and the interquartile range (IQR) of \(\bar R_T^{\mathrm{BO}}\). It shows a consistent ordering across the nine target-dimension settings. After the initial part of the trajectories, AO maintains the lowest median average regret, TS is intermediate, and UCB has the highest regret. At the final budget, AO reduces median average regret by \(15\%\)--\(30\%\) relative to TS and by \(40\%\)--\(53\%\) relative to UCB across all settings. The largest reduction relative to TS occurs for the 6D Funnel, while the biggest advantage over UCB is for Rosenbrock. AO also has a lower final upper quartile than both alternatives in every setting, so its advantage extends beyond median performance.

These results indicate that AO produces queries with smaller objective gaps
on average than UCB and TS on the considered targets. Corresponding simple-regret trajectories are reported in \Cref{oa:additional_results}. AO and TS attain near identical simple regret, indicating that both identify competitive high-value points at similar budgets, while AO makes more consistent BO queries along the full trajectory.

\subsubsection{Inference Performance}\label{sec: inf performance}

\begin{figure}[t!]
    \centering
    \includegraphics[width=0.95\linewidth]{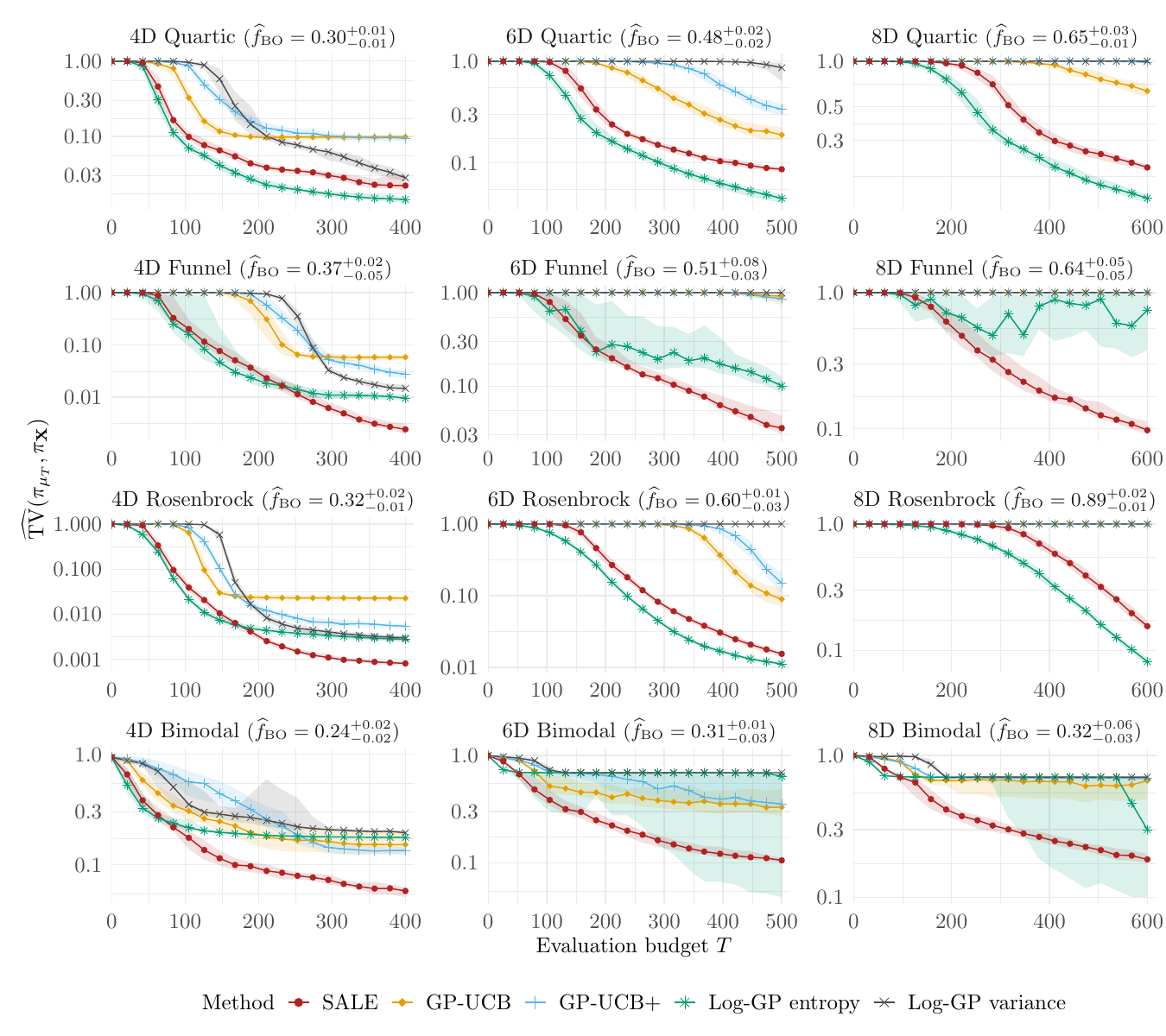}
    \caption{\small TV estimates \(\widehat{\mathrm{TV}}(\pi_{\mu_T},\pi_\dataseq)\) versus budget \(T\) for SALE and external baselines across benchmarks. Curves show medians over \(50\) runs; shading shows IQR. Lower is better. \(\widehat f_{\rm BO}\) gives the median and IQR of BO fractions for SALE over the full budget. For GP-UCB+ on the 6D bimodal target, summaries are computed over 49 runs since one run had numerical failure.}
    \label{fig:TV external}
\end{figure}

We compare SALE with four full-domain GP-AL baselines under a common experimental framework: GP-UCB, following \citet{gutmann2016bolfi, li2026boss};
GP-UCB+, following \citet{kim2025enhancing}; log-GP entropy, following
\citet{wang2018adaptive}; and log-GP variance, following \citet{kandasamy2017query}. We implement these sequential design rules rather than exact reproductions of every paper-specific implementation. These baselines are considered because they permit the most direct and controlled comparisons with SALE to ensure a fair assessment.

All methods use the same GP surrogate class, initial design, GP-fitting schedule, optimisation protocol, and evaluation budget. Computational details are given in \Cref{oa:implementation_details}; internal comparisons of SALE using AO, TS, and UCB are in \Cref{oa:additional_results}. Although the primary computational budget is the number of objective evaluations, we also report representative wall-clock timing audits in \Cref{oa:additional_results}.

At each evaluation budget \(T\), we assess inference quality using
\(
{\mathrm{TV}}(\pi_{\mu_T}, \pi_\dataseq)\). Although the theoretical estimator in
\Cref{thm:sale_main_calibration} is
\(\bar\pi_T\), accurately estimating \({\mathrm{TV}}(\bar\pi_T, \pi_\dataseq)\) would require repeated
normalisation and averaging over many GP sample paths at every
budget and replicate. We therefore use \(\pi_{\mu_T}\) as a cheap
deterministic surrogate-posterior summary for all methods. The estimator
\(\widehat{\mathrm{TV}}(\pi_{\mu_T},\pi_\dataseq)\) for \(
{\mathrm{TV}}(\pi_{\mu_T}, \pi_\dataseq)\) is described in
\Cref{appdx:tv_computation}.

\Cref{fig:TV external} provides the median and IQR of TV trajectories for
all benchmark targets. No method dominates every target in terms of median performance. SALE's main empirical advantage is robustness: it reduces TV consistently across settings and avoids several severe failures seen under the considered external baselines.

On the Quartic target, log-GP entropy has the lowest final median TV in all
three dimensions. On Rosenbrock, SALE is strongest in 4D, while log-GP
entropy is strongest in 6D and 8D. The highly exploitative behaviour of log-GP entropy is effective for these regular unimodal targets. The remaining external
baselines become markedly less effective as the dimension increases.

SALE has the lowest median TV on the Funnel target in every dimension, with
the clearest separation in 6D and 8D. The anisotropic funnel geometry is
difficult to recover under log-GP entropy: the advantage of exploitation in log-GP entropy on the regular targets does not carry over. 

On the Bimodal Mixture, SALE has the lowest median TV in every dimension. Log-GP entropy occasionally reaches low TV in 6D and 8D but has a much wider IQR, indicating substantial run-to-run variation. The other baselines generally plateau at higher TV. These results are consistent with SALE recovering and calibrating modal mass more reliably.

\subsection{Simulated Examples}\label{subsec: simulated}

We consider two simulated examples in which the likelihood evaluation requires nontrivial numerical computation. The first is a spatial Mat\'ern Gaussian random-field (GRF) model:
    \*[
y(s_i)\mid Z,\delta
&\stackrel{\mathrm{ind}}{\sim}
\mathcal N\{Z(s_i),\delta^2\},
\qquad s_i\in[0,1]^2,\quad i\in[n],\\
Z(\cdot)
&\sim
\mathcal{GP}\left\{\mu,\mathcal C(\cdot,\cdot)\right\},\\
\mathcal C(s,s')
&=
\kappa_M(s,s';\sigma_f,\ell,\nu),
\qquad s,s'\in[0,1]^2.
\]
$\kappa_M(s,s'; \sigma_f, \ell, \nu)$ is the stationary Mat\'ern kernel with signal variance $\sigma_f^2$, length-scale $\ell$, and smoothness parameter $\nu$. \(n=200\) noisy observations $\{y_i\}_{i=1}^{200}$ are generated, and we infer $\para = (\mu, \delta, \sigma_f, \ell, \nu) \in \Reals^5$.

The second example is a Lotka--Volterra (LV) inverse problem. Let \(x(t)>0\) and \(y(t)>0\) denote the prey and predator populations, respectively. The observations are generated by
\*[
X_{i} \sim 
\mathcal{N}\left\{\log x(t_i), \sigma^2\right\}, & \quad Y_{i} \sim 
\mathcal{N}\left\{\log y(t_i), \sigma^2\right\}, \\
\frac{\dee x(t)}{\dee t}
=
\alpha x(t)-\beta x(t)y(t),
&\quad
\frac{\dee y(t)}{\dee t}
=
\delta x(t)y(t)-\gamma y(t),
\]
with \(x(0)=x_0\) and \(y(0)=y_0\). Observations are simulated for both species at \(n=20\) time points in \([0.5,20]\). The parameters of interest are $\para = (\alpha,\beta,\gamma,\delta,x_0,y_0,\sigma)\in\Reals^7$.

\begin{figure}[t!]
    \centering
    \includegraphics[width=0.95\linewidth]{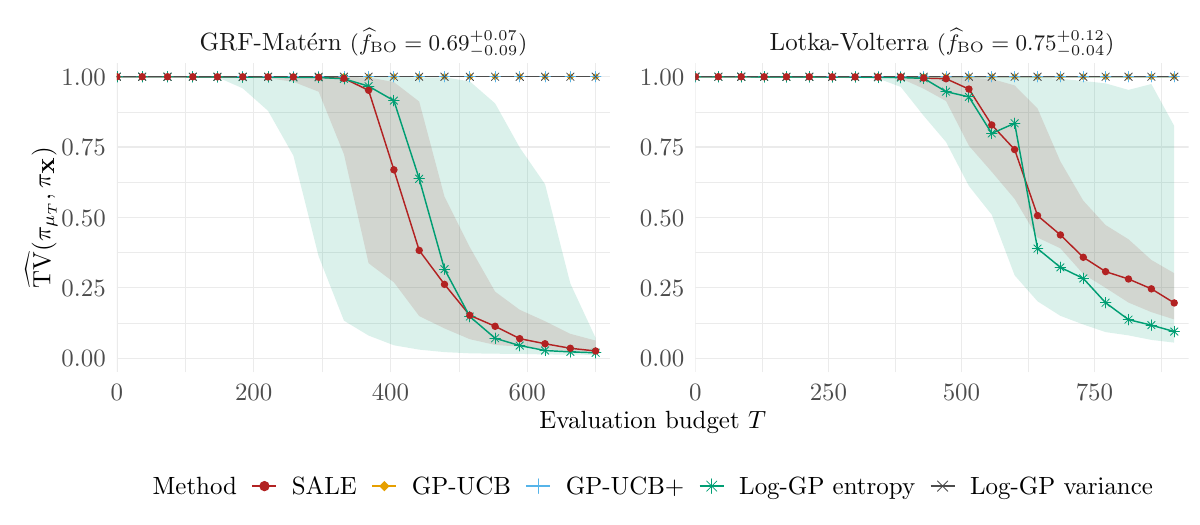}
    \caption{\small TV estimates \(\widehat{\mathrm{TV}}(\pi_{\mu_T},\pi_\dataseq)\) versus budget \(T\) for SALE and external baselines across simulated problems. Curves show medians over \(30\) runs; shading shows IQR. Lower is better. \(\widehat f_{\rm BO}\) gives the median and IQR of BO fractions for SALE over the full budget.}
    \label{fig:simulation}
\end{figure}

Data-generating values, priors, reparameterisations, and search domains are
given in \Cref{oa:implementation_details}. Both examples are kept
sufficiently small to obtain reference posterior samples by MCMC. SALE and
the external baselines use the same experimental protocol as in the analytic
benchmarks, with \(30\) independent replicates per problem and paired initial
designs across methods within each replicate; see \Cref{oa:implementation_details}.

\Cref{fig:simulation} shows that SALE and log-GP entropy are the only
methods with substantial median TV reduction under the available budgets.
Their final median performance is similar on the GRF--Mat\'ern example,
while log-GP entropy has a lower final median on the LV example. SALE,
however, has a considerably narrower IQR in both examples. At the final LV
budget \(T=900\), 8 of the 30 log-GP entropy runs have
\(
\widehat{\mathrm{TV}}(\pi_{\mu_T},\pi_\dataseq)>0.9,
\)
compared with 2 of the 30 SALE runs. Internal comparisons among the SALE BO
strategies are reported in \Cref{oa:additional_results}.

Both examples exhibit strong scale separation: the log-posterior is relatively regular within its main bulk regions but decreases sharply outside them. Log-GP entropy can therefore perform very well once its evaluations concentrate in the appropriate region, as reflected by its competitive lower quartile, but its wide IQR shows that this behaviour is not reliable across initial designs. SALE's narrower IQR is consistent with its adaptive localisation and $\bar\pi_t$-guided restriction of subsequent learning.

\subsection{Ablation Study}

\begin{figure}[t!]
    \centering
    \includegraphics[width=0.85\linewidth]
    {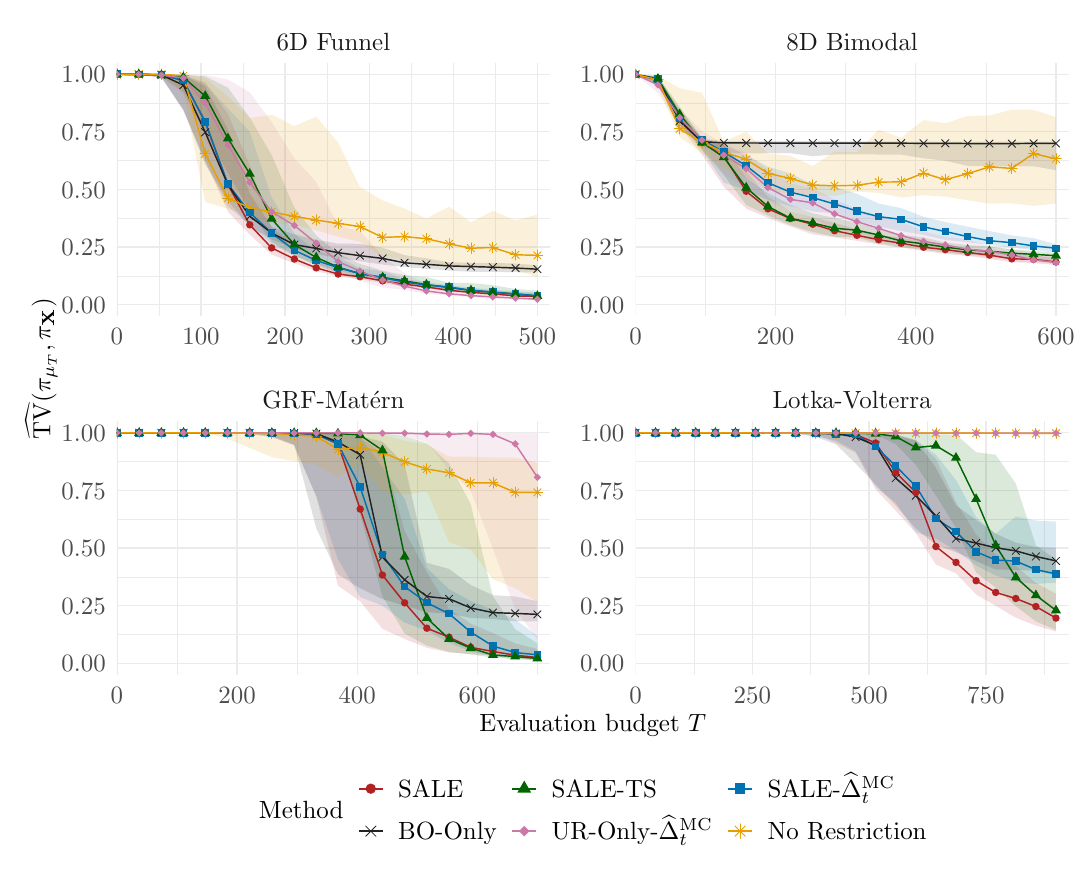}
    \caption{\small TV estimates
    \(\widehat{\mathrm{TV}}(\pi_{\mu_T},\pi_\dataseq)\) versus budget \(T\)
    for SALE and its component ablations. Curves show medians; shading shows
    IQRs. The 6D Funnel and 8D Bimodal targets use \(50\) runs, while the
    GRF--Mat\'ern and LV examples use \(30\) runs. Lower is
    better. For No Restriction on GRF--Mat\'ern, summaries use \(29\) runs
    as one run had a numerical failure.}
    \label{fig:ablation TV}
\end{figure}

We conduct a targeted ablation study on four representative targets: the 6D
Funnel, 8D Bimodal, GRF--Mat\'ern, and LV examples. The study examines the
\(\bar\pi_t\)-guided restriction of BO and UR, BO--UR allocation, AO in the BO
branch, and the practical UR proxy. We also compare the proxy with the Monte
Carlo criterion \(\widehat{\Delta}_t^{\rm MC}\).

The experimental protocol is otherwise unchanged. \Cref{fig:ablation TV}
reports the full TV trajectories. SALE denotes the complete workflow. SALE-TS
replaces AO with TS in the BO branch.
SALE-\(\widehat{\Delta}_t^{\rm MC}\) replaces the practical UR proxy by
\(\widehat{\Delta}_t^{\rm MC}\), with candidates restricted to
\(\mathcal X_{r(t)}\). BO-Only disables the UR branch, whereas
UR-Only-\(\widehat{\Delta}_t^{\rm MC}\) disables the BO branch and applies
\(\widehat{\Delta}_t^{\rm MC}\) over \(\mathcal X_{r(t)}\). No Restriction
keeps the BO search region fixed at \(\Omega\) and uses a uniform UR candidate
set over \(\Omega\).

The largest overall deterioration occurs under No Restriction, which has much
higher final TV on every target and one numerical failure on the GRF--Mat\'ern
example. This ablation jointly removes the \(\bar\pi_t\)-guided BO restriction
and UR candidate construction. It therefore supports the importance of their
combined localisation role.

BO-Only performs considerably worse than SALE on every target, supporting the
need for UR calibration after localisation.
UR-Only-\(\widehat{\Delta}_t^{\rm MC}\) eventually achieves competitive final
TV on the Funnel and Bimodal targets, but improves later than SALE in
\Cref{fig:ablation TV}. On GRF--Mat\'ern, it declines only near the final
budget and remains highly variable, while on LV it does not
improve. On those harder examples, the contrast between
SALE-\(\widehat{\Delta}_t^{\rm MC}\) and
UR-Only-\(\widehat{\Delta}_t^{\rm MC}\) supports the need for a BO-based
localisation phase.

Replacing AO by TS does not uniformly worsen the final median, but SALE-TS has
a larger upper quartile on all four targets, with the largest differences on
the GRF--Mat\'ern and LV examples. Its trajectories in
\Cref{fig:ablation TV} also show later and more variable TV reduction under TS.
This complements the optimisation comparison in \Cref{sec:opt performance}.

Finally, SALE-\(\widehat{\Delta}_t^{\rm MC}\) has a higher median and upper
quartile than SALE on all four targets. Thus the practical proxy performs
better under the tested budgets even though
\(\widehat{\Delta}_t^{\rm MC}\) more directly approximates \(\Delta_t\).
One plausible explanation is that the state-dependent value term
\(\widehat p_{t,k}\mu_t(\para)\) remains useful while \(\bar\pi_t\) is not yet a
reliable weighting measure.

\section{Applications}\label{sec: applications}

We apply SALE to two expensive-likelihood problems from
econometrics and astrophysics. Their computational bottlenecks are distinct:
solving a Bellman fixed point and marginalising a conditional latent Gaussian
model. For each application, we report one SALE run from a 20-point Latin
hypercube initial design using the common default settings.

\subsection{Dynamic Discrete-Choice Model for Bus-Engine Replacement}
\label{sec:app_rust}

Dynamic discrete-choice (DDC) models represent sequential decisions as Markov decision processes and characterise optimal decisions through Bellman equations \citep{Rust1994,AguirregabiriaMira2010}. Under full-solution likelihoods, the Bellman fixed point must be computed via dynamic programming (DP) to evaluate the likelihood, which is expensive.

Existing Bayesian DDC methods reduce the cost per MCMC iteration by reusing or approximating DP solutions \citep{ImaiJainChing2009,Norets2009,Norets2012}. More recent work develops specialised Hamiltonian and reversible-jump MCMC \citep{NoretsShimizu2024}, or accelerates the DP solver itself \citep{Chen2025}. 

We use SALE as a complementary strategy by reducing the number of full likelihood evaluations. Every likelihood value is obtained by solving the original Bellman equations, while approximation is confined to the GP surrogate. The same SALE configuration used in the preceding experiments is applied without modifying the DP solver.

We consider the bus-engine replacement problem of \citet{Rust1987}, in
which a maintenance manager decides each month whether to replace a bus
engine. The data contain \(8{,}156\) monthly transitions from \(104\)
buses, including \(60\) replacement decisions. For bus \(i\), let
\(h\in [m_i]\) index its observed transitions, with
\(\sum_i m_i=8{,}156\). Let
\(x_{ih}\in[K-1]_0\) denote mileage since the most recent
replacement, discretised into 500-mile intervals, and define
\(z_{x_{ih}}=x_{ih}/(K-1)\). We use \(K=800\) mileage states.

We consider a seven-parameter extension of the Rust model,
\(
\para=(R_C,c_1,c_2,c_3,\mu,r,\omega).
\)
Here \(R_C\) is the replacement cost and
\(
C(x)=c_1z_x+c_2z_x^2+c_3z_x^3
\)
is the cost of keeping the engine. Let \(J_{ih}\) denote the number of
mileage-state intervals advanced between consecutive observations. We model
\(J_{ih}\) using a zero-inflated negative-binomial (NB) distribution with
mean \(\mu\), size \(r\), and zero-inflation probability \(\omega\), allowing
both excess unchanged mileage states and overdispersion. The detailed model construction, likelihood function, priors, parameter transformations, and search domains are given in \Cref{oa:implementation_details}.

\begin{figure}[t]
    \centering
    \includegraphics[width=0.85\linewidth]{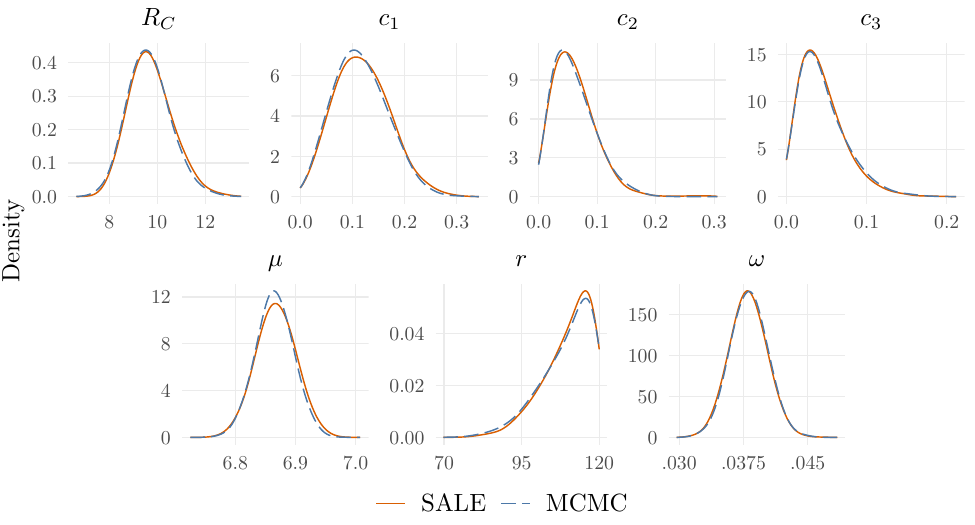}
    \caption{Posterior marginals for the bus-engine replacement model from SALE and reference MCMC.}
    \label{fig:rust density}
\end{figure}

\begin{figure}[t]
    \centering
    \includegraphics[width=0.7\linewidth]{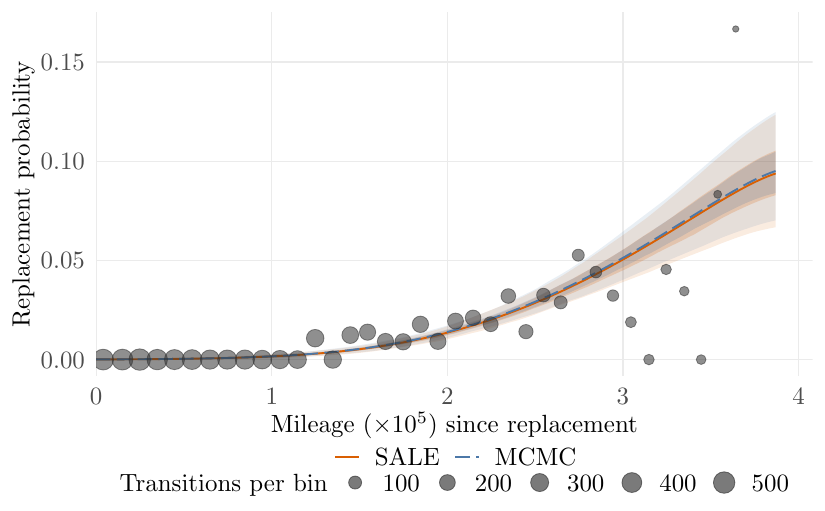}
    \caption{Fitted bus-engine replacement probability versus bus mileage from SALE and reference MCMC. Curves are the posterior means, while the shaded regions are $50\%$ and $90\%$ credible regions.}
    \label{fig:rust_nfxp_policy}
\end{figure}

Evaluating the full Bellman likelihood requires solving an 800-state Bellman
system. One evaluation takes approximately \(0.26\) seconds on a PC with an AMD Ryzen 7 9700X 3.8\,GHz CPU.

We ran SALE for \(500\) sequential evaluations, for \(520\) Bellman solves in total. The complete run took \(15\) minutes. At the final budget, the SALE diagnostics were \(\widehat p_{T,k}=0.13\) and \(\widehat R=1.002\). After discarding the first \(2{,}000\) of \(10{,}000\) final ssMCMC draws, the parameter-wise effective sample sizes (ESS) ranged from \(966\) to \(2{,}826\). The retained ssMCMC draws approximate \(\bar\pi_T\) and are used for the SALE marginals and policy summaries below. To match the numerical experiments, the reported TV diagnostic is instead computed from the plug-in posterior \(\pi_{\mu_T}\).

For reference, we ran a \(40{,}000\)-iteration random-walk Metropolis chain using the full likelihood. An adaptive Metropolis chain initialised from prior information mixed poorly. We therefore initialised the reference chain at the SALE posterior mode and scaled its proposal using the final SALE posterior covariance. The reference run required \(40{,}000\) Bellman solves, took over three hours, and had a parameter-wise ESS range of \(631\) to \(1{,}122\). Thus, beyond producing a surrogate posterior, SALE supplied the localisation and covariance information needed to make direct full-likelihood MCMC workable.

Using the reference MCMC sample, we obtain \(\widehat{\mathrm{TV}}(\pi_{\mu_T},\pi_\dataseq)=0.102,\) with a \(95\%\) moving-block bootstrap interval of
\([0.098,\,0.107]\). The posterior marginals in \Cref{fig:rust density} are almost indistinguishable across all seven parameters. The agreement also extends to the decision-relevant output. \Cref{fig:rust_nfxp_policy} shows nearly identical posterior replacement probabilities over the observed mileage range.

The SALE summary therefore agrees closely with the SALE-initialised
full-likelihood reference chain while using \(520\) rather than \(40{,}000\)
Bellman solves. This is approximately \(77\) times fewer full-likelihood
evaluations for the SALE run. The example illustrates how SALE can wrap an
existing full-solution likelihood without altering the Bellman solver or
introducing DDC-specific approximation machinery into the AL method.

\subsection{Detection of Ultra-Diffuse Galaxies}

Ultra-diffuse galaxies \citep[UDGs;][]{VanDokkum2015} have garnered significant attention in astrophysics as they can help us understand dark matter. However, UDGs are extremely faint and difficult to detect directly in astronomical images \citep{Abraham2014}. A complementary strategy is to detect them through spatial clustering of their star
clusters (SCs), which trace the underlying gravitational potential and remain visible even when the galaxy itself is faint \citep{Li_2022,van_Dokkum_2024,li2025_poisson, li2025CDG2}.

We consider the log-Gaussian Cox process \citep[LGCP;][]{moller1998log} model of \citet{Li_2022}. Let \(\mathbf X\subseteq \mathcal S\subset \Reals^2\) be the observed SC point pattern, where \(\mathcal S\) is the image mask. The model is
\begin{align*}
    \mathbf X &\sim \mathrm{IPP}\left\{\Lambda(s)\right\}, \qquad s \in \mathcal S,\\
\log \Lambda(s)
&=
\beta_0
+
\sum_{j=1}^{N_G}
\beta_j
\exp\left[-\left\{\frac{r(s;c_j)}{R_j}\right\}^{1/n_j}\right]
+
\mathcal U(s),\\
\mathcal U(s)
&\sim
\mathcal{GP}\{0,\kappa_{\mathrm{Mat\acute ern}}(\cdot;\sigma_f,\ell,\nu = 1)\}.
\end{align*}
Here \(\beta_0\) is the log-background SC intensity and the 
summation terms are S\'ersic intensity profiles
\citep[e.g.,][]{li2025mathpop,li2025_poisson} for the \(N_G\) visible
galaxies. For the $j$-th visible galaxy, $c_j$ is the known galactic centre, \(r(s;c_j)\) is the projected distance from \(s\) to \(c_j\), and $(R_j,n_j)$ are the scale and shape parameters for its SC intensity profile. The residual field \(\mathcal U(s)\) captures clustering not explained by the background or visible galaxies. Positive excursions of
\(\mathcal U(s)\) therefore provide the UDG-detection signal.

\begin{figure}[t!]
    \centering
    \includegraphics[width=0.7\linewidth]{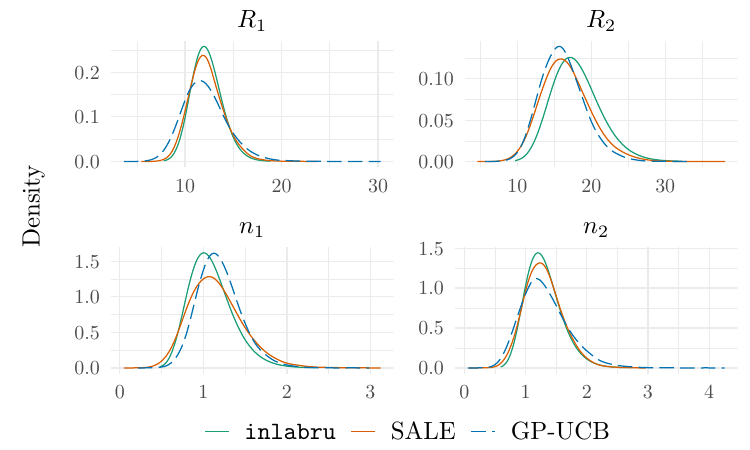}
    \caption{\small Posterior marginals of the nonlinear parameters in the UDG example from \texttt{inlabru}, SALE, and the GP-UCB surrogate method of \citet{li2026boss}.}
    \label{fig:UDG density}
\end{figure}

Let
\(
\para=(R_1,n_1,\ldots,R_{N_G},n_{N_G})
\)
denote the nonlinear parameters. Conditional on \(\para\), the model is a latent Gaussian model (LGM) and can be fitted efficiently by integrated nested Laplace approximation \citep[INLA;][]{inla}. Joint inference for \(\para\) is harder because its nonlinear dependence moves the model outside the standard LGM class required by INLA. The package \texttt{inlabru} handles this dependence through iterative local linearisation \citep{Bachl_2019}, whose accuracy depends on the quality of the linear approximation.

A more direct route conditions on $\para$, performs a full conditional-INLA fit, and uses the resulting marginal log-likelihood $\ell_{\rm INLA}(\para)$ and the corresponding log-unnormalised posterior $f(\para)$ to enable posterior sampling \citep{gomez2018markov, berild2022importance}. This route, however, is expensive: one $\ell_{\rm INLA}(\para)$ evaluation takes approximately \(10\) seconds on the same computer used for the DDC application in \Cref{sec:app_rust}. We use SALE to infer \(\para\) and compare it with the GP-UCB surrogate approach originally proposed for this problem by \citet{li2026boss}.

The data are from a \emph{Hubble Space Telescope} survey of the Perseus
Galaxy Cluster \citep{harris2020piper}. The image contains \(262\)
detected SCs, two visible galaxies, and two confirmed UDGs. Consequently,
\(
\para=(R_1,n_1,R_2,n_2)\in\Reals^4.
\)
The priors follow \citet{Li_2022,li2026boss}, and we restrict the parameters to
\(
R_1\in[2,30]\),
\(R_2\in[2,45]\),
\(n_1,n_2\in[0.05,4]\).

SALE and GP-UCB use the same GP-fitting and experimental settings as in
\Cref{sec: empirical}. Each method receives \(100\) evaluations of
\(f\), including the same \(20\)-point initial Latin hypercube design.
SALE takes \(19\) minutes and GP-UCB takes \(18\) minutes, so the
computational cost of both methods is dominated by $\ell_{\rm INLA}(\para)$
evaluations. The final SALE diagnostics are
\(\widehat p_{T,k}=0.106\) and \(\widehat R=1.0003\).

\Cref{fig:UDG density} compares the posterior marginal densities from SALE, GP-UCB, and \texttt{inlabru}. SALE and \texttt{inlabru} agree closely for several parameters but differ most visibly for $R_2$. GP-UCB shows additional shifts and wider marginals, particularly for $R_1$ and $n_1$.

Since a reference posterior is unavailable, we diagnose the SALE and GP-UCB surrogates at 50 held-out parameter values, with $25$ drawn from each of the final ssMCMC samples. After weighting the points by normalised \(\exp\{f(\para)\}\), the weighted root mean squared error and mean absolute error of the surrogate mean are \(0.15\) and \(0.11\) for SALE, compared with \(0.77\) and \(0.52\) for GP-UCB. Normalising \(\exp\{f(\para)\}\) and \(\exp\{\mu_T(\para)\}\) over the same points gives discrete TV
diagnostics of \(0.05\) for SALE and \(0.22\) for GP-UCB. These diagnostics favour SALE over GP-UCB under the given evaluation budget and validation set.

\begin{figure}[t!]
    \centering
    \includegraphics[width=0.7\linewidth]{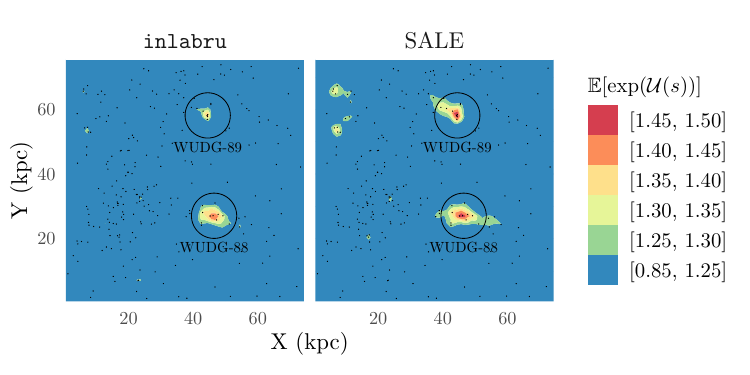}
    \caption{\small Posterior mean of \(\exp\{\mathcal U(s)\}\) from \texttt{inlabru} and SALE for the UDG example. Black points are star clusters; circles mark the two confirmed UDGs, with names shown below.}
    \label{fig:UDG mean US}
\end{figure}

Because \texttt{inlabru} does not estimate the same $f(\para)$ as SALE and GP-UCB, we compare ten repeated $f(\para)$ evaluations at the posterior modes returned by SALE and \texttt{inlabru}. $f(\para)$ at the SALE mode is $0.091 \pm 1.3\times 10^{-6}$ log-units higher than the \texttt{inlabru} mode, suggesting that part of the discrepancy in \Cref{fig:UDG density} is caused by the linearisation in \texttt{inlabru}.

Using the SALE posterior, we propagate uncertainty in \(\para\) to
\(\mathcal U(s)\) using a 17-point adaptive Gauss--Hermite-type quadrature
rule in local eigen-coordinates \citep{aghqtheory}; its construction is
described in \Cref{oa:implementation_details}.

\Cref{fig:UDG mean US} shows a stronger residual-intensity signal under SALE near both confirmed UDGs, particularly WUDG-89. The difference is consistent with the posterior of \(R_2\): SALE places more mass on smaller values than \texttt{inlabru}, leaving more of the nearby SC intensity to the residual field. Thus, the treatment of the nonlinear parameters affects not only their posterior marginals but also the scientifically relevant UDG-detection map. Together with the held-out diagnostics, this example illustrates the practical value of reliable surrogate construction when each likelihood evaluation requires a separate conditional-INLA fit.

\section{Conclusion}\label{sec:discussion}

We introduced SALE, a posterior-aware active learning framework for Bayesian inference with expensive likelihood or posterior evaluations. SALE uses the expected posterior \(\bar\pi_t\) to guide both BO and UR, with a state-dependent allocation rule shifting evaluations from posterior localisation toward calibration. Building on the forward-KL optimality of \(\bar\pi_t\), the theory quantifies AO's stability--optimisation trade-off, gives an explicit budget-dependent bound on the expected TV error of the final EP under the ideal
\(\bar\pi_t\)-weighted UR rule, and establishes an explicit expected-TV rate
for the implemented UR score. Across the benchmarks, simulated examples, and two applications, SALE produced robust posterior approximations under limited evaluation budgets and reduced severe failures relative to the competing methods. These results demonstrate that sequential design with principled exploitation of the structure of posterior distributions can support reliable Bayesian inference when repeated likelihood evaluation is prohibitively expensive.

\acks{The author thanks Alex Stringer for helpful comments and suggestions. This work was partly supported by a University of Toronto Data Sciences Institute Doctoral Fellowship (2023--2025) and CANSSI Multidisciplinary Doctoral Trainee Program. Computational resources were provided by the Digital Research Alliance of Canada. OpenAI's ChatGPT 5.6 was used for proofreading and assistance with checking mathematical proofs, and OpenAI Codex 5.6 was used for code tidying and optimisation and to automate parts of the numerical implementation. All mathematical arguments, references, numerical results, and code were independently verified by the author, who takes full responsibility for the manuscript.}


\makeatletter
\let\jmlrnormalsectionformat\@sict
\makeatother

\crefalias{section}{appendix}
\crefalias{subsection}{subappendix}
\crefalias{subsubsection}{subsubappendix}

\appendix
\numberwithin{equation}{section}

\section{Core Implementation and Metric Details}\label{app: impl}

This appendix records the core implementation details used by SALE and
the TV estimator for the numerical experiments. Extended numerical
settings, problem-specific configurations, and
secondary empirical results are provided in \hyperref[oa:start]{Online Appendix}.

\subsection{Implementation of ssMCMC}\label{app:ssMCMC}

For all examples, ssMCMC uses \(S=10{,}000\) surrogate-path draws and \(L=10\) Metropolis--Hastings (MH) updates per draw. The chain is initialised at the incumbent maximiser. The first \(1{,}000\) outer iterations use a Gaussian random-walk proposal; the remainder use an adaptive Gaussian proposal with covariance as in \citet{roberts2009examples}. Surrogate paths are generated with the RFF--Matheron construction in \Cref{app:esa}.

\subsection{Allocation-Proxy Implementation Conventions}
\label{appdx:pt_proxy}

Using the curvature-adjusted distances \(d_{t,i}\) defined in
\Cref{subsec: state}, let
\(d_{t,(1)}\leq d_{t,(2)}\leq\cdots\) be the ordered positive distances and
let \(\mathcal I_+\) index them. The finite-sample convention is
\begin{equation}
\Delta_{t,k}^{\rm loc}
=
\dfrac{1}{k}\sum_{j=1}^k d_{t,(j)}
\mathbf 1\{|\mathcal I_+|\geq k\}
+
\dfrac{1}{|\mathcal I_+|}
\sum_{i\in\mathcal I_+}d_{t,i}
\mathbf 1\{1\leq|\mathcal I_+|<k\},
\quad
\Delta_{t,k}^{\rm loc}=\infty
\ \text{if }|\mathcal I_+|=0.
\label{eq:Delta_t_proxy}
\end{equation}
The allocation probability \(\widehat p_{t,k}\) is then defined as in
\Cref{eq:p_t_proxy}. We use \(k=5\) and \(a=0.25\) throughout. Larger
\(k\) averages over a broader neighbourhood and generally delays UR, whereas
larger \(a\) promotes an earlier transition. Under the retained-initial-design condition \(n_0\geq k+1\) with distinct input locations, at least \(k\) positive distances are available at every active iteration, so \Cref{eq:Delta_t_proxy} reduces to the mean of the \(k\)
smallest positive distances.

\(\Delta_{t,k}^{\rm loc}\) is a cheap local-spacing diagnostic rather than a
calibrated estimate of \(p_t^\star\). A local-excursion and fill-distance
motivation is given in \Cref{oa:allocation_proxy}.

\subsection{Finite-Dimensional ESA Representation}\label{app:esa}

\Cref{alg:ESA} gives the levelwise target, block updates, and exchange
acceptance ratio. Here we specify the finite-dimensional path representation
and terminal local refinement used in computation. ESA is run on the current
working region \(\Omega_{r(t)}\); as in the main text,
\(\pi_{\tau,t}\) denotes the corresponding restricted AO law.

For a stationary kernel \(\kappa\), draw
\(\omega_j\sim\rho\) from its spectral distribution and
\(b_j\sim\operatorname{Unif}(0,2\pi)\), independently, and define
\citep{rahimi2007random}
\*[
\bm\phi_m(\para)
=
\sqrt{\frac{2\kappa(0)}{m}}
\left[
\cos(\omega_1^\top\para+b_1),\ldots,
\cos(\omega_m^\top\para+b_m)
\right]^\top.
\]
Let
\(
\bm\Phi_{t,m}
=
\left[
\bm\phi_m(\para_1),\ldots,\bm\phi_m(\para_{n_t})
\right]^\top
\),
\(\bm A_t=(\bm K_t+\varsigma^2\bm I_{n_t})^{-1}\).
Conditional on
\(\Xi_m=\{(\omega_j,b_j)\}_{j=1}^m\), which is held fixed within an ESA
run, the decoupled RFF--Matheron path is
\begin{equation*}
f_{t,m}^{\rm dec}(\para;\bm w,\bm\varepsilon)
=
\bm\phi_m(\para)^\top\bm w
+
\bm k_t(\para)^\top\bm A_t
\left\{
\bm y_t-\bm\Phi_{t,m}\bm w-\bm\varepsilon
\right\},
\end{equation*}
where
\(
\bm w\sim\mathcal N(\bm0, \bm I_m)
\),
\(
\bm\varepsilon\sim
\mathcal N(\bm0,\varsigma^2\bm I_{n_t})
\),
independently. RFF supplies the prior-path component, while the Matheron
correction retains the exact kernel vector and Gram matrix
\citep{matheron1973intrinsic,wilson2021pathwise}.

Writing
\(
\bm z=(\bm w^\top,\bm\varepsilon^\top)^\top
\),
the path is a deterministic function of \(\bm z\), so \(Q_f\) is
implemented on this Gaussian latent state and ESA returns
\(
(\widetilde\para_t,\widetilde{\bm z}_t)
\).
Levelwise convergence of the finite-dimensional construction is established
in \Cref{oa:rff_esa_convergence}.

Because \(\widehat p_{t,k}\) in \Cref{eq:p_t_proxy} increases with \(\Delta_{t,k}^{\rm loc}\), the terminal ESA states can remain
too dispersed around \(\para_t^\dagger\) for \(\widehat p_{t,k}\) to become
sufficiently small under a practical evaluation budget. We therefore apply a budget-limited local refinement to the terminal output. Let
\(
\widetilde{\bm z}_t
=
(\widetilde{\bm w}_t^\top,\widetilde{\bm\varepsilon}_t^\top)^\top
\),
\(
\widetilde f_{t,m}
=
f_{t,m}^{\rm dec}
(\cdot;\widetilde{\bm w}_t,\widetilde{\bm\varepsilon}_t)
\),
and set
\(
\para_{n_t+1}
=
\mathfrak L_{\widetilde f_{t,m}}(\widetilde\para_t)
\),
where \(\mathfrak L_h(\para)\) denotes local optimisation of \(h\)
initialised at \(\para\). This moves the proposal towards a local maximiser
of the sampled path within the basin selected by ESA, improving the local
design resolution and driving \(\widehat p_{t,k}\) downward. As a secondary
effect, it partially counteracts the optimisation bias associated with
a fixed \(\tau\), quantified for exact AO in
\Cref{thm:ao_main_text}; it does not attempt global optimisation of the
terminal path \(\widetilde f_{t,m}\). The complete proposal kernels, temperature schedule,
refinement budget, and numerical settings are reported in \Cref{oa:esa_settings}.

\subsection{TV Distance Computation}
\label{appdx:tv_computation}

For the benchmark and simulated examples, let
\(
\para^{(1)},\ldots,\para^{(M)}
\sim
\pi_\dataseq
\)
be a reference sample, and let
\(
\pi_{\mu_T}=\pi(\cdot\mid\mu_T)
\)
be the plug-in law from
\Cref{def: expected posterior}.
Since
\begin{equation*}
{\mathrm{TV}}(\pi_{\mu_T}, \pi_\dataseq)
=
\frac12
\EE_{\pi_\dataseq}
\left[
\left|
1-
\frac{\exp\{\mu_T(\para)-f(\para)\}}
{\EE_{\pi_\dataseq}
 [\exp\{\mu_T(\para)-f(\para)\}]}
\right|
\right],
\end{equation*}
define
\(
\ell_m
=
\mu_T(\para^{(m)})-f(\para^{(m)})
\),
\(
\bar w
=M^{-1}\sum_{m=1}^M\exp(\ell_m)
\).
The estimator is
\begin{equation*}
\widehat{\mathrm{TV}}(\pi_{\mu_T},\pi_\dataseq)
=
\frac{1}{2M}
\sum_{m=1}^M
\left|
1-\frac{\exp(\ell_m)}{\bar w}
\right|.
\end{equation*}
Construction and diagnostics of the reference samples are given in \Cref{oa:implementation_details}.

\section{Proofs}
\label{app: proofs}

This appendix records additional assumptions, notations, standard GP results, and proofs of major technical results in the paper.

\subsection{Notation, Useful Results, and Additional Assumptions}
\label{appdx:proof_setup}

We work under Assumption~\ref{ass:main_gp_setup} and use active-time
indexing. Unless a result explicitly takes \(\mathcal D_0=\emptyset\),
\(\mathcal D_0\) may be any finite initial design whose input locations are
chosen independently of \(f\), the observation noises, and the subsequent
algorithmic randomness. Results with an initial design are read
conditionally on that design after restarting the active index $t$. Write
\(
\EE_t(\cdot)
=
\EE(\cdot\mid\mathcal H_t)\),
\(\operatorname{Var}_t(\cdot)
=
\operatorname{Var}(\cdot\mid\mathcal H_t).
\)
Unless stated otherwise, \(\|\cdot\|\) is Euclidean and
\(\|\cdot\|_{\rm op}\) is the matrix operator norm. For measurable
\(A\subseteq\mathbb R^d\), set \(|A|:=\lambda(A)\). We retain the
Euclidean-ball, covering-number, and diameter notation
\(\mathbb B(\para,r)\), \(N(\epsilon,\Omega,\mathscr d)\), and
\(\operatorname{diam}(\Omega,\mathscr d)\) from
\Cref{ass:main_gp_setup}. Let
\(\mathcal P(\Omega)\) be the Borel probability measures on \(\Omega\). For later use, write
\(
G_t:=f_t-\mu_t
\)
and
\(
M_t^\mu
:=
\sup_{\para,\para'\in\Omega}
|\mu_t(\para)-\mu_t(\para')|.
\)

For AO, retain \(\para^\star\), \(R_T\), and \(\mathfrak R_T\) from
\Cref{def:ao_regret}. Fix a measurable maximiser selector
\(s:C(\Omega)\to\Omega\), take
\begin{equation*}
    \para^\star=s(f),
    \qquad
    \para_{f_t}^\star=s(f_t),
    \qquad
    \para_{t+1}^{\rm AO}\mid f_t
    \sim
    \pi_\tau(\cdot\mid f_t),
\end{equation*}
and write
\(
    \Delta_{t+1}^{\rm AO}
    =
    \EE_t\{
    f(\para^\star)-f(\para_{t+1}^{\rm AO})
    \}.
\)
Then
\(
\mathfrak R_T
=
\sum_{t=0}^{T-1}\EE\{\Delta_{t+1}^{\rm AO}\}
\)
by the tower property.

For UR, retain \(N_T^{\rm UR}\) from
\Cref{subsec:ur_calibration} and write
\(
I_t^{\rm UR}:=1-B_t
\), so that
\(
N_T^{\rm UR}=\sum_{t=0}^{T-1}I_t^{\rm UR}.
\)

Posterior-variance contraction gives
\(
0
\le
\sigma_t^2(\para)
\le
\kappa(\para,\para)
\le
B_\kappa .
\) Write
\begin{equation*}
    \mathscr d_t^2(\para,\para')
:=
\operatorname{Var}_t\{f(\para)-f(\para')\},
\end{equation*}
for the posterior canonical metric. This is also the
conditional canonical metric of \(G_t\).

\subsubsection{Standard Gaussian-Process Supremum Bound}

We repeatedly use the following conditional form of Dudley's entropy bound
and the Borell--TIS inequality
\citep[Theorems~1.3.3 and~2.1.1]{adler2007random};
see also \citet{dudley1967sizes}.

\begin{lemma}
\label{lem:conditional_gp_supremum}
Conditionally on a \(\sigma\)-field \(\mathcal G\), let
\(X=\{X(\para):\para\in\mathcal T\}\) be a centred separable continuous
Gaussian process on compact \(\mathcal T\), with conditional canonical
metric
\begin{equation*}
    \mathscr d_X(\para,\para')^2
:=
\operatorname{Var}
\{X(\para)-X(\para')\mid\mathcal G\}.
\end{equation*}
Suppose that, almost surely,\(
\mathscr d_X(\para,\para')
\leq
\mathscr d_0(\para,\para')\),
\(\sup_{\para\in\mathcal T}
\operatorname{Var}\{X(\para)\mid\mathcal G\}
\leq v^2,
\)
and
\begin{equation*}
    J(\mathscr d_0)
:=
\int_0^{\operatorname{diam}(\mathcal T,\mathscr d_0)}
\sqrt{\log N(\epsilon,\mathcal T,\mathscr d_0)}
\,\dee\epsilon
<\infty.
\end{equation*}
Then, writing
\(
S_X
=
\sup_{\para\in\mathcal T}|X(\para)|,
\)
\begin{equation*}
    m_{\mathcal G}
:=
\EE[S_X\mid\mathcal G]
\leq
v+C_DJ(\mathscr d_0)
\quad\text{a.s.},
\end{equation*}
for a universal constant \(C_D<\infty\), and for $u \geq 0$,
\begin{equation}
\label{eq:Borel-TIS tail bound}
    \Pr(S_X>m_{\mathcal G}+u\mid\mathcal G)
\leq
2\exp\left(-\frac{u^2}{2v^2}\right),
\quad\text{a.s.}
\end{equation}
Hence, for all \(q\geq1, a>0\), \(\EE(S_X^q\mid\mathcal G)
\leq C_q\{v^q+J(\mathscr d_0)^q\}\), \(\EE\{\exp(aS_X)\mid\mathcal G\}<\infty\) a.s., where \(C_q<\infty\) depends only on \(q\).
\end{lemma}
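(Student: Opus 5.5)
The plan is to derive the three claims of \Cref{lem:conditional_gp_supremum} in order: the conditional mean bound via Dudley's entropy bound, the tail bound via Borell--TIS, and the moment and exponential-moment bounds by integrating the tail. Throughout, we work with a regular conditional distribution of \(X\) given \(\mathcal G\). For almost every \(\omega\), this distribution is that of a centred separable continuous Gaussian process on \(\mathcal T\), with canonical metric \(\mathscr d_X\le\mathscr d_0\) and pointwise variance at most \(v^2\). All statements are therefore unconditional Gaussian facts applied \(\omega\)-wise, and the constants must not depend on \(\omega\).

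\textbf{Mean bound.} Fix any \(\para_0\in\mathcal T\) and write
\[
S_X\le|X(\para_0)|+\sup_{\para\in\mathcal T}|X(\para)-X(\para_0)|.
\]
The first term has conditional mean at most \(v\), since \(\EE|Z|\le(\EE Z^2)^{1/2}\). For the second term we use \citet[Theorem~1.3.3]{adler2007random}, which bounds \(\EE\sup_{\para,\para'}|X(\para)-X(\para')|\) by \(C\int_0^{\mathrm{diam}}\sqrt{\log N(\epsilon,\mathcal T,\mathscr d_X)}\,\dee\epsilon\). Since \(\mathscr d_X\le\mathscr d_0\), every \(\mathscr d_0\)-ball of radius \(\epsilon\) lies inside a \(\mathscr d_X\)-ball of the same radius, so \(N(\epsilon,\mathcal T,\mathscr d_X)\le N(\epsilon,\mathcal T,\mathscr d_0)\). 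Likewise \(\mathrm{diam}(\mathcal T,\mathscr d_X)\le\mathrm{diam}(\mathcal T,\mathscr d_0)\). Together these give \(J(\mathscr d_X)\le J(\mathscr d_0)\), and hence \(m_{\mathcal G}\le v+C_DJ(\mathscr d_0)\). Measurability of \(m_{\mathcal G}\) follows from separability, because the supremum can be taken over a countable dense set.

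\textbf{Tail bound.} Borell--TIS \citep[Theorem~2.1.1]{adler2007random}, applied to \(\sup X\) and to \(\sup(-X)\), gives \(\Pr(\sup X>\EE\sup X+u)\le e^{-u^2/(2\sigma_{\max}^2)}\), with \(\sigma_{\max}^2\le v^2\), and the same bound for \(\sup(-X)\). Since \(S_X=\max(\sup X,\sup(-X))\) and each of the two expectations is at most \(\EE S_X=m_{\mathcal G}\), a union bound yields \eqref{eq:Borel-TIS tail bound} with the factor \(2\). Finiteness of \(\EE\sup X\) is guaranteed by the mean bound, which is a prerequisite of Borell--TIS.

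\textbf{Moments.} First write
\[
\EE(S_X^q\mid\mathcal G)\le 2^{q-1}\bigl\{m_{\mathcal G}^q+\EE[(S_X-m_{\mathcal G})_+^q\mid\mathcal G]\bigr\}.
\]
Integrating the tail bound gives
\[
\EE[(S_X-m_{\mathcal G})_+^q\mid\mathcal G]=\int_0^\infty qu^{q-1}\Pr(S_X-m_{\mathcal G}>u\mid\mathcal G)\,\dee u\le 2q\int_0^\infty u^{q-1}e^{-u^2/(2v^2)}\,\dee u=C'_qv^q.
\]
Combining this with \(m_{\mathcal G}^q\le2^{q-1}\{v^q+C_D^qJ^q\}\) yields the stated bound, with \(C_q\) depending only on \(q\) (through the universal constant \(C_D\)). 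For the exponential moment, write
\[
\EE e^{aS_X}\le e^{am_{\mathcal G}}\Bigl(1+\int_0^\infty ae^{au}\,2e^{-u^2/(2v^2)}\,\dee u\Bigr)<\infty,
\]
which is finite because the Gaussian tail dominates \(e^{au}\).

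\textbf{Main obstacle.} The only delicate point is the conditional formulation. We need a regular conditional law under which \(X\) remains separable, continuous and Gaussian. This holds because continuity lets us view \(X\) as a \(C(\mathcal T)\)-valued random element, and \(C(\mathcal T)\) is Polish. We also need the a.s.\ hypotheses to transfer to almost every conditional law, which follows by restricting to the probability-one event on which they hold. Once this is in place, every step above is a pathwise application of the unconditional theorems, and the constants are universal.
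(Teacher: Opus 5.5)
Your proposal is correct and follows the same route as the paper: apply Dudley's entropy bound and Borell--TIS under the regular conditional Gaussian law, then integrate the tail bound to get the moment and exponential-moment conclusions. The paper gives only that one-line justification. You fill in the details it leaves implicit: the split at \(\para_0\) that produces the \(v\) term, the comparison \(N(\epsilon,\mathcal T,\mathscr d_X)\le N(\epsilon,\mathcal T,\mathscr d_0)\), and the two-sided union bound that gives the factor \(2\).
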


The result follows by applying the cited inequalities under the regular
conditional Gaussian law; the moment conclusions follow by integrating \Cref{eq:Borel-TIS tail bound}.
\subsubsection{UR Calibration Assumptions}
\begin{assumption}
\label{ass:prior_spectral_decay}
Let \(\mathcal C_0^\lambda\) be the prior covariance operator on
\(L^2(\Omega,\lambda)\), that is,
\begin{equation*}
    (\mathcal C_0^\lambda h)(\para)
=
\int_\Omega \kappa(\para,\bm u)h(\bm u)\,\lambda(\dee\bm u),
\end{equation*}
where \(\kappa\) is the prior covariance kernel. Let
\(
\lambda_{0,1}^\lambda\ge\lambda_{0,2}^\lambda\ge\cdots\ge0
\)
be its eigenvalues. There exists \(A_0<\infty\) such that at least one of
the following conditions holds:
\begin{enumerate}[label=\textup{(\roman*)},leftmargin=*,itemsep=1pt]
\item \emph{Polynomial spectral decay:} for some \(\nu>0\),
\(
    \lambda_{0,j}^\lambda
\le
A_0j^{-(1+2\nu/d)}\), \(j\ge1\).
\item \emph{Exponential spectral decay:} for some \(a_0>0\),
\(
    \lambda_{0,j}^\lambda
\le
A_0\exp\{-a_0j^{1/d}\}\), \(j\ge1\).
\end{enumerate}
\end{assumption}

The polynomial and exponential conditions correspond, respectively, to the
standard Mat\'ern-$\nu$ and SE kernels documented in
\Cref{rem:ideal_ur_calibration_scope}. The exponential condition implies the
polynomial condition for every finite \(\nu>0\), after changing \(A_0\).

\subsubsection{Allocation-Proxy Smoothness}
\begin{assumption}
\label{ass:delta_state_proxy_smoothness}
The prior GP admits a version with a.s. \(C^2(U)\) sample paths on an open
neighbourhood \(U\supset\Omega\). For each \(r,s\in[d]\), the
second-derivative process
\(
D_{rs}f(\para)=\partial_r\partial_s f(\para)
\)
is a centred continuous GP on \(\Omega\), jointly Gaussian with finite
collections of function values, with canonical metric
\(
\mathscr d_{0,rs}(\para,\para')^2
:=
\mathrm{Var}\{D_{rs}f(\para)-D_{rs}f(\para')\}
\)
satisfying
\begin{equation*}
    J_0^{(rs)}
:=
\int_0^{\mathrm{diam}\left(\Omega, \mathscr d_{0,rs}\right)}
\sqrt{\log N\left(\eps,\Omega,\mathscr d_{0,rs}\right)}\,\dee\eps
<\infty,
\end{equation*}
and
\(
\sup_{\para\in\Omega}
\mathrm{Var}\{D_{rs}f(\para)\}
\le
B_{2,rs}
<
\infty .
\)

Assume also that
\(
\partial_r\partial_s\mu_t(\para)
=
\EE_t\{D_{rs}f(\para)\}
\)
a.s., \(\para\in\Omega\). The curvature matrix used in the allocation
proxy has the form
\(
\mathbf C_t
=
\mathcal M\{
-
(\mathbf H_t
+
\mathbf H_t^\top)/2\},
\)
where $\mathbf H_t = \nabla^2\mu_t(\para_t^\dagger)$ and \(\mathbf C_t\succ0\), and the deterministic map
\(\mathcal M\) satisfies
\(
\|\mathcal M(\mathbf H)\|_{\rm op}
\leq
C_{\mathcal M}\{1+\|\mathbf H\|_{\rm op}\}
\)
for every symmetric matrix \(\mathbf H\). Here \(\mathcal M\) abstracts the
implementation's numerical positive-definite stabilisation after negative
symmetrisation of the local Hessian, such as a ridge correction or
positive-eigenvalue projection. The order statistics defining
\(\Delta_{t,k}^{\rm loc}\) are computed after excluding zero
curvature-adapted distances, equivalently after excluding evaluations at
the incumbent location under \(\mathbf C_t\succ0\).
\end{assumption}

\subsection[Proofs of the AO Limit and Perturbation Results]{Proofs of \Cref{prop:ao_ts_limit_main,lem:ao_perturb}}
\label{appdx:proof_opt}

\begin{proof}[\Cref{prop:ao_ts_limit_main}]
Let \(\bar\lambda=\lambda/|\Omega|\). By
\Cref{ass:main_gp_setup}, \(\bar\lambda\) has full support on
compact \(\Omega\). For \(\PP_t\)-almost every \(f_t\), the path is continuous
and has the unique maximiser \(\para_{f_t}^\star\). Hence the
zero-temperature Laplace principle \citep{hwang1980laplace} gives, for every
bounded continuous \(g:\Omega\to\mathbb R\),
\begin{equation*}
\int_\Omega g(\para)\,\pi_\tau(\dee\para\mid f_t)
\rightarrow
g(\para_{f_t}^\star), \quad \text{ as } \tau \downarrow 0.
\end{equation*}
Since the left-hand side is bounded in absolute value by
\(\|g\|_\infty\), conditional dominated convergence yields
\begin{equation*}
    \int_\Omega g(\para)\,\pi_{\tau,t}(\dee\para)
\rightarrow
\int g(\para_{f_t}^\star)\,\PP_t(\dee f_t) , \quad \text{ as } \tau \downarrow 0.
\end{equation*}
This proves the asserted weak convergence.
\end{proof}

\begin{proof}[\Cref{lem:ao_perturb}]
For simplicity, write
\(\varrho=\varrho(g_1,g_2)\). By
\Cref{def:annealed_objective}, the corresponding normalising constants satisfy
\begin{equation*}
    e^{-\varrho/\tau}\mathcal Z_\tau(g_2)
\leq
\mathcal Z_\tau(g_1)
\leq
e^{\varrho/\tau}\mathcal Z_\tau(g_2).
\end{equation*}
Thus
\(
\left\|
\log(p_{\tau,1}/p_{\tau,2})
\right\|_\infty
\leq 2\varrho/\tau.
\)
For two probability measures $\pi$ and $\rho$ with $\pi \ll \rho$, if the Radon-Nikodym derivative satisfies \(e^{-a}\leq\dee \pi/\dee \rho \leq e^a\) a.e., then
\begin{equation*}
    \TV(\pi, \rho) \leq\tanh(a/2).
\end{equation*}
Indeed, taking \(B=\{\dee \pi/\dee \rho\geq1\}\), \(x=\pi(B)\), and \(y=\rho(B)\),
the inequalities \(x\leq e^ay\) and
\(1-y\leq e^a(1-x)\) give
\(x-y\leq(e^a-1)/(e^a+1)\). Consequently,
\begin{equation*}
    \TV\left\{
\pi_\tau(\cdot\mid g_1),\pi_\tau(\cdot\mid g_2)
\right\}
\leq
\tanh(\varrho/\tau).
\end{equation*}

Let
\(\gamma\in\Gamma(\mathbb Q_1,\mathbb Q_2)\).
Since \(\gamma\) has marginals \(\mathbb Q_1\) and \(\mathbb Q_2\), for
every \(A\in\mathscr{B}(\Omega)\),
\begin{align*}
\left|
\mathcal A_\tau(\mathbb Q_1)(A)
-
\mathcal A_\tau(\mathbb Q_2)(A)
\right|
&\leq
\int
\TV\left\{
\pi_\tau(\cdot\mid h_1),
\pi_\tau(\cdot\mid h_2)
\right\}
\gamma(\dee h_1,\dee h_2)
\\
&\leq
\frac{1}{\tau}
\int
\|h_1-h_2\|_\infty\,
\gamma(\dee h_1,\dee h_2),
\end{align*}
where \(\tanh(u)\leq u\) for \(u\geq0\) was used. Taking the supremum
over \(A\) and then the infimum over \(\Gamma(\mathbb Q_1,\mathbb Q_2)\) proves the AO-law bound.

Finally, under \Cref{ass:main_gp_setup}\textup{(A5)}, posterior covariance
contraction and \Cref{lem:conditional_gp_supremum} give
\(
\EE_t\{\|f_t-\mu_t\|_\infty\}<\infty
\) a.s.
Since \(\mu_t\) is continuous on compact \(\Omega\), \(\PP_t\) therefore
has a finite first sup-norm moment almost surely.
\end{proof}

\subsection[Completion of the AO Regret Proof]{Proof of \Cref{thm:ao_main_text}}
\label{app:ao_regret}

Throughout this section, the AO query at active time \(t\) is generated
before \(y_{t+1}\) is observed. Conditional on \(\mathcal H_t\),
\(\para_{t+1}^{\mathrm{AO}}\) is generated from an independent posterior draw
\(f_t\sim\PP_t\) and auxiliary randomness, both conditionally independent of
the latent objective \(f\) and the fresh observation noise. Thus the AO design
is non-anticipative. Conditional on the realised design locations, the usual
deterministic-design GP covariance recursion and information-gain identities
therefore apply to the adaptive AO trajectory.

\subsubsection{Posterior Regularity}

\begin{lemma}
\label{lem:ao_post_C1}
Under \Cref{ass:main_gp_setup}, for every \(t\ge0\),
\(\PP_t(C^1(U))=1\) a.s. Hence \(f_t\) and \(G_t\) may be chosen in
\(C^1(U)\) a.s. Moreover, for each \(j\in[d]\),
\(D_jG_t\mid\mathcal H_t\) is a centred continuous GP on \(\Omega\). If
\(
\mathscr d_{t,j}(\para,\para')^2
:=
\mathrm{Var}_t\{D_jG_t(\para)-D_jG_t(\para')\},
\)
then, almost surely,
\begin{equation}
\label{eq:ao_post_grad_metric_dom}
\mathscr d_{t,j}(\para,\para')
\le
 \mathscr d_{0,j}(\para,\para'),
\qquad
\forall \para,\para'\in\Omega,
\end{equation}
and
\begin{equation}
\label{eq:ao_post_grad_var_dom}
\mathrm{Var}_t\{D_jG_t(\para)\}
\le
B_{\partial,j},
\qquad
\forall \para\in\Omega.
\end{equation}
\end{lemma}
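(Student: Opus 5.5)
The plan is to use the Gaussian conditioning (Matheron) representation to show that the posterior law $\PP_t$ is the law of a $C^1$ random function. Conditionally on the realised design locations and responses, write
\[
f_t \overset{d}{=} \mu_t + \bigl\{\tilde f - \bm k_t(\cdot)^\top(\bm K_t+\varsigma^2\bm I_{n_t})^{-1}(\tilde f(\para_{1:n_t})+\tilde{\bm\varepsilon})\bigr\},
\]
where $\tilde f\sim\mathcal{GP}(0,\kappa)$ on $U$ is an independent prior copy with $C^1(U)$ paths, from \Cref{ass:main_gp_setup}\textup{(A5)}, and $\tilde{\bm\varepsilon}$ is independent noise. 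When $\varsigma^2=0$, redundant points are removed, so the Gram matrix is invertible. The design is non-anticipating by (A3), so conditioning on $\mathcal H_t$ reduces to deterministic-design conditioning.

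Three ingredients then give $\PP_t(C^1(U))=1$ a.s.:
\begin{itemize}
\item $\mu_t\in C^1(U)$ by (A5).
\item $\tilde f\in C^1(U)$ a.s.
\item Each $\kappa(\cdot,\para_i)=\EE\{\tilde f(\cdot)\tilde f(\para_i)\}$ is $C^1$ in its first argument. I would justify this by differentiating under the expectation, using dominated convergence for the difference quotients. The bound comes from the mean-value theorem together with integrability of $\sup_\Omega|\partial_j\tilde f|$, which holds by \Cref{lem:conditional_gp_supremum} applied to $\partial_j f$ with $J_0^{(j)}<\infty$ and $B_{\partial,j}<\infty$. On $U\setminus\Omega$ one only needs local integrability on compact subsets, and this is standard for a $C^1$ Gaussian process by Fernique's theorem.
\end{itemize}
Hence $G_t=f_t-\mu_t$ can be taken in $C^1(U)$.

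Next, $D_jG_t$ is the pathwise derivative of the representation:
\[
D_jG_t=\partial_j\tilde f-\partial_j\bm k_t(\cdot)^\top(\bm K_t+\varsigma^2\bm I)^{-1}(\tilde f(\para_{1:n_t})+\tilde{\bm\varepsilon}).
\]
This is a linear functional of the jointly Gaussian family $\{\partial_j\tilde f,\tilde f(\para_i),\tilde{\bm\varepsilon}\}$; derivatives are $L^2$ limits of difference quotients, so they stay in the Gaussian Hilbert space. It is therefore a centred continuous GP on $\Omega$ conditionally on $\mathcal H_t$, and it has the law of $\partial_jf-\EE_t\partial_jf$ given $\mathcal H_t$. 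This last identification uses $\partial_j\mu_t=\EE_t\partial_jf$ from (A5).

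The variance bounds then follow from the Gaussian conditional-variance contraction. For any fixed linear combination $L$ of the values $\partial_jf(\para)$, Gaussian conditioning on finitely many linear observations gives $\mathrm{Var}_t(L)=\mathrm{Var}(L)-c^\top(\bm K_t+\varsigma^2\bm I)^{-1}c\le\mathrm{Var}(L)$, where $c$ is the cross-covariance vector. Taking $L=\partial_jf(\para)-\partial_jf(\para')$ gives \Cref{eq:ao_post_grad_metric_dom}, and taking $L=\partial_jf(\para)$ together with the definition of $B_{\partial,j}$ gives \Cref{eq:ao_post_grad_var_dom}. Both hold simultaneously for all $\para,\para'$ because the formula is deterministic given the design, so no null-set union is required.

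The main obstacle is the rigorous interchange that identifies $\partial_j\kappa(\cdot,\para_i)$ with $\EE\{\partial_j\tilde f(\cdot)\tilde f(\para_i)\}$, which in turn identifies the derivative process's covariance with the differentiated kernel. I would handle it with the uniform integrability of the difference quotients described above.
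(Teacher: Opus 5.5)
Your proof is correct. Its second half follows the paper: the conditional Gaussianity of \(D_jG_t\) and the two bounds come from the contraction \(\mathrm{Var}_t(L)=\mathrm{Var}(L)-c^\top(\bm K_t+\varsigma^2\bm I)^{-1}c\le\mathrm{Var}(L)\), applied to \(D_jf(\para)\) and to \(D_jf(\para)-D_jf(\para')\). The paper cites closure of GPs under differentiation and conditioning for exactly this step.

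Where you differ is the first claim. The paper never builds a \(C^1\) version. Since \(\PP_t\) is the regular conditional law of \(f\) given \(\mathcal H_t\) and \(\Pr\{f\in C^1(U)\}=1\), it writes \(\PP_t\{C^1(U)\}=\EE[\mathbf 1_{\{f\in C^1(U)\}}\mid\mathcal H_t]=1\) a.s. It then gets \(G_t\in C^1(U)\) from \(\mu_t\in C^1(U)\) in \textup{(A5)}. This one-liner needs neither the Matheron representation nor differentiability of the kernel sections \(\kappa(\cdot,\para_i)\).

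Your pathwise-conditioning route pays for that extra kernel argument. In return it gives an explicit \(C^1\) version of \(G_t\) and an explicit formula for \(D_jG_t\). The conditional Gaussianity, path continuity and covariance of the derivative process are then checked directly rather than cited.

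The step you flag as the main obstacle is lighter than you suggest. Almost-sure convergence within a centred Gaussian family already forces \(L^2\) convergence, as your own Gaussian-Hilbert-space remark implies. Hence \(\partial_j\kappa(\para,\para_i)=\EE\{\partial_j\tilde f(\para)\tilde f(\para_i)\}\) and its continuity in \(\para\) follow without any Dudley, Borell--TIS or Fernique domination.

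Likewise, the two variance bounds need only that \(\partial_jf(\para)\) is jointly Gaussian with the observations. The explicit form of the cross-covariance \(c\) plays no role there.
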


\begin{proof}
Since \(\Pr\{f\in C^1(U)\}=1\),
\(
\PP_t\{C^1(U)\}
=
\EE\!\left[
\mathbf 1_{\{f\in C^1(U)\}}
\mid\mathcal H_t
\right]
=
1
\) a.s,
Hence \(f_t\in C^1(U)\) a.s., and
\(G_t=f_t-\mu_t\in C^1(U)\) a.s. by
\Cref{ass:main_gp_setup}\textup{(A5)}.

The remaining assertions follow from the standard closure of GPs under linear differentiation and Gaussian conditioning
\citep[Equation~(2.24) and Section~9.4]{rasmussen2006gaussian}.
Indeed, the non-anticipation condition in
\Cref{ass:main_gp_setup}\textup{(A3)} allows these identities to be
applied sequentially along the realised adaptive design. Thus
\(D_jG_t\mid\mathcal H_t\) is a centred continuous GP, and its posterior
covariance is the corresponding prior covariance minus a
positive-semidefinite Gaussian-conditioning term. Applying this covariance
order to \(D_jf(\para)\) and
\(D_jf(\para)-D_jf(\para')\) gives
\Cref{eq:ao_post_grad_var_dom,eq:ao_post_grad_metric_dom}, respectively.
\end{proof}

\begin{lemma}
\label{lem:ao_Lt}
Define \(L_t=1+\mathrm{Lip}(f_t)\) and \(L_\star=1+\mathrm{Lip}(f)\), where
\begin{equation*}
    \mathrm{Lip}(g)
=
\sup_{\para,\para'\in\Omega,\ \para\ne\para'}
\frac{|g(\para)-g(\para')|}{\|\para-\para'\|}.
\end{equation*}
Under \Cref{ass:main_gp_setup}, \(L_t,L_\star<\infty\) a.s., and there
is a constant \(C_L<\infty\) such that
\begin{equation}
\label{eq:ao_CL}
\sup_{t\ge0}\EE\{\log_+(r_\Omega L_t)\}
=
\EE\{\log_+(r_\Omega L_\star)\}
\le C_L.
\end{equation}
\end{lemma}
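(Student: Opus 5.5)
Finiteness of \(L_\star\) and \(L_t\) comes first. By \Cref{ass:main_gp_setup}\textup{(A5)}, \(f\in C^1(U)\) a.s., and \(\Omega\subset U\) is compact and convex. The mean value theorem then gives \(\mathrm{Lip}(f)\le\sup_{\para\in\Omega}\|\nabla f(\para)\|\le\sum_{j=1}^d\sup_{\para\in\Omega}|\partial_jf(\para)|<\infty\) a.s. For \(f_t\), \Cref{lem:ao_post_C1} gives \(f_t\in C^1(U)\) a.s., so the same argument applies.

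The equality in \Cref{eq:ao_CL} follows from \Cref{ass:main_gp_setup}\textup{(A3)}. Conditional on \(\mathcal H_t\), \(f\) and \(f_t\) are i.i.d.\ with law \(\PP_t\). Since \(\mathrm{Lip}(\cdot)\) is a measurable functional on \(C^1\) paths (it can be computed as a supremum over a countable dense set), \(L_t\mid\mathcal H_t\overset{d}{=}L_\star\mid\mathcal H_t\). Hence \(\EE_t\{\log_+(r_\Omega L_t)\}=\EE_t\{\log_+(r_\Omega L_\star)\}\) a.s. Taking expectations and using the tower property shows that \(\EE\{\log_+(r_\Omega L_t)\}=\EE\{\log_+(r_\Omega L_\star)\}\) for every \(t\). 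The right-hand side does not depend on \(t\), so the supremum over \(t\) equals it.

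It remains to bound \(\EE\{\log_+(r_\Omega L_\star)\}\) under the prior. Use \(\log_+(xy)\le\log_+x+\log_+y\) and \(\log_+(1+z)\le z\) for \(z\ge0\) to get \(\log_+(r_\Omega L_\star)\le\log_+r_\Omega+\sum_{j}\sup_{\Omega}|\partial_jf|\). For each \(j\), \(\partial_jf\) is a centred continuous prior GP on \(\Omega\) with canonical metric \(\mathscr d_{0,j}\) and variance bounded by \(B_{\partial,j}\). \Cref{lem:conditional_gp_supremum}, applied with trivial \(\mathcal G\), then gives \(\EE\sup_\Omega|\partial_jf|\le B_{\partial,j}^{1/2}+C_DJ_0^{(j)}<\infty\). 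Therefore
\[
C_L:=\log_+r_\Omega+\sum_{j=1}^d\bigl(B_{\partial,j}^{1/2}+C_DJ_0^{(j)}\bigr)
\]
works.

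The main care point is the distributional identity in the second paragraph. It needs \(f_t\) to be a genuine draw from \(\PP_t\), which (A3) supplies. It also requires that \(\mathrm{Lip}\) be evaluated on versions with \(C^1\) paths, which \Cref{lem:ao_post_C1} guarantees. If one prefers to avoid the equality-in-law argument, the direct route also works: bound \(\mathrm{Lip}(f_t)\le\sum_j\bigl(\sup|\partial_j\mu_t|+\sup|D_jG_t|\bigr)\) and use \Cref{eq:ao_post_grad_metric_dom,eq:ao_post_grad_var_dom}. However, controlling \(\partial_j\mu_t\) uniformly in \(t\) then requires Jensen's inequality via \(\partial_j\mu_t=\EE_t\partial_jf\), so the equality-in-law route is simpler.
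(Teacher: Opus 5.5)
Your proposal is correct and follows essentially the same route as the paper. Finiteness comes from the mean value theorem on convex $\Omega$ with $C^1$ paths, the equality from the conditional identity in law of $f$ and $f_t$ given $\mathcal H_t$ plus the tower property, and the bound on the common value from \Cref{lem:conditional_gp_supremum} applied to each $\partial_j f$. The only cosmetic difference is in the last inequality. The paper bounds $\log_+(r_\Omega L_\star)\le r_\Omega L_\star$ directly, whereas you split off $\log_+ r_\Omega$ and use $\log_+(1+z)\le z$, so your constant depends on $r_\Omega$ only logarithmically.
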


\begin{proof}
The \(C^1\) sample-path property and compact convexity of \(\Omega\) imply
\begin{equation*}
    \mathrm{Lip}(g|_\Omega)
\le
\sup_{\para\in\Omega}\|\nabla g(\para)\|,
\quad g\in C^1(U).
\end{equation*}
Thus \(L_t,L_\star<\infty\) a.s. Let
\(\Phi(g)=\log_+[r_\Omega\left\{1+\mathrm{Lip}(g|_\Omega)\right\}]\). Since
\(f_t\mid\mathcal H_t\) and \(f\mid\mathcal H_t\) have the same conditional law,
\begin{equation*}
    \EE\{\Phi(f_t)\}
=
\EE[\EE_t\{\Phi(f_t)\}]
=
\EE[\EE_t\{\Phi(f)\}]
=
\EE\{\Phi(f)\},
\end{equation*}
which proves the equality in \Cref{eq:ao_CL}. For finiteness,
\Cref{lem:conditional_gp_supremum}, applied to \(D_jf\) with
\(\mathscr d_0=\mathscr d_{0,j}\) and \(v^2=B_{\partial,j}\), gives
\(
\EE\left\{\sup_{\para\in\Omega}|D_jf(\para)|\right\}<\infty,
\ j\in[d].
\)
Consequently,
\begin{equation*}
    \EE\{\mathrm{Lip}(f)\}
\leq
\sum_{j=1}^d
\EE\left\{\sup_{\para\in\Omega}|D_jf(\para)|\right\}
<\infty.
\end{equation*}
Since \(\log_+(x)\leq x\) for \(x\geq0\), the common expectation in
\Cref{eq:ao_CL} is finite.
\end{proof}

\begin{lemma}
\label{lem:ao_Mbar}
Let \(\mathcal S_t^{\mathrm{draw}}=\sup_{\para\in\Omega}|f_t(\para)-\mu_t(\para)|\).
Under \Cref{ass:main_gp_setup}, there exists \(\bar M<\infty\),
depending only on \(J_0\) and \(B_\kappa\), such that
\begin{equation*}
    \left[\EE\left\{(\mathcal S_t^{\mathrm{draw}})^2\right\}\right]^{1/2}
\le
\bar M,
\qquad
\forall t\ge0.
\end{equation*}
\end{lemma}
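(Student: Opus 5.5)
The argument is a single application of \Cref{lem:conditional_gp_supremum} with $\mathcal G=\mathcal H_t$ and $X=G_t=f_t-\mu_t$, followed by the tower property to remove the conditioning. Conditionally on $\mathcal H_t$, $G_t$ is a centred continuous GP on $\Omega$. This is separable by continuity, and it has continuous paths a.s. by \Cref{lem:ao_post_C1}. Its conditional canonical metric is the posterior metric $\mathscr d_t$, and $\mathcal S_t^{\mathrm{draw}}=S_X$ in the notation of that lemma.

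The first step is to verify the two domination hypotheses almost surely. Gaussian conditioning only subtracts a positive-semidefinite term from the covariance, and (A3) lets this be applied along the realised adaptive design. Applying this to $f(\para)-f(\para')$ gives $\mathscr d_t(\para,\para')\le\mathscr d_0(\para,\para')$ for all $\para,\para'\in\Omega$. Applying it to $f(\para)$ gives $\sigma_t^2(\para)\le\kappa(\para,\para)\le B_\kappa$, so we may take $v^2=B_\kappa$. The entropy integral of the dominating metric is $J(\mathscr d_0)=J_0<\infty$ by (A5). None of these bounds depends on $t$ or on the realised design.

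The second step takes $q=2$ in the moment conclusion of \Cref{lem:conditional_gp_supremum}. This gives, almost surely,
\[
\EE_t\{(\mathcal S_t^{\mathrm{draw}})^2\}\le C_2\{B_\kappa+J_0^2\}.
\]
If one wants the constant explicitly, integrate the Borell--TIS tail \Cref{eq:Borel-TIS tail bound}. With $m\le B_\kappa^{1/2}+C_DJ_0$,
\[
\EE_t\{S^2\}\le 2m^2+2\int_0^\infty 2u\cdot 2e^{-u^2/(2B_\kappa)}\,\dee u
= 2m^2+8B_\kappa .
\]
Taking expectations by the tower property then gives $\EE\{(\mathcal S_t^{\mathrm{draw}})^2\}\le C_2(B_\kappa+J_0^2)$. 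We set $\bar M:=\{C_2(B_\kappa+J_0^2)\}^{1/2}$, which depends only on $J_0$, $B_\kappa$ and universal constants, uniformly in $t$.

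The only delicate point is measurability and the meaning of ``conditionally a GP'' under the adaptive design. The conditional law $\PP_t$ must be a genuine Gaussian process law with the stated covariance, and this needs the non-anticipation in (A3) together with the exact-updating statement in (A2). In the noiseless case, removing redundant observations keeps $\bm K_t$ invertible. Once this is granted, as in the proof of \Cref{lem:ao_post_C1}, the rest is routine.
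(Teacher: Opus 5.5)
Your proposal is correct and matches the paper's proof. The paper also conditions on $\mathcal H_t$ and uses posterior covariance contraction to get $\mathscr d_t\le\mathscr d_0$ and $\sup_{\para}\operatorname{Var}_t\{G_t(\para)\}\le B_\kappa$; it then applies the $q=2$ moment conclusion of \Cref{lem:conditional_gp_supremum} with $J(\mathscr d_0)=J_0$ and takes expectations to obtain $\bar M=\{C_2(J_0^2+B_\kappa)\}^{1/2}$. Your explicit tail integration, giving $2m^2+8B_\kappa$, is a correct optional way to make $C_2$ concrete.
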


\begin{proof}
Condition on \(\mathcal H_t\). Posterior
covariance contraction gives
\(
\mathscr d_t\leq\mathscr d_0,\)
\(\sup_{\para\in\Omega}
\operatorname{Var}_t\{G_t(\para)\}
\leq B_\kappa.
\)
The \(q=2\) conclusion of \Cref{lem:conditional_gp_supremum}, with
\(J(\mathscr d_0)=J_0\), therefore gives
\begin{equation*}
    \EE_t\left[
\left\{\sup_{\para\in\Omega}|G_t(\para)|\right\}^2
\right]
\leq
C_2(J_0^2+B_\kappa)
\quad\text{a.s.}
\end{equation*}
Taking expectations proves the claim with
\(
\bar M=\{C_2(J_0^2+B_\kappa)\}^{1/2}.
\)
\end{proof}

\subsubsection{Posterior Identities and Gibbs Softening}

\begin{lemma}
\label{lem:ao_posterior_identities}
For every \(t\ge0\),
\(
\EE_t\left\{f(\para^\star)\right\}
=
\EE_t\{f_t(\para_{f_t}^\star)\}\),
\(\EE_t\left\{f(\para_{t+1}^{\mathrm{AO}})\right\}
=
\EE_t\left\{\mu_t(\para_{t+1}^{\mathrm{AO}})\right\}.
\)
\end{lemma}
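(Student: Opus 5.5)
Both identities come from (A3): conditionally on \(\mathcal H_t\), the truth \(f\) and the sampled path \(f_t\) are i.i.d.\ from \(\PP_t\), and the AO query depends on \(f\) only through \(\mathcal H_t\).

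\textbf{First identity.} Write \(f(\para^\star)=\sup_{\para\in\Omega}f(\para)=:\Psi(f)\) and, with the same measurable selector \(s\), \(f_t(\para_{f_t}^\star)=\Psi(f_t)\). The functional \(\Psi\) is measurable on \(C(\Omega)\), since it is a supremum over a countable dense subset. By \Cref{lem:ao_Mbar} and \Cref{lem:conditional_gp_supremum}, \(|\Psi(g)|\le\|\mu_t\|_\infty+\sup_{\para}|g(\para)-\mu_t(\para)|\) is \(\PP_t\)-integrable almost surely. Because \(f\) and \(f_t\) share the conditional law \(\PP_t\), we get \(\EE_t\{\Psi(f)\}=\int\Psi\,\dee\PP_t=\EE_t\{\Psi(f_t)\}\).

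\textbf{Second identity.} Condition first on \(\mathcal H_t\) together with the AO randomness, namely the path \(f_t\) and the auxiliary draw producing \(\para_{t+1}^{\rm AO}\sim\pi_\tau(\cdot\mid f_t)\). By (A3) this randomness is independent of \(f\) given \(\mathcal H_t\), so \(f\mid\mathcal H_t\) is unchanged by the enlargement. In particular, \(\EE\{f(\para)\mid\mathcal H_t,f_t,\text{aux}\}=\mu_t(\para)\) for every fixed \(\para\).

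\begin{itemize}
\item Plug in the random point \(\para=\para_{t+1}^{\rm AO}\), which is measurable with respect to the enlarged \(\sigma\)-field. Justify the substitution via joint measurability of \((\para,g)\mapsto g(\para)\), which holds since paths are continuous, and conditional independence (a freezing lemma).
\item The required integrability is \(|f(\para)|\le\|\mu_t\|_\infty+\sup_{\para}|G|\), which has finite conditional mean by \Cref{lem:conditional_gp_supremum}.
\end{itemize}

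This gives \(\EE\{f(\para_{t+1}^{\rm AO})\mid\mathcal H_t,f_t,\text{aux}\}=\mu_t(\para_{t+1}^{\rm AO})\). The tower property then yields the claim.

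The only real obstacle is the measurability and freezing step with a random evaluation point. I will handle it by writing \(\EE_t\{f(\para_{t+1}^{\rm AO})\}=\int\!\int g(\para)\,\PP_t(\dee g)\,\mathrm{Law}_t(\para_{t+1}^{\rm AO})(\dee\para)\) via Fubini, using product structure from independence, and then noting that \(\int g(\para)\,\PP_t(\dee g)=\mu_t(\para)\).
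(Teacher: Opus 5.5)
Your proof is correct and follows the paper's argument. The first identity comes from the equality of the conditional laws of \(f\) and \(f_t\) under \(\PP_t\), and the second from the conditional independence of the AO query from \(f\) given \(\mathcal H_t\) together with the tower property. Your measurability, integrability and freezing/Fubini steps supply details the paper leaves implicit; it conditions on \((\mathcal H_t,\para_{t+1}^{\mathrm{AO}})\) rather than on your larger \(\sigma\)-field generated by \(f_t\) and the auxiliary draw, which makes no difference.
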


\begin{proof}
Conditionally on \(\mathcal H_t\), \(f\) and \(f_t\) are i.i.d. from \(\PP_t\).
Therefore \(f\left\{s(f)\right\}=
f(\para^\star)\) and \(f_t\left\{s(f_t)\right\}=
f_t(\para_{f_t}^\star)\) have the same conditional law,
proving the first identity. For the second, \(\para_{t+1}^{\mathrm{AO}}\) is
generated from \(f_t\) and auxiliary randomness, and is conditionally
independent of \(f\) given \(\mathcal H_t\). Thus
\begin{equation*}
    \EE_t\left\{f(\para_{t+1}^{\mathrm{AO}})\right\}
=
\EE_t\left[\EE\left\{f(\para_{t+1}^{\mathrm{AO}})\mid\mathcal H_t,\para_{t+1}^{\mathrm{AO}}\right\}\right]
=
\EE_t\left\{\mu_t(\para_{t+1}^{\mathrm{AO}})\right\}.
\end{equation*}
\end{proof}

\begin{lemma}
\label{lem:ao_softening_pathwise}
Under \Cref{ass:main_gp_setup}, define
\(
C_\Omega
=
1+\log\!\left\{|\Omega|/(c_\Omega r_\Omega^d)\right\}.
\)
Then, for every \(t\ge0\),
\begin{equation*}
f_t(\para_{f_t}^\star)
-
\EE\{f_t(\para_{t+1}^{\mathrm{AO}})\mid\mathcal H_t,f_t\}
\le
\tau
\left\{
C_\Omega
+
d\log_+\!\left(r_\Omega L_t/\tau\right)
\right\}
\quad\text{a.s.}
\end{equation*}
\end{lemma}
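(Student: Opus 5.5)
Conditional on \(\mathcal H_t\) and \(f_t\), the query \(\para_{t+1}^{\mathrm{AO}}\) has law \(\pi_\tau(\cdot\mid f_t)\). The claim is therefore a deterministic Gibbs-softening bound for a single continuous function \(g=f_t\) with maximiser \(\para^\star_g=\para^\star_{f_t}\) and Lipschitz constant at most \(L_t-1\), which is finite by \Cref{lem:ao_Lt}. Write \(M=g(\para^\star_g)\) and \(h=M-g\ge0\), so that the quantity to bound is \(\EE_{\pi_\tau(\cdot\mid g)}\{h\}\).

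I would use the variational (Gibbs) characterisation. For any probability density \(q\) on \(\Omega\),
\[
\log\mathcal Z_\tau(g)=\sup_q\Big\{\tfrac1\tau\textstyle\int g\,q\,\dee\lambda-\int q\log q\,\dee\lambda\Big\},
\]
with the supremum attained at \(q=p_\tau\), the density of \(\pi_\tau(\cdot\mid g)\). Evaluating at \(p_\tau\) gives
\[
\EE_{\pi_\tau}\{g\}=\tau\log\mathcal Z_\tau(g)+\tau\,\mathrm{Ent}(p_\tau),
\qquad
\mathrm{Ent}(p_\tau)=-\textstyle\int p_\tau\log p_\tau\,\dee\lambda\le\log|\Omega|.
\]
Hence
\[
M-\EE_{\pi_\tau}\{g\}\le \tau\log|\Omega|-\tau\log\mathcal Z_\tau(g-M).
\]
The remaining task is a lower bound on \(\mathcal Z_\tau(g-M)=\int_\Omega e^{-h/\tau}\dee\lambda\).

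Set \(r=\min\{r_\Omega,\tau/L_t\}\). On \(\mathbb B(\para^\star_g,r)\cap\Omega\) we have \(h\le (L_t-1)r\le\tau\), so \(e^{-h/\tau}\ge e^{-1}\). By (A1) this ball has Lebesgue measure at least \(c_\Omega r^d\), which gives
\[
-\log\mathcal Z_\tau(g-M)\le 1-\log c_\Omega-d\log r .
\]
Combining the two displays,
\[
M-\EE_{\pi_\tau}\{g\}\le\tau\Big[1+\log\{|\Omega|/(c_\Omega r_\Omega^d)\}+d\log(r_\Omega/r)\Big].
\]
The last term is \(d\log(r_\Omega/r)=d\log_+(r_\Omega L_t/\tau)\), since \(r_\Omega/r=\max\{1,r_\Omega L_t/\tau\}\). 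The bracket is therefore exactly \(C_\Omega+d\log_+(r_\Omega L_t/\tau)\).

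The delicate step is the entropy bound \(\mathrm{Ent}(p_\tau)\le\log|\Omega|\), which holds by Jensen's inequality relative to the uniform law \(\bar\lambda\). I also need the inputs on \(g\) to be valid almost surely:

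\begin{itemize}
\item \(g\) is continuous with a maximiser, guaranteed by (A4) and \Cref{lem:ao_post_C1};
\item the Lipschitz bound on \(h\) near \(\para^\star_g\) holds, guaranteed by \Cref{lem:ao_Lt}.
\end{itemize}

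The bound then follows for \(\PP_t\)-almost every \(f_t\), and hence almost surely.
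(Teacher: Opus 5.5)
Your proof is correct and follows essentially the same route as the paper: a lower bound on $\mathcal Z_\tau$ from the ball of radius $r_\Omega\wedge\tau/L_t$ around the maximiser via (A1), combined with the Gibbs identity and the entropy bound $\mathrm{Ent}(p_\tau)\le\log|\Omega|$. One small slip: the displayed identity should read $\EE_{\pi_\tau}\{g\}=\tau\log\mathcal Z_\tau(g)-\tau\,\mathrm{Ent}(p_\tau)$, with a minus sign as your own variational formula gives. That corrected identity is exactly what your next inequality uses, so the conclusion is unaffected.
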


\begin{proof}
Condition on \(\mathcal H_t\) and \(f_t=g\). Let
\(\para_g^\star=s(g)\), \(g^\star=g(\para_g^\star)\),
\(L=1+\mathrm{Lip}(g)\), and \(r=r_\Omega\wedge \tau/L\). For every
\(\tilde\para\in \mathbb{B}(\para_g^\star,r)\cap\Omega\),
\begin{equation*}
    g(\tilde\para)
\ge
g^\star-\mathrm{Lip}(g)\|\tilde\para-\para_g^\star\|
\ge
g^\star-Lr
\ge
g^\star-\tau .
\end{equation*}
Hence, with \(\mathcal Z_\tau(g)\) from
\Cref{def:annealed_objective}, we have
\begin{equation*}
    \mathcal Z_\tau(g)
\ge
\int_{\mathbb B(\para_g^\star,r)\cap\Omega}
\exp\{g(\bm u)/\tau\}\,\lambda(\dee\bm u)
\ge
\exp(g^\star/\tau-1)c_\Omega r^d.
\end{equation*}
Therefore,
\begin{equation}\label{eq: g_star_bound}
    g^\star-\tau\log \mathcal Z_\tau(g)
\le
\tau-\tau\log c_\Omega-d\tau\log r.
\end{equation}

Let
\(
p_{\tau,g}
:=
\dee\pi_\tau(\cdot\mid g)/\dee\lambda
\)
be the Lebesgue density from
\Cref{def:annealed_objective}. Therefore,
\begin{equation*}
    \tau\log\mathcal Z_\tau(g)
=
\int_\Omega g(\para)p_{\tau,g}(\para)\,\lambda(\dee\para)
+
\tau \mathbb H(p_{\tau,g}),
\quad
\mathbb H(p_{\tau,g})
:=
-\int_\Omega p_{\tau,g}(\para)\log p_{\tau,g}(\para)\,\lambda(\dee\para).
\end{equation*}
Moreover, \(\mathrm{KL}(p_{\tau,g}\|u_\Omega)\geq0\), where
\(u_\Omega=|\Omega|^{-1}\mathbf 1_\Omega\), gives
\(\mathbb H(p_{\tau,g})\leq\log|\Omega|\). Hence
\begin{equation*}
    \int_\Omega g(\para)p_{\tau,g}(\para)\,\lambda(\dee\para)
\geq
\tau\log\mathcal Z_\tau(g)-\tau\log|\Omega|.
\end{equation*}
Combining this inequality with \Cref{eq: g_star_bound} gives
\begin{equation*}
    g^\star-
    \int_\Omega g(\para)p_{\tau,g}(\para)\,\lambda(\dee\para)
\le
\tau\left\{
1+\log\!\left(|\Omega|/c_\Omega\right)-d\log r
\right\}.
\end{equation*}
Finally, since \(r=r_\Omega\wedge \tau/L\),
\(
-\log r
\le
-\log r_\Omega+\log_+\!\left(r_\Omega L/\tau\right).
\)
Substitution gives the claimed bound.
\end{proof}
\begin{corollary}
\label{cor:ao_softening}
Under \Cref{ass:main_gp_setup},
\(
\EE_t[
 f_t(\para_{f_t}^\star)
 -
 f_t(\para_{t+1}^{\mathrm{AO}})
]
\le
\tau
\left[
C_\Omega
+
d\,\EE_t\!\left\{
\log_+\!\left(r_\Omega L_t/\tau\right)
\right\}
\right].
\)
\end{corollary}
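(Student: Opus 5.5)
The corollary follows by taking the conditional expectation \(\EE_t\) of the pathwise bound in \Cref{lem:ao_softening_pathwise}. First I will clarify what the left-hand side means. Conditionally on \(\mathcal H_t\), the AO query is produced in two stages: draw \(f_t\sim\PP_t\), then draw \(\para_{t+1}^{\mathrm{AO}}\mid f_t\sim\pi_\tau(\cdot\mid f_t)\) using fresh auxiliary randomness. So \(\EE_t\{f_t(\para_{t+1}^{\mathrm{AO}})\}\) is the iterated expectation \(\EE_t[\EE\{f_t(\para_{t+1}^{\mathrm{AO}})\mid\mathcal H_t,f_t\}]\). Since \(f_t(\para_{f_t}^\star)\) is \(\sigma(\mathcal H_t,f_t)\)-measurable, the tower property gives
\begin{equation*}
\EE_t\bigl[f_t(\para_{f_t}^\star)-f_t(\para_{t+1}^{\mathrm{AO}})\bigr]
=
\EE_t\Bigl[f_t(\para_{f_t}^\star)-\EE\{f_t(\para_{t+1}^{\mathrm{AO}})\mid\mathcal H_t,f_t\}\Bigr].
\end{equation*}

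Next I will apply \Cref{lem:ao_softening_pathwise} inside the outer expectation. The integrand is then bounded almost surely by \(\tau\{C_\Omega+d\log_+(r_\Omega L_t/\tau)\}\). Here \(C_\Omega\) is deterministic, and \(L_t\) depends only on \(f_t\). Taking \(\EE_t\) of this bound gives exactly the stated right-hand side.

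The only real obstacle is integrability, which is needed for the tower property and the monotonicity of \(\EE_t\) to be valid. I will establish it as follows.
\begin{enumerate}
\item Both \(f_t(\para_{f_t}^\star)\) and \(f_t(\para_{t+1}^{\mathrm{AO}})\) are bounded in absolute value by \(\|\mu_t\|_\infty+\mathcal S_t^{\mathrm{draw}}\).
\item \(\mu_t\) is continuous on the compact set \(\Omega\), so \(\|\mu_t\|_\infty\) is finite and \(\mathcal H_t\)-measurable.
\item \(\EE_t\{\mathcal S_t^{\mathrm{draw}}\}<\infty\) almost surely, by \Cref{lem:conditional_gp_supremum} as used in \Cref{lem:ao_Mbar}.
\end{enumerate}
Hence the left-hand side is conditionally integrable. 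On the right-hand side, \(\log_+(r_\Omega L_t/\tau)\le\log_+(r_\Omega L_t)+\log_+(1/\tau)\). The term \(\log_+(r_\Omega L_t)\) has finite expectation by \Cref{lem:ao_Lt}, so its conditional expectation is finite almost surely. Even if that conditional expectation were infinite on some event, the inequality would hold trivially there, since the bound is nonnegative.

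Measurability of \(\para_{f_t}^\star=s(f_t)\) follows from the fixed measurable selector \(s\). With that noted, the argument is complete.
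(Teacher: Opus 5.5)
Your proposal is correct and is the paper's own argument: the paper's proof of this corollary is the single step of taking \(\EE_t\) of the pathwise bound in \Cref{lem:ao_softening_pathwise}. Your integrability checks (the envelope \(\|\mu_t\|_\infty+\mathcal S_t^{\mathrm{draw}}\) on the left, and \(\log_+(r_\Omega L_t/\tau)\le\log_+(r_\Omega L_t)+\log_+(1/\tau)\) with \Cref{lem:ao_Lt} on the right) are valid and simply make explicit what the paper leaves implicit.
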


\begin{proof}
Take conditional expectation in \Cref{lem:ao_softening_pathwise}.
\end{proof}

\subsubsection{Variance Accumulation}

Fix an auxiliary variance \(\upsilon^2>0\) satisfying
\(\upsilon^2\ge\varsigma^2\). \(\upsilon^2\) is used only in
the information-gain argument. For a deterministic design
\(\para_{1:t}\), write
\(
\bm K_t(\para_{1:t})
=
\{\kappa(\para_i,\para_j)\}_{i,j=1}^t,
\)
\(\bm k_t(\para;\para_{1:t})
=
[\kappa(\para,\para_1),\ldots,\kappa(\para,\para_t)]^\top,
\)
and define the auxiliary posterior variance
\begin{equation*}
\sigma_{t,\upsilon}^2(\para;\para_{1:t})
=
\kappa(\para,\para)
-
\bm k_t(\para;\para_{1:t})^\top
\{\bm K_t(\para_{1:t})+\upsilon^2\bm I_t\}^{-1}
\bm k_t(\para;\para_{1:t}).
\end{equation*}
Define the corresponding regularised maximal information gain by
\begin{equation*}
\gamma_T(\upsilon)
=
\sup_{\para_{1:T}\in\Omega^T}
\frac12
\log\det
\{\bm I_T+\upsilon^{-2}\bm K_T(\para_{1:T})\}.
\end{equation*}

\begin{lemma}
\label{lem:ao_varsum}
Under Assumption~\ref{ass:main_gp_setup} \textup{(A1), (A2), and (A5)},
fix any
\(\upsilon^2>0\) satisfying \(\upsilon^2\ge\varsigma^2\), and let
\(
C_{\gamma,\upsilon}
=
2B_\kappa/\log(1+\upsilon^{-2}B_\kappa).
\)
Then, for every \(T\ge1\) and every possibly adaptive query sequence,
\begin{equation*}
\sum_{t=0}^{T-1}
\sigma_t^2(\para_{t+1})
\le
C_{\gamma,\upsilon}\gamma_T(\upsilon)
\quad\text{a.s.}
\end{equation*}
Consequently,
\(
\sum_{t=0}^{T-1}
\EE[
\sigma_t^2(\para_{t+1})
]
\le
C_{\gamma,\upsilon}\gamma_T(\upsilon).
\)
\end{lemma}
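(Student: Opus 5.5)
The plan is to reduce the statement to the standard deterministic-design information-gain argument, applied pathwise along the realised adaptive design, and to handle the noise level through the auxiliary variance $\upsilon^2\ge\varsigma^2$.

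\textbf{Step 1 (monotonicity in the noise level).} For a fixed realised design $\para_{1:t}$, the actual posterior variance satisfies $\sigma_t^2(\para)\le\sigma_{t,\upsilon}^2(\para;\para_{1:t})$. This holds because conditioning on observations with smaller noise $\varsigma^2\le\upsilon^2$ yields no larger posterior variance. Concretely, $(\bm K_t+\varsigma^2\bm I)^{-1}\succeq(\bm K_t+\upsilon^2\bm I)^{-1}$ in the Loewner order, so the subtracted quadratic form is larger. When $\varsigma^2=0$ and redundant points have been removed, I will argue by continuity of $s\mapsto \bm k^\top(\bm K+s\bm I)^{-1}\bm k$ as $s\downarrow0$, which is monotone. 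Repeated inputs are also harmless: the posterior variance is unchanged when a noiseless duplicate is removed.

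\textbf{Step 2 (pathwise reduction).} By (A2) and (A3), the posterior variances depend only on the query locations, not on the responses. Fix any realisation of the query sequence $\para_{1:T}$. It then suffices to bound $\sum_{t=0}^{T-1}\sigma_{t,\upsilon}^2(\para_{t+1};\para_{1:t})$ deterministically.

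\textbf{Step 3 (elementary inequality).} Since $\sigma_{t,\upsilon}^2\le B_\kappa$, set $x=\upsilon^{-2}\sigma_{t,\upsilon}^2\in[0,\upsilon^{-2}B_\kappa]$. Concavity of $\log(1+x)$ gives
\[
x\le \frac{\upsilon^{-2}B_\kappa}{\log(1+\upsilon^{-2}B_\kappa)}\log(1+x).
\]
Hence
\[
\sigma_{t,\upsilon}^2\le \frac{B_\kappa}{\log(1+\upsilon^{-2}B_\kappa)}\log\{1+\upsilon^{-2}\sigma_{t,\upsilon}^2(\para_{t+1})\}.
\]

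\textbf{Step 4 (chain rule for the determinant).} The Schur-complement identity gives
\[
\sum_{t=0}^{T-1}\tfrac12\log\{1+\upsilon^{-2}\sigma_{t,\upsilon}^2(\para_{t+1};\para_{1:t})\}=\tfrac12\log\det\{\bm I_T+\upsilon^{-2}\bm K_T(\para_{1:T})\}\le\gamma_T(\upsilon).
\]
This identity is the mutual information between $f$ and the observations in the auxiliary Gaussian model with noise $\upsilon^2$. Multiplying by $2B_\kappa/\log(1+\upsilon^{-2}B_\kappa)=C_{\gamma,\upsilon}$ yields the almost-sure bound. Taking expectations gives the second claim.

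\textbf{Main obstacle.} The only delicate point is Step 1 in the noiseless case. There the actual design may contain repeated or near-singular points while the auxiliary bound uses the full design including duplicates. I will handle it via the limit $s\downarrow0$ together with the fact that the auxiliary model retains all $T$ points. The supremum in $\gamma_T(\upsilon)$ ranges over $\Omega^T$ and allows repeats, so it dominates the realised design's log-determinant.
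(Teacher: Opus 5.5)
Your proposal is correct and follows essentially the same route as the paper: the noise-level comparison $\sigma_t^2\le\sigma_{t,\upsilon}^2(\cdot;\para_{1:t})$, the sequential determinant identity bounded by $\gamma_T(\upsilon)$, and the concavity bound $a\le B_\kappa\log(1+a/\upsilon^2)/\log(1+B_\kappa/\upsilon^2)$ on $[0,B_\kappa]$. Your $s\downarrow0$ argument handles the noiseless case more explicitly than the paper's remark about discarding redundant observations. Two small points to fix. First, when $\mathcal D_0\neq\emptyset$, the actual $\sigma_t^2$ also conditions on the initial design, so Step 1 additionally needs the fact that conditioning on more observations cannot increase posterior variance (the paper states this explicitly). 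Second, Step 2 does not need (A3): $\sigma_t^2$ is response-independent by the GP variance formula alone, which matters because the lemma is later applied with (A3) replaced by the persistent-sample assumption.
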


\begin{proof}
Fix a realised sequence of query locations
\(\para_{1:T}\), where the active index $t$ starts after any initial
design $\mathcal{D}_0$. Conditioning on the initial observations can only reduce posterior
variance. For the active locations alone, replacing \(\varsigma^2\) by
\(\upsilon^2\geq\varsigma^2\) can only increase posterior variance, by matrix
inverse monotonicity. Hence
\begin{equation*}
    \sigma_t^2(\para)
\leq
\sigma_{t,\upsilon}^2(\para;\para_{1:t}),
\qquad t\geq0.
\end{equation*}
When \(\varsigma^2=0\), this comparison is applied after redundant exact
observations have been removed, as in
Assumption~\ref{ass:main_gp_setup}.

For any deterministic sequence \(\para_{1:T}\), the sequential determinant
identity \citep[Lemma~5.3]{srinivas2012information} gives
\begin{equation*}
\frac12
\sum_{t=0}^{T-1}
\log\left\{
1+
\frac{
\sigma_{t,\upsilon}^2(\para_{t+1};\para_{1:t})
}{
\upsilon^2
}
\right\}
=
\frac12
\log\det
\{\bm I_T+\upsilon^{-2}\bm K_T(\para_{1:T})\}
\le
\gamma_T(\upsilon).
\end{equation*}
This identity is algebraic in the realised locations and therefore applies
pathwise to the adaptive query sequence.

Moreover,
\(
0
\le
\sigma_{t,\upsilon}^2(\para;\para_{1:t})
\le
B_\kappa.
\)
Since \(a/\log(1+a/\upsilon^2)\) is increasing,
\begin{equation*}
    a
\le
\frac{B_\kappa}
{\log(1+B_\kappa/\upsilon^2)}
\log(1+a/\upsilon^2),
\qquad 0\le a\le B_\kappa.
\end{equation*}
Applying this inequality to the auxiliary variances and using their
pathwise domination of the actual variances yields
\begin{equation*}
\sum_{t=0}^{T-1}
\sigma_t^2(\para_{t+1})
\le
\frac{2B_\kappa}
{\log(1+B_\kappa/\upsilon^2)}
\gamma_T(\upsilon)
=
C_{\gamma,\upsilon}\gamma_T(\upsilon).
\end{equation*}
Taking expectations proves the second assertion.
\end{proof}
\subsubsection{Regret Decomposition and Posterior-Selection Bias}

Define the AO posterior-selection bias
\(
b_{t+1}^{\mathrm{sel}}
=
\EE_t\{
 f_t(\para_{t+1}^{\mathrm{AO}})
 -
 \mu_t(\para_{t+1}^{\mathrm{AO}})
\}.
\)

\begin{lemma}
\label{lem:ao_direct_decomp}
Under \Cref{ass:main_gp_setup}, for every \(t\ge0\),
\(
\Delta_{t+1}^{\mathrm{AO}}
=
\EE_t\{
 f_t(\para_{f_t}^\star)
 -
 f_t(\para_{t+1}^{\mathrm{AO}})\}
+
b_{t+1}^{\mathrm{sel}}.
\)
\end{lemma}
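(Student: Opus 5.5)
The identity is an add-and-subtract decomposition followed by the two posterior identities of \Cref{lem:ao_posterior_identities}. I start from the definition
\(
\Delta_{t+1}^{\mathrm{AO}}
=
\EE_t\{f(\para^\star)\}-\EE_t\{f(\para_{t+1}^{\mathrm{AO}})\}.
\)
Linearity of conditional expectation is justified once both terms are shown to be integrable. By \Cref{lem:ao_Mbar} and \Cref{lem:conditional_gp_supremum}, both \(\sup_{\Omega}|f|\) and \(\sup_{\Omega}|f_t|\) have finite conditional first moments a.s., because \(\mu_t\) is bounded on compact \(\Omega\). Hence every quantity below is a.s. finite.

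\textbf{Step 1.} I replace the first term using the first identity of \Cref{lem:ao_posterior_identities}, \(\EE_t\{f(\para^\star)\}=\EE_t\{f_t(\para_{f_t}^\star)\}\). This holds because \(f\) and \(f_t\) are conditionally i.i.d. under \Cref{ass:main_gp_setup} (A3) and the same measurable selector \(s\) is used for both.

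\textbf{Step 2.} I replace the second term using the second identity, \(\EE_t\{f(\para_{t+1}^{\mathrm{AO}})\}=\EE_t\{\mu_t(\para_{t+1}^{\mathrm{AO}})\}\). This relies on \(\para_{t+1}^{\mathrm{AO}}\) being conditionally independent of \(f\) given \(\mathcal H_t\).

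\textbf{Step 3.} I add and subtract \(\EE_t\{f_t(\para_{t+1}^{\mathrm{AO}})\}\), which is finite because \(|f_t(\para_{t+1}^{\mathrm{AO}})|\le\sup_\Omega|f_t|\). This gives
\begin{equation*}
\Delta_{t+1}^{\mathrm{AO}}
=
\EE_t\{f_t(\para_{f_t}^\star)-f_t(\para_{t+1}^{\mathrm{AO}})\}
+
\EE_t\{f_t(\para_{t+1}^{\mathrm{AO}})-\mu_t(\para_{t+1}^{\mathrm{AO}})\}.
\end{equation*}
The second term is exactly \(b_{t+1}^{\mathrm{sel}}\) by definition, which completes the proof.

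The only point needing care is measurability and integrability. Specifically, \(f_t(\para_{t+1}^{\mathrm{AO}})\) must be a random variable, which follows from joint measurability of the evaluation map \((g,\para)\mapsto g(\para)\) on \(C(\Omega)\times\Omega\), and the expectations must be finite, which Step 3 covers. I do not expect a genuine obstacle. The substantive work, namely bounding \(b_{t+1}^{\mathrm{sel}}\) through \(\sigma_t\) and the variance sum in \Cref{lem:ao_varsum}, belongs to the subsequent lemma rather than this one.
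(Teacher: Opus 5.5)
Your proposal is correct and follows the paper's proof: both rewrite \(\Delta_{t+1}^{\mathrm{AO}}\) using the two identities of \Cref{lem:ao_posterior_identities}, then add and subtract \(\EE_t\{f_t(\para_{t+1}^{\mathrm{AO}})\}\) to isolate \(b_{t+1}^{\mathrm{sel}}\). Your integrability and measurability checks are a harmless addition that the paper leaves implicit.
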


\begin{proof}
By \Cref{lem:ao_posterior_identities},
\(
\Delta_{t+1}^{\mathrm{AO}}
=
\EE_t\{f_t(\para_{f_t}^\star)\}
-
\EE_t\{\mu_t(\para_{t+1}^{\mathrm{AO}})\}.
\)
Adding and subtracting \(\EE_t\{f_t(\para_{t+1}^{\mathrm{AO}})\}\) gives the
result.
\end{proof}

We now control \(b_{t+1}^{\mathrm{sel}}\) by a deterministic covering
argument applied to \(G_t=f_t-\mu_t\). Recall from
\Cref{appdx:proof_setup} that \(\mathscr d_t\) is the conditional canonical
metric of \(G_t\).

\begin{lemma}
\label{lem:ao_Gt_Lip_tail}
Let \(\widetilde L_t=1+\mathrm{Lip}(G_t)\), and define
\(
\widetilde M_\partial
=
d\sqrt{B_\partial}
+
C_D\sum_{j=1}^d J_0^{(j)}\),
\(
B_\partial
=
\max_{1\le j\le d}B_{\partial,j},
\)
where \(C_D\) is the universal constant in Dudley's entropy bound. Under
\Cref{ass:main_gp_setup}, for every \(u\ge0\),
\begin{equation}
\label{eq:ao_Gt_Lip_tail}
\Pr\left(\widetilde L_t>1+\widetilde M_\partial+u\right)
\le
2d\exp\!\left(-\frac{u^2}{2d^2B_\partial}\right).
\end{equation}
\end{lemma}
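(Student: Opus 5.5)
The plan is to bound $\mathrm{Lip}(G_t)$ by gradient suprema and then apply \Cref{lem:conditional_gp_supremum} to each derivative process separately, conditionally on $\mathcal H_t$. By \Cref{lem:ao_post_C1}, $G_t\in C^1(U)$ almost surely. Since $\Omega$ is compact and convex, the mean value theorem gives
\begin{equation*}
\mathrm{Lip}(G_t)\le\sup_{\para\in\Omega}\|\nabla G_t(\para)\|\le\sum_{j=1}^d S_j,
\qquad S_j:=\sup_{\para\in\Omega}|D_jG_t(\para)|.
\end{equation*}
Hence $\widetilde L_t\le 1+\sum_j S_j$, and it suffices to control the sum.

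First, fix $j$ and condition on $\mathcal H_t$. By \Cref{lem:ao_post_C1}, $D_jG_t$ is a centred continuous GP, and \Cref{eq:ao_post_grad_metric_dom,eq:ao_post_grad_var_dom} give canonical metric $\le\mathscr d_{0,j}$ and variance $\le B_{\partial,j}\le B_\partial$. \Cref{lem:conditional_gp_supremum} therefore yields
\begin{equation*}
m_j:=\EE_t S_j\le\sqrt{B_\partial}+C_DJ_0^{(j)}\quad\text{a.s.},
\end{equation*}
together with $\Pr_t(S_j>m_j+s)\le 2\exp\{-s^2/(2B_\partial)\}$. Summing the means gives $\sum_j m_j\le \widetilde M_\partial$ almost surely; this is exactly the constant defined in the statement.

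Next comes the union bound. If $\sum_j S_j>\widetilde M_\partial+u$, then $\sum_j (S_j-m_j)>u$, so some $j$ has $S_j>m_j+u/d$. Therefore
\begin{equation*}
\Pr_t\Big(\sum_j S_j>\widetilde M_\partial+u\Big)\le\sum_{j=1}^d 2\exp\Big\{-\frac{u^2}{2d^2B_\partial}\Big\}=2d\exp\Big\{-\frac{u^2}{2d^2B_\partial}\Big\}.
\end{equation*}
This bound holds almost surely conditionally on $\mathcal H_t$. Taking expectations removes the conditioning, and since $\widetilde L_t\le 1+\sum_j S_j$ the result is \Cref{eq:ao_Gt_Lip_tail}.

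The main obstacle is justifying the conditional application of \Cref{lem:conditional_gp_supremum} along the adaptive design: measurability of $m_j$, separability, and the requirement that the posterior derivative processes be dominated by the prior metrics uniformly in $t$. \Cref{lem:ao_post_C1} supplies all of these under (A3) and (A5). The remaining steps are routine: the mean-value bound, which uses convexity, and the splitting of $u$ into $d$ pieces of size $u/d$, which produces the $d^2$ in the exponent.
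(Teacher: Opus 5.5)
Your proposal is correct and follows the paper's proof: bound $\mathrm{Lip}(G_t)$ by $\sum_{j=1}^d\sup_{\Omega}|D_jG_t|$ using $C^1$ regularity and convexity. Then apply \Cref{lem:conditional_gp_supremum} to each $D_jG_t$ conditionally on $\mathcal H_t$ via \Cref{lem:ao_post_C1}, union bound with $u/d$, and average over $\mathcal H_t$. The only cosmetic difference is that you centre the Borell--TIS tail at the conditional mean $m_j$ and then use $\sum_j m_j\le\widetilde M_\partial$, whereas the paper centres directly at the deterministic bound $\sqrt{B_{\partial,j}}+C_DJ_0^{(j)}$.
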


\begin{proof}
Condition on \(\mathcal H_t\) and write
\(
S_{t,j}
:=
\sup_{\para\in\Omega}|D_jG_t(\para)|.
\)
By \Cref{lem:ao_post_C1}, the hypotheses of
\Cref{lem:conditional_gp_supremum} hold for \(D_jG_t\), with
\(\mathscr d_0=\mathscr d_{0,j}\) and \(v^2=B_{\partial,j}\). Hence, for
every \(x\geq0\),
\begin{equation*}
    \Pr\left\{
S_{t,j}>
\sqrt{B_{\partial,j}}+C_DJ_0^{(j)}+x
\middle|\mathcal H_t
\right\}
\leq
2\exp\left(-\frac{x^2}{2B_\partial}\right)
\quad\text{a.s.}
\end{equation*}
Since \(G_t\in C^1(U)\) and \(\Omega\) is compact and convex,
\(
\mathrm{Lip}(G_t)
\leq
\sum_{j=1}^d S_{t,j},
\)
while
\begin{equation*}
    \sum_{j=1}^d
\left\{\sqrt{B_{\partial,j}}+C_DJ_0^{(j)}\right\}
\leq
\widetilde M_\partial.
\end{equation*}
A union bound with \(x=u/d\), followed by averaging over
\(\mathcal H_t\), proves \Cref{eq:ao_Gt_Lip_tail}.
\end{proof}

\begin{corollary}
\label{cor:ao_Gt_horizon}
For \(T\ge1\) and \(\eta_T\in(0,1)\), define
\(
\widetilde\Lambda_T(\eta_T)
=
1+\widetilde M_\partial
+
d\sqrt{2B_\partial\log(2dT/\eta_T)}
\)
and
\(
\widetilde{\mathcal G}_T(\eta_T)
=
\bigcap_{t=0}^{T-1}
\{\widetilde L_t\le \widetilde\Lambda_T(\eta_T)\}.
\)
Then \(\Pr\{\widetilde{\mathcal G}_T(\eta_T)\}\ge1-\eta_T\).
\end{corollary}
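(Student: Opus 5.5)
The claim follows from \Cref{lem:ao_Gt_Lip_tail} by a union bound over the horizon. For each \(t\in\{0,\dots,T-1\}\), I apply \Cref{eq:ao_Gt_Lip_tail} with the deviation level
\begin{equation*}
u_T:=d\sqrt{2B_\partial\log(2dT/\eta_T)},
\end{equation*}
which is nonnegative because \(2dT/\eta_T>1\). With this choice, \(1+\widetilde M_\partial+u_T=\widetilde\Lambda_T(\eta_T)\). Substituting \(u_T\) into the tail bound gives \(u_T^2/(2d^2B_\partial)=\log(2dT/\eta_T)\), so
\begin{equation*}
\Pr\left\{\widetilde L_t>\widetilde\Lambda_T(\eta_T)\right\}
\le 2d\exp\{-\log(2dT/\eta_T)\}=\eta_T/T.
\end{equation*}

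The complement of \(\widetilde{\mathcal G}_T(\eta_T)\) is the union \(\bigcup_{t=0}^{T-1}\{\widetilde L_t>\widetilde\Lambda_T(\eta_T)\}\). A union bound over these \(T\) events gives
\begin{equation*}
\Pr\{\widetilde{\mathcal G}_T(\eta_T)^c\}\le T\cdot\eta_T/T=\eta_T,
\end{equation*}
and taking complements proves the corollary.

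Two points need checking. First, \Cref{lem:ao_Gt_Lip_tail} holds for every \(t\), with constants \(\widetilde M_\partial\) and \(B_\partial\) that do not depend on \(t\). This is what lets a single threshold work simultaneously for all \(t\le T-1\). Second, the union bound needs no independence across \(t\), so the adaptive dependence between the \(\widetilde L_t\) causes no difficulty. If \(B_\partial=0\), then \(u_T=0\) and the corresponding tail event has probability zero, so the bound holds trivially.
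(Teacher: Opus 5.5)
Your proof is correct and is essentially the paper's own argument: apply \Cref{lem:ao_Gt_Lip_tail} with $u=d\sqrt{2B_\partial\log(2dT/\eta_T)}$ to get a per-$t$ bound of $\eta_T/T$, then take a union bound over $t\in[T-1]_0$. Your remarks on the $t$-uniform constants and the degenerate case $B_\partial=0$ are sound details that the paper leaves implicit.
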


\begin{proof}
Apply \Cref{lem:ao_Gt_Lip_tail} with
\(u=d\sqrt{2B_\partial\log(2dT/\eta_T)}\), then union bound over
\(t\in[T-1]_0\).
\end{proof}

\begin{lemma}
\label{lem:ao_d0_euclid}
Under \Cref{ass:main_gp_setup}, there is a constant
\(C_0<\infty\) such that, for all \(t\ge0\) and \(\para,\para'\in\Omega\),
\(
\mathscr d_t(\para,\para')
\le
\mathscr  d_0(\para,\para')
\le
 C_0\|\para-\para'\|,
\)
and
\(
|\sigma_t(\para)-\sigma_t(\para')|
\le
C_0\|\para-\para'\|.\)
\end{lemma}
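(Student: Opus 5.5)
The first inequality is immediate from Gaussian conditioning. The second will come from a Lipschitz bound on the prior canonical metric $\mathscr d_0$, obtained through the $C^1$ structure in \Cref{ass:main_gp_setup}\textup{(A5)}. The third will follow from the first by a triangle inequality in $L^2$.

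For the first inequality, I would use the same covariance-order argument as in \Cref{lem:ao_post_C1}. Conditionally on $\mathcal H_t$, the posterior covariance $\kappa_t$ is the prior covariance minus a positive-semidefinite conditioning term. This remains valid along the realised adaptive design by the non-anticipation condition \textup{(A3)}, with redundant observations removed when $\varsigma^2=0$. Applying this order to the linear functional $f(\para)-f(\para')$ gives $\mathscr d_t^2(\para,\para')=\operatorname{Var}_t\{f(\para)-f(\para')\}\le\operatorname{Var}\{f(\para)-f(\para')\}=\mathscr d_0^2(\para,\para')$ almost surely.

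For the second inequality, I would use convexity of $\Omega$ and the $C^1(U)$ sample paths to write
\begin{equation*}
f(\para)-f(\para')=\int_0^1\nabla f\{\para'+s(\para-\para')\}^\top(\para-\para')\,\dee s .
\end{equation*}
Minkowski's inequality in $L^2(\Pr)$ then gives $\mathscr d_0(\para,\para')\le\|\para-\para'\|\sum_{j=1}^d\sup_{\bm u\in\Omega}\{\operatorname{Var}\,\partial_jf(\bm u)\}^{1/2}\le d\sqrt{B_\partial}\,\|\para-\para'\|$, so one may take $C_0=d\sqrt{B_\partial}$. The main obstacle is justifying the interchange of the $L^2$ norm with the path integral. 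Since $\partial_jf$ is Gaussian with variance at most $B_{\partial,j}$, this is fine in principle, but I would instead bound $\EE\{(f(\para)-f(\para'))^2\}$ by $\|\para-\para'\|^2\,\EE\{(\sum_j\sup_\Omega|\partial_jf|)^2\}$. That expectation is finite by \Cref{lem:conditional_gp_supremum} applied to each $\partial_jf$ with $\mathscr d_{0,j}$, $J_0^{(j)}$ and $B_{\partial,j}$, exactly as in \Cref{lem:ao_Lt}. This cruder route yields a constant $C_0^2=\EE\{(\sum_j\sup_\Omega|\partial_jf|)^2\}<\infty$ that is independent of $t$ and avoids any Fubini subtleties, so I would use it.

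For the third inequality, observe that $\sigma_t(\para)$ is the conditional $L^2$ norm of the centred variable $G_t(\para)=f(\para)-\mu_t(\para)$ given $\mathcal H_t$. The reverse triangle inequality in $L^2(\Pr\mid\mathcal H_t)$ gives
\begin{equation*}
|\sigma_t(\para)-\sigma_t(\para')|\le\{\operatorname{Var}_t[G_t(\para)-G_t(\para')]\}^{1/2}=\mathscr d_t(\para,\para'),
\end{equation*}
and combining this with the first two inequalities gives $|\sigma_t(\para)-\sigma_t(\para')|\le C_0\|\para-\para'\|$, completing the argument.
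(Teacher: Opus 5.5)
Your proposal is correct and follows the paper's proof step for step. You use covariance contraction for \(\mathscr d_t\le\mathscr d_0\), and a pathwise mean-value bound together with the \(q=2\) moment from \Cref{lem:conditional_gp_supremum} for the Lipschitz bound on \(\mathscr d_0\); the paper bounds by \(\sup_\Omega\|\nabla f\|\) where you use \(\sum_j\sup_\Omega|\partial_jf|\), which is immaterial. Your reverse triangle inequality in conditional \(L^2\) is the same as the paper's Cauchy--Schwarz step \(\kappa_t(\para,\para')\le\sigma_t(\para)\sigma_t(\para')\).
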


\begin{proof}
The same posterior-variance contraction argument as in
\Cref{lem:ao_post_C1}, applied to the scalar linear functional
\(f(\para)-f(\para')\), gives \(\mathscr d_t\le \mathscr d_0\). Let
\(
M_j=\sup_{\para\in\Omega}|D_jf(\para)|\), \(L_\nabla=\sup_{\para\in\Omega}\|\nabla f(\para)\|.
\)
The \(q=2\) conclusion of \Cref{lem:conditional_gp_supremum}, applied to
each \(D_jf\), gives \(\EE(M_j^2)<\infty\). Therefore
\begin{equation*}
    \EE[L_\nabla^2]
\leq
d\sum_{j=1}^d\EE[M_j^2]
<\infty.
\end{equation*}
Since \(f\in C^1(U)\) a.s. and \(\Omega\) is compact convex,
\(
|f(\para)-f(\para')|
\le
L_\nabla\|\para-\para'\|\) a.s. Squaring and taking expectations yields
\(
    \mathscr d_0(\para,\para')\le C_0\|\para-\para'\|, C_0=\{\EE[L_\nabla^2]\}^{1/2}.\) Finally, for \(\para,\para'\in\Omega\), let
\(
a=\sigma_t^2(\para)\),
\(b=\sigma_t^2(\para')\),
\(c=\kappa_t(\para,\para').
\)
Since \(\kappa_t\) is a covariance kernel, \(c\le\sqrt{ab}\) by
Cauchy-Schwarz. Hence
\(
|\sigma_t(\para)-\sigma_t(\para')|^2
=
a+b-2\sqrt{ab}
\le
\mathscr d_t(\para,\para')^2
\), which proves the claim.
\end{proof}

\begin{lemma}
\label{lem:ao_det_grids}
There exists \(C_{\mathrm{cov}}<\infty\) such that, for every \(h\in(0,1]\),
there is a finite deterministic \(\mathcal C(h)\subset\Omega\) satisfying
\(
\sup_{\para\in\Omega}\min_{\bm u \in\mathcal C(h)}\|\para-\bm u\|\le h\), with
\(|\mathcal C(h)|\le C_{\mathrm{cov}}h^{-d}.
\)
\end{lemma}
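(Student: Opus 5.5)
The plan is to build \(\mathcal C(h)\) as a dyadic-type cubic grid intersected with \(\Omega\), using only that \(\Omega\) is compact (assumption (A1)). Since \(\Omega\) is compact, it lies in a closed cube \(Q=[a,a+D]^d\) with \(D=\operatorname{diam}(\Omega)<\infty\) (any bounding cube will do). For \(h\in(0,1]\) I will set the mesh \(\delta=h/\sqrt d\) and partition \(Q\) into \(m^d\) closed subcubes of side \(D/m\), where \(m=\lceil D/\delta\rceil=\lceil D\sqrt d/h\rceil\). Each subcube has diameter at most \(\sqrt d\,D/m\le h\).

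Next, the points must lie in \(\Omega\), which is not automatic for cube centres. For every subcube \(Q_i\) with \(Q_i\cap\Omega\neq\emptyset\), I choose one point \(\bm u_i\in Q_i\cap\Omega\); this is a deterministic choice, for example via a fixed measurable selector or simply any fixed point, since \(\Omega\) is a fixed set. Let \(\mathcal C(h)\) be the collection of these points. Any \(\para\in\Omega\) lies in some \(Q_i\), which then meets \(\Omega\). Hence \(\|\para-\bm u_i\|\le\operatorname{diam}(Q_i)\le h\), which gives the covering property.

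Finally comes the cardinality count: \(|\mathcal C(h)|\le m^d\le (D\sqrt d/h+1)^d\). Because \(h\le1\), we have \(1\le 1/h\), so \(m^d\le (D\sqrt d+1)^d h^{-d}\). We may therefore take \(C_{\mathrm{cov}}=(D\sqrt d+1)^d\), which is independent of \(h\).

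No step is a genuine obstacle. The only points needing care are ensuring the grid points lie inside \(\Omega\), which selecting from \(Q_i\cap\Omega\) handles, and absorbing the ceiling via \(h\le1\). If one prefers, an alternative route is the volume argument based on the lower ball-volume bound in (A1): take a maximal \(h\)-separated set in \(\Omega\). Its \(h/2\)-balls are disjoint, and by (A1) each has volume at least \(c_\Omega (h/2)^d\) for \(h\le 2r_\Omega\), which bounds the cardinality by \(|\Omega_{+1}|2^d/(c_\Omega h^d)\). The grid argument is simpler and avoids the restriction \(h\le 2r_\Omega\).
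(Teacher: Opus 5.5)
Your proposal is correct and follows the same route as the paper: enclose $\Omega$ in a bounding cube, partition it into lattice cells of side at most $h/\sqrt d$, and pick one point of $\Omega$ from each cell meeting $\Omega$. Your explicit cardinality count, with the ceiling absorbed via $h\le 1$ to get $C_{\mathrm{cov}}=(D\sqrt d+1)^d$, fills in the step the paper leaves implicit.
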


\begin{proof}
Cover a bounded cube containing \(\Omega\) by lattice cubes of side length
\(h/\sqrt d\), and choose one representative point of \(\Omega\) from each
cell intersecting \(\Omega\).
\end{proof}

Fix \(T\ge1\), choose \(\eta_T,\bar\delta_T\in(0,1)\), and set
\(\delta_T=\eta_T+\bar\delta_T\le1\). For \( t\in[T-1]_0\), define
\(
h_{t,T}
=(t+1)^{-2}\{1+\widetilde\Lambda_T(\eta_T)+C_0\}^{-1}.
\)
Let \(\mathcal C_{t,T}=\mathcal C(h_{t,T})\), and choose a deterministic
nearest-grid map \([\cdot]_{t,T}:\Omega\to\mathcal C_{t,T}\). Then
\(
|\mathcal C_{t,T}|
\le
C_{\mathrm{cov}}
\{1+\widetilde\Lambda_T(\eta_T)+C_0\}^d
(t+1)^{2d}.
\)

\begin{lemma}
\label{lem:ao_grid_conf}
Let \(\varpi_t\ge1\) with
\(\sum_{t=0}^\infty\varpi_t^{-1}\le1\), and define
\(
\beta_t
=
2\log\!\left(|\mathcal C_{t,T}|\varpi_t/\bar\delta_T\right).
\)
Let
\(
\mathcal E_T(\bar\delta_T)
=
\bigcap_{t=0}^{T-1}
\bigcap_{\para \in\mathcal C_{t,T}}
\{G_t(\para)\le \sqrt{\beta_t}\,\sigma_t(\para)\}.
\)
Then \(\Pr\{\mathcal E_T(\bar\delta_T)\}\ge1-\bar\delta_T.\)
\end{lemma}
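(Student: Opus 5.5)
The plan is the standard GP-UCB-style argument: a conditional Gaussian tail bound at each grid point, then a union bound over grid points and over active times. The one subtlety is that the grids \(\mathcal C_{t,T}\) are deterministic, so they do not depend on the data.

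Fix \(t\in[T-1]_0\) and a point \(\para\in\mathcal C_{t,T}\). Conditionally on \(\mathcal H_t\), we have \(f\sim\PP_t\) by \Cref{ass:main_gp_setup}\textup{(A3)}. Hence \(f(\para)-\mu_t(\para)\mid\mathcal H_t\sim\mathcal N\{0,\sigma_t^2(\para)\}\). The same holds for \(G_t(\para)=f_t(\para)-\mu_t(\para)\), because \(f_t\) and \(f\) are conditionally i.i.d.

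One point to settle first: is the event about \(G_t\) as written, or about the latent \(f-\mu_t\)? The argument is identical in both cases, since only the conditional marginal law matters. So I will treat a generic variable \(Z\mid\mathcal H_t\sim\mathcal N(0,\sigma_t^2(\para))\).

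If \(\sigma_t(\para)=0\), then \(Z=0\) a.s. and the event holds trivially. Otherwise the one-sided Gaussian tail bound gives
\[
\Pr\{Z>\sqrt{\beta_t}\,\sigma_t(\para)\mid\mathcal H_t\}\le \tfrac12 e^{-\beta_t/2}\le e^{-\beta_t/2}
=\frac{\bar\delta_T}{|\mathcal C_{t,T}|\varpi_t}\quad\text{a.s.}
\]
The equality follows from the definition \(\beta_t=2\log(|\mathcal C_{t,T}|\varpi_t/\bar\delta_T)\). Note that \(\beta_t>0\), because \(|\mathcal C_{t,T}|\varpi_t\ge1>\bar\delta_T\).

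Next take unconditional expectations. Since \(\mathcal C_{t,T}\) and \(\beta_t\) are deterministic (they depend only on \(t\), \(T\), \(\eta_T\) and fixed constants), no measurability issue arises from the adaptive design. The unconditional failure probability at \((t,\para)\) is therefore at most \(\bar\delta_T/(|\mathcal C_{t,T}|\varpi_t)\).

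A union bound over \(\para\in\mathcal C_{t,T}\) gives failure probability at most \(\bar\delta_T/\varpi_t\) at time \(t\). Summing over \(t\in[T-1]_0\) and using \(\sum_t\varpi_t^{-1}\le1\) bounds the total failure probability by \(\bar\delta_T\). This proves \(\Pr\{\mathcal E_T(\bar\delta_T)\}\ge1-\bar\delta_T\).

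The only step needing care is the conditional Gaussianity of \(G_t(\para)\), or of \(f(\para)-\mu_t(\para)\), given \(\mathcal H_t\) under an adaptive design. This is supplied by the non-anticipation condition \textup{(A3)}, which yields exact GP conditioning along the realised design. The noiseless case \(\varsigma^2=0\), with redundant observations removed, is handled the same way, with the degenerate variances covered by the trivial case above.
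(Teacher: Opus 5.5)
Your proposal is correct and matches the paper's proof: a conditional Gaussian tail bound for \(G_t(\para)\) given \(\mathcal H_t\), then a union bound over the deterministic grids \(\mathcal C_{t,T}\) and over \(t\in[T-1]_0\), closed off with \(\sum_t\varpi_t^{-1}\le1\). Your extra points are harmless refinements that the paper leaves implicit: the degenerate case \(\sigma_t(\para)=0\), positivity of \(\beta_t\), and the sharper constant \(\tfrac12 e^{-\beta_t/2}\).
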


\begin{proof}
For fixed \(t\) and \(\para \in\mathcal C_{t,T}\), conditionally on \(\mathcal H_t\),
\(G_t(\para)\) is centred Gaussian with variance \(\sigma_t^2(\para)\). Therefore
\begin{equation*}
    \Pr\{G_t(\para)>\sqrt{\beta_t}\sigma_t(\para)\mid\mathcal H_t\}
\le
\exp(-\beta_t/2).
\end{equation*}
Taking expectations and union bounding over \(\para\in\mathcal C_{t,T}\) and
\(t\in[T-1]_0\),
\begin{equation*}
    \Pr\{\mathcal E_T(\bar\delta_T)^c\}
\le
\sum_{t=0}^{T-1}|\mathcal C_{t,T}|\exp(-\beta_t/2)
=
\bar\delta_T\sum_{t=0}^{T-1}\varpi_t^{-1}
\le
\bar\delta_T.
\end{equation*}
\end{proof}
The grids \(\mathcal C_{t,T}\) are proof devices only. They induce the
confidence radii \(\beta_t\), but they do not discretise the AO policy or
restrict \(\para_{t+1}^{\mathrm{AO}}\).

\begin{lemma}
\label{lem:ao_continuum_upgrade}
On \(\mathcal F_T=\widetilde{\mathcal G}_T(\eta_T)\cap
\mathcal E_T(\bar\delta_T)\), for every \(t\in[T-1]_0\) and
\(\para\in\Omega\),
\begin{equation*}
G_t(\para)
\le
\sqrt{\beta_t}\,\sigma_t(\para)
+
\frac{1+\sqrt{\beta_t}}{(t+1)^2}.
\end{equation*}
\end{lemma}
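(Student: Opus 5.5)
Fix $t\in[T-1]_0$ and $\para\in\Omega$, and work on $\mathcal F_T$. Let $\bm u=[\para]_{t,T}\in\mathcal C_{t,T}$, so that $\|\para-\bm u\|\le h_{t,T}$. The plan is to transfer the grid confidence bound at $\bm u$ to $\para$ with a three-term telescoping decomposition:
\begin{equation*}
G_t(\para)=\{G_t(\para)-G_t(\bm u)\}+G_t(\bm u),
\qquad
\sqrt{\beta_t}\,\sigma_t(\bm u)\le\sqrt{\beta_t}\,\sigma_t(\para)+\sqrt{\beta_t}\,|\sigma_t(\bm u)-\sigma_t(\para)|.
\end{equation*}
Each term is controlled by one of the ingredients already established.

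\emph{Increment term.} On $\widetilde{\mathcal G}_T(\eta_T)$ we have $\widetilde L_t=1+\mathrm{Lip}(G_t)\le\widetilde\Lambda_T(\eta_T)$. Hence
\begin{equation*}
|G_t(\para)-G_t(\bm u)|\le\mathrm{Lip}(G_t)\,h_{t,T}\le\widetilde\Lambda_T(\eta_T)\,h_{t,T}.
\end{equation*}
\emph{Grid term.} On $\mathcal E_T(\bar\delta_T)$, since $\bm u\in\mathcal C_{t,T}$, we have $G_t(\bm u)\le\sqrt{\beta_t}\,\sigma_t(\bm u)$.

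\emph{Variance term.} \Cref{lem:ao_d0_euclid} gives $|\sigma_t(\bm u)-\sigma_t(\para)|\le C_0 h_{t,T}$, and this holds deterministically.

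Combining the three bounds,
\begin{equation*}
G_t(\para)\le\sqrt{\beta_t}\,\sigma_t(\para)+\bigl\{\widetilde\Lambda_T(\eta_T)+\sqrt{\beta_t}\,C_0\bigr\}h_{t,T}.
\end{equation*}
By definition $h_{t,T}=(t+1)^{-2}\{1+\widetilde\Lambda_T(\eta_T)+C_0\}^{-1}$. Since $\widetilde\Lambda_T\le 1+\widetilde\Lambda_T+C_0$ and $C_0\le 1+\widetilde\Lambda_T+C_0$, the bracket times $h_{t,T}$ is at most $(1+\sqrt{\beta_t})/(t+1)^2$. This gives the claim, and nothing is discarded beyond these trivial inequalities.

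The only point needing care is applying the Lipschitz bound to $\mathrm{Lip}(G_t)$ on the whole convex set $\Omega$. This is the same setting used in \Cref{lem:ao_Gt_Lip_tail}, where $G_t\in C^1(U)$ by \Cref{lem:ao_post_C1}. One should also check that $\beta_t\ge0$, so that $\sqrt{\beta_t}$ is defined; this holds because $|\mathcal C_{t,T}|\varpi_t/\bar\delta_T\ge1$.

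None of these steps is a serious obstacle. The main subtlety is making sure that the events $\widetilde{\mathcal G}_T$ and $\mathcal E_T$ are used for the same $t$ and the same grid $\mathcal C_{t,T}$, which is immediate from their definitions as intersections over $t\in[T-1]_0$.
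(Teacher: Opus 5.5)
Your proposal is correct and follows essentially the same route as the paper: transfer from the nearest grid point $[\para]_{t,T}$ using the Lipschitz control on $\widetilde{\mathcal G}_T(\eta_T)$, apply the grid confidence bound on $\mathcal E_T(\bar\delta_T)$, use the $C_0$-Lipschitz continuity of $\sigma_t$ from \Cref{lem:ao_d0_euclid}, and absorb the error via the definition of $h_{t,T}$. One small remark: $|G_t(\para)-G_t(\bm u)|\le\mathrm{Lip}(G_t)\|\para-\bm u\|$ follows directly from the definition of $\mathrm{Lip}$, so the $C^1$/convexity point is not needed at this step.
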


\begin{proof}
On \(\mathcal F_T\),
\begin{equation*}
    G_t(\para)
\le
G_t([\para]_{t,T})+\widetilde\Lambda_T(\eta_T)h_{t,T}
\le
\sqrt{\beta_t}\sigma_t([\para]_{t,T})
+
\widetilde\Lambda_T(\eta_T)h_{t,T}.
\end{equation*}
By \Cref{lem:ao_d0_euclid},
\(\sigma_t([\para]_{t,T})\le\sigma_t(\para)+C_0h_{t,T}\). Hence
\(
G_t(\para)
\le
\sqrt{\beta_t}\sigma_t(\para)
+
\{\widetilde\Lambda_T(\eta_T)+C_0\sqrt{\beta_t}\}h_{t,T}.
\)
The definition of \(h_{t,T}\) gives the stated bound.
\end{proof}

\begin{lemma}
\label{lem:ao_bias_disc}
For every \(t\in[T-1]_0\),
\begin{equation*}
\EE[b_{t+1}^{\mathrm{sel}}]
\le
\sqrt{\beta_t}\,\EE\{\sigma_t(\para_{t+1}^{\mathrm{AO}})\}
+
\frac{1+\sqrt{\beta_t}}{(t+1)^2}
+
\bar M\sqrt{\delta_T}.
\end{equation*}
\end{lemma}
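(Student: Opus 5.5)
We bound \(\EE[b_{t+1}^{\mathrm{sel}}]\) by splitting on the good event \(\mathcal F_T=\widetilde{\mathcal G}_T(\eta_T)\cap\mathcal E_T(\bar\delta_T)\). Its complement satisfies \(\Pr(\mathcal F_T^c)\le\eta_T+\bar\delta_T=\delta_T\) by \Cref{cor:ao_Gt_horizon,lem:ao_grid_conf} and a union bound. The key observation is that the selection bias is an expectation of \(G_t\) evaluated at the AO query:
\begin{equation*}
\EE[b_{t+1}^{\mathrm{sel}}]
=
\EE\{G_t(\para_{t+1}^{\mathrm{AO}})\}
=
\EE\{G_t(\para_{t+1}^{\mathrm{AO}})\mathbf 1_{\mathcal F_T}\}
+
\EE\{G_t(\para_{t+1}^{\mathrm{AO}})\mathbf 1_{\mathcal F_T^c}\}.
\end{equation*}
The first equality uses the tower property and \(G_t=f_t-\mu_t\).

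\textbf{Good event.} On \(\mathcal F_T\), \Cref{lem:ao_continuum_upgrade} applies at the random point \(\para=\para_{t+1}^{\mathrm{AO}}\). This is legitimate because that lemma is a pathwise statement holding for every \(\para\in\Omega\) simultaneously, which is exactly why the covering grids and the Lipschitz event were built. It gives
\begin{equation*}
G_t(\para_{t+1}^{\mathrm{AO}})\mathbf 1_{\mathcal F_T}
\le
\Bigl\{\sqrt{\beta_t}\,\sigma_t(\para_{t+1}^{\mathrm{AO}})+\frac{1+\sqrt{\beta_t}}{(t+1)^2}\Bigr\}\mathbf 1_{\mathcal F_T}.
\end{equation*}
Both bracketed terms are nonnegative, since \(\beta_t\) is deterministic and \(\sigma_t\ge0\). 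We may therefore drop the indicator, bounding it by one. Taking expectations yields the first two terms of the claim.

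\textbf{Bad event.} We use \(G_t(\para_{t+1}^{\mathrm{AO}})\le\mathcal S_t^{\mathrm{draw}}=\sup_{\para}|f_t(\para)-\mu_t(\para)|\). Cauchy--Schwarz then gives
\begin{equation*}
\EE\{G_t(\para_{t+1}^{\mathrm{AO}})\mathbf 1_{\mathcal F_T^c}\}
\le
\bigl[\EE\{(\mathcal S_t^{\mathrm{draw}})^2\}\bigr]^{1/2}\Pr(\mathcal F_T^c)^{1/2}
\le
\bar M\sqrt{\delta_T},
\end{equation*}
where the last step uses \Cref{lem:ao_Mbar}.

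\textbf{Main subtlety.} Both sets of variables must live on a single probability space. The events \(\widetilde{\mathcal G}_T\) and \(\mathcal E_T\) are defined in terms of the posterior draws \(G_t\) across all \(t\), so their probabilities are unconditional and the draws \(f_t\) must be coupled with the realised trajectory. The same draw \(f_t\) that generates \(\para_{t+1}^{\mathrm{AO}}\) must be the one entering \(\mathcal F_T\).

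Under (A3), the draw \(f_t\) used by AO is the one generating the query, and conditionally on \(\mathcal H_t\) it has law \(\PP_t\). This is exactly the conditional law invoked in \Cref{lem:ao_Gt_Lip_tail,lem:ao_grid_conf}, so those probability bounds remain valid for this coupling. No further independence is needed, because the continuum bound on \(\mathcal F_T\) holds uniformly in \(\para\). This uniformity is the crux, and once it is in place the three terms assemble into the stated inequality.
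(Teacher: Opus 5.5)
Your proposal is correct and follows the paper's proof essentially step for step: the same split on \(\mathcal F_T\) (with \(\Pr(\mathcal F_T^c)\le\delta_T\)), the pathwise use of \Cref{lem:ao_continuum_upgrade} at the random AO query followed by dropping the indicator, and Cauchy--Schwarz with \Cref{lem:ao_Mbar} on the complement. Your added remark on coupling the AO draw \(f_t\) with the events is a reasonable clarification that the paper leaves implicit in its non-anticipation preamble to the regret proof.
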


\begin{proof}
By \Cref{cor:ao_Gt_horizon,lem:ao_grid_conf},
\(\Pr(\mathcal F_T)\ge1-\delta_T\). Since
\(b_{t+1}^{\mathrm{sel}}=\EE_t\{G_t(\para_{t+1}^{\mathrm{AO}})\}\), the tower property gives
\(\EE[b_{t+1}^{\mathrm{sel}}]=\EE\{G_t(\para_{t+1}^{\mathrm{AO}})\}\), and
\begin{equation*}
    \EE[b_{t+1}^{\mathrm{sel}}]
=
\EE\{G_t(\para_{t+1}^{\mathrm{AO}})\mathbf 1_{\mathcal F_T}\}
+
\EE\{G_t(\para_{t+1}^{\mathrm{AO}})\mathbf 1_{\mathcal F_T^c}\}.
\end{equation*}
On \(\mathcal F_T\), \Cref{lem:ao_continuum_upgrade} holds simultaneously for
all \(\para\in\Omega\); in particular it holds at the random point
\(\para_{t+1}^{\mathrm{AO}}\). Therefore
\*[
    \EE\{G_t(\para_{t+1}^{\mathrm{AO}})\mathbf 1_{\mathcal F_T}\}
&\le
\sqrt{\beta_t}\,
\EE\{\sigma_t(\para_{t+1}^{\mathrm{AO}})\mathbf 1_{\mathcal F_T}\}
+
\frac{1+\sqrt{\beta_t}}{(t+1)^2} \\
&\le
\sqrt{\beta_t}\,\EE\{\sigma_t(\para_{t+1}^{\mathrm{AO}})\}
+
\frac{1+\sqrt{\beta_t}}{(t+1)^2}.
\]
For the second term, by Cauchy-Schwarz and \Cref{lem:ao_Mbar}
\*[
\EE\{G_t(\para_{t+1}^{\mathrm{AO}})\mathbf 1_{\mathcal F_T^c}\}
&\le
\EE\{\mathcal S_t^{\mathrm{draw}}\mathbf 1_{\mathcal F_T^c}\} \\
&\le
[\EE\{(\mathcal S_t^{\mathrm{draw}})^2\}]^{1/2}\Pr(\mathcal F_T^c)^{1/2}
\le
\bar M\sqrt{\delta_T}.    
\]
Combining the two bounds
proves the claim.
\end{proof}

\subsubsection[Proof of the AO Regret Theorem]
{Proof of \Cref{thm:ao_main_text}}

\begin{proof}
In the preceding covering construction, choose
\(
\varpi_t=\pi^2(t+1)^2/6\),
\(\eta_T=\bar\delta_T=T^{-2}/2\),
\(
\delta_T=T^{-2}.
\)
By \Cref{lem:ao_direct_decomp,cor:ao_softening,lem:ao_bias_disc}, the
tower property, and Cauchy-Schwarz,
\begin{equation*}
    \mathfrak R_T
\le
T\tau
\left[
C_\Omega
+
d\,\EE\left\{
\log_+(r_\Omega L_\star/\tau)
\right\}
\right]+
\left\{
C_{\gamma,\upsilon}\gamma_T(\upsilon)
\sum_{t=0}^{T-1}\beta_t
\right\}^{1/2}
+
\sum_{t=0}^{T-1}
\frac{1+\sqrt{\beta_t}}{(t+1)^2}
+
T\bar M\sqrt{\delta_T},
\end{equation*}
where \Cref{lem:ao_varsum} was used for the variance sum. Define
\begin{equation*}
    \beta_T^\star
=
2\left[
\log\left(
\frac{C_{\rm cov}\pi^2}{6\bar\delta_T}
\right)
+
d\log\{1+\widetilde\Lambda_T(\eta_T)+C_0\}
+
2(d+1)\log T
\right].
\end{equation*}
The grid-cardinality bound gives
\begin{equation*}
    \beta_t\le\beta_T^\star,
\qquad
\sum_{t=0}^{T-1}\beta_t\le T\beta_T^\star,
\qquad
\sum_{t=0}^{T-1}
\frac{1+\sqrt{\beta_t}}{(t+1)^2}
\le
\frac{\pi^2}{6}(1+\sqrt{\beta_T^\star}).
\end{equation*}
Also,
\(
\EE\{\log_+(r_\Omega L_\star/\tau)\}
\le
C_L+\log_+(1/\tau)
\)
by \Cref{lem:ao_Lt}. Therefore
\begin{equation*}
    \mathfrak R_T
\le
T\tau
\left[
C_\Omega+d\{C_L+\log_+(1/\tau)\}
\right] +
\left\{
C_{\gamma,\upsilon}\beta_T^\star
T\gamma_T(\upsilon)
\right\}^{1/2}
+
\frac{\pi^2}{6}(1+\sqrt{\beta_T^\star})
+
\bar M .
\end{equation*}
Under the selected \(\eta_T\) and \(\bar\delta_T\),
\(
\widetilde\Lambda_T(\eta_T)
=
1+\widetilde M_\partial
+
d\sqrt{2B_\partial\log(4dT^3)},
\)
so \(\beta_T^\star=\mathcal O(\log T)\) for fixed \(d\). Substitution
gives
\begin{equation*}
    \mathfrak R_T
=
\mathcal O\left\{
\sqrt{T\gamma_T(\upsilon)\log T}
\right\}
+
\mathcal O\left[
T\tau\{1+\log_+(1/\tau)\}
\right],
\end{equation*}
after absorbing the lower-order
\(\mathcal O(\sqrt{\log T})+\mathcal O(1)\) terms.
\end{proof}

\subsection[Proofs of the Ideal-UR Calibration Results]
{Proofs of \Cref{prop:ur_calibration_link_main,thm:sale_main_calibration}}
\label{appdx:ideal_sale_delta_calibration}

This section analyses an idealised UR branch that chooses evaluations by the posterior-weighted one-step IMSPE-reduction objective. The calibration rate is controlled through deterministic lower bounds on the realised number of UR evaluations, \(N_T^{\rm UR}\). The argument first controls the spectral envelope of the \(\bar\pi_t\)-weighted covariance operator, then converts realised UR steps into decay of \(V_t\), and finally converts \(V_t\) into TV error.

Let
\(
\mathcal J_t^-
=
\mathcal H_t\vee\sigma(B_0,\ldots,B_{t-1})\),
\(\mathcal J_t^+
=
\mathcal J_t^-\vee\sigma(B_t,\para_{t+1}).
\)
By \Cref{ass:main_gp_setup}\textup{(A3)} and the fresh branch
randomisation,
\begin{equation*}
    \PP(f\mid\mathcal J_t^-)
=
\PP(f\mid\mathcal J_t^+)
=
\PP(f\mid\mathcal H_t)
=
\PP_t.
\end{equation*}
Thus the past branch labels and current pre-response input contain no
information about \(f\) beyond \(\mathcal H_t\). Moreover,
\(
\mathcal J_t^-
\subseteq
\mathcal J_t^+
\subseteq
\mathcal J_{t+1}^-,
\)
so
\(
\mathcal G_{2t}:=\mathcal J_t^-\),
\(\mathcal G_{2t+1}:=\mathcal J_t^+
\)
is an interlaced filtration.

\subsubsection{Uniform Density and Spectral Envelopes}

\begin{lemma}
\label{lem:posterior_density_envelope_integrable}
Under the inherited Bayesian continuous-GP setup, let
\(
B_\pi
:=
\sup_{\para\in\Omega}p_\dataseq(\para).
\)
Then \(B_\pi<\infty\) a.s. and \(\EE[B_\pi]<\infty\).
\end{lemma}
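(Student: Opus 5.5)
Write \(p_\dataseq(\para)=\exp\{f(\para)\}/\mathcal Z(f)\). The plan is to bound the numerator and denominator separately, each by the oscillation of \(f\). Write \(\operatorname{osc}(f):=\sup_\Omega f-\inf_\Omega f\). Since \(\mathcal Z(f)\geq|\Omega|\exp\{\inf_\Omega f\}\), we get the deterministic bound
\begin{equation*}
B_\pi=\frac{\exp\{\sup_\Omega f\}}{\mathcal Z(f)}\leq |\Omega|^{-1}\exp\{\operatorname{osc}(f)\}\leq |\Omega|^{-1}\exp\Bigl\{2\sup_{\para\in\Omega}|f(\para)|\Bigr\}.
\end{equation*}
Under \Cref{ass:main_gp_setup}\textup{(A1)}, \(|\Omega|\in(0,\infty)\) because \(\Omega\) is compact with nonempty interior. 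Under \textup{(A5)}, \(f\) is a.s.\ continuous on compact \(\Omega\), so \(\sup_\Omega|f|<\infty\). Together these give \(B_\pi<\infty\) a.s.

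For integrability, I apply \Cref{lem:conditional_gp_supremum} to the prior process \(X=f\) itself, with the trivial \(\sigma\)-field \(\mathcal G\). Its hypotheses hold: \(f\) is centred, continuous and separable; its canonical metric is \(\mathscr d_0\); the entropy integral satisfies \(J(\mathscr d_0)=J_0<\infty\) by \textup{(A5)}; and \(v^2=B_\kappa\). The exponential-moment conclusion with \(a=2\) then gives \(\EE[\exp\{2\sup_\Omega|f|\}]<\infty\), and hence \(\EE[B_\pi]\leq |\Omega|^{-1}\EE[\exp\{2\sup_\Omega|f|\}]<\infty\).

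As a check, the finiteness of that exponential moment can also be read off directly from the Borell--TIS tail in \Cref{eq:Borel-TIS tail bound}. With \(m=\EE\sup_\Omega|f|\leq \sqrt{B_\kappa}+C_DJ_0\), write
\begin{equation*}
\EE\, e^{2S}\leq e^{2m}+\int_0^\infty 2e^{2(m+u)}\,2e^{-u^2/(2B_\kappa)}\,\dee u<\infty .
\end{equation*}
The proof has no real obstacle. The only care needed is to use the prior (unconditional) law of \(f\), so that the constants do not depend on \(t\), and to note that \(|\Omega|>0\).
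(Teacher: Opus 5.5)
Your proof is correct and essentially identical to the paper's. The paper uses the same bound \(\mathcal Z(f)\ge|\Omega|\exp\{\inf_\Omega f\}\) to get \(B_\pi\le|\Omega|^{-1}\exp\{2\sup_\Omega|f|\}\), then applies the exponential-moment conclusion of the conditional Gaussian-supremum lemma to the prior process with the trivial \(\sigma\)-field. Your extra Borell--TIS tail-integration check is a valid unpacking of that moment conclusion but is not needed.
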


\begin{proof}
Since \(f\) is continuous on compact \(\Omega\), the normalising constant
\(
\mathcal Z(f)
\) from \Cref{def: expected posterior}
satisfies
\(
\mathcal Z(f) \ge |\Omega|\exp\{\inf_{\bm u\in\Omega}f(\bm u)\}
\)
a.s. Hence
\*[
    B_\pi
&=
\sup_{\para\in\Omega}\exp\{f(\para)\}/\mathcal Z(f) \\
&\le
|\Omega|^{-1}
\exp\!\left\{
\sup_{\para\in\Omega}f(\para)
-
\inf_{\para\in\Omega}f(\para)
\right\}
\le
|\Omega|^{-1}\exp(2M_f),
\]
where
\(
M_f=\sup_{\para\in\Omega}|f(\para)|.
\)
By \Cref{ass:main_gp_setup}\textup{(A5)} and the linear-exponential
moment conclusion of \Cref{lem:conditional_gp_supremum}, applied to the
prior process \(f\),
\(
\EE\{\exp(aM_f)\}<\infty,
\) \(a > 0\).
In particular, \(\EE\{\exp(2M_f)\}<\infty\). Therefore
\(
\EE[B_\pi]
\le
|\Omega|^{-1}\EE\{\exp(2M_f)\}
<\infty.
\)
The almost-sure finiteness of \(B_\pi\) follows from \(\EE\{\exp(2M_f)\}<\infty\).
\end{proof}

\begin{lemma}
\label{lem:posterior_density_domination}
Let
\(
B_{\pi,t}:=\EE_t[B_\pi].
\)
Then \(\{B_{\pi,t}\}_{t\ge0}\) is a nonnegative martingale and,
for every \(\delta\in(0,1)\),
\begin{equation*}
    \Pr\!\left(
\sup_{t\ge0}B_{\pi,t}
\le \EE[B_\pi]/\delta
\right)
\ge
1-\delta.
\end{equation*}
With
\(
\mathcal E_t^{\dataseq}(\delta)
:=
\left\{
\max_{0\le s\le t}B_{\pi,s}
\le \EE[B_\pi]/\delta
\right\},
\)
\(\mathcal E_t^{\dataseq}(\delta)\) is
\(\mathcal{H}_t\)-measurable,
\(
\mathcal E_{t+1}^{\dataseq}(\delta)
\subseteq
\mathcal E_t^{\dataseq}(\delta),
\)
\(
\Pr\{\mathcal E_t^{\dataseq}(\delta)\}\ge 1- \delta,
\)
and on \(\mathcal E_t^{\dataseq}(\delta)\),
\begin{equation*}
    \bar p_s(\para)
\le \EE[B_\pi]/\delta, \qquad \forall \para\in\Omega,\ 0\le s\le t.
\end{equation*}
\end{lemma}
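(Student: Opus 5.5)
The plan is to use the Doob martingale structure together with Doob's maximal inequality, and then read off the pointwise density bound from the conditional-expectation representation of \(\bar p_s\) in \Cref{def: expected posterior}.

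\emph{Martingale property.} By \Cref{lem:posterior_density_envelope_integrable}, \(B_\pi\) is nonnegative and integrable. Since \(\mathcal H_t=\sigma(\mathcal D_t)\) is increasing in \(t\), the process \(B_{\pi,t}=\EE_t[B_\pi]\) is a Doob (closed) martingale. Nonnegativity follows from monotonicity of conditional expectation, and integrability gives \(\EE[B_{\pi,t}]=\EE[B_\pi]\).

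\emph{Maximal bound.} Doob's maximal inequality for nonnegative martingales gives, for every finite \(t\),
\[
\Pr\Bigl(\max_{0\le s\le t}B_{\pi,s}>\EE[B_\pi]/\delta\Bigr)\le \delta\,\EE[B_{\pi,t}]/\EE[B_\pi]=\delta .
\]
Letting \(t\to\infty\) and using continuity from below along the increasing events yields the supremum statement with \(\ge 1-\delta\). The events \(\mathcal E_t^{\dataseq}(\delta)\) are defined through \(B_{\pi,0},\dots,B_{\pi,t}\), each of which is \(\mathcal H_s\)-measurable with \(\mathcal H_s\subseteq\mathcal H_t\). Hence \(\mathcal E_t^{\dataseq}(\delta)\) is \(\mathcal H_t\)-measurable. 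The events are nested decreasing because the running maximum is nondecreasing in \(t\), and \(\Pr\{\mathcal E_t^{\dataseq}(\delta)\}\ge1-\delta\) is the finite-\(t\) Doob bound above.

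\emph{Density domination.} By \Cref{def: expected posterior}, \(\bar p_s(\para)=\EE_s\{p_\dataseq(\para)\}\), and \(p_\dataseq(\para)\le B_\pi\) pointwise. Monotonicity of conditional expectation then gives \(\bar p_s(\para)\le B_{\pi,s}\) a.s. for each fixed \(\para\).

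The one subtle point, and the main obstacle, is upgrading this to a single null set valid for all \(\para\in\Omega\) simultaneously. I would handle it by using the regular-conditional-law representation \(\bar p_s(\para)=\int \exp\{g(\para)\}/\mathcal Z(g)\,\PP_s(\dee g)\) together with \(\sup_\para \exp\{g(\para)\}/\mathcal Z(g)\le \sup_{\para}\exp\{g(\para)\}/\mathcal Z(g)\) for every \(g\). Integrating this pathwise bound against \(\PP_s\) gives
\[
\bar p_s(\para)\le\int \sup_{\bm u}\frac{\exp\{g(\bm u)\}}{\mathcal Z(g)}\,\PP_s(\dee g)=\EE_s[B_\pi]
\]
for all \(\para\) on one full-measure event, because \(B_\pi\) is a measurable functional of \(f\) and \(f\sim\PP_s\) conditionally. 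On \(\mathcal E_t^{\dataseq}(\delta)\) we have \(B_{\pi,s}\le\EE[B_\pi]/\delta\) for all \(s\le t\), which gives the claimed bound.
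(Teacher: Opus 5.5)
Your proof is correct and follows the paper's argument: the Doob closed martingale \(\EE_t[B_\pi]\), the finite-horizon Doob maximal inequality passed to \(\sup_{t\ge0}\) by a monotone limit, and the density bound \(\bar p_s(\para)\le B_{\pi,s}\) from monotonicity of conditional expectation. Your use of the regular conditional law \(\PP_s\) to get one null set for all \(\para\) makes explicit a point the paper leaves implicit. One typo to fix: the pathwise inequality you display should read \(\exp\{g(\para)\}/\mathcal Z(g)\le\sup_{\bm u}\exp\{g(\bm u)\}/\mathcal Z(g)\).
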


\begin{proof}
By \Cref{lem:posterior_density_envelope_integrable}, \(B_\pi\) is integrable.
Therefore
\(
B_{\pi,t}=\EE_t[B_\pi]
\)
is a nonnegative martingale with respect to \(\{\mathcal H_t\}_{t\ge0}\).
Doob's maximal inequality gives, for every finite \(n\),
\begin{equation*}
    \Pr\!\left(
\max_{0\le t\le n}B_{\pi,t}
> \EE[B_\pi]/\delta
\right)
\le
\frac{\delta}{\EE[B_\pi]}\EE[B_{\pi,n}]
=
\delta.
\end{equation*}
Letting \(n\to\infty\) gives the stated horizon-uniform bound.

It remains to verify the density domination. For every \(\para\in\Omega\),
the definition of the expected posterior density and the tower property give
\begin{equation*}
    \bar p_t(\para)
=
\EE_t\!\left\{p_\dataseq(\para)\right\}
\le
\EE_t\!\left\{
\sup_{\bm u\in\Omega}p_\dataseq(\bm u)
\right\}
=
\EE_t[B_\pi]
=
B_{\pi,t}.
\end{equation*}
Thus, on \(\mathcal E_t^{\dataseq}(\delta)\),
\(
\bar p_s(\para)
\le
B_{\pi,s}
\le
\EE[B_\pi]/\delta\),
\(\forall \para\in\Omega,\ 0\le s\le t.
\)
The measurability and monotonicity of
\(\mathcal E_t^{\dataseq}(\delta)\) are immediate from its definition.
\end{proof}

\begin{lemma}
\label{lem:uniform_weighted_spectral_envelope}
Assume \Cref{ass:prior_spectral_decay}. For each \(t\), let
\(\mathcal C_t\) be the posterior covariance operator on
\(L^2(\Omega,\bar\pi_t)\), that is,
\begin{equation*}
    (\mathcal C_t h)(\para)
=
\int_\Omega \kappa_t(\para,\bm u)h(\bm u)\,\bar\pi_t(\dee\bm u),
\end{equation*}
and let
\(
\lambda_{t,1}\ge\lambda_{t,2}\ge\cdots\ge0
\)
be its eigenvalues. For every \(\delta\in(0,1)\), on
\(\mathcal E_t^{\dataseq}(\delta)\),
\begin{align*}
\lambda_{s,j}
&\le
A_\delta j^{-(1+2\nu/d)},
&&j\ge1,\quad 0\le s\le t,
&&\text{under polynomial decay},
\\
\lambda_{s,j}
&\le
A_\delta\exp\{-a_0j^{1/d}\},
&&j\ge1,\quad 0\le s\le t,
&&\text{under exponential decay},
\end{align*}
respectively, where
\(
A_\delta=A_0\EE[B_\pi]/\delta.
\)
\end{lemma}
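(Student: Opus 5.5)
We compare eigenvalues of the weighted posterior operator \(\mathcal C_s\) on \(L^2(\Omega,\bar\pi_s)\) with those of the prior operator \(\mathcal C_0^\lambda\) on \(L^2(\Omega,\lambda)\). Two monotonicity facts suffice: posterior covariance contraction in the kernel, and domination of the weighting density. The argument is a min--max (Courant--Fischer) comparison carried out after a unitary change of variables, so that all operators act on the single space \(L^2(\Omega,\lambda)\).

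\emph{Step 1 (unweighting).} Fix \(s\le t\) and write \(\bar p_s\) for the density of \(\bar\pi_s\). The map \(U:L^2(\Omega,\bar\pi_s)\to L^2(\Omega,\lambda)\), \(Uh=\bar p_s^{1/2}h\), is an isometry onto the closed subspace of functions vanishing where \(\bar p_s=0\). It conjugates \(\mathcal C_s\) to the integral operator \(\widetilde{\mathcal C}_s\) on \(L^2(\Omega,\lambda)\) with kernel \(\bar p_s^{1/2}(\para)\kappa_t(\para,\bm u)\bar p_s^{1/2}(\bm u)\), where \(\kappa_t\) here means \(\kappa_s\). Off that subspace, \(\widetilde{\mathcal C}_s\) acts as zero. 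Hence \(\mathcal C_s\) and \(\widetilde{\mathcal C}_s\) have the same nonzero eigenvalues. All kernels involved are continuous and bounded on compact \(\Omega\), so these operators are positive, self-adjoint and trace class, and min--max applies.

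\emph{Step 2 (two Loewner comparisons).} First, \(\kappa_s\preceq\kappa\) as kernels, because \(\kappa-\kappa_s\) is the positive semidefinite conditioning term; this is the same contraction fact used in \Cref{lem:ao_post_C1}. Multiplying both sides by \(\bar p_s^{1/2}\otimes\bar p_s^{1/2}\) preserves the order. So for every \(h\in L^2(\lambda)\),
\[
\langle h,\widetilde{\mathcal C}_s h\rangle_\lambda\le\langle \bar p_s^{1/2}h,\mathcal C_0^\lambda(\bar p_s^{1/2}h)\rangle_\lambda
=\langle h,D\mathcal C_0^\lambda D h\rangle_\lambda,
\]
with \(D\) multiplication by \(\bar p_s^{1/2}\). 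Second, on \(\mathcal E_t^{\dataseq}(\delta)\), \Cref{lem:posterior_density_domination} gives \(\|D\|_{\rm op}^2\le \EE[B_\pi]/\delta\) for all \(s\le t\). The operator \(D\mathcal C_0^\lambda D=(\mathcal C_0^{\lambda})^{1/2\ast}\)-type product has the same nonzero eigenvalues as \((\mathcal C_0^\lambda)^{1/2}D^2(\mathcal C_0^\lambda)^{1/2}\), since \(AB\) and \(BA\) share nonzero spectrum with \(A=D(\mathcal C_0^\lambda)^{1/2}\). The latter operator is \(\preceq \|D\|^2_{\rm op}\,\mathcal C_0^\lambda\).

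\emph{Step 3 (conclusion).} Min--max gives eigenvalue monotonicity under the Loewner order. Chaining the comparisons yields \(\lambda_{s,j}\le(\EE[B_\pi]/\delta)\lambda_{0,j}^\lambda\) for all \(j\ge1\) and \(0\le s\le t\), on \(\mathcal E_t^{\dataseq}(\delta)\). Inserting \Cref{ass:prior_spectral_decay} (i) or (ii) then gives the stated bounds with \(A_\delta=A_0\EE[B_\pi]/\delta\).

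The main obstacle is handling the \(AB\)/\(BA\) step and zero-density regions cleanly. To avoid it, one could instead argue directly: by min--max, \(\lambda_{s,j}=\min_{\dim V^\perp<j}\max_{h\in V,\|h\|_{\bar\pi_s}=1}\langle h,\mathcal C_s h\rangle_{\bar\pi_s}\). Bound the quadratic form by \(\langle g,\mathcal C_0^\lambda g\rangle_\lambda\) with \(g=\bar p_sh\). Then use \(\|g\|_\lambda^2\le \sup\bar p_s\cdot\|h\|^2_{\bar\pi_s}\), and choose \(V\) orthogonal to the preimages of the top \(j-1\) prior eigenfunctions under \(h\mapsto\bar p_s h\). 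This direct variational argument avoids the square-root factorisation.
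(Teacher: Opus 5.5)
Your proposal is correct, and it rests on the same two facts as the paper's proof: the kernel contraction $\kappa_s\preceq\kappa$, and the density bound $\bar p_s\le\EE[B_\pi]/\delta$ on $\mathcal E_t^{\dataseq}(\delta)$ from \Cref{lem:posterior_density_domination}. What differs is the order of the steps and how they are packaged.

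The paper stays on $L^2(\Omega,\bar\pi_s)$ throughout. It first compares the \emph{prior} operator on $L^2(\Omega,\bar\pi_s)$ with $\mathcal C_0^\lambda$, using approximation numbers of the RKHS embeddings ($\mathcal C_{0,s}^\pi=I_{\bar\pi_s}I_{\bar\pi_s}^\ast$ and $\|I_{\bar\pi_s}h\|\le M_\delta^{1/2}\|I_\lambda h\|$). Only then does it apply Courant--Fischer to $\kappa_s\preceq\kappa$ inside the weighted space.

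You instead conjugate to $L^2(\Omega,\lambda)$ first and apply the Loewner step there, getting $\widetilde{\mathcal C}_s\preceq D\mathcal C_0^\lambda D$. You then absorb the density through $AA^\ast$ versus $A^\ast A$, with $A=D(\mathcal C_0^\lambda)^{1/2}$. Under $h\mapsto\bar p_s^{1/2}h$ one has $UI_{\bar\pi_s}=DI_\lambda$, and your $A$ agrees with this operator up to a partial isometry. So both density steps are really the same singular-value inequality, $s_j(DB)\le\|D\|_{\rm op}\,s_j(B)$.

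The paper's version gives a standalone comparison of prior operators and never leaves the weighted space. Yours needs no RKHS machinery. Your direct min--max variant is the most elementary route of all, using only Courant--Fischer with an explicit test subspace.

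Two small points. First, the zero-density issue you flag does not arise: $\bar p_s=\EE_s\{p_\dataseq\}>0$ everywhere, so $U$ is unitary onto $L^2(\Omega,\lambda)$.

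Second, in the variational variant the constraint should be $\langle\bar p_s h,e_i\rangle_\lambda=0$. That is, $h\perp e_i$ in $L^2(\Omega,\bar\pi_s)$ for the top $j-1$ eigenfunctions $e_i$ of $\mathcal C_0^\lambda$. Orthogonality to the literal preimages $e_i/\bar p_s$ would give the wrong condition $\int h\,e_i\,\dee\lambda=0$. With the corrected constraint the subspace still has codimension at most $j-1$ (note $e_i\in L^2(\Omega,\bar\pi_s)$ since $\bar p_s$ is bounded on the event). The bound $\lambda_{s,j}\le(\EE[B_\pi]/\delta)\,\lambda_{0,j}^\lambda$ then follows.
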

\begin{proof}
Fix \(s\le t\), and set
\(
M_\delta=\EE[B_\pi]/\delta.
\)
On \(\mathcal E_t^{\dataseq}(\delta)\), \(\bar\pi_s\) has density bounded by
\(M_\delta\). Let \(\mathcal C_{0,s}^\pi\) be the prior covariance operator on
\(L^2(\Omega,\bar\pi_s)\):
\begin{equation*}
    (\mathcal C_{0,s}^\pi h)(\para)
=
\int_\Omega \kappa(\para,\bm u)h(\bm u)\,\bar\pi_s(\dee\bm u).
\end{equation*}
We first compare the eigenvalues of
\(\mathcal C_{0,s}^\pi\) with those of
\(\mathcal C_0^\lambda\) on \(L^2(\Omega,\lambda)\).

Let \(\mathcal H_\kappa\) be the reproducing kernel Hilbert space of the prior kernel \(\kappa\), and let
\(
I_\lambda:\mathcal H_\kappa\to L^2(\Omega,\lambda)\),
\(I_{\bar\pi_s}:\mathcal H_\kappa\to L^2(\Omega,\bar\pi_s)
\)
be the canonical embedding maps. Since
\begin{equation*}
    \|h\|_{L^2(\bar\pi_s)}^2
=
\int_\Omega h(\para)^2\bar p_s(\para)\lambda(\dee\para)
\le
M_\delta
\|h\|_{L^2(\Omega,\lambda)}^2,
\end{equation*}
we have
\(
\|I_{\bar\pi_s}h\|_{L^2(\bar\pi_s)}
\le
M_\delta^{1/2}
\|I_\lambda h\|_{L^2(\Omega,\lambda)}
\)
for every \(h\in\mathcal H_\kappa\). Hence
\begin{equation*}
    a_j(I_{\bar\pi_s})
\le
M_\delta^{1/2}a_j(I_\lambda),
\ j\ge1,
\end{equation*} where \(a_j(\cdot)\) denotes the \(j\)-th approximation number of a compact operator.
The nonzero eigenvalues of the covariance operators satisfy
\(
\lambda_{0,s,j}^\pi
=
a_j(I_{\bar\pi_s})^2\),
\(\lambda_{0,j}^\lambda
=
a_j(I_\lambda)^2,
\)
because
\(
\mathcal C_{0,s}^\pi=I_{\bar\pi_s}I_{\bar\pi_s}^\ast
\)
and
\(
\mathcal C_0^\lambda=I_\lambda I_\lambda^\ast
\).
Therefore
\(
    \lambda_{0,s,j}^\pi
\le
M_\delta \lambda_{0,j}^\lambda.
\)

It remains to pass from the prior covariance operator to the posterior covariance operator.
Exact GP regression gives posterior covariance domination in the Loewner sense:
\(
0\preceq \kappa_s \preceq \kappa,
\)
meaning that for every finite signed measure \(\rho\) on \(\Omega\),
\begin{equation*}
    \iint_{\Omega^2}
\kappa_s(\para,\bm u)\,\rho(\dee\para)\rho(\dee\bm u)
\le
\iint_{\Omega^2}
\kappa(\para,\bm u)\,\rho(\dee\para)\rho(\dee\bm u).
\end{equation*}
Equivalently, for every \(h\in L^2(\Omega,\bar\pi_s)\),
\(
\langle h,\mathcal C_s h\rangle_{L^2(\bar\pi_s)}
\le
\langle h,\mathcal C_{0,s}^\pi h\rangle_{L^2(\bar\pi_s)}.
\)
The Courant--Fischer min--max principle for compact positive self-adjoint
operators on \(L^2(\Omega,\bar\pi_s)\) then yields
\(
\lambda_{s,j}\le \lambda_{0,s,j}^\pi
\)
for every \(j\ge1\). Combining the two bounds gives
\begin{align*}
\lambda_{s,j}
&\le
A_\delta j^{-(1+2\nu/d)}
&&\text{under polynomial decay},
\\
\lambda_{s,j}
&\le
A_\delta\exp\{-a_0j^{1/d}\}
&&\text{under exponential decay}.
\end{align*}
Since \(s\le t\) was arbitrary, the claim holds uniformly for all
\(0\le s\le t\) on \(\mathcal E_t^{\dataseq}(\delta)\).
\end{proof}

\subsubsection{Weighted Variance Decay under Realised UR Steps}

\begin{proof}[Proof of the first assertion of
\Cref{prop:ur_calibration_link_main}]
Fix \(\para\in\Omega\), and use a superscript \((\para)\) for the
post-update quantities in the hypothetical experiment that fixes the next
query at \(\para\) without observing its response. Write
\(\bar\pi_{t+1}^{(\para)}\) for the corresponding expected posterior.

Given \(\mathcal H_t\), the GP variance update is deterministic:
\begin{equation}\label{eq:sigma_t+1_reduction}
\sigma_{t+1}^{2,(\para)}(\bm u)
=
\sigma_t^2(\bm u)
-
\frac{\kappa_t(\bm u,\para)^2}
{\sigma_t^2(\para)+\varsigma^2}.
\end{equation}
When \(\varsigma^2=0\) and \(\sigma_t^2(\para)=0\), the reduction term in
\Cref{eq:sigma_t+1_reduction} is interpreted according to the zero-variance convention
in \Cref{def:weighted_variance_reduction}, and is therefore zero.

For any bounded \(\mathcal H_t\)-measurable \(h\), the tower
property and the definition of \(\bar\pi_t\) give
\begin{equation*}
    \EE\!\left\{
\int_\Omega
h(\bm u)\,\bar\pi_{t+1}^{(\para)}(\dee\bm u)
\,\middle|\,
\mathcal H_t
\right\}
=
\EE\!\left\{
\int_\Omega
h(\bm u)\,\pi_\dataseq(\dee\bm u)
\,\middle|\,
\mathcal H_t
\right\}
=
\int_\Omega h(\bm u)\,\bar\pi_t(\dee\bm u).
\end{equation*}
Taking
\(
h(\bm u)
=
\sigma_{t+1}^{2,(\para)}(\bm u),
\)
which is \(\mathcal H_t\)-measurable and does not depend on the realised
response, yields
\begin{equation}
\label{eq:delta_variance_reduction_identity}
\EE\!\left\{
V_{t+1}^{(\para)}
\,\middle|\,
\mathcal H_t
\right\}
=
\int_\Omega
\sigma_{t+1}^{2,(\para)}(\bm u)\,
\bar\pi_t(\dee\bm u)
=
V_t-\Delta_t(\para).
\end{equation}

Equivalently, if \(\para_{t+1}\) is generated before its response and is
conditionally independent of \(f\) given \(\mathcal H_t\), evaluating \Cref{eq:delta_variance_reduction_identity} at \(\para_{t+1}\) gives
\(
    \EE\!\left[
V_{t+1}
\,\middle|\,
\mathcal H_t,\para_{t+1}
\right]
=
V_t-\Delta_t(\para_{t+1})\) a.s.
\end{proof}

\begin{lemma}
\label{lem:delta_lower_bound_by_weighted_variance}
Assume \Cref{ass:prior_spectral_decay}. Fix
\(\delta_\pi\in(0,1)\) and a fixed noise variance
\(\varsigma^2\geq0\). There exists a deterministic constant
\(c_{\Delta,\delta_\pi}>0\), and, under exponential spectral decay, a
deterministic constant \(C_{L,\delta_\pi}\geq B_\kappa\), such that, for
every \(t\geq0\), on \(\mathcal E_t^{\dataseq}(\delta_\pi)\), the following
applicable bounds hold. These constants may depend on the fixed
\(\varsigma^2\), but not on \(t\). Write
\(
L_{\delta_\pi}(v)
:=
1+\log\!\left(C_{L,\delta_\pi}/v\right)\), \(0<v\leq B_\kappa\),
and define expressions with denominator \(L_{\delta_\pi}(V_t)^d\) to be
zero when \(V_t=0\).
For the \emph{general case} of $\varsigma^2 \ge 0$, take \(a=2\). If \(\varsigma^2=0\), the
\emph{noiseless refinement} is obtained by taking \(a=1\). For either
applicable value of \(a\),
\begin{equation*}
\sup_{\para\in\Omega}\Delta_t(\para)
\ge
\begin{dcases}
c_{\Delta,\delta_\pi}V_t^{a + d/(2\nu)},
& \text{under polynomial spectral decay},\\
c_{\Delta,\delta_\pi}V_t^a L_{\delta_\pi}(V_t)^{-d},
& \text{under exponential spectral decay}.
\end{dcases}
\end{equation*}
\end{lemma}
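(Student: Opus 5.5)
We lower-bound $\sup_{\para}\Delta_t(\para)$ by averaging $\Delta_t$ against a well-chosen probability measure and then using the spectral envelope of \Cref{lem:uniform_weighted_spectral_envelope}. Work on $\mathcal E_t^{\dataseq}(\delta_\pi)$ and write the spectral decomposition of the posterior covariance operator on $L^2(\Omega,\bar\pi_t)$ as $\kappa_t(\bm u,\para)=\sum_j\lambda_{t,j}\phi_j(\bm u)\phi_j(\para)$. This is a Mercer expansion in $L^2(\bar\pi_t)$; we use it only on the support of $\bar\pi_t$, which suffices since both the weighting and the averaging measure will be $\bar\pi_t$.

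\textbf{Step 1: reduce to a ratio of operator traces.} Because $\sup\Delta_t\ge\int\Delta_t\,\dee\bar\pi_t$ and $\sigma_t^2+\varsigma^2\le B_\kappa+\varsigma^2$, we obtain
\begin{equation*}
\sup_{\para}\Delta_t(\para)\ \ge\ \frac{\iint\kappa_t(\bm u,\para)^2\,\bar\pi_t(\dee\bm u)\bar\pi_t(\dee\para)}{B_\kappa+\varsigma^2}=\frac{\sum_j\lambda_{t,j}^2}{B_\kappa+\varsigma^2},
\end{equation*}
and $V_t=\sum_j\lambda_{t,j}$ is the trace.

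\textbf{Step 2: general case ($a=2$).} Truncate at an index $J$. The envelope gives a tail bound $\sum_{j>J}\lambda_{t,j}\le A_\delta J^{-2\nu/d}$ in the polynomial case and $\lesssim A_\delta J^{(d-1)/d}e^{-a_0J^{1/d}}$ in the exponential case. Choose $J$ so that this tail is at most $V_t/2$; this gives $J\asymp V_t^{-d/(2\nu)}$ and $J\asymp\{\log(C_L/V_t)\}^d$, respectively. Then $\sum_{j\le J}\lambda_{t,j}\ge V_t/2$, and Cauchy--Schwarz yields $\sum_j\lambda_{t,j}^2\ge V_t^2/(4J)$. This produces $V_t^{2+d/(2\nu)}$ and $V_t^2L(V_t)^{-d}$. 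The constant $C_{L,\delta_\pi}$ is chosen so that $L\ge1$ and the log inversion absorbs the polynomial prefactor for all $v\le B_\kappa$. All constants depend only on $A_\delta$, $a_0$, $\nu$, $d$, $B_\kappa$ and $\varsigma^2$, not on $t$.

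\textbf{Step 3: noiseless refinement ($a=1$).} With $\varsigma^2=0$, the bound by $B_\kappa$ in the denominator wastes a factor of $V_t$. Instead, I would average $\Delta_t$ against the reweighted measure $\rho(\dee\para)\propto\sigma_t^2(\para)\bar\pi_t(\dee\para)$. Its normalizer is $V_t$, and the denominators cancel, giving
\begin{equation*}
\sup\Delta_t\ \ge\ \int\Delta_t\,\dee\rho=\frac{1}{V_t}\iint\kappa_t(\bm u,\para)^2\,\bar\pi_t(\dee\bm u)\bar\pi_t(\dee\para)=\frac{\sum_j\lambda_{t,j}^2}{V_t}.
\end{equation*}
The points where $\sigma_t^2=0$ carry no $\rho$-mass, consistent with the zero convention. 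Step 2's bound $\sum\lambda^2\ge V_t^2/(4J)$ then gives $V_t^{1+d/(2\nu)}$ and $V_tL^{-d}$.

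\textbf{Main obstacle.} The delicate part is Step 1's Mercer identity $\iint\kappa_t^2\,\dee\bar\pi_t\,\dee\bar\pi_t=\sum_j\lambda_{t,j}^2$, that is, the Hilbert--Schmidt norm of the integral operator, together with the trace identity for $V_t$. Both require continuity of $\kappa_t$ and hold $\bar\pi_t$-a.e. by Mercer's theorem on the support. The other delicate part is making the exponential-case inversion of $J$ uniform in $V_t\in(0,B_\kappa]$ with a single deterministic $C_{L,\delta_\pi}$. I would handle that by bounding $J^{(d-1)/d}e^{-a_0J^{1/d}}\le C e^{-a_0J^{1/d}/2}$ and enlarging the constant.
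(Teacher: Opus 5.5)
Your proposal is correct and matches the paper's proof essentially step for step. In both, you average $\Delta_t$ against $\bar\pi_t$ in the general case and against the variance-tilted law $\sigma_t^2\,\dee\bar\pi_t/V_t$ in the noiseless case, then use the spectral envelope to truncate at $m_t\asymp V_t^{-d/(2\nu)}$ (polynomial) or $m_t\asymp L_{\delta_\pi}(V_t)^d$ (exponential), and apply Cauchy's inequality to $\sum_j\lambda_{t,j}^2$.

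One justification in Step 3 is incomplete. To equate $\int\Delta_t\,\dee\rho$ with the full double integral of $\kappa_t^2$, it is not enough that points with $\sigma_t^2(\para)=0$ carry no $\rho$-mass. You also need $\kappa_t(\bm u,\para)=0$ at such points, which follows from Cauchy--Schwarz for the positive semidefinite kernel $\kappa_t$; the paper states this explicitly.
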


\begin{proof}
If \(V_t=0\), all four claims are trivial. Suppose \(V_t>0\). Since
\(\bar\pi_t\) is a probability measure and
\(\sigma_t^2(\bm v)\leq B_\kappa\),
\[\label{eq:general_averaging_bound}
 \sup_{\para\in\Omega}\Delta_t(\para)
&\ge
\int_\Omega\Delta_t(\bm v)\,\bar\pi_t(\dee\bm v) \\
& \ge
\frac{1}{B_\kappa+\varsigma^2}
\iint_{\Omega^2}
\kappa_t(\bm u,\bm v)^2
\,\bar\pi_t(\dee\bm u)\bar\pi_t(\dee\bm v)
=
\frac{1}{B_\kappa+\varsigma^2}
\sum_{j\ge1}\lambda_{t,j}^2.
\]
If \(\varsigma^2=0\), define the variance-tilted probability measure
\(
    Q_t(\dee\bm v)
:=\sigma_t^2(\bm v)\bar\pi_t(\dee\bm v)/V_t.
\)
When \(\sigma_t^2(\bm v)=0\), positive semidefiniteness gives
\(\kappa_t(\bm u,\bm v)=0\) for every \(\bm u\), so the zero-variance
convention in \Cref{def:weighted_variance_reduction} is compatible with
the cancellation below. Hence
\begin{equation}\label{eq:noiseless_averaging_bound}
    \sup_{\para\in\Omega}\Delta_t(\para)
\ge
\int_\Omega\Delta_t(\bm v)\,Q_t(\dee\bm v)
=
\frac{1}{V_t}
\iint_{\Omega^2}
\kappa_t(\bm u,\bm v)^2
\,\bar\pi_t(\dee\bm u)\bar\pi_t(\dee\bm v)
=
\frac{1}{V_t}\sum_{j\ge1}\lambda_{t,j}^2.
\end{equation}
In either case,
\begin{equation*}
V_t
=
\int_\Omega \kappa_t(\bm u,\bm u)\,\bar\pi_t(\dee\bm u)
=
\sum_{j\ge1}\lambda_{t,j}.
\end{equation*}

On \(\mathcal E_t^{\dataseq}(\delta_\pi)\),
\Cref{lem:uniform_weighted_spectral_envelope} and an integral comparison
give constants \(C_{A,\delta_\pi}<\infty\) and \(b_0>0\) such that, for
\(m\geq1\),
\begin{equation*}
\sum_{j>m}\lambda_{t,j}
\leq
\begin{cases}
C_{A,\delta_\pi}m^{-2\nu/d},
&\text{under polynomial spectral decay},\\
C_{A,\delta_\pi}\exp\{-b_0m^{1/d}\},
&\text{under exponential spectral decay}.
\end{cases}
\end{equation*}
Choose \(C_{L,\delta_\pi}\geq B_\kappa\) and \(C_m<\infty\) sufficiently
large, and set
\begin{equation*}
m_t
:=
\begin{cases}
\left\lceil C_mV_t^{-d/(2\nu)}\right\rceil,
&\text{under polynomial spectral decay},\\
\left\lceil C_mL_{\delta_\pi}(V_t)^d\right\rceil,
&\text{under exponential spectral decay}.
\end{cases}
\end{equation*}
The constants can be chosen so that
\(
\sum_{j>m_t}\lambda_{t,j}\leq V_t/2
\),
while
\begin{equation*}
m_t
\leq
C_{\delta_\pi}
\begin{cases}
V_t^{-d/(2\nu)},
&\text{under polynomial spectral decay},\\
L_{\delta_\pi}(V_t)^d,
&\text{under exponential spectral decay}.
\end{cases}
\end{equation*}
Therefore, Cauchy's inequality gives
\begin{equation*}
\sum_{j\geq1}\lambda_{t,j}^2
\geq
\frac{1}{m_t}
\left(\sum_{j=1}^{m_t}\lambda_{t,j}\right)^2
\geq
c_{\delta_\pi}
\begin{cases}
V_t^{2+d/(2\nu)},
&\text{under polynomial spectral decay},\\
V_t^2/L_{\delta_\pi}(V_t)^d,
&\text{under exponential spectral decay}.
\end{cases}
\end{equation*}
Absorbing all constants as $c_{\Delta,\delta_\pi}$, and combining the above with \Cref{eq:general_averaging_bound} proves the two general
bounds, while combining it with \Cref{eq:noiseless_averaging_bound} proves the two
noiseless refinements.
\end{proof}
\begin{lemma}
\label{lem:weighted_variance_rate_by_ur_count}
Assume the Mat\'ern/SE kernel alternative in
\Cref{thm:sale_main_calibration}. Fix
\(\delta_\pi\in(0,1)\) and a fixed noise variance
\(\varsigma^2\geq0\). Suppose that whenever \(I_t^{\rm UR}=1\), the
idealised UR rule chooses
\(
\para_{t+1}\in\arg\max_{\para\in\Omega}\Delta_t(\para).
\)
At non-UR iterations, the selected input may be chosen by any rule taking
values in \(\Omega\), provided it is generated before observing \(y_{t+1}\)
and is conditionally independent of the latent objective \(f\) given
\(\mathcal H_t\). Then there exist \(C_{\delta_\pi}<\infty\) and
\(c_{\delta_\pi}>0\), possibly depending on the fixed
\(\varsigma^2\) but independent of \(T\), such that, for every
\(T\geq1\) and every deterministic
\(n_T\in[T]_0\), the following bounds hold, with the second
being the sharper refinement available when \(\varsigma^2=0\):
\begin{equation*}
\EE\!\left\{
V_T\mathbf 1_{\mathcal E_T^{\dataseq}(\delta_\pi)}
\right\}
\le
\begin{cases}
C_{\delta_\pi}\{\mathscr O^{+}(n_T)\}^2 +
B_\kappa\Pr\left(N_T^{\rm UR}<n_T\right),\\[1mm]
C_{\delta_\pi}\{\mathscr O^{0}_{c_{\delta_\pi}/2}(n_T)\}^2+
B_\kappa\Pr\left(N_T^{\rm UR}<n_T\right),
& \text{if \(\varsigma^2=0\)},
\end{cases}
\end{equation*}
$\mathscr O^{+}$ and $\mathscr O^{0}_{c}$ are as defined in \Cref{thm:sale_main_calibration}.

\end{lemma}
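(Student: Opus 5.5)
The plan is to exploit the supermartingale structure of the weighted variance together with the deterministic lower bound on the one-step reduction from \Cref{lem:delta_lower_bound_by_weighted_variance}. Throughout I work on the interlaced filtration \(\{\mathcal G_s\}\) built from \(\mathcal J_t^-\subseteq\mathcal J_t^+\). Set \(W_t:=V_t\mathbf 1_{\mathcal E_t^{\dataseq}(\delta_\pi)}\).

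\textbf{Step 1: \(W_t\) is a supermartingale.} The first assertion of \Cref{prop:ur_calibration_link_main}, evaluated at the non-anticipating query, gives
\[
\EE\{V_{t+1}\mid\mathcal J_t^+\}=V_t-\Delta_t(\para_{t+1}).
\]
This uses that \(\PP(f\mid\mathcal J_t^+)=\PP_t\). Since \(\mathcal E_{t+1}^{\dataseq}\subseteq\mathcal E_t^{\dataseq}\), \(\mathcal E_t^{\dataseq}\) is \(\mathcal H_t\)-measurable and \(V_{t+1}\ge0\), I obtain
\[
\EE\{W_{t+1}\mid\mathcal J_t^+\}\le \mathbf 1_{\mathcal E_t^{\dataseq}}\{V_t-\Delta_t(\para_{t+1})\}\le W_t .
\]
On a UR iteration, \(\Delta_t(\para_{t+1})=\sup_\para\Delta_t(\para)\ge\phi(V_t)\) on \(\mathcal E_t^{\dataseq}\). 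Here \(\phi(v)=c_{\Delta,\delta_\pi}v^{a+d/(2\nu)}\), or \(c_{\Delta,\delta_\pi}v^aL_{\delta_\pi}(v)^{-d}\) in the SE case, with \(a=2\) in general and \(a=1\) when \(\varsigma^2=0\). Because \(\phi(0)=0\), this gives \(\EE\{W_{t+1}\mid\mathcal J_t^+\}\le W_t-\phi(W_t)\) at UR steps.

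\textbf{Step 2: index by UR steps and get a deterministic recursion.} Let \(\tau_k\) be the time just after the \(k\)-th UR evaluation, and set \(\tau_0=0\). Cap these at \(T\), so that \(\tau_k\wedge T\) are bounded stopping times. Define \(x_k:=\EE\{W_{\tau_k\wedge T}\}\). Between UR steps, optional sampling for the supermartingale \(W\) shows that the expectation does not increase. At the UR step itself, Step 1 applies. Since \(\phi\) is convex and increasing on \([0,B_\kappa]\), Jensen's inequality then gives the recursion
\[
x_{k+1}\le x_k-\phi(x_k).
\]
For the SE case I would first replace \(\phi\) by a convex minorant; enlarging \(C_{L,\delta_\pi}\) makes \(v\mapsto v^aL(v)^{-d}\) convex on \([0,B_\kappa]\). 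On the event \(\{N_T^{\rm UR}<\cdot\}\) the UR steps are truncated at \(T\). There I only use \(x_{k+1}\le x_k\), so the recursion must be applied with care on the stopped indices.

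Solving the recursion by comparison with the ODE \(\dot x=-\phi(x)\) yields the rates:
\begin{itemize}
\item \(x_n\lesssim n^{-2\nu/(4\nu+d)}=\{\mathscr O^+(n)\}^2\) for \(a=2\), Mat\'ern;
\item \(x_n\lesssim (\log n)^{d}n^{-1}\) up to constants for \(a=2\), SE, matching \(\{\mathscr O^+\}^2\);
\item \(x_n\lesssim n^{-2\nu/d}\) for \(a=1\), Mat\'ern;
\item for \(a=1\), SE, the recursion \(x_{k+1}\le x_k\{1-c/L(x_k)^d\}\), which integrates to \(\log(1/x_n)\gtrsim n^{1/(d+1)}\) and hence gives \(\{\mathscr O^0_{c/2}(n)\}^2\).
\end{itemize}

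\textbf{Step 3: return to time \(T\).} On \(\{N_T^{\rm UR}\ge n_T\}\) we have \(\tau_{n_T}\le T\). Optional sampling between \(\tau_{n_T}\wedge T\) and \(T\) gives
\[
\EE\{W_T\mathbf 1_{\{N_T^{\rm UR}\ge n_T\}}\}\le \EE\{W_{\tau_{n_T}\wedge T}\}+\text{(error on the complement)}.
\]
The indicator \(\{N_T^{\rm UR}\ge n_T\}\) is not adapted at time \(\tau_{n_T}\wedge T\). I would therefore simply bound \(W_T\le B_\kappa\) on \(\{N_T^{\rm UR}<n_T\}\) and \(W_T\le\) (its conditional expectation dominated by \(W_{\tau_{n_T}\wedge T}\)) elsewhere. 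This yields \(\EE W_T\le x_{n_T}+B_\kappa\Pr(N_T^{\rm UR}<n_T)\).

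\textbf{Main obstacle.} I expect the hardest part to be this coupling of the random UR count with the deterministic recursion. Jensen requires an exact expectation recursion, but the truncated stopping times break it on \(\{N_T^{\rm UR}<k\}\). I would handle this by running the recursion for \(y_k:=\EE\{W_{\tau_k\wedge T}\}\), using only \(y_{k+1}\le y_k-\EE\{\phi(W_{\tau_k})\mathbf 1_{\{\tau_{k+1}\le T\}}\}\). On the complement event I would absorb the defect into the \(B_\kappa\Pr(N_T^{\rm UR}<n_T)\) term.
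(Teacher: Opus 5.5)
Your Step~1 is correct, but Steps~2--3 do not establish the stated bound. The failure is exactly where you flag it.

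With capped times, $x_{n_T}=\EE\{W_{\tau_{n_T}\wedge T}\}$ is not controlled by any UR recursion. On $\{N_T^{\rm UR}<n_T\}$ it is simply $W_T$; for instance, if UR never fires, $x_{n_T}=\EE W_T$.

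Your repair does not close the gap either. To turn $y_{k+1}\le y_k-\EE\{\phi(W_{\tau_k})\mathbf 1_{\{\tau_{k+1}\le T\}}\}$ into a Jensen recursion in $y_k$, you must pay at every $k$ for the mass that $y_k$ carries on $\{\tau_{k+1}>T\}$, which receives no decrease. That defect can be as large as $\phi(B_\kappa)\Pr(N_T^{\rm UR}<k+1)$ per step. Propagating it through $y\mapsto y-\phi(y)+\epsilon$ gives one of two outcomes. Summing the defects costs an extra factor $n_T$. Using the fixed point $\phi^{-1}(\epsilon)$ gives only a fractional power such as $\Pr(N_T^{\rm UR}<n_T)^{1/p}$, with $p=a+d/(2\nu)$ in the Mat\'ern case. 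Neither matches the linear term $B_\kappa\Pr(N_T^{\rm UR}<n_T)$ in the lemma.

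The missing idea is to \emph{zero out} the UR steps that never occur, rather than cap them at $T$.

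\textbf{Definitions.} Work on the infinite horizon. Let $\zeta_{n+1}$ be the $(n+1)$-th UR decision time and $\tau_n$ the $n$-th completion time. Set $Z_n:=W_{\tau_n}\mathbf 1_{\{\tau_n<\infty\}}$.

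\textbf{Between UR steps.} $W$ is a bounded nonnegative supermartingale, so optional sampling at possibly infinite times applies. Dropping the nonnegative terminal term gives $\EE\{W_{\zeta_{n+1}}\mathbf 1_{\{\zeta_{n+1}<\infty\}}\mid\mathcal G_{\tau_n}\}\le Z_n$.

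\textbf{At the UR step.} Set $\psi(x):=x-\phi(x)$. Because the drift function vanishes at $0$, you get $\EE\{Z_{n+1}\mid\mathcal G_{\tau_n}\}\le\EE\{\psi(W_{\zeta_{n+1}}\mathbf 1_{\{\zeta_{n+1}<\infty\}})\mid\mathcal G_{\tau_n}\}$. This yields an exact recursion $z_{n+1}\le\psi(z_n)$ for $z_n:=\EE Z_n$, with no defect.

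\textbf{Monotonicity, which you did not check.} The recursion needs $\psi$ nondecreasing on $[0,B_\kappa]$, because the decrease is evaluated at $\zeta_{n+1}$, not at $\tau_n$. Shrink the constant in $\phi$ until $\sup\phi'\le 1$; this is the paper's $c_0$. Then either conditional Jensen for the concave $\psi$ (the paper's route) or your unconditional convex Jensen works.

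\textbf{Measurability.} Your claim that $\{N_T^{\rm UR}\ge n_T\}$ is not adapted is false. It equals $\{\tau_{n_T}\le T\}$, which lies in the stopped $\sigma$-field at $\tau_{n_T}\wedge T$. Bounded optional sampling therefore gives $\EE\{W_T\mathbf 1_{\{N_T^{\rm UR}\ge n_T\}}\}\le\EE\{W_{\tau_{n_T}}\mathbf 1_{\{\tau_{n_T}\le T\}}\}\le z_{n_T}$ directly. The complement contributes $B_\kappa\Pr(N_T^{\rm UR}<n_T)$.

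\textbf{Rate slip.} For $a=2$ Mat\'ern, the recursion $x_{k+1}\le x_k-cx_k^p$ with $p=2+d/(2\nu)$ gives $n^{-1/(p-1)}=n^{-2\nu/(2\nu+d)}=\{\mathscr O^+(n)\}^2$. Your $n^{-2\nu/(4\nu+d)}$ equals $n^{-1/p}$ and does not equal $\{\mathscr O^+(n)\}^2$.
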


\begin{proof}
Under Mat\'ern-$\nu$ kernel, set
\(g(x)=x^p\), where
\begin{equation*}
p=
\begin{cases}
2+d/(2\nu),
&\text{for any fixed \(\varsigma^2\geq0\)},\\
1+d/(2\nu),
&\text{if \(\varsigma^2=0\) and the noiseless refinement is used}.
\end{cases}
\end{equation*}
Under SE kernel, let
\(L_{\delta_\pi}\) be as in
\Cref{lem:delta_lower_bound_by_weighted_variance}, set \(g(0)=0\), and
define
\begin{equation*}
g(x)
:=
\frac{x^a}{L_{\delta_\pi}(x)^d},
\qquad
a=
\begin{cases}
2,&\text{in the general case},\\
1,&\text{for the noiseless refinement},
\end{cases}
\qquad 0<x\leq B_\kappa.
\end{equation*}
In all four cases, \(g\) is nonnegative, nondecreasing, and convex on
\([0,B_\kappa]\). For the SE case, this follows, for \(x>0\),
from
\begin{align*}
g'(x)
&=
x^{a-1}
L_{\delta_\pi}(x)^{-d}
\left\{a+d/L_{\delta_\pi}(x)\right\},
\\
g''(x)
&=
x^{a-2}L_{\delta_\pi}(x)^{-d}
\left\{
a(a-1)
+d(2a-1)/L_{\delta_\pi}(x)
+d(d+1)/L_{\delta_\pi}(x)^2
\right\}
>0,
\end{align*}
and the continuous extension at zero.
The events \(\mathcal E_t^{\dataseq}(\delta_\pi)\) are decreasing and
\(\mathcal H_t\)-measurable. Let
\(
W_t
=
V_t\mathbf 1_{\mathcal E_t^{\dataseq}(\delta_\pi)}
\).
Then \(W_t \in [0,B_\kappa]\). Since
\(
\mathcal E_{t+1}^{\dataseq}(\delta_\pi)
\subseteq
\mathcal E_t^{\dataseq}(\delta_\pi)
\),
the posterior-ignorability identities and
\Cref{eq:delta_variance_reduction_identity} imply, for any allowed non-UR input, that
\(
\EE[W_{t+1}\mid\mathcal J_t^+]\leq W_t
\)
on \(\{I_t^{\rm UR}=0\}\).
On \(\mathcal E_t^{\dataseq}(\delta_\pi)\), the ideal UR choice,
\Cref{eq:delta_variance_reduction_identity,lem:delta_lower_bound_by_weighted_variance}
give
\begin{equation*}
\EE[V_{t+1}\mid\mathcal J_t^+]
\le
V_t-c_{\Delta,\delta_\pi}g(V_t)
\end{equation*} on \(\{I_t^{\rm UR}=1\}.\)
Combining the two cases gives
\begin{equation}
\label{eq:W_t_drift_augmented}
\EE[W_{t+1}\mid\mathcal J_t^+]
\le
W_t-I_t^{\rm UR}c_{\Delta,\delta_\pi}g(W_t).
\end{equation}

Choose \(c_0\in(0,c_{\Delta,\delta_\pi}]\) so that
\(
c_0\sup_{x\in[0,B_\kappa]}g'(x)\le 1,
\)
and define
\(\phi(x):=x-c_0g(x)\), \(x\in [0,B_\kappa]\).
Then \(\phi\) is nondecreasing and concave on \([0,B_\kappa]\).
\Cref{eq:W_t_drift_augmented} implies that a non-UR step cannot increase
\(W_t\) in conditional expectation, while a UR step gives the conditional
drift \(W_t-c_0g(W_t)=\phi(W_t)\).

Define
\(
X_{2t}=X_{2t+1}:=W_t, \ t\ge0.
\)
By \Cref{eq:W_t_drift_augmented},
\(\{X_k,\mathcal G_k\}_{k\ge0}\) is a bounded nonnegative
supermartingale. At an odd index which is a UR decision,
\(
    \EE[X_{2t+2}\mid\mathcal G_{2t+1}]
\le
\phi(X_{2t+1}).
\)

Define the \(n\)th UR decision time and its completion time by
\begin{align*}
\zeta_n^{\rm UR}
&:=
\inf\left\{
t\ge0:
\sum_{s=0}^{t}I_s^{\rm UR}\ge n
\right\},
\qquad n\ge1,
\\
\rho_0^{\rm UR}
&:=0,
\qquad
\rho_n^{\rm UR}
:=
\inf\left\{
t\ge1:
\sum_{s=0}^{t-1}I_s^{\rm UR}\ge n
\right\},
\qquad n\ge1.
\end{align*}
Thus, on \(\{\zeta_n^{\rm UR}<\infty\}\),
\(\rho_n^{\rm UR}=\zeta_n^{\rm UR}+1\). Under the interlaced filtration, for $n \ge 0$,
\begin{equation*}
\mathsf S_n^{\rm UR}
:=
2\rho_n^{\rm UR},
\qquad
\mathsf U_{n+1}^{\rm UR}
:=
2\zeta_{n+1}^{\rm UR}+1,
\end{equation*}
are stopping times. On
\(\{\mathsf U_{n+1}^{\rm UR}<\infty\}\),
\(
\mathsf S_{n+1}^{\rm UR}
=
\mathsf U_{n+1}^{\rm UR}+1.
\)
Set
\begin{equation*}
       Z_n^{\rm UR}
:=
X_{\mathsf S_n^{\rm UR}}
\mathbf 1_{\{\mathsf S_n^{\rm UR}<\infty\}}
=
W_{\rho_n^{\rm UR}}
\mathbf 1_{\{\rho_n^{\rm UR}<\infty\}},
\end{equation*}
and
\(
\widetilde X_{\mathsf U_{n+1}^{\rm UR}}
:=
X_{\mathsf U_{n+1}^{\rm UR}}
\mathbf 1_{\{\mathsf U_{n+1}^{\rm UR}<\infty\}}.
\)
All these variables are set to zero on the corresponding infinite-time
events.

Because \(\{X_k\}\) is bounded, it is uniformly integrable and has an
\(L^1\) terminal value \(X_\infty\). Optional sampling for the possibly
infinite stopping times
\(\mathsf S_n^{\rm UR}\leq\mathsf U_{n+1}^{\rm UR}\) gives
\begin{equation*}
    \EE\!\left[
X_{\mathsf U_{n+1}^{\rm UR}}
\mathbf 1_{\{\mathsf U_{n+1}^{\rm UR}<\infty\}}
+
X_\infty
\mathbf 1_{\{\mathsf U_{n+1}^{\rm UR}=\infty\}}
\,\middle|\,
\mathcal G_{\mathsf S_n^{\rm UR}}
\right]
\leq
X_{\mathsf S_n^{\rm UR}}
\mathbf 1_{\{\mathsf S_n^{\rm UR}<\infty\}}
+
X_\infty
\mathbf 1_{\{\mathsf S_n^{\rm UR}=\infty\}}.
\end{equation*}
On \(\{\mathsf S_n^{\rm UR}<\infty\}\), drop the nonnegative terminal-value
term on the left. On its complement,
\(\mathsf U_{n+1}^{\rm UR}=\infty\), so both truncated variables below
vanish. Consequently,
\begin{equation*}
    \EE\!\left[
\widetilde X_{\mathsf U_{n+1}^{\rm UR}}
\,\middle|\,
\mathcal G_{\mathsf S_n^{\rm UR}}
\right]
\leq
Z_n^{\rm UR}.
\end{equation*}
At \(\mathsf U_{n+1}^{\rm UR}\), the UR drift and the identity
\(\mathsf S_{n+1}^{\rm UR}=\mathsf U_{n+1}^{\rm UR}+1\) give
\begin{equation*}
    \EE\!\left[
Z_{n+1}^{\rm UR}
\,\middle|\,
\mathcal G_{\mathsf S_n^{\rm UR}}
\right]
\leq
\EE\!\left[
\phi(\widetilde X_{\mathsf U_{n+1}^{\rm UR}})
\,\middle|\,
\mathcal G_{\mathsf S_n^{\rm UR}}
\right].
\end{equation*}
Since \(\phi\) is nondecreasing and concave,
\begin{equation*}
    \EE\!\left[
Z_{n+1}^{\rm UR}
\,\middle|\,
\mathcal G_{\mathsf S_n^{\rm UR}}
\right]
\leq
\phi\!\left(
\EE\!\left[
\widetilde X_{\mathsf U_{n+1}^{\rm UR}}
\,\middle|\,
\mathcal G_{\mathsf S_n^{\rm UR}}
\right]
\right)
\leq
\phi(Z_n^{\rm UR}).
\end{equation*}
Let
\(
z_n^{\rm UR}:=\EE[Z_n^{\rm UR}].
\)
Taking expectations and using the concavity of \(\phi\) gives
\begin{equation*}
z_{n+1}^{\rm UR}
\leq
\EE\{\phi(Z_n^{\rm UR})\}
\le
\phi(z_n^{\rm UR})
=
z_n^{\rm UR}-c_0g(z_n^{\rm UR}).
\end{equation*}
Under Mat\'ern-$\nu$ kernel, the standard deterministic recursion comparison
therefore gives
\begin{equation*}
z_n^{\rm UR}
\leq
C_{\delta_\pi}(n+1)^{-1/(p-1)}
=
\begin{cases}
C_{\delta_\pi}(n+1)^{-2\nu/(2\nu+d)},
&\text{for any fixed \(\varsigma^2\geq0\)},\\
C_{\delta_\pi}(n+1)^{-2\nu/d},
&\text{for the noiseless refinement}.
\end{cases}
\end{equation*}
Under SE kernel, the claim is immediate if some
\(z_n^{\rm UR}=0\). Otherwise, write
\(
\ell_n=L_{\delta_\pi}(z_n^{\rm UR})
\).
After reducing \(c_0\), if necessary, so that
\(c_0\max\{1,B_\kappa\}<1\), consider first the general case. The recursion
then implies
\begin{equation*}
z_{n+1}^{\rm UR}
\le
z_n^{\rm UR}
\left\{1-c_0z_n^{\rm UR}\ell_n^{-d}\right\}.
\end{equation*}
Define
\(
h(x):=L_{\delta_\pi}(x)^d/x
\)
for \(x\in(0,B_\kappa]\). Since \(h\) is decreasing and
\(L_{\delta_\pi}(z_{n+1}^{\rm UR})\geq\ell_n\),
\begin{equation*}
h(z_{n+1}^{\rm UR})\geq
\frac{\ell_n^d}
{z_n^{\rm UR}\{1-c_0z_n^{\rm UR}\ell_n^{-d}\}}\geq
h(z_n^{\rm UR})+c_0,
\end{equation*}
where the last step uses \((1-u)^{-1}\geq1+u\).
Thus \(h(z_n^{\rm UR})\geq h(z_0^{\rm UR})+c_0n\). Moreover, let
\(
a_n
:=
C_{\delta_\pi}\{1+\log(n+1)\}^d/(n+1).
\)
For all sufficiently large \(n\), \(a_n\leq B_\kappa\), and
\begin{align*}
L_{\delta_\pi}(a_n)
&=
1+\log\!\left(
\frac{C_{L,\delta_\pi}(n+1)}
{C_{\delta_\pi}\{1+\log(n+1)\}^d}
\right)
\leq
C\{1+\log(n+1)\}.
\end{align*}
Consequently,
\begin{equation*}
    h(a_n)
= L_{\delta_\pi}(a_n)^d/a_n
\leq C_{\delta_\pi}^{-1}C^d(n+1)
\leq c_0n,
\end{equation*}
for all sufficiently large \(n\), after enlarging
\(C_{\delta_\pi}\) if necessary.
Monotonicity of \(h\), followed by an enlargement of
\(C_{\delta_\pi}\) to cover the remaining \(n\), therefore gives
\begin{equation*}
z_n^{\rm UR}
\leq
C_{\delta_\pi}
\frac{\{1+\log(n+1)\}^d}{n+1}.
\end{equation*}

For the noiseless exponential case, the recursion instead gives
\(
z_{n+1}^{\rm UR}
\leq
z_n^{\rm UR}(1-c_0\ell_n^{-d}),
\)
and hence
\begin{equation*}
    \ell_{n+1}
\ge
\ell_n-\log(1-c_0\ell_n^{-d})
\ge
\ell_n+c_0\ell_n^{-d}.
\end{equation*}
Therefore
\(
\ell_{n+1}^{d+1}
\ge
\ell_n^{d+1}+(d+1)c_0,
\)
so
\begin{equation*}
z_n^{\rm UR}
\le
C_{\delta_\pi}
\exp\{-c_{\delta_\pi}(n+1)^{1/(d+1)}\}.
\end{equation*}
For the remainder of the proof, write the applicable bound as
\begin{equation*}
r_{\delta_\pi}(n)
:=
\begin{cases}
C_{\delta_\pi}\{\mathscr O^{+}(n)\}^2,
&\text{for the general bound},\\
C_{\delta_\pi}\{\mathscr O^{0}_{c_{\delta_\pi}/2}(n)\}^2,
&\text{for the noiseless refinement}.
\end{cases}
\end{equation*}

Now fix \(0\leq n_T\leq T\), and let
\(
A_T
:=
\{N_T^{\rm UR}\geq n_T\}
=
\{\rho_{n_T}^{\rm UR}\leq T\}
=
\{\mathsf S_{n_T}^{\rm UR}\leq2T\}.
\)
Set
\(
\mathsf S'_T
:=
\mathsf S_{n_T}^{\rm UR}\wedge2T.
\)
Then \(A_T\in\mathcal G_{\mathsf S'_T}\), so bounded optional sampling gives
\begin{align*}
\EE[W_T\mathbf 1_{A_T}]
&=
\EE[X_{2T}\mathbf 1_{A_T}]
\leq
\EE[X_{\mathsf S'_T}\mathbf 1_{A_T}]
\\
&=
\EE[
X_{\mathsf S_{n_T}^{\rm UR}}\mathbf 1_{A_T}
]
\leq
\EE[Z_{n_T}^{\rm UR}]
\leq
r_{\delta_\pi}(n_T).
\end{align*}
On \(A_T^c\), \(W_T\le B_\kappa\). Hence
\(
    \EE[W_T]
\le
r_{\delta_\pi}(n_T)
+
B_\kappa\Pr\left(N_T^{\rm UR}<n_T\right),
\)
which proves the claim.
\end{proof}
\subsubsection{From Weighted Variance to TV Calibration}

\begin{lemma}[Generalisation of the second assertion of \Cref{prop:ur_calibration_link_main}]
\label{lem:tv_from_weighted_variance}
Under the inherited ideal Bayesian GP setup, for any
\(\mathcal H_t\)-measurable event \(\mathcal E\),
\begin{equation*}
    \EE\!\left\{\TV(\bar\pi_t,\pi_\dataseq)
\mathbf 1_{\mathcal E}
\right\}
\le
\left[
\frac12\EE\{V_t\mathbf 1_{\mathcal E}\}
\right]^{1/2}.
\end{equation*}
Consequently,
\(
\EE\!\left\{\TV(\bar\pi_t,\pi_\dataseq)
\right\}
\le
\left[\EE\{V_t\}/2
\right]^{1/2}.
\)
\end{lemma}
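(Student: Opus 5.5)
Since $\bar\pi_t$ has Lebesgue density $\bar p_t(\para)=\EE_t\{p_\dataseq(\para)\}$, the plan is to write the TV distance as an $L^1$ distance and control it pointwise by conditional variances. First I write
\begin{equation*}
\TV(\bar\pi_t,\pi_\dataseq)=\frac12\int_\Omega\left|p_\dataseq(\para)-\bar p_t(\para)\right|\lambda(\dee\para).
\end{equation*}
Because $\mathcal E\in\mathcal H_t$ and $\bar p_t$ is $\mathcal H_t$-measurable, the tower property and Tonelli give
\begin{equation*}
\EE\{\TV\,\mathbf 1_{\mathcal E}\}=\frac12\EE\!\left[\mathbf 1_{\mathcal E}\int_\Omega\EE_t\left|p_\dataseq(\para)-\bar p_t(\para)\right|\lambda(\dee\para)\right].
\end{equation*}
The key is therefore a bound on $\EE_t|p_\dataseq(\para)-\EE_t p_\dataseq(\para)|$ in terms of $\sigma_t^2(\para)$ with the correct $\bar p_t$ weighting.

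The pointwise step uses the multiplicative form $p_\dataseq(\para)=\bar p_t(\para)\,R(\para)$, where $R(\para):=p_\dataseq(\para)/\bar p_t(\para)$ has $\EE_t R(\para)=1$. Then
\begin{equation*}
\EE_t|p_\dataseq-\bar p_t|=\bar p_t(\para)\,\EE_t|R(\para)-1|.
\end{equation*}
Integrating against $\lambda$ turns this into an integral against $\bar\pi_t$, namely $\int\EE_t|R-1|\,\dee\bar\pi_t$. I then need $\EE_t|R(\para)-1|\le\sigma_t(\para)$ in some averaged sense, after which Cauchy--Schwarz/Jensen under $\bar\pi_t$ gives
\begin{equation*}
\frac12\int\sigma_t\,\dee\bar\pi_t\le\tfrac12 V_t^{1/2}.
\end{equation*}
This constant is off by $\sqrt2$ from the claimed one. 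To obtain the sharp constant I would instead work with KL and Pinsker. Conditionally,
\begin{equation*}
\TV(\bar\pi_t,\pi_\dataseq)\le\{\operatorname{KL}(\pi_\dataseq\|\bar\pi_t)/2\}^{1/2},
\end{equation*}
so by Jensen it suffices to show $\EE_t\operatorname{KL}(\pi_\dataseq\|\bar\pi_t)\le V_t$. Then
\begin{equation*}
\EE\{\TV\,\mathbf 1_{\mathcal E}\}\le\EE[\mathbf 1_{\mathcal E}(\EE_t\operatorname{KL}/2)^{1/2}]\le[\EE\{\mathbf 1_{\mathcal E}V_t\}/2]^{1/2},
\end{equation*}
where the last step is Jensen for the concave square root, using $\mathbf 1_{\mathcal E}^2=\mathbf 1_{\mathcal E}$.

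The main obstacle is the bound $\EE_t\operatorname{KL}(\pi_\dataseq\|\bar\pi_t)\le V_t$. By \Cref{prop:pi1_kl_optimal_main}, this expected KL is at most that of any $\mathcal H_t$-measurable competitor, so I would try the competitor $\pi(\cdot\mid\mu_t)$ or a variance-corrected plug-in. Take $\tilde\pi\propto\exp\{\mu_t+\sigma_t^2/2\}$. Then
\begin{equation*}
\operatorname{KL}(\pi_\dataseq\|\tilde\pi)=\int p_\dataseq\,(f-\mu_t-\sigma_t^2/2)+\log\mathcal Z(\mu_t+\sigma_t^2/2)-\log\mathcal Z(f).
\end{equation*}
Its conditional expectation involves $\EE_t\int p_\dataseq G_t$, which is a size-biased Gaussian quantity. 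I would evaluate the cross term by Gaussian integration by parts, $\EE_t[G_t(\para)\,p_\dataseq(\para)]=\EE_t[\int\kappa_t(\para,\bm u)\,\partial p_\dataseq(\para)/\partial f(\bm u)]$. Combining this with $\EE_t\log\mathcal Z(f)\ge$ (a Jensen lower bound), I expect the terms to telescope to at most $\int\sigma_t^2\,\dee\bar\pi_t=V_t$.

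If that telescoping fails to give a constant exactly $1$, I would fall back on a direct chi-square-type argument. This takes the pointwise $L^1$ route with $\EE_t|R-1|\le(\EE_t(R-1)^2)^{1/2}$ and bounds $\operatorname{Var}_t R$ through the Gaussian Poincaré inequality on path space. That route again involves $\kappa_t$, weighted by $\bar\pi_t$. The final consequence follows by taking $\mathcal E=\Omega$.
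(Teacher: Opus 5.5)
Your reduction is sound: Pinsker, conditional Jensen, and the last step using $\mathbf 1_{\mathcal E}\in\mathcal H_t$ correctly reduce the lemma to $\EE_t\mathrm{KL}(\pi_{\dataseq}\|\bar\pi_t)\le V_t$. That inequality, which you yourself flag as the main obstacle, is never proved, and the competitor you commit to cannot prove it. Write $G:=f-\mu_t$, which conditionally has the law of $G_t$. Since $\mathcal Z(\mu_t+\sigma_t^2/2)=\EE_t\mathcal Z(f)$,
\begin{equation*}
\EE_t\mathrm{KL}(\pi_{\dataseq}\|\tilde\pi)=\EE_t\int_\Omega G\,\dee\pi_{\dataseq}-\tfrac12 V_t+\bigl\{\log\EE_t\mathcal Z(f)-\EE_t\log\mathcal Z(f)\bigr\}.
\end{equation*}
The nonnegative Jensen gap in braces must be bounded above, and nothing in your plan does this. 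The Jensen bound $\EE_t\log\mathcal Z(f)\ge\log\mathcal Z(\mu_t)$ only turns it into $\log\int_\Omega e^{\sigma_t^2/2}\,\dee\pi(\cdot\mid\mu_t)$, which is weighted by $\pi(\cdot\mid\mu_t)$ and exponential in $\sigma_t^2$. The failure is genuine: $\tilde\pi$ weights by $e^{\sigma_t^2/2}$, so a tiny region with large $\sigma_t^2$ that $\bar\pi_t$ barely charges can absorb almost all of $\tilde\pi$'s mass. For instance, take $\lambda(\Omega)=1$, $\mu_t\equiv0$, $G\approx\xi\mathbf 1_A$ (a smoothed bump), $\xi\sim\mathcal N(0,s^2)$ and $\lambda(A)=e^{-2s}$. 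Then $V_t\approx s^2\{1-\Phi(2)\}$, with $\Phi$ the standard normal distribution function. But $\tilde\pi(\Omega\setminus A)\approx e^{2s-s^2/2}$, and $\pi_{\dataseq}(\Omega\setminus A)\approx1$ with probability about $\Phi(2)$. So $\EE_t\mathrm{KL}(\pi_{\dataseq}\|\tilde\pi)$ is of order $s^2/2\gg V_t$ for large $s$. Your fallbacks do not close the gap either. The $L^1$ route rests on an unproved bound $\EE_t|R-1|\lesssim\sigma_t$; this is false pointwise where $\sigma_t=0$, since $R$ still fluctuates through $\mathcal Z(f)$. A Poincar\'e bound on $\operatorname{Var}_t p_{\dataseq}(\para)$ produces second moments $\EE_t\{p_{\dataseq}(\para)^2\cdots\}$, not the $\bar p_t$-weighting that defines $V_t$.

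The fix is the other competitor you named. We have $\log\mathcal Z(\mu_t+G)-\log\mathcal Z(\mu_t)=\log\int_\Omega e^{G}\,\dee\pi(\cdot\mid\mu_t)\ge\int_\Omega G\,\dee\pi(\cdot\mid\mu_t)$, and the right-hand side has conditional mean zero. Hence
\begin{equation*}
\EE_t\mathrm{KL}\{\pi_{\dataseq}\|\pi(\cdot\mid\mu_t)\}=\EE_t\int_\Omega G\,\dee\pi_{\dataseq}-\EE_t\bigl\{\log\mathcal Z(\mu_t+G)-\log\mathcal Z(\mu_t)\bigr\}\le\EE_t\int_\Omega G\,\dee\pi_{\dataseq}.
\end{equation*}
Gaussian integration by parts, made rigorous through finite partitions and dominated convergence, then gives $\EE_t\int G\,\dee\pi_{\dataseq}=\EE_t\{\int\sigma_t^2\,\dee\pi_{\dataseq}-\iint\kappa_t\,\dee\pi_{\dataseq}\,\dee\pi_{\dataseq}\}\le V_t$. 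This uses positive semidefiniteness of $\kappa_t$ and $\EE_t\pi_{\dataseq}=\bar\pi_t$. Combined with \Cref{prop:pi1_kl_optimal_main}, this completes your argument. As a by-product it gives the same $(V_t/2)^{1/2}$ bound for the plug-in law $\pi_{\mu_t}$. The paper needs neither \Cref{prop:pi1_kl_optimal_main} nor the Jensen step on $\log\mathcal Z$. It uses convexity of TV to replace $\bar\pi_t=\EE_t\pi(\cdot\mid g)$ by an independent copy $g=\mu_t+G'$. In $\EE_t\mathrm{KL}\{\pi(\cdot\mid f)\|\pi(\cdot\mid g)\}$ the log-normalisers then cancel exactly, because $G$ and $G'$ share a conditional law, and the $G'$ term vanishes. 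What remains is the same quantity $\EE_t\int G\,\dee\pi_{\dataseq}$ and the same Stein computation.
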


\begin{proof}
Condition on \(\mathcal H_t\). Let
\(
f=\mu_t+G,
\)
\(
g=\mu_t+G',
\)
where \(G\) and \(G'\) are independent centred posterior GP residuals with
common covariance \(\kappa_t\). Using
\Cref{def: expected posterior}, write
\(
\pi_f:=\pi(\cdot\mid f)
\)
and
\(
\pi_g:=\pi(\cdot\mid g).
\)
Conditionally on \(\mathcal H_t\), \(\pi_\dataseq\) has the same law as
\(\pi_f\), while
\(
\bar\pi_t
=
\EE_t[\pi_g].
\)
By convexity of TV, Pinsker's inequality, and Jensen's inequality,
\begin{equation*}
    \EE_t\!\left\{\TV(\bar\pi_t,\pi_f)
\right\}
\le
\EE_t\!\left\{
\TV(\pi_g,\pi_f)
\right\}
\le
\left[
\frac12
\EE_t\!\left\{
\mathrm{KL}(\pi_f\|\pi_g)
\right\}
\right]^{1/2}.
\end{equation*}

For the KL term,
\begin{equation*}
\mathrm{KL}(\pi_f\|\pi_g)
=
\int_\Omega \{f(\para)-g(\para)\}\,\pi_f(\dee\para)
+
\log\mathcal Z(g)-\log\mathcal Z(f).
\end{equation*}
Taking \(\EE_t\) over the conditionally independent pair \((G,G')\), the
normalising-constant terms cancel because \(G\) and \(G'\) have the same
conditional law. The term involving \(G'\) vanishes because \(G'\) is
conditionally independent of \(G\) and \(\EE_t[G'(\para)]=0\). Thus,
\begin{equation}\label{eq:E_t_KL_fg}
    \EE_t\!\left[
\mathrm{KL}(\pi_f\|\pi_g)
\right]
=
\EE_t\!\left\{
\int_\Omega
G(\para)\,
\pi(\dee\para\mid\mu_t+G)
\right\},
\end{equation}
where, \(\pi(\dee\para\mid\mu_t+G)\) is defined according to \Cref{def: expected posterior}.

It remains to bound \Cref{eq:E_t_KL_fg}. Define the log-partition
functional
\begin{equation*}
    \Psi(h)
    =
    \log\int_\Omega \exp\{\mu_t(\para)+h(\para)\}\lambda(\dee\para),
    \qquad h\in C(\Omega).
\end{equation*}
For a deterministic direction \(a\in C(\Omega)\), its first
directional derivative is
\begin{equation*}
    D\Psi_h[a]
    =
    \int_\Omega a(\para)\pi(\dee\para\mid \mu_t+h).
\end{equation*}
Hence
\begin{equation*}
    \int_\Omega G(\para)\pi(\dee\para\mid \mu_t+G)
    =
    D\Psi_G[G].
\end{equation*}

Let
\(\mathcal Q_n=\{C_{n,1},\ldots,C_{n,N_n}\}\) be finite Borel partitions of \(\Omega\) with
\(
\max_{1\le i\le N_n}\operatorname{diam}(C_{n,i})\to0.
\) Choose
\(\bm \xi_{n,i}\in C_{n,i}\), and set
\(
    a_{n,i}=\lambda(C_{n,i})\), 
    \(
     Z_{n,i}=G(\bm \xi_{n,i})\), \(
    \mu_{n,i}=\mu_t(\bm\xi_{n,i})\), \(
    \Sigma_{n,ij}=\kappa_t(\bm\xi_{n,i},\bm\xi_{n,j}).
\)
Then \(\bm Z_n=(Z_{n,1},\ldots,Z_{n,N_n})^\top\) is conditionally Gaussian
given \(\mathcal H_t\), with mean zero and covariance matrix \(\Sigma_n\).
Define the finite-dimensional log-partition function
\begin{equation*}
     \Psi_n(\bm x)
    =
    \log\left\{
        \sum_{i=1}^{N_n}
        a_{n,i}\exp(\mu_{n,i}+x_i)
    \right\},
    \qquad \bm x = (x_1, \dots, x_{N_n}) \in\mathbb R^{N_n},
\end{equation*}
and its corresponding softmax weights
\begin{equation*}
    w_{n,i}(\bm x)
    =
        a_{n,i}\exp(\mu_{n,i}+x_i)/
        \sum_{\ell=1}^{N_n}
        a_{n,\ell}\exp(\mu_{n,\ell}+x_\ell).
\end{equation*}
Then
\(
    \partial_i\Psi_n(\bm x)=w_{n,i}(\bm x),
\)
and
\(
    \partial_{ij}\Psi_n(\bm x)
    =
    w_{n,i}(\bm x)\{\mathbf 1(i=j)-w_{n,j}(\bm x)\}.
\)
The finite-dimensional Gaussian integration-by-parts identity gives
\begin{equation*}
    \EE_t\left\{
        \sum_{i=1}^{N_n}Z_{n,i}\partial_i\Psi_n(\bm Z_n)
    \right\}
    =
   \EE_t\left\{
        \sum_{i,j=1}^{N_n}
        \Sigma_{n,ij}\partial_{ij}\Psi_n(\bm Z_n)
    \right\}.
\end{equation*}
Equivalently,
\[\label{eq:Et-Zn-wni}   
\EE_t\left\{
        \sum_{i=1}^{N_n}Z_{n,i}w_{n,i}(\bm Z_n)
    \right\}
    &=
   \EE_t\left\{
        \sum_{i=1}^{N_n}
        \kappa_t(\bm\xi_{n,i},\bm\xi_{n,i})w_{n,i}(\bm Z_n)
    \right\} \\
    &\quad -
   \EE_t\left\{
        \sum_{i,j=1}^{N_n}
        \kappa_t(\bm\xi_{n,i},\bm\xi_{n,j})
        w_{n,i}(\bm Z_n)w_{n,j}(\bm Z_n)
    \right\}.
\]
This is just Stein's identity in finite dimensions:
\(\EE\{X^\top\nabla\Psi_n(X)\}
=
\EE[\operatorname{tr}\{\Sigma_n\nabla^2\Psi_n(X)\}]\).

Let \(G_n\) and \(\mu_{t,n}\) be the piecewise-constant
approximations of \(G\) and \(\mu_t\) that take values
\(G(\bm\xi_{n,i})\) and \(\mu_t(\bm\xi_{n,i})\) on \(C_{n,i}\),
respectively. By continuity of \(G\), \(\mu_t\), and \(\kappa_t\) on
compact \(\Omega\), \(G_n\to G\) and \(\mu_{t,n}\to\mu_t\) uniformly
almost surely, and the corresponding Gibbs laws converge to
\(\pi(\cdot\mid\mu_t+G)\). Moreover, the left-hand side of
\Cref{eq:Et-Zn-wni} is dominated by \(\sup_{\para\in\Omega}|G(\para)|\), which has finite conditional first moment by posterior covariance contraction and
\Cref{lem:conditional_gp_supremum}; the covariance terms are bounded by
\(B_\kappa\). Dominated convergence therefore yields
\*[   
\EE_t\{D\Psi_G[G]\}
    &=
   \EE_t\left\{
        \int_\Omega
        \kappa_t(\para,\para)
        \pi(\dee\para\mid\mu_t+G)
    \right\} \\
    &\quad -
   \EE_t\left\{
        \iint_{\Omega^2}
        \kappa_t(\para,\bm u)
        \pi(\dee\para\mid\mu_t+G)
        \pi(\dee \bm u\mid\mu_t+G)
    \right\}.
\]
The second term is nonnegative by positive definiteness of
\(\kappa_t\). Therefore
\*[
   \EE_t\{\mathrm{KL}(\pi_f\|\pi_g)\} &= \EE_t\{D\Psi_G[G]\} \\
    &\le
   \EE_t\left\{
        \int_\Omega
        \kappa_t(\para,\para)
        \pi(\dee\para\mid\mu_t+G)
    \right\} = \int_\Omega \sigma_t^2(\para)\bar\pi_t(\dee\para)
    =
    V_t.
\]
Thus
\(
\EE_t\!\left\{
\TV(\bar\pi_t,\pi_\dataseq)
\right\}
\le (V_t/2)^{1/2}.
\)
Since \(\mathcal E\in\mathcal H_t\), multiplying by
\(\mathbf 1_{\mathcal E}\), taking expectations, and applying
Cauchy-Schwarz yields
\begin{equation*}
    \EE\!\left\{\TV(\bar\pi_t,\pi_\dataseq)
\mathbf 1_{\mathcal E}
\right\}
\le
\EE\!\left\{
\left(V_t/2\right)^{1/2}
\mathbf 1_{\mathcal E}
\right\}
\le
\left\{
\frac12\EE[V_t\mathbf 1_{\mathcal E}]
\right\}^{1/2}.
\end{equation*}
Taking \(\mathcal E\) to be the whole sample space gives the second claim.
\end{proof}

\begin{proof}[\Cref{thm:sale_main_calibration}]
Fix \(\delta,q\in(0,1)\), set
\(
\delta_\pi=\delta/2,\)
\(\mathcal E_T
=
\mathcal E_T^{\dataseq}(\delta_\pi),
\)
\(n_T=\lfloor qT\rfloor .
\)
By \Cref{lem:posterior_density_domination,lem:tv_from_weighted_variance},
\begin{equation*}
\EE\left\{\TV(\bar\pi_T,\pi_\dataseq)\right\}
\le
\left\{
\frac12
\EE[V_T\mathbf 1_{\mathcal E_T}]
\right\}^{1/2}
+
\delta_\pi.
\end{equation*}
Since
\(
n_T+1\ge qT\ge q(T+1)/2
\)
and
\(
\Pr(N_T^{\rm UR}<n_T)
\le
\Pr(N_T^{\rm UR}<qT),
\)
applying \Cref{lem:weighted_variance_rate_by_ur_count} and
\((x+y)^{1/2}\le x^{1/2}+y^{1/2}\) gives
\begin{equation*}
\EE\left\{\TV(\bar\pi_T,\pi_\dataseq)\right\}
\le
C_{\delta,q}\mathscr O^{+}(T)
+
C_{\delta,q}
\Pr(N_T^{\rm UR}<qT)^{1/2}
+
\delta.
\end{equation*}
For the SE branch, this also uses
\(
1+\log(n_T+1)\leq1+\log(T+1)
\).

If \(\varsigma^2=0\), the noiseless bound in
\Cref{lem:weighted_variance_rate_by_ur_count} similarly gives
\begin{equation*}
\EE\left\{\TV(\bar\pi_T,\pi_\dataseq)\right\}
\le
C_{\delta,q}\mathscr O^{0}_{c_{\delta,q}}(T)
+
C_{\delta,q}
\Pr(N_T^{\rm UR}<qT)^{1/2}
+
\delta.
\end{equation*}
In the SE case, the fixed rescaling
\(n_T+1\geq q(T+1)/2\) and the outer square root are absorbed into
\(c_{\delta,q}\). Together with the previous general case bound, this proves
\Cref{eq:tv_bound_by_ur_count}.

For the final assertion, let
\(
L_{\rm UR}
:=
\liminf_{T\to\infty}N_T^{\rm UR}/T
>0
\) a.s.
Fix \(\delta\in(0,1)\), and set
\(
\eta_\delta
:=
\min\left\{1/4,\delta^2/(16B_\kappa)\right\}.
\)
Since \(L_{\rm UR}>0\) almost surely, there exists a deterministic
\(q_\delta\in(0,1/2)\) such that
\(
    \Pr(L_{\rm UR}\leq2q_\delta)\leq\eta_\delta.
\)
On \(\{L_{\rm UR}>2q_\delta\}\),
\(\mathbf 1\{N_T^{\rm UR}<q_\delta T\}\to0\). Dominated convergence
therefore gives
\begin{equation*}
    \limsup_{T\to\infty}
\Pr(N_T^{\rm UR}<q_\delta T)
\leq
\Pr(L_{\rm UR}\leq2q_\delta)
\leq
\eta_\delta.
\end{equation*}
Hence, for all sufficiently large \(T\),
\(
\Pr(N_T^{\rm UR}<q_\delta T)\leq2\eta_\delta.
\)

Repeat the preceding variance-to-TV argument with posterior-density event
level \(\delta/2\), \(q=q_\delta\), and
\(n_T=\lfloor q_\delta T\rfloor\). Before absorbing the count-probability
term into \(C_{\delta,q}\), it gives
\begin{align*}
\EE\left\{\TV(\bar\pi_T,\pi_\dataseq)\right\}
&\leq
C_{\delta,q_\delta}\mathscr O^{+}(T)
+
\left\{
\frac{B_\kappa}{2}
\Pr(N_T^{\rm UR}<q_\delta T)
\right\}^{1/2}
+
\frac{\delta}{2},
\\
\EE\left\{\TV(\bar\pi_T,\pi_\dataseq)\right\}
&\leq
C_{\delta,q_\delta}\mathscr O^{0}_{c_{\delta,q_\delta}}(T)
+
\left\{
\frac{B_\kappa}{2}
\Pr(N_T^{\rm UR}<q_\delta T)
\right\}^{1/2}
+
\frac{\delta}{2},
\qquad \varsigma^2=0.
\end{align*}
The definition of \(\eta_\delta\) makes each displayed probability term at
most \(\delta/4\) for sufficiently large \(T\). Enlarging the constants
and the deterministic threshold proves
\Cref{eq:tv_bound}.
\end{proof}

\subsection[Proofs of the Expected-Posterior Martingale and UR-Activation Results]{Proofs of \Cref{prop:barpi_martingale_main,cor:ur_activation_main}}
\label{appdx:ur_activation}

This section proves the martingale property of the expected posterior and
controls the BO--UR allocation proxy.

\subsubsection{Martingale Property of the Expected Posterior}

\begin{proof}[\Cref{prop:barpi_martingale_main}] For bounded measurable \(h\), \(M_t(h)\) is bounded by \(\|h\|_\infty\). Since
\(\{\mathcal H_t\}_{t\ge0}\) is increasing, the tower property gives
\(
\EE\{M_{t+1}(h) \mid \mathcal H_t\}=M_t(h),
\)
so \(M_t(h)\) is a bounded martingale. Hence \(M_t(h)\) converges almost surely
and in \(L^1\).

Let \(\{h_m\}_{m\ge1}\) be a countable convergence-determining class in
\(C(\Omega)\). On an event of probability one, \(M_t(h_m)\) converges for every
\(m\). Equip \(\mathcal P(\Omega)\) with the topology of weak convergence.
Since \(\Omega\) is compact, \(\mathcal P(\Omega)\) is compact under weak
convergence. Hence, every subsequence of \(\{\bar\pi_t\}\) has a weakly convergent further
subsequence. All subsequential limits agree on the class \(\{h_m\}\), hence
coincide. Thus \(\bar\pi_t\Rightarrow\bar\pi_\infty\) almost surely.
\end{proof}

\subsubsection{Allocation Proxy Control}

\begin{lemma}
\label{lem:allocation_proxy_control}
Suppose \Cref{ass:delta_state_proxy_smoothness} holds and
\(\mathcal D_0\) contains at least \(k+1\) distinct input locations. Let
\(
D_\Omega
:=
\sup_{\para,\para'\in\Omega}
\|\para-\para'\|,\)
\(S_0:=\sup_{\para\in\Omega}|f(\para)|\),
\(S_2:=\sup_{\para\in\Omega}
\|\nabla^2f(\para)\|_{\rm op},
\)
and define
\(
Z_j:=\sup_{t\geq0}\EE(S_j\mid\mathcal H_t)\), \(j\in \{0,2\}\),
\(
C_\star:=C_{\mathcal M}(1+Z_2),
\)
\(M_\star:=2Z_0.
\)
With \(c_a:=\log(3)/(2a)\), set
\(\rho_\star
:=
\tanh(c_aD_\Omega \sqrt{C_\star}),
\)
\(Q_\star
:=
(1-\rho_\star)
\exp(-2\rho_\star M_\star).
\)
Then
\begin{equation*}
    \sup_{t\geq0}\widehat p_{t,k}
\leq
\rho_\star
<1
\qquad\text{a.s.},
\end{equation*}
and, for every \(m>0\),
\(
Q_\star>0\) a.s.,
\(\EE(Q_\star^{-m})<\infty.
\)
If \(B_t\) is generated from a fresh
\(\operatorname{Bernoulli}(\widehat p_{t,k})\) draw conditionally on the
pre-branch history, then
\begin{equation*}
    \liminf_{T\to\infty}N_T^{\rm UR}/T
\geq
1-\rho_\star
>0
\qquad\text{a.s.}
\end{equation*}
\end{lemma}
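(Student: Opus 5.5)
The plan is to bound $\widehat p_{t,k}$ pathwise by a quantity built only from conditional expectations of $S_0$ and $S_2$. Integrability then comes from Doob's inequalities for the corresponding martingales, and the allocation claim from a martingale strong law.

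\emph{Step 1 (deterministic bound on the proxy).} Since $n_0\ge k+1$ with distinct inputs, $\Delta_{t,k}^{\rm loc}$ is a mean of $k$ positive distances $d_{t,i}$. Each distance satisfies
\[
d_{t,i}\le \|\para_i-\para_t^\dagger\|\,\|\mathbf C_t\|_{\rm op}^{1/2}\le D_\Omega \|\mathbf C_t\|_{\rm op}^{1/2},
\]
so $\Delta_{t,k}^{\rm loc}\le D_\Omega\|\mathbf C_t\|_{\rm op}^{1/2}$. By \Cref{ass:delta_state_proxy_smoothness}, $\|\mathbf C_t\|_{\rm op}\le C_{\mathcal M}\{1+\|\mathbf H_t\|_{\rm op}\}$, where the symmetrisation does not increase the operator norm. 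Because $\partial_r\partial_s\mu_t=\EE_t\{D_{rs}f\}$, we have $\mathbf H_t=\EE_t\{\nabla^2 f(\para_t^\dagger)\}$, using that $\para_t^\dagger$ is $\mathcal H_t$-measurable. Convexity of the operator norm and Jensen's inequality then give $\|\mathbf H_t\|_{\rm op}\le \EE_t S_2\le Z_2$. Hence $\|\mathbf C_t\|_{\rm op}\le C_\star$. Since $\tanh$ is increasing and $C/a=c_a$, this yields $\widehat p_{t,k}\le\rho_\star$ for all $t$ simultaneously. It remains to show $Z_2<\infty$ a.s., which gives $\rho_\star<1$.

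\emph{Step 2 (finiteness and moments).} Applying \Cref{lem:conditional_gp_supremum} to the prior processes $f$ and $D_{rs}f$ gives $\EE\{\exp(aS_0)\}<\infty$ and $\EE\{\exp(aS_2)\}<\infty$ for every $a>0$, using $S_2\le\sum_{r,s}\sup|D_{rs}f|$ and Hölder's inequality over the $d^2$ entries. By Jensen, $\exp(a\EE_tS_j)\le\EE_t\exp(aS_j)$, so $\exp(aZ_j)\le\sup_t\EE_t\{\exp(aS_j)\}$. Doob's $L^2$ maximal inequality then gives
\[
\EE\{\exp(2aZ_j)\}\le 4\,\EE\{\exp(2aS_j)\}<\infty .
\]
In particular $Z_0,Z_2<\infty$ a.s., and so $Q_\star>0$ a.s.

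For the negative moments, use $1-\tanh x=2/(e^{2x}+1)\ge e^{-2x}$ and $\sqrt{C_\star}\le 1+C_\star$ to get
\[
Q_\star^{-m}\le \exp\{2mc_aD_\Omega(1+C_{\mathcal M}(1+Z_2))\}\exp(4mZ_0).
\]
Cauchy--Schwarz together with the exponential moments of $Z_0$ and $Z_2$ then gives $\EE(Q_\star^{-m})<\infty$.

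\emph{Step 3 (UR fraction).} Set $\xi_t:=I_t^{\rm UR}-(1-\widehat p_{t,k})$. Because $B_t$ is a fresh Bernoulli draw given the pre-branch history, $\{\xi_t\}$ is a bounded martingale difference sequence. A martingale strong law, for example Azuma--Hoeffding with Borel--Cantelli, or the $L^2$ martingale convergence of $\sum_t\xi_t/(t+1)$ followed by Kronecker's lemma, gives $T^{-1}\sum_{t<T}\xi_t\to0$ a.s. Combined with Step 1,
\[
\liminf_T N_T^{\rm UR}/T\ge\liminf_T T^{-1}\sum_{t<T}(1-\widehat p_{t,k})\ge 1-\rho_\star>0 .
\]
The randomness of $\rho_\star$ causes no difficulty, since the bound from Step 1 holds pathwise and uniformly in $t$.

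The main obstacle is Step 2. The supremum over all $t$ of the conditional expectations must have exponential moments, not merely be finite, because $Q_\star^{-m}$ depends exponentially on $Z_0$ and $Z_2$. I would handle this by passing the exponential inside via Jensen before applying Doob's maximal inequality.
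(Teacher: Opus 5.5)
Your proposal is correct and follows essentially the same argument as the paper. The shared steps are:
\begin{itemize}
\item the pathwise bound $\Delta_{t,k}^{\rm loc}\le D_\Omega\|\mathbf C_t\|_{\rm op}^{1/2}$, with $\|\mathbf C_t\|_{\rm op}\le C_{\mathcal M}\{1+\EE(S_2\mid\mathcal H_t)\}\le C_\star$;
\item exponential moments of $Z_0$ and $Z_2$ from conditional Jensen plus Doob's maximal inequality;
\item the estimate $(1-\tanh x)^{-1}\le e^{2x}$ combined with H\"older or Cauchy--Schwarz for $\EE(Q_\star^{-m})$;
\item a martingale strong law for $B_t-\widehat p_{t,k}$.
\end{itemize}
The differences are cosmetic: you use Doob's inequality in $L^2$ rather than general $L^r$, and you bound $\|\mathbf H_t\|_{\rm op}$ at the incumbent rather than uniformly over $\Omega$.
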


\begin{proof}
Let \(\para_{0,1},\ldots,\para_{0,k+1}\) be distinct locations in \(\mathcal D_0\). At most one can equal the incumbent \(\para_t^\dagger\). Hence at least \(k\) have positive curvature-adjusted distance from \(\para_t^\dagger\), and each such distance is at most
\(
D_\Omega\sqrt{\|\mathbf C_t\|_{\rm op}}.
\)
Because \(\Delta_{t,k}^{\rm loc}\) averages the \(k\) smallest positive
distances,
\(
\Delta_{t,k}^{\rm loc}
\leq
D_\Omega\sqrt{\|\mathbf C_t\|_{\rm op}}.
\)

Applying \Cref{lem:conditional_gp_supremum} with the trivial sigma-field to
\(f\) and to the finitely many second-derivative processes gives
\(
\EE\{\exp(bS_j)\}<\infty\), \(b>0\), \(j\in \{0,2\}\),
where the assertion for \(S_2\) also uses equivalence of matrix norms in
fixed dimension. Moreover,
\begin{equation*}
    \sup_{\para\in\Omega}
\|\nabla^2\mu_t(\para)\|_{\rm op}
\leq
\EE(S_2\mid\mathcal H_t),
\end{equation*}
so the growth condition on \(\mathcal M\) gives
\(
\|\mathbf C_t\|_{\rm op}\leq C_\star.
\)
It follows that
\(
\widehat p_{t,k}
=
\tanh(c_a\Delta_{t,k}^{\rm loc})
\leq
\rho_\star.
\)

The random variables \(Z_0,Z_2\) are finite almost surely and have
exponential moments of every positive order. Indeed, for \(b>0\) and \(r>1\),
conditional Jensen's inequality and Doob's \(L^r\) maximal inequality give
\begin{equation}\label{eq:bZ_j_bound}
\EE\{\exp(bZ_j)\}\leq
\EE\left[
\sup_{t\geq0}
\EE\{\exp(bS_j)\mid\mathcal H_t\}
\right]\leq
\frac{r}{r-1}
\left[
\EE\{\exp(rbS_j)\}
\right]^{1/r}
<\infty.
\end{equation}
Since
\begin{equation*}
    M_t^\mu=
\sup_{\para,\para'\in\Omega}
|\mu_t(\para)-\mu_t(\para')|
\leq
2\EE(S_0\mid\mathcal H_t)
\leq
M_\star,
\end{equation*}
the definition of \(Q_\star\) gives a common lower bound for the
variance-capture factors used below. Finally, for \(x\geq0\),
\(
    \{1-\tanh(x)\}^{-1}
\leq
e^{2x}.
\)
Thus, for every \(m>0\),
\begin{equation*}
    Q_\star^{-m}
\leq
\exp\left(
2mc_aD_\Omega \sqrt{C_\star}
+
2mM_\star
\right),
\end{equation*}
whose expectation is finite by H\"older's inequality and \Cref{eq:bZ_j_bound}; here
\(\sqrt{C_\star}\leq\sqrt{C_{\mathcal M}}(1+Z_2)\).

For the activation claim, let \(\mathcal F_t^-\) be the sigma-field available
immediately before \(B_t\) is drawn, and set
\(
D_t^B:=B_t-\widehat p_{t,k}.
\)
Fresh branch randomisation gives
\(\EE(D_t^B\mid\mathcal F_t^-)=0\). With
\(\mathcal F_t^+:=\sigma(\mathcal F_t^-,B_t)\), these variables form bounded
martingale differences along the natural interlaced pre/post-branch
filtration. The martingale strong law therefore yields
\(
    T^{-1}\sum_{t=0}^{T-1}D_t^B\rightarrow0
\) a.s.
Consequently,
\begin{equation*}
\liminf_{T\to\infty}\frac{N_T^{\rm UR}}T=
\liminf_{T\to\infty}
\left\{
1-\frac1T\sum_{t=0}^{T-1}\widehat p_{t,k}
-\frac1T\sum_{t=0}^{T-1}D_t^B
\right\}
\geq
1-\rho_\star
>0
\qquad\text{a.s.}
\end{equation*}
\end{proof}

\begin{proof}[\Cref{cor:ur_activation_main}]
The first two conclusions follow directly from
\Cref{lem:allocation_proxy_control}. The final conclusions follow
from the last assertion of \Cref{thm:sale_main_calibration}.
\end{proof}

\subsection{Bounded-Lag Fixed-Sample Calibration}\label{appdx:fixed_sample_calibration}

\begin{assumption}
\label{ass:bounded_lag_refresh}
There are increasing \((\mathcal H_t)\)-stopping times
\(\mathfrak r_0=0<\mathfrak r_1<\mathfrak r_2<\cdots\) such that
\(
r(t)=\max\{\mathfrak r_i:\mathfrak r_i\leq t\},\)
\(\mathfrak r_{i+1}-\mathfrak r_i\leq T_{\max}+1
\)
for a deterministic integer \(T_{\max}<\infty\). At each
\(\mathfrak r_i\), a sample
\(\mathcal X_{\mathfrak r_i}
=\{\para_{\mathfrak r_i,s}\}_{s=1}^S\)
is generated using auxiliary randomness independent of \(f\), the
observation noises, and previous auxiliary draws, with
\(
\para_{\mathfrak r_i,s}
\mid\mathcal H_{\mathfrak r_i}
\sim\bar\pi_{\mathfrak r_i}
\) for every \(s\); dependence among the \(S\) sample points is allowed. The realised sample is reused and unchanged for \(\mathfrak r_i\leq t<\mathfrak r_{i+1}\). At each iteration, the branch, BO, and posterior-path randomisations are generated from fresh auxiliary randomness. Conditional on \((\mathcal H_t,\mathcal X_{r(t)})\), this randomness is independent of \(f\), future observation noises, and all auxiliary randomness used at earlier iterations.
\end{assumption}

This construction is ignorable: inductively,
\(\mathcal X_{r(t)}\) and the next query are conditionally independent of
\(f\) given \(\mathcal H_t\). After conditioning on the realised
query, the response depends on \(f\) but not on \(\mathcal X_{r(t)}\). Thus,
for this section, \Cref{ass:bounded_lag_refresh} replaces the
fresh-randomness requirement in
Assumption~\ref{ass:main_gp_setup} \textup{(A3)} while retaining its
non-anticipation conclusion.

\subsubsection{Fixed-Sample Score Calibration}

For each refresh index \(i\geq0\), set
\(
q_i
:=
(1-\widehat p_{\mathfrak r_i,k})
\exp\{-2\widehat p_{\mathfrak r_i,k}
M_{\mathfrak r_i}^\mu\}.
\)

\begin{lemma}
\label{lem:score_variance_capture}
Under the setup of
\Cref{thm:sale_practical_score_calibration}, for every \(i\geq0\),
\(
\EE\!\left[
\sigma_{\mathfrak r_i}^2(\para_{\mathfrak r_i+1})
\,\middle|\,
\mathcal H_{\mathfrak r_i}
\right]
\geq
q_iV_{\mathfrak r_i}.
\)
\end{lemma}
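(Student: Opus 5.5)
The plan is to combine a pathwise comparison between the selected point and any single sample point with the branch randomisation, working conditionally on \(\mathcal H_{\mathfrak r_i}\) throughout. At a refresh time \(t=\mathfrak r_i\) we have \(r(t)=t\), so the working sample is the freshly drawn \(\mathcal X_{\mathfrak r_i}\). The quantities \(\widehat p_{\mathfrak r_i,k}\), \(\mu_{\mathfrak r_i}\), \(\sigma_{\mathfrak r_i}\) and \(M^\mu_{\mathfrak r_i}\) are all \(\mathcal H_{\mathfrak r_i}\)-measurable. I will write \(\widehat p=\widehat p_{\mathfrak r_i,k}\) and \(M=M^\mu_{\mathfrak r_i}\) for brevity.

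\textbf{Step 1 (pathwise score comparison).} By definition of \(J_{\mathfrak r_i}\), for every \(s\in[S]\),
\begin{equation*}
\widehat p\,\mu_{\mathfrak r_i}(\para_{\mathfrak r_i,J})+\log\sigma_{\mathfrak r_i}(\para_{\mathfrak r_i,J})
\ge
\widehat p\,\mu_{\mathfrak r_i}(\para_{\mathfrak r_i,s})+\log\sigma_{\mathfrak r_i}(\para_{\mathfrak r_i,s}).
\end{equation*}
Rearranging and using \(|\mu_{\mathfrak r_i}(\para)-\mu_{\mathfrak r_i}(\para')|\le M\) gives \(\log\sigma_{\mathfrak r_i}(\para_{\mathfrak r_i,J})\ge\log\sigma_{\mathfrak r_i}(\para_{\mathfrak r_i,s})-\widehat p M\). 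Squaring then yields
\begin{equation*}
\sigma^2_{\mathfrak r_i}(\para_{\mathfrak r_i,J})\ge e^{-2\widehat pM}\sigma^2_{\mathfrak r_i}(\para_{\mathfrak r_i,s}).
\end{equation*}
The convention \(\log 0=-\infty\) only needs a one-line check. If the right-hand side is positive, the maximiser cannot have \(\sigma=0\), so the inequality still holds. If the right-hand side is zero, the inequality is trivial.

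\textbf{Step 2 (averaging against \(\bar\pi_{\mathfrak r_i}\)).} Take \(s=1\) in Step 1. By \Cref{ass:bounded_lag_refresh}, \(\para_{\mathfrak r_i,1}\mid\mathcal H_{\mathfrak r_i}\sim\bar\pi_{\mathfrak r_i}\), so
\begin{equation*}
\EE\{\sigma^2_{\mathfrak r_i}(\para_{\mathfrak r_i,1})\mid\mathcal H_{\mathfrak r_i}\}=V_{\mathfrak r_i}.
\end{equation*}
It follows that \(\EE\{\sigma^2_{\mathfrak r_i}(\para_{\mathfrak r_i,J})\mid\mathcal H_{\mathfrak r_i}\}\ge e^{-2\widehat pM}V_{\mathfrak r_i}\). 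Dependence among the \(S\) sample points is irrelevant here, since only a single marginal is used.

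\textbf{Step 3 (branch allocation).} The actual query \(\para_{\mathfrak r_i+1}\) equals \(\para_{\mathfrak r_i,J}\) on \(\{B_{\mathfrak r_i}=0\}\) and is the BO query otherwise. On the BO event I simply bound \(\sigma^2_{\mathfrak r_i}(\para_{\mathfrak r_i+1})\ge0\). By the assumed allocation rule, the branch draw is fresh given \((\mathcal H_{\mathfrak r_i},\mathcal X_{\mathfrak r_i})\) with \(\Pr(B_{\mathfrak r_i}=0\mid\mathcal H_{\mathfrak r_i},\mathcal X_{\mathfrak r_i})=1-\widehat p\). Conditioning first on \((\mathcal H_{\mathfrak r_i},\mathcal X_{\mathfrak r_i})\) therefore gives
\begin{equation*}
\EE\{\sigma^2_{\mathfrak r_i}(\para_{\mathfrak r_i+1})\mid\mathcal H_{\mathfrak r_i},\mathcal X_{\mathfrak r_i}\}\ge(1-\widehat p)\,\sigma^2_{\mathfrak r_i}(\para_{\mathfrak r_i,J}).
\end{equation*}
Taking conditional expectation given \(\mathcal H_{\mathfrak r_i}\) and applying Step 2 yields the bound \(q_iV_{\mathfrak r_i}\).

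The only delicate point is this last step. One must verify that \(J_{\mathfrak r_i}\) depends only on \((\mathcal H_{\mathfrak r_i},\mathcal X_{\mathfrak r_i})\), which holds because of deterministic tie-breaking. One must also verify that the branch indicator is conditionally independent of \(J_{\mathfrak r_i}\) given these, which is exactly the fresh-randomness clause of \Cref{ass:bounded_lag_refresh}.
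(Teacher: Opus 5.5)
Your proposal is correct and follows essentially the same argument as the paper. Both run a pathwise score comparison that yields the factor $\exp(-2\widehat p_{\mathfrak r_i,k}M^\mu_{\mathfrak r_i})$, use the $\bar\pi_{\mathfrak r_i}$-marginal of the freshly refreshed sample to produce $V_{\mathfrak r_i}$, and obtain the factor $1-\widehat p_{\mathfrak r_i,k}$ from the branch probability after discarding the BO contribution. The one cosmetic difference is that the paper compares against the maximum-variance sample point and then uses $\max_s\sigma^2_{\mathfrak r_i}(\para_{\mathfrak r_i,s})\geq S^{-1}\sum_s\sigma^2_{\mathfrak r_i}(\para_{\mathfrak r_i,s})$, whereas you compare directly against the single point $\para_{\mathfrak r_i,1}$, which reaches the same bound slightly more economically.
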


\begin{proof}
Write \(t=\mathfrak r_i\), and let \(J_t^\sigma\) be a deterministically
tie-broken maximiser of
\(\sigma_t(\para_{t,s})\), \(1\leq s\leq S\). If the maximum variance is
zero, the comparison below is immediate. Otherwise, the definition of
\(J_t\) gives
\begin{align*}
\log\sigma_t(\para_{t,J_t})
&\geq
\log\sigma_t(\para_{t,J_t^\sigma})
+
\widehat p_{t,k}
\left\{
\mu_t(\para_{t,J_t^\sigma})
-
\mu_t(\para_{t,J_t})
\right\}
\\
&\geq
\log\sigma_t(\para_{t,J_t^\sigma})
-
\widehat p_{t,k}M_t^\mu.
\end{align*}
Consequently,
\begin{equation*}
    \sigma_t^2(\para_{t,J_t})
\geq
\exp(-2\widehat p_{t,k}M_t^\mu)
\max_{1\leq s\leq S}\sigma_t^2(\para_{t,s}).
\end{equation*}
Since \(t\) is a refresh time, every point in the newly generated sample has
conditional marginal \(\bar\pi_t\). Hence
\begin{equation*}
\EE\!\left\{
\max_{1\leq s\leq S}\sigma_t^2(\para_{t,s})
\,\middle|\,
\mathcal H_t
\right\}
\geq
\frac1S\sum_{s=1}^S
\EE\!\left\{
\sigma_t^2(\para_{t,s})
\,\middle|\,
\mathcal H_t
\right\}
=
V_t.
\end{equation*}
Conditional on \((\mathcal H_t,\mathcal X_t)\), discard the nonnegative
BO-branch contribution and use
\(\Pr(B_t=1\mid\mathcal H_t,\mathcal X_t)
=\widehat p_{t,k}.\)
Because \(\widehat p_{t,k}\) and \(M_t^\mu\) are
\(\mathcal H_t\)-measurable, taking conditional expectation first given
\((\mathcal H_t,\mathcal X_t)\) and then given \(\mathcal H_t\) yields
\begin{equation*}
    \EE\!\left\{
\sigma_t^2(\para_{t+1})
\,\middle|\,
\mathcal H_t
\right\}
\geq
(1-\widehat p_{t,k})
\exp(-2\widehat p_{t,k}M_t^\mu)V_t
=
q_iV_t.
\end{equation*}
\end{proof}

\begin{proof}[\Cref{thm:sale_practical_score_calibration}]
The ignorability argument following \Cref{ass:bounded_lag_refresh} makes the
next query conditionally independent of \(f\) given \(\mathcal H_t\).
The proof of \Cref{prop:ur_calibration_link_main} therefore gives
\(
    \EE(V_{t+1}\mid\mathcal H_t)
=
V_t-
\EE\{\Delta_t(\para_{t+1})\mid\mathcal H_t\}
\leq
V_t.
\)
Thus \(\{V_t,\mathcal H_t\}_{t\geq0}\) is a bounded nonnegative
supermartingale.

By \Cref{lem:allocation_proxy_control},
\(
q_i
\geq
(1-\rho_\star)\exp(-2\rho_\star M_\star)
=
Q_\star
>0
\) a.s.,
and \(\EE(Q_\star^{-1})<\infty\). Since \(q_i\) is
\(\mathcal H_{\mathfrak r_i}\)-measurable,
\Cref{lem:score_variance_capture} gives
\begin{equation*}
\EE(V_{\mathfrak r_i})
\leq
\EE\!\left[
q_i^{-1}
\EE\!\left\{
\sigma_{\mathfrak r_i}^2(\para_{\mathfrak r_i+1})
\,\middle|\,
\mathcal H_{\mathfrak r_i}
\right\}
\right]
=
\EE\!\left\{
q_i^{-1}
\sigma_{\mathfrak r_i}^2(\para_{\mathfrak r_i+1})
\right\}.
\end{equation*}

Let
\(
    \mathcal I=\mathcal I_T
:=
\lceil T/(T_{\max}+1)\rceil.
\)
The bounded-gap condition gives
\(\mathfrak r_i\leq i(T_{\max}+1)\leq T-1\) for \(0\leq i<\mathcal I\).
Optional sampling gives
\begin{equation*}
    \mathcal I\EE(V_T)
\leq
\sum_{i=0}^{\mathcal I-1}\EE(V_{\mathfrak r_i}).
\end{equation*}
Because the refresh times are distinct and lie in
\([T-1]_0\), the pathwise variance-sum bound in
\Cref{lem:ao_varsum} yields
\begin{equation*}
\mathcal I\EE(V_T)
\leq
\EE\left\{
\sum_{i=0}^{\mathcal I-1}
\frac{
\sigma_{\mathfrak r_i}^2(\para_{\mathfrak r_i+1})
}{q_i}
\right\}
\leq
\EE\left\{
Q_\star^{-1}
\sum_{t=0}^{T-1}\sigma_t^2(\para_{t+1})
\right\}
\leq
C_{\gamma,\upsilon}
\EE(Q_\star^{-1})
\gamma_T(\upsilon).
\end{equation*}
The second assertion of
\Cref{prop:ur_calibration_link_main} therefore gives
\begin{equation}\label{eq:TV_var_bound}
    \EE\left\{\TV(\bar\pi_T,\pi_\dataseq)\right\}
\leq
\left\{
\frac{
C_{\gamma,\upsilon}
\EE(Q_\star^{-1})
\gamma_T(\upsilon)}
{2\mathcal I_T}
\right\}^{1/2}.
\end{equation}

To identify its order, write
\(\gamma_T^{\kappa,D}(v)\) when the kernel, domain, and regularisation parameter
need to be explicit. For either kernel alternative in the theorem, use the
standard parameterisation
\(
\kappa(\para,\para')
=
s_\kappa^2
k_0\{\bm L^{-1}(\para-\para')\},
\)
where \(s_\kappa>0\), \(\bm L\) is nonsingular, and \(k_0\) is the
corresponding standard isotropic Mat\'ern-\(\nu\) or SE kernel. Since compact
\(\bm L^{-1}\Omega\) is contained in
\(\mathbb B(\bm0,R)\) for some \(R<\infty\),
stationarity, domain monotonicity, and rescaling give
\begin{equation*}
    \gamma_T^{\kappa,\Omega}(\upsilon)
=
\gamma_T^{k_0,\bm L^{-1}\Omega}(\upsilon/s_\kappa)
\leq
\gamma_T^{k_0,\mathbb B(\bm0,R)}(\upsilon/s_\kappa).
\end{equation*}
By \citet[Theorem~7 and Corollary~8]{iwazaki2025improvedregret},
\begin{equation*}
    \gamma_T(\upsilon)
=
\begin{cases}
\mathcal O\!\left[
T^{d/(2\nu+d)}
\{\log(T+1)\}^{(4\nu+d)/(2\nu+d)}
\right],
& k_0\text{ is Mat\'ern-}\nu,\quad \nu>1/2,
\\[1mm]
\mathcal O\!\left[\{\log(T+1)\}^{d+1}\right],
& k_0\text{ is SE}.
\end{cases}
\end{equation*}
Finally,
\(\mathcal I_T
\geq T/(T_{\max}+1).
\)
Substitution into \Cref{eq:TV_var_bound} proves the claim.
\end{proof}

\subsection{Proof of the Reverse-KL Claim}
\label{app:reverse_kl_calibration}

\begin{proof}
The state-specific laws \(\rho_{t,0}\) and \(\rho_{t,1}\) have strictly
positive Lebesgue densities \(q_{t,0}\) and \(q_{t,1}\). Hence every
candidate probability law \(\pi\) with finite risk has a Lebesgue density
\(q\). Define
\begin{equation*}
    \ell_t(\para)
:=
\EE_t\!\left\{
\log q_{t,\mathbf S_t}(\para)
\right\},
\qquad
\widetilde q_t(\para)
:=
\frac{\exp\{\ell_t(\para)\}}
{\int_\Omega\exp\{\ell_t(\bm u)\}\,\dee\bm u}.
\end{equation*}
Then
\begin{equation*}
    \EE_t\!\left[
\operatorname{KL}\{\rho\|\rho_{t,\mathbf S_t}\}
\right]
=
\operatorname{KL}(q\|\widetilde q_t)
-
\log\int_\Omega\exp\{\ell_t(\bm u)\}\,\dee\bm u.
\end{equation*}
Thus the unique finite-risk minimiser is \(\widetilde q_t\). Moreover,
\(\ell_t(\para)\) equals
\(
\log\bar p_t(\para)+p_t^\star\mu_t(\para)
\)
up to an additive term that does not depend on \(\para\). It follows that
\(
    \widetilde q_t(\para)
\propto
\exp\{p_t^\star\mu_t(\para)\}\bar p_t(\para),
\)
which is the density of \(\rho_t^{(p_t^\star)}\). Candidate laws singular
with respect to \(\lambda\) have infinite risk and therefore cannot
improve on this minimiser.
\end{proof}

\clearpage

\makeatletter
\let\@sict\jmlrnormalsectionformat
\makeatother

\setcounter{section}{0}
\renewcommand{\thesection}{\Alph{section}}
\renewcommand{\thesubsection}{\thesection.\arabic{subsection}}
\renewcommand{\thesubsubsection}
  {\thesubsection.\arabic{subsubsection}}

\renewcommand*{\theHsection}{onlineappendix.\Alph{section}}
\renewcommand*{\theHsubsection}
  {\theHsection.\arabic{subsection}}
\renewcommand*{\theHsubsubsection}
  {\theHsubsection.\arabic{subsubsection}}

\crefalias{section}{onlineappendix}
\crefalias{subsection}{onlineappendix}
\crefalias{subsubsection}{onlineappendix}
\crefalias{appendix}{onlineappendix}
\crefalias{subappendix}{onlineappendix}
\crefalias{subsubappendix}{onlineappendix}

\phantomsection
\label{oa:start}
\addcontentsline{toc}{section}{Online Appendix}

\begin{center}
{\LARGE\bfseries Online Appendix}
\end{center}
\vspace{0.2in}

This online appendix provides additional illustrations, implementation specifications,
problem-specific configurations, numerical-comparison protocols, and
additional empirical results for the main article. It also gives the
auxiliary levelwise convergence analysis for the finite-dimensional
RFF--Matheron implementation of ESA. All notation follows the main paper.

\section{Construction of the Posterior-Guided Search Region}
\label{oa:search_region}

This section gives the numerical construction of the adaptive search region
\(\Omega_t\) used in the SALE implementation. Let \(S\geq2\), and let
\(
\mathcal X_t=\{\para_s\}_{s=1}^S
\)
be an approximate expected-posterior sample from \(\bar\pi_t\). Let
\(
\bm c_t=S^{-1}\sum_{s=1}^S\para_s
\)
and
\begin{equation*}
    \widehat\Sigma_t
=
(S-1)^{-1}
\sum_{s=1}^S
(\para_s-\bm c_t)(\para_s-\bm c_t)^\top.
\end{equation*}
Write
\(
    \widehat\Sigma_t
=
\bm U_t\bm\Lambda_t\bm U_t^\top,
\)
where \(\bm U_t\) is orthogonal and \(\bm\Lambda_t\) is diagonal with
nonnegative entries. Define the covariance-adjustment matrix by
\(
\bm W_t
=
\bm U_t(\bm\Lambda_t^\dagger)^{1/2}\bm U_t^\top,
\)
where \(\bm\Lambda_t^\dagger\) denotes the Moore--Penrose inverse.
When \(\widehat\Sigma_t\) is positive definite, this is the usual
whitening matrix
\(\bm U_t\bm\Lambda_t^{-1/2}\bm U_t^\top\).
Each sample point is transformed to covariance-adjusted coordinates by
\(
\bm z_s=(z_{s,1},\ldots,z_{s,d})^\top
=
\bm W_t(\para_s-\bm c_t)
\),
\(s=1,\ldots,S\).

For a trimming level \(\alpha\in(0,1)\), let
\begin{equation*}
z^L_{t,j}
=
q_{\alpha/2}
\left(
\{z_{s,j}:s=1,\ldots,S\}
\right),
\qquad
z^U_{t,j}
=
q_{1-\alpha/2}
\left(
\{z_{s,j}:s=1,\ldots,S\}
\right),
\end{equation*}
for \(j=1,\ldots,d\), where \(q_p(\cdot)\) denotes the empirical
\(p\)-quantile. Collect these coordinatewise quantiles as
\(
\bm z^L_t=(z^L_{t,1},\ldots,z^L_{t,d})^\top
\)
and
\(
\bm z^U_t=(z^U_{t,1},\ldots,z^U_{t,d})^\top
\).
The \(\bar\pi_t\)-guided search region is
\begin{equation*}
\Omega_t
=
\left\{
\para\in\Omega:
\bm z^L_t
\leq
\bm W_t(\para-\bm c_t)
\leq
\bm z^U_t
\right\},
\end{equation*}
where the inequalities are interpreted componentwise.

Smaller \(\alpha\) gives a more conservative region, whereas larger
\(\alpha\) can accelerate BO localisation but risks excluding posterior
structure before the surrogate is calibrated. We recommend
\(\alpha\leq0.1\) and use \(\alpha=0.1\) for every numerical example, without
example-specific tuning.

\section{Local-Excursion Motivation for the Allocation Proxy}
\label{oa:allocation_proxy}

The main article defines \(p_t^\star\) as a global GP excursion probability.
We here link it to local design resolution near \(\para_t^\dagger\). For a
nonempty compact \(A\subseteq\Omega\), define
\begin{equation*}
    p_t^\star(A)
:=
\PP_t\left\{
\sup_{\para\in A}f_t(\para)>f_t^++\varepsilon_{\rm imp}
\right\}.
\end{equation*}
Thus the global quantity in the main article is
\(
p_t^\star=p_t^\star(\Omega).
\)
Let \(
g_t^+(\para)
:=
\{f_t(\para)-f_t^+\}
-
\{\mu_t(\para)-\mu_t(\para_t^\dagger)\},
\)
which is a centred conditional GP because \(\para_t^\dagger\) is
\(\mathcal H_t\)-measurable. Define
\begin{equation*}
    m_t(A)
:=
\EE\!\left\{
\sup_{\para\in A}g_t^+(\para)
\mid\mathcal H_t
\right\},
\end{equation*}
\begin{equation*}
    s_t^2(A)
:=
\sup_{\para\in A}
\operatorname{Var}_t\{g_t^+(\para)\}
=
\sup_{\para\in A}
\left\{
\sigma_t^2(\para)
+
\sigma_t^2(\para_t^\dagger)
-
2\kappa_t(\para,\para_t^\dagger)
\right\},
\end{equation*}
and
\(b_t(A)
:=
\mu_t(\para_t^\dagger)
-
\sup_{\para\in A}\mu_t(\para)
+
\varepsilon_{\rm imp}.
\)
If the GP posterior has continuous sample paths and \(s_t(A)>0\), the
conditional Borell--TIS inequality gives
\begin{equation}
p_t^\star(A)
\leq
\exp\!\left[
-\frac{\{b_t(A)-m_t(A)\}_+^2}{2s_t^2(A)}
\right],
\qquad
\{x\}_+:=\max(x,0).
\label{eq:oa_local_improvement_bound}
\end{equation}
Indeed, the improvement event implies
\(
\sup_{\para\in A}g_t^+(\para)>b_t(A)
\), after which the bound follows from Gaussian concentration. The case
\(s_t(A)=0\) is immediate. \Cref{eq:oa_local_improvement_bound} separates
the two requirements for resolving a region: \(\mu_t\) must rule out an
improvement larger than \(\varepsilon_{\rm imp}\), and posterior uncertainty
in function differences relative to the incumbent must be small.

For \(r>0\), let
\(
\mathbb B_t(r)
:=
\{\para\in\Omega:
\|\para-\para_t^\dagger\|_{\mathbf C_t}\leq r\},\)
\(\|\bm v\|_{\mathbf C_t}
:=
(\bm v^\top\mathbf C_t\bm v)^{1/2},
\)
and define its local fill distance by
\(
h_t(r)
:=
\sup_{\para\in\mathbb B_t(r)}
\min_{\para_i:\,(\para_i,y_i)\in\mathcal D_t}
\|\para-\para_i\|_{\mathbf C_t}.
\)
Taking \(A=\mathbb B_t(r)\) in
\Cref{eq:oa_local_improvement_bound} links the local excursion probability
to design resolution around \(\para_t^\dagger\). In the noiseless case,
\(\para_t^\dagger\) is an evaluated input, so exact GP interpolation gives
\(
s_t\bigl(\mathbb B_t(r)\bigr)
=
\sup_{\para\in\mathbb B_t(r)}\sigma_t(\para).
\)
Here \(\sigma_t\) is the kernel power function. Under standard kernel and
domain regularity conditions, power-function bounds control
\(s_t\bigl(\mathbb B_t(r)\bigr)\), while corresponding Dudley entropy bounds
control \(m_t\bigl(\mathbb B_t(r)\bigr)\), as \(h_t(r)\) decreases
\citep{dudley1967sizes,wendland2004scattered}. Local coverage and a common
modulus of continuity for \(\mu_t\) likewise give
\(
b_t\bigl(\mathbb B_t(r)\bigr)\rightarrow\varepsilon_{\rm imp}
\).
Thus, \Cref{eq:oa_local_improvement_bound} links increasing local resolution
to a smaller excursion probability once \(\mu_t\) rules out an improvement of
size \(\varepsilon_{\rm imp}\). This conditional argument neither asserts
that a particular BO policy forces \(h_t(r)\to0\) nor treats
\(\widehat p_{t,k}\) as a calibrated estimate of \(p_t^\star\).

Computing \(h_t(r)\) requires a local covering calculation. SALE instead uses
the curvature-adjusted nearest-neighbour statistic defined in the main
article. Although \(\Delta_{t,k}^{\rm loc}\) is not a fill distance, it provides
a cheap local-spacing diagnostic. Moreover,
\begin{equation*}
    p_t^\star
\leq
p_t^\star\bigl(\mathbb B_t(r)\bigr)
+
p_t^\star\bigl(\Omega\setminus\mathbb B_t(r)\bigr).
\end{equation*}
The local proxy addresses only the first term; posterior-guided
restriction and continued BO address unresolved value discovery outside the
neighbourhood.

\section{AO Stabilisation Implication}
\label{oa:wide_spike}

Positive $\tau$ makes AO respond to both the height and the local
volume of a high-valued region. To illustrate this, let
\(\Omega=[-2,2]\), \(\varepsilon\in(0,1)\), and \(w\in(0,1/2)\), and define
\begin{equation*}
    b(x)=-x^2,
\qquad
s_{\varepsilon,w}^{\pm}(x)
=
\pm\varepsilon-(x-1)^2/w^2,
\qquad
g_{\varepsilon,w}^{\pm}(x)
=
\max\left\{
b(x),s_{\varepsilon,w}^{\pm}(x)
\right\}.
\end{equation*}
The path \(g_{\varepsilon,w}^{-}\) has a broad global mode at \(x= 0\) and a
narrow secondary mode at \(x=1\) that is lower by \(\varepsilon\). The
perturbed path \(g_{\varepsilon,w}^{+}\) raises the narrow mode above the
broad mode, giving it a unique global maximiser at \(x=1\). Nevertheless,
\begin{equation*}
W_{1,\infty}
\left(
\delta_{g_{\varepsilon,w}^{-}},
\delta_{g_{\varepsilon,w}^{+}}
\right)
=
\left\|
g_{\varepsilon,w}^{-}
-
g_{\varepsilon,w}^{+}
\right\|_\infty
=
2\varepsilon
\rightarrow0,
\end{equation*}
whereas the TS selection rule switches from \(x=0\) to \(x=1\) for every
\(\varepsilon>0\).

The broad and narrow modes of \(g_{\varepsilon,w}^{+}\) have heights
\(0\) and \(\varepsilon\), respectively, and absolute curvatures
\(H_B=2\) and \(H_N=2/w^2\). Their local Laplace contributions to the AO
normalising constant consequently satisfy
\begin{equation*}
M_N(\tau)/M_B(\tau)
\approx
\left(H_B/H_N\right)^{1/2}
\exp\left(\varepsilon/\tau\right)
=
w\exp\left(\varepsilon/\tau\right).
\end{equation*}
Thus, although TS always selects the maximiser $x = 1$ at the narrow spike under
\(g_{\varepsilon,w}^{+}\), the broad basin at $x = 0$ contributes more AO mass when
\(
w\exp\left(\varepsilon/\tau\right)\ll1.
\)
The balance occurs near
\(
\varepsilon
=
\tau\log(1/w).
\)
For fixed \(\tau > 0\), a sufficiently narrow spike therefore receives
little AO mass even if it is slightly higher. For fixed
\((\varepsilon,w)\), however, the spike dominates as \(\tau\downarrow0\),
consistently with the TS limit in Proposition 7 in the main text.

More generally, if two \(d\)-dimensional local modes have heights
\(h_B,h_N\) and positive curvature matrices \(H_B,H_N\), their local AO
mass ratio is approximately
\begin{equation*}
    \left\{
\frac{\det(H_B)}{\det(H_N)}
\right\}^{1/2}
\exp\left\{(h_N-h_B)/\tau
\right\}.
\end{equation*}
The determinant term also shows why AO can become increasingly resistant to
artificially high, narrow excursions in uncertain surrogate paths. Suppose
that a narrow mode has \(r>1\) times the curvature of a broad mode in
\(d_{\rm n}\) directions and comparable curvature in the remaining
directions. Then
\(
\left\{\det(H_B)/\det(H_N)
\right\}^{1/2}
\approx
r^{-d_{\rm n}/2},
\)
so the narrow mode must have a height advantage of
\(
h_N-h_B
\gtrsim d_{\rm n}\tau\log(r)/2
\)
before its local AO mass matches that of the broad mode. TS, by contrast,
switches to the narrow mode as soon as \(h_N>h_B\), regardless of its local
volume.

This distinction is relevant when the surrogate remains highly uncertain.
Under a sparse design, an individual GP posterior path may contain a narrow
excursion whose height is artificially inflated by posterior uncertainty,
rather than reflecting a genuine feature of the objective. As the number of
poorly resolved directions or effectively distinct candidate regions
increases, there are more opportunities for such a large positive excursion.
At the same time, if the excursion is narrow across \(d_{\rm n}\) directions,
its volume penalty under AO compounds as \(r^{-d_{\rm n}/2}\). AO can therefore
withstand an artificial height advantage of order
\(d_{\rm n}\tau\log(r)/2\) before moving most
of its query mass to the excursion, whereas TS responds immediately once the
excursion becomes the sampled-path maximiser.

The relevant height advantage is measured relative to the vertical scale of
the objective. Indeed, for any \(a\in\mathbb R\) and \(c>0\),
\(
\pi_{c\tau}(\dee x\mid a+cg)
=
\pi_\tau(\dee x\mid g).
\)
Thus, \(\tau\) rescales with the objective, and the height--volume comparison
depends on the dimensionless gap \((h_N-h_B)/\tau\), rather than on the
absolute numerical height of the excursion.

Highly anisotropic or scale-separated targets with increasing dimension are susceptible to the above mechanism. Their geometry can be difficult to resolve under
limited evaluations, making artificially high, narrow excursions more
plausible in early GP posterior paths. This provides one possible explanation
for the increasing AO-TS regret separation with \(d\) in
Figure 1 of the main text, and for the latter, more variable TV reduction of
SALE-TS on the GRF--Mat\'ern and Lotka--Volterra examples.

\section{Proposal Kernels and Numerical Settings for ESA}
\label{oa:esa_settings}

ESA is initialised at a random point of the current working expected-posterior
sample \(\mathcal X_{r(t)}\). We use
\(
m=1024\),
\(L=40\),
\(\tau_\ell=1.2^{-(\ell-1)}\),
\(s_\ell=1,\)
\(L_{\para}=20.
\)
For \(\ell\leq20\), \(L_f=2\) and \(L_{\bm u}=20\); for \(\ell>20\),
\(L_f=4\) and \(L_{\bm u}=30\). Terminal local refinement uses L-BFGS-B
with \texttt{maxit = 100}, initialised at the terminal ESA parameter state.

For the proposal distribution of \(\para\) given \(f_t\), we use a mixture of
Metropolis-adjusted Langevin algorithm
\citep{roberts1996exponential,roberts1998optimal} and random-walk proposals:
\begin{equation*}
    q_{\para}(\para'\mid\para)
=
0.8q^\ell_{\rm MALA}(\para'\mid\para)
+
0.2q^\ell_{\rm RW}(\para'\mid\para),
\end{equation*}
where
\begin{equation*}
    q^\ell_{\rm MALA}(\para'\mid\para)
=
\mathcal N\left\{
\para';
\para+\frac{0.25^2}{2}A_{\para}\nabla f_t(\para),
0.25^2\tau_\ell A_{\para}+\xi I
\right\},
\end{equation*}
and
\begin{equation*}
    q^\ell_{\rm RW}(\para'\mid\para)
=
\mathcal N\left\{
\para';
\para,
\tau_\ell\operatorname{Cov}(\mathcal X_{r(t)})+\xi I
\right\}.
\end{equation*}
Here \(A_{\para}\) is a stabilised diagonal preconditioner. It is initialised
from
\(
\operatorname{diag}\{\operatorname{Cov}(\mathcal X_{r(t)})\}
\), following the standard use of empirical covariance information in
adaptive Metropolis proposals
\citep{haario2001adaptive,roberts2009examples}. After accepted MALA moves, it
is updated using clipped coordinatewise secant ratios, a
quasi-Newton-style scaling heuristic \citep{nocedal2006numerical}, and is
reset whenever \(f_t\) changes. The jitter \(\xi=10^{-8}\) prevents numerical
degeneracy. The proposal distribution for the auxiliary variable \(\bm u\)
in the exchange step is the same as \(q^\ell_{\rm RW}\).

The proposal for \(f_t\) is specified through the joint latent state
\(
\bm z=(\bm w^\top,\bm\varepsilon^\top)^\top
\)
of the decoupled RFF--Matheron path in the main article. We use independent
preconditioned Crank--Nicolson updates for its two Gaussian blocks:
\begin{equation*}
    \bm w'
=
\sqrt{1-\rho^2}\,\bm w+\rho\bm z_{\bm w},
\qquad
\bm\varepsilon'
=
\sqrt{1-\rho^2}\,\bm\varepsilon+\rho\bm z_{\bm\varepsilon},
\end{equation*}
where
\(
\bm z_{\bm w}\sim\mathcal N(\bm0,I_m)
\),
\(
\bm z_{\bm\varepsilon}
\sim\mathcal N(\bm0,\varsigma^2I_{n_t})
\),
independently, and \(\rho=0.3\). These updates are reversible with respect
to the product Gaussian reference law of
\((\bm w,\bm\varepsilon)\).

All numerical settings are fixed across the benchmark, simulation, and
applied examples.

\section{GP Fitting, Adaptive Refresh, and Stopping}
\label{oa:sale_workflow_settings}
This section records the common GP-fitting, refresh, and stopping conventions used in the experiments.
\subsection{Initial Design and GP Fitting}
The implementation rescales \(\Omega\) to \([0,1]^d\) and uses an automatic
relevance determination squared-exponential kernel in every example. At each
scheduled refit, the signal variance, length-scales, and nugget
\begin{equation*}
    \bm\vartheta
=
(\sigma_f^2,\ell_1,\ldots,\ell_d,\varsigma^2),
\end{equation*}
are estimated by
\begin{equation*}
    \widehat{\bm\vartheta}_{\mathrm{MAP}}^t
\in
\argmax_{\bm\vartheta}
\left\{
\log p(\bm y_t\mid\bm\Theta_t,\bm\vartheta)
+
\log p(\bm\vartheta)
\right\},
\end{equation*}
where \(p(\bm y_t\mid\bm\Theta_t,\bm\vartheta)\) is the GP marginal
likelihood and \(p(\bm\vartheta)\) is the prior for \(\bm\vartheta\).
Specifically,
\(
\log\sigma_f\sim\mathcal N(0,0.5^2),
\)
\(\log\sigma_f\in[-10,10],
\)
where \(\sigma_f\) is standardised by the observed response scale;
\(
\log\ell_j
\sim
\mathcal N\{\log(0.25\sqrt d),0.25^2\},
\)
\(\ell_j\geq0.01;
\)
and
\(
\varsigma^2\sim\operatorname{InvGamma}(2,10^{-3}),
\)
\(\varsigma^2\in[10^{-4},10^{-2}].
\)
\(\widehat{\bm\vartheta}_{\mathrm{MAP}}^t\) is held fixed until the next
refit. Objective evaluations are exact or nearly exact, so \(\varsigma^2\)
is used operationally as a numerical nugget.

\subsection{Adaptive Refresh Schedule}
The refresh interval is adjusted using a moving calibration diagnostic based
on one-step-ahead GP prediction errors. Let \(W_t\) denote a moving window of
recent queried indices. For \(i\in W_t\), define the standardised predictive
residual
\begin{equation*}
    r_i
=
\frac{
y_i-\mu_{i-1}(\para_i)
}{
\{\sigma_{i-1}^2(\para_i)+\varsigma^2\}^{1/2}
}.
\end{equation*}
The calibration diagnostic monitors whether recent residuals are
approximately centred, have approximately unit variance, and do not have
excessive large-tail events. Specifically, write
\begin{equation*}
    \bar r_t
=
\frac{1}{|W_t|}\sum_{i\in W_t}r_i,
\qquad
\bar s_t^2
=
\frac{1}{|W_t|-1}
\sum_{i\in W_t}(r_i-\bar r_t)^2,
\qquad
\bar q_t
=
\frac{1}{|W_t|}
\sum_{i\in W_t}\mathbf 1\{|r_i|>z_{\rm cal}\}.
\end{equation*}
The surrogate is treated as poorly calibrated when
\begin{equation*}
    |\bar r_t|>\epsilon_{\rm mean}
\quad\text{or}\quad
|\bar s_t^2-1|>\epsilon_{\rm var}
\quad\text{or}\quad
\bar q_t>\epsilon_{\rm tail}.
\end{equation*}
In the experiments, we set
\(
|W_t|=20\),
\(z_{\rm cal}=2\),
\(\epsilon_{\rm mean}=0.25\),
\(\epsilon_{\rm var}=0.5\),
\(\epsilon_{\rm tail}=0.2.
\)
A refresh is triggered only after the poor-calibration condition persists for
three consecutive checks and the current cooldown interval has elapsed. The
cooldown interval is shortened by a shrinkage factor of \(0.91\) when the
calibration condition is consistently poor, subject to the lower bound
\(T_{\min}=20\). When the surrogate remains well calibrated, the interval is
increased by a growth factor of \(1.15\) up to \(T_{\max}=200\).

\subsection{Stopping Rule}
At a refresh time, we treat two successive ssMCMC samples as parallel chains,
compute a coordinatewise potential scale reduction factor
\(\widehat R_j\), and set
\(
\widehat R_{\max}
=
\max_{1\leq j\leq d}\widehat R_j.
\)
With \(p_{\rm BO}=0.1\) and \(\varepsilon_R=0.01\), SALE stops when
\(
\widehat p_{t,k}<p_{\rm BO}\) and 
\(\widehat R_{\max}<1+\varepsilon_R,
\)
or when the evaluation budget is exhausted.

\section{Implementation and Problem-Specific Details}\label{oa:implementation_details}
This section records the problem-specific settings and the protocols used for the supplementary numerical comparisons.

\subsection{Computational Protocol for Numerical Comparisons}
\label{oa:comparison_protocol}

The internal SALE variants differ only in the BO strategy: AO/ESA, TS, or
UCB. For TS and UCB, the acquisition function is optimised over
\(\Omega_{r(t)}\) by eight-start L-BFGS-B with \texttt{maxit = 100} and
analytic gradients. The starts are eight distinct random points from
\(\mathcal X_{r(t)}\). All remaining SALE components are fixed as specified
in the main article and \Cref{oa:esa_settings,oa:sale_workflow_settings}.

For the external baselines, acquisition optimisation uses the same
eight-start L-BFGS-B routine, with starts drawn from an eight-point Latin
hypercube design over \(\Omega\). The GP surrogate class and
hyperparameter-refitting rule are matched to SALE.

The external baselines have the following acquisition rules:
\begin{itemize}
    \item GP-UCB; \citep{gutmann2016bolfi, li2026boss}:
    \begin{equation*}
        \para_{n_t + 1} = \arg\max_{\para \in \Omega} \mu_t(\para) + \sqrt{\beta_t}\sigma_t(\para), \quad \beta_t = 2\log\left(n_t^2\pi^2|\Omega|/6\delta\right), \quad \delta = 0.01.
    \end{equation*}
    \item GP-UCB+; \citep{kim2025enhancing}:
    \begin{align*}
        \para_{n_t + 1}^{\rm BO} = \arg\max_{\para \in \Omega} \mu_t(\para) + \sqrt{\beta_t}\sigma_t(\para), & \quad \beta_t = 2\log\left(n_t^2\pi^2|\Omega|/6\delta\right), \quad \delta = 0.01, \\
        \para_{n_t + 1}^{\rm rand} \sim &{\rm Unif}(\Omega).
    \end{align*}
    \item Log-GP Entropy; \citep{wang2018adaptive}:
    \begin{equation*}
        \para_{n_t + 1} = \arg\max_{\para \in \Omega} \mu_t(\para) + \log\sigma_t(\para).
    \end{equation*}
    \item Log-GP Variance; \citep{kandasamy2017query}:
    \begin{equation*}
        \para_{n_t + 1} = \arg\max_{\para \in \Omega} 2\mu_t(\para) + \sigma_t^2(\para) + \log[\exp\{\sigma^2_t(\para)\} - 1].
    \end{equation*}\end{itemize}
Note that for GP-UCB+, at each active time \(t\), there are two evaluations
based on the GP-UCB query \(\para_{n_t + 1}^{\rm BO}\) and the random
exploration query \(\para_{n_t + 1}^{\rm rand}\). Hence, the total number of
iterations for GP-UCB+ is half that of the other methods, while the total
number of evaluations is kept equal.

Within each replicate, all methods use the same 20-point Latin hypercube
initial design. For the analytic benchmarks, a given dimension and replicate
also share that design across targets; designs vary across replicates. The
UDG comparison likewise uses one common 20-point design for SALE and GP-UCB.

\subsection{Simulation and Applied-Example Details}
\label{oa:example_details}
We first give the parameterisations, priors, and numerical constructions for the simulated and applied examples.
\subsubsection{GRF--Mat\'ern}
\label{oa:matern}

The GRF--Mat\'ern data are generated with
\(\mu=0\), \(\sigma_f=1\), \(\ell=0.12\), \(\delta=0.10\), and
\(\nu=1.5\). The priors are
\(
\mu\sim\mathcal N(0,2^2)\),
\(\log\sigma_f\sim\mathcal N(\log1,0.5^2)\),
\(\log\ell\sim\mathcal N(\log0.15,0.5^2),
\)
\(
\log\delta\sim\mathcal N(\log0.10,0.5^2)\),
\(
\nu\sim\operatorname{Gamma}(3,2),
\)
restricted to
\(
\mu\in[-4,4]\),
\(\sigma_f\in[0.1,4]\),
\(\ell\in[0.01,1]\),
\(\delta\in[0.01,2]\),
\(\nu\in[0.5,5].
\)

\subsubsection{Lotka--Volterra Model}
\label{oa:lv}

For inference on the Lotka--Volterra model, we use
\begin{equation*}
    \para
=
(
m_{\alpha\gamma},
\log\beta,
d_{\alpha\gamma},
\log\delta,
\log x_0,
\log y_0,
\log\sigma
)\in\mathbb R^7,
\end{equation*}
where
\(
m_{\alpha\gamma}=\frac12\log(\alpha\gamma)
\)
and
\(
d_{\alpha\gamma}=\frac12\log(\alpha/\gamma)
\),
which reduce posterior dependence between \(\alpha\) and \(\gamma\). The
data-generating value on the natural scale is
\(
\para_{\mathrm{nat},0}
=
(1,0.05,1.5,0.075,30,4,0.25).
\)
The prior is placed on
\(
\log\para_{\rm nat}
:=
(\log\alpha,\log\beta,\log\gamma,\log\delta,
\log x_0,\log y_0,\log\sigma),
\)
with
\(
\log\para_{\rm nat}
\sim
\mathcal N(\log\para_{\rm nat,0},\Sigma)\),
\(
\Sigma
=
\operatorname{diag}
(0.75^2,1,0.75^2,1,0.75^2,0.75^2,0.75^2).
\)
The search domain is
\(
m_{\alpha\gamma}\in[0.1,0.3]\),
\(
\log\beta\in[-4,-2],
\)
\(d_{\alpha\gamma}\in[-0.7,0.5]\),
\(\log\delta\in[-3.5,-1.5],
\)
\(\log x_0 \in[2.5,4]\),
\(\log y_0\in[0.5,2],
\)
\(\log\sigma\in[-2.5,-0.5].
\)

\subsubsection{Bus-Engine Replacement Example}
\label{oa:rust}

Let \(q_j(\mu,r,\omega)\), \(j\in[J_{\max}]_0\), denote the probabilities
of the discretised mileage increment \(J\). Specifically,
\(
q_0
=
\omega+(1-\omega)\PP_{\rm NB}(0;\mu,r)\),
\(q_j
=
(1-\omega)\PP_{\rm NB}(j;\mu,r)\),
\( j\in[J_{\max}-1],
\)
and the final category collects the upper tail:
\(
q_{J_{\max}}
=
(1-\omega)\PP_{\rm NB}(J\geq J_{\max};\mu,r),
\) \(J_{\max}=24.\)
If the engine is kept, mileage accumulates; replacement resets the mileage
state before the next increment. The transition probability from state
\(x\) to \(x'\), under action \(a\in\{0,1\}\), is
\begin{equation*}
P_{a,\para}(x,x')
=
\sum_{j=0}^{J_{\max}}
q_j(\mu,r,\omega)
\mathbf 1\!\left[
x'=\min\{(1-a)x+j,K-1\}
\right].
\end{equation*}

Define
\(
u_{0,\para}(x)=-C(x)\),
\(
u_{1,\para}(x)=-R_C.
\)
For fixed discount factor \(\beta=0.9999\), the choice-specific function
\(Q_{a,\para}\) and integrated value function \(V_{\para}\) satisfy
\begin{equation*}
    Q_{a,\para}(x)
=
u_{a,\para}(x)
+
\beta\sum_{x'=0}^{K-1}
P_{a,\para}(x,x')V_{\para}(x'), \quad
V_{\para}(x)=
\log\!\left[
\exp\{Q_{0,\para}(x)\}
+
\exp\{Q_{1,\para}(x)\}
\right].
\end{equation*}

The resulting conditional choice probability is
\begin{equation*}
    p_{\para}(a\mid x)
=
\frac{\exp\{Q_{a,\para}(x)\}}
{\exp\{Q_{0,\para}(x)\}+\exp\{Q_{1,\para}(x)\}}.
\end{equation*}
The log-likelihood combines the observed choice and transition
contributions:
\begin{equation*}
\ell(\para)
=
\sum_{i,h}
\left\{
\log p_{\para}(a_{ih}\mid x_{ih})
+
\log P_{a_{ih},\para}(x_{ih},x_{i,h+1})
\right\}.
\end{equation*}

For the prior distribution, we use independent Gaussian priors on
the log scale. The parameter vector
\(
\para=(R_C,c_1,c_2,c_3,\mu,r,\omega)
\)
has \(\log\para\) truncated to \(\log\Omega\), with mean and
standard-deviation
\(
m=(\log 10,\log 5,\log 2,0,\log 2,\log 5,-1),
\)
\(s=(1,1.5,1.5,1.5,1,1,2)\).
The search box \(\Omega\) is
\(
R_C\in[2,50]\),
\(c_1\in[0.01,50],\)
\(c_2\in[0.0025,20],\)
\(c_3\in[0.0025,10]\),
\(\mu\in[0.2,15],\)
\(r\in[0.5,120],\)
\(\omega\in[3\times10^{-4},0.95].
\)

\subsubsection{UDG Quadrature-Rule Construction}
\label{oa:udg}

To propagate uncertainty in \(\para\) to \(\mathcal U(s)\), we use the SALE
posterior mode and \(2^4=16\) additional points along sign-combination
directions in the local eigen-coordinate system. For each quadrature point of
\(\para\), we fit the conditional latent Gaussian model by INLA and average
the resulting \(\mathcal U(s)\) using the quadrature weights. A full
tensor-product adaptive Gauss--Hermite rule of \citet{aghqtheory} requires
fitting and storing at least \(3^4=81\) conditional-INLA models, which is
memory-prohibitive here.

\subsection{Reference-Posterior Samples for TV Computation}
\label{oa:reference_samples}

For each analytic benchmark, we generate \(50{,}000\) draws with the
No-U-Turn Sampler in \texttt{rstan}; after burn-in and thinning,
\(M=5{,}000\) draws are retained. For each simulated example, adaptive
Metropolis produces \(20{,}000\) draws, again reduced to \(M=5{,}000\).
The retained chains have near-unity potential scale reduction factors.

\section{Additional Empirical Results}\label{oa:additional_results}
This section provides the supplementary empirical comparisons and runtime audit.
\subsection{Simple-Regret Comparison}
\label{oa:simple_regret}

\begin{figure}[t]
    \centering
    \includegraphics[width=0.92\linewidth]
    {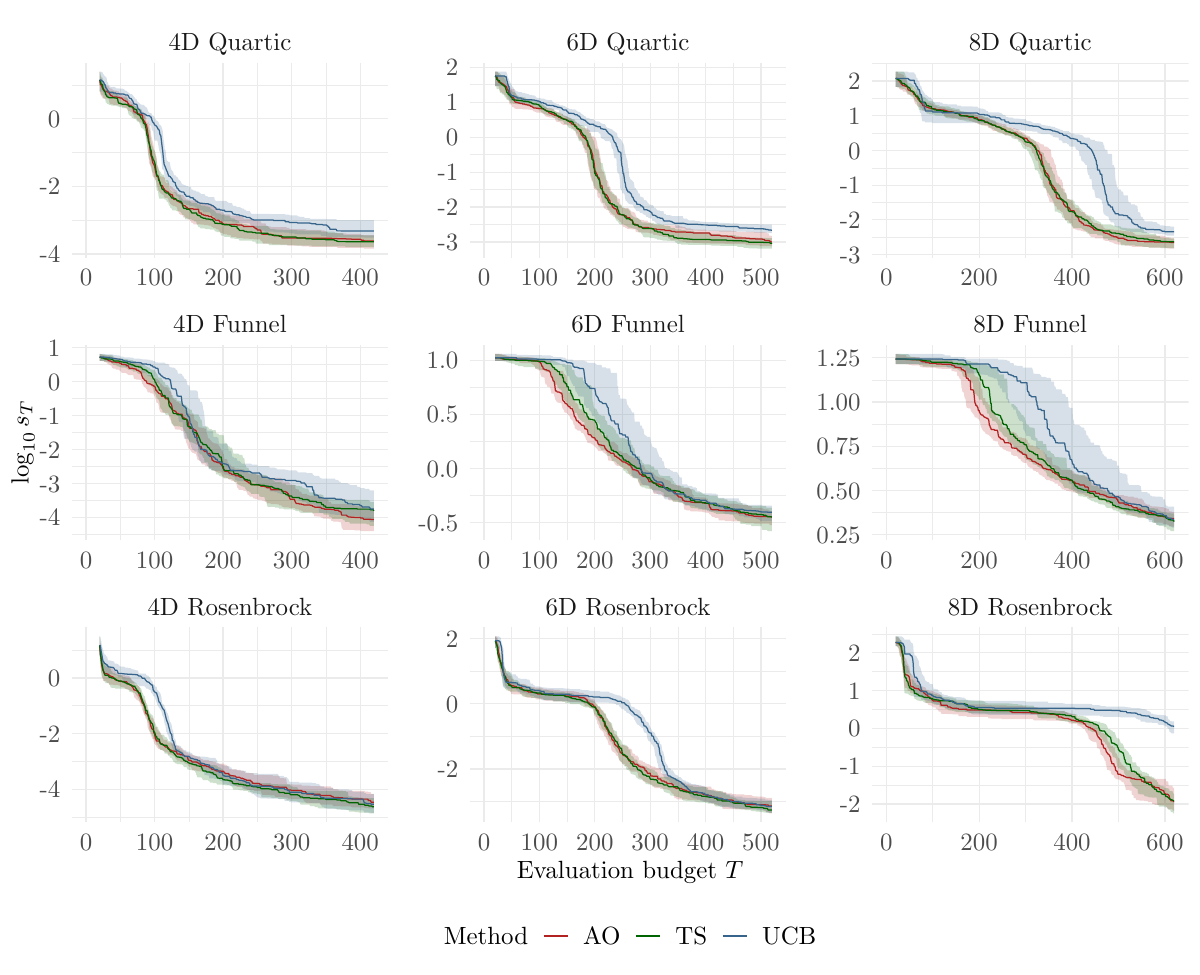}
    \caption{Log simple regret \(\log_{10}s_T\) under AO, TS, and UCB for
    the three unimodal benchmark targets. Curves show medians over \(50\)
    runs; shading shows IQRs. Lower is better.}
    \label{fig:oa_simple_regret}
\end{figure}

AO and TS attain similar simple-regret performance in
\Cref{fig:oa_simple_regret}, despite AO's lower average BO-branch regret in
the main article. Thus, both methods identify competitive high-value points
at broadly similar budgets, while AO makes lower-gap BO queries more
consistently along the full trajectory. This distinction is consistent with,
but not implied by, the perturbation-stability result in the main article.

\subsection{Internal SALE Inference Comparisons}
\label{oa:internal_inference}

\Cref{fig:oa_tv_benchmark_internal,fig:oa_tv_simulation_internal}
report the internal SALE comparisons for the analytic benchmarks and
simulated examples, respectively.

On the analytic benchmarks, the SALE variants are similar in the easier
settings, but AO generally begins reducing TV earlier on the
higher-dimensional Funnel and Rosenbrock targets. UCB reduces TV later on
these targets, although it can finish slightly lower on the Bimodal examples.

The separation is clearer in the simulated examples: AO reduces TV earlier
and more consistently, TS transitions later and with wider variability, and
UCB makes little progress within the available budgets. These trajectories
support AO as the default BO branch for the tested problems while preserving
the target-dependent exceptions visible in the benchmarks.

\begin{figure}[t]
    \centering
    \includegraphics[width=0.92\linewidth]
    {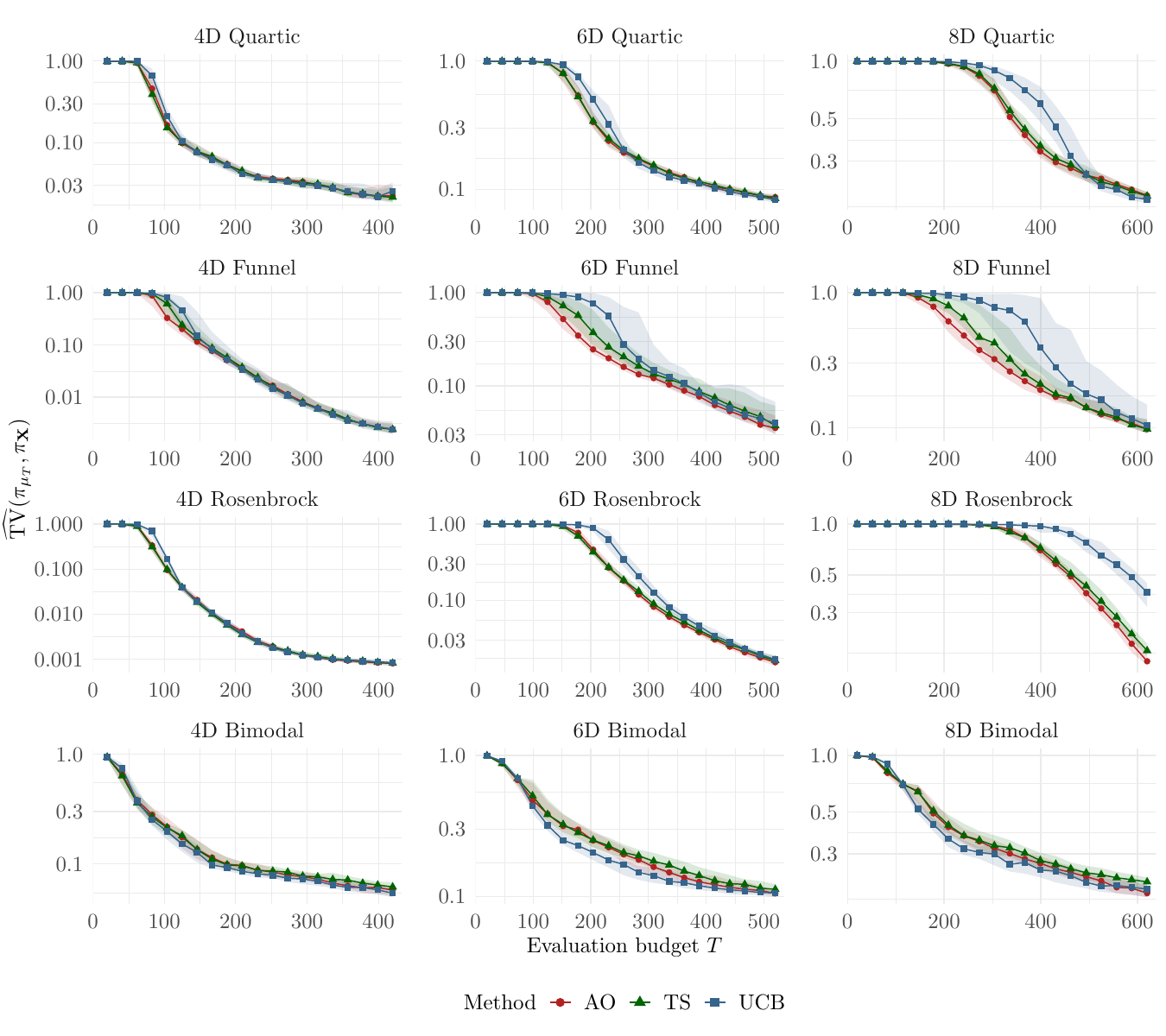}
    \caption{TV estimates
    \(\widehat{\mathrm{TV}}(\pi_{\mu_T},\pi_\dataseq)\) for SALE-AO,
    SALE-TS, and SALE-UCB across benchmark targets and dimensions. Curves show
    medians over \(50\) runs; shading shows IQRs. Lower is better.}
    \label{fig:oa_tv_benchmark_internal}
\end{figure}

\begin{figure}[t]
    \centering
    \includegraphics[width=0.92\linewidth]
    {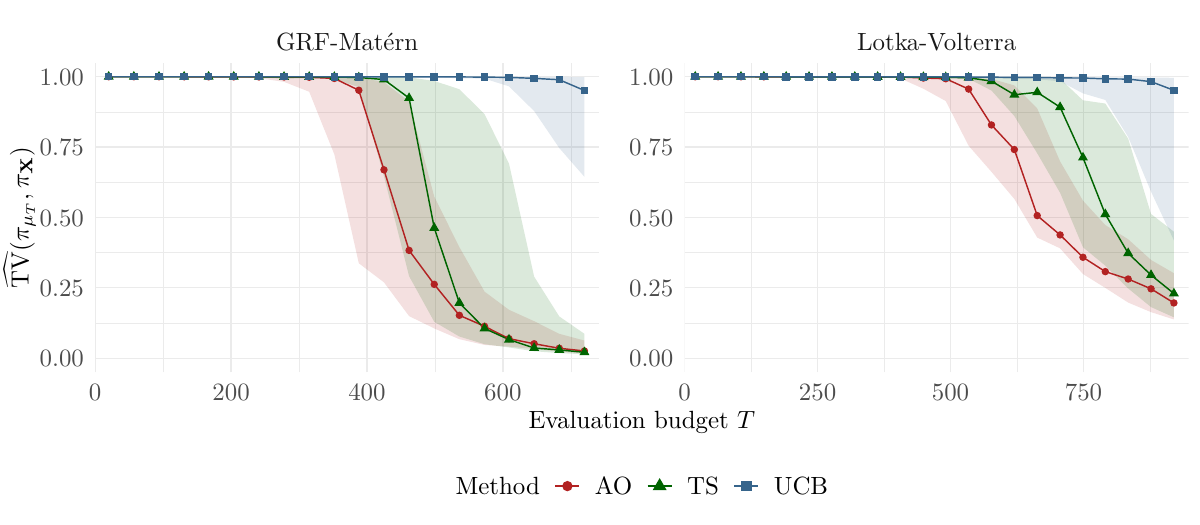}
    \caption{TV estimates
    \(\widehat{\mathrm{TV}}(\pi_{\mu_T},\pi_\dataseq)\) for SALE-AO,
    SALE-TS, and SALE-UCB across simulated examples. Curves show medians over
    \(30\) runs; shading shows IQRs. Lower is better.}
    \label{fig:oa_tv_simulation_internal}
\end{figure}

\subsection{Representative Run-Time Audit}
\label{oa:runtime}

\begin{table}[t]
\centering
\vspace{0.08in}
{\small
\renewcommand{\arraystretch}{1.08}
\begin{tabular}{llrrrr}
\toprule
Target & Method & \(n\) & Total budget \(T\) & Median [IQR] & Rel. median \\
\midrule
\multirow{2}{*}{Quartic 4D}
& SALE-AO          & 50 & 420 & \(138.2\,[122.5,143.9]\) & \(1.00\) \\
& Log-GP entropy   & 50 & 420 & \(74.4\,[69.5,81.9]\)    & \(0.54\) \\
\addlinespace
\multirow{2}{*}{Quartic 6D}
& SALE-AO          & 50 & 520 & \(201.6\,[174.7,213.4]\) & \(1.00\) \\
& Log-GP entropy   & 50 & 520 & \(205.9\,[186.7,216.3]\) & \(1.02\) \\
\addlinespace
\multirow{2}{*}{Quartic 8D}
& SALE-AO          & 50 & 620 & \(313.5\,[288.7,345.0]\) & \(1.00\) \\
& Log-GP entropy   & 50 & 620 & \(462.3\,[411.5,497.9]\) & \(1.47\) \\
\addlinespace
\multirow{4}{*}{GRF--Mat\'ern}
& SALE-AO          & 30 & 720 & \(416.8\,[388.8,450.2]\) & \(1.00\) \\
& SALE-TS          & 30 & 720 & \(423.2\,[390.3,435.3]\) & \(1.02\) \\
& SALE-UCB         & 30 & 720 & \(484.5\,[465.8,505.0]\) & \(1.16\) \\
& Log-GP entropy   & 30 & 720 & \(361.4\,[285.3,471.1]\) & \(0.87\) \\
\addlinespace
\multirow{4}{*}{Lotka--Volterra}
& SALE-AO          & 30 & 920 & \(950.8\,[909.4,1010.0]\) & \(1.00\) \\
& SALE-TS          & 30 & 920 & \(1000.9\,[946.5,1067.0]\) & \(1.05\) \\
& SALE-UCB         & 30 & 920 & \(1175.8\,[1065.2,1266.7]\) & \(1.24\) \\
& Log-GP entropy   & 30 & 920 & \(895.7\,[812.3,985.9]\) & \(0.94\) \\
\bottomrule
\end{tabular}
}
\caption{End-to-end wall-clock runtime audit. Timings are in seconds and
summarised over \(n\) independent replicates. The relative median is computed
within each target, using the default SALE implementation as the reference.}
\label{tab:oa_runtime_audit}
\end{table}

\Cref{tab:oa_runtime_audit} reports wall-clock time for log-GP entropy,
the closest external competitor in the TV comparisons. Relative to it,
SALE-AO is slower for the 4D Quartic target, comparable for 6D, and faster for 8D. Log-GP
entropy is \(13\%\) and \(6\%\) faster in median runtime on the GRF--Mat\'ern
and Lotka--Volterra examples, respectively. SALE-TS is within \(5\%\) of
SALE-AO on both simulated examples, whereas SALE-UCB is \(16\%\) and \(24\%\)
slower.

These are implementation-specific, end-to-end audits rather than
language-normalised complexity comparisons. The primary empirical comparison
remains posterior accuracy under a fixed budget of objective evaluations.

\clearpage
\section{Auxiliary RFF--ESA Convergence Results}
\label{oa:rff_esa_convergence}

This section records the auxiliary convergence analysis for the
finite-dimensional RFF--Matheron implementation of ESA. It is not needed for
the principal AO regret or UR calibration results in the main article. We
adopt the standing GP setup and notation of the main article, fix a realised
design \(\mathcal D_t\), and state the additional conditions used only in this
section.

\subsection{RFF and ESA Assumptions}

\begin{assumption}
\label{ass:oa_rff}
For each fixed realised design, assume
\(\bm K_t+\varsigma^2\bm I_{n_t}\) is positive definite after redundant exact
observations have been removed, and
\(
\EE_\rho\|\omega\|_2^2<\infty.
\)
\end{assumption}

\begin{assumption}
\label{ass:oa_esa}
For fixed \((t,m,\tau)\), assume:
\begin{enumerate}
\item[(E1)] \(\mathcal W\subset\mathbb R^{m+n_t}\) is compact,
\(\nu_t^{(m)}\) has full support on \(\mathcal W\), and
\(h_t^{(m)}\) is continuous on \(\Omega\times\mathcal W\). Write
\(
M_t^{(m)}
=
\sup_{\para\in\Omega,\bm z\in\mathcal W}
\lvert h_t^{(m)}(\para,\bm z)\rvert
<\infty.
\)

\item[(E2)] Let
\(
\bar\lambda=\lambda/\lambda(\Omega).
\)
The parameter proposal has density \(q_{\para}(\para'\mid\para)\) with respect to
\(\bar\lambda\), and the latent proposal has density
\(q_{\bm z}(\bm z'\mid\bm z)\) with respect to \(\nu_t^{(m)}\). There exist
finite positive constants
\(\underline q_{\para},\bar q_{\para},\underline q_{\bm z},\bar q_{\bm z}\) such
that
\begin{equation*}
    0<\underline q_{\para}
\le
q_{\para}(\para'\mid\para)
\le
\bar q_{\para}
<\infty,
\qquad
\forall \para,\para'\in\Omega,
\end{equation*}
and
\begin{equation*}
    0<\underline q_{\bm z}
\le
q_{\bm z}(\bm z'\mid\bm z)
\le
\bar q_{\bm z}
<\infty,
\qquad
\forall \bm z,\bm z'\in\mathcal W .
\end{equation*}

\item[(E3)] The exchange auxiliary variable is drawn exactly from
\(\pi_\tau\bigl(\cdot\mid h_t^{(m)}(\cdot,\bm z')\bigr)\), and each ESA sweep
contains at least one \(\para\)-update and one \(\bm z\)-exchange update.
\end{enumerate}
\end{assumption}
\subsection{Levelwise Convergence of ESA under the RFF--Matheron Representation}\label{oa:esa_rff_exact}

Fix \(t\), condition on \(\mathcal D_t\), and fix \(\tau>0\). For a stationary
kernel \(\kappa\), draw \(\omega_j\sim\rho\) from its spectral distribution
and \(b_j\sim\operatorname{Unif}(0,2\pi)\), independently, and define
\begin{equation*}
    \bm\phi_m(\para)
=
\sqrt{\frac{2\kappa(0)}{m}}
\left[
\cos(\omega_1^\top\para+b_1),\ldots,
\cos(\omega_m^\top\para+b_m)
\right]^\top .
\end{equation*}
Let
\(
\bm\Phi_{t,m}
=
\left[
\bm\phi_m(\para_1),\ldots,\bm\phi_m(\para_{n_t})
\right]^\top
\)
and
\(
\bm A_t
=
(\bm K_t+\varsigma^2\bm I_{n_t})^{-1}.
\)
Conditional on
\(\Xi_m=\{(\omega_j,b_j)\}_{j=1}^m\), the decoupled RFF--Matheron path is
\begin{equation*}
f_{t,m}^{\rm dec}(\para;\bm w,\bm\varepsilon)
=
\bm\phi_m(\para)^\top\bm w
+
\bm k_t(\para)^\top\bm A_t
\left\{
\bm y_t-\bm\Phi_{t,m}\bm w-\bm\varepsilon
\right\},
\end{equation*}
where
\(
\bm w\sim\mathcal N(\bm0,\bm I_m)
\)
and
\(
\bm\varepsilon
\sim
\mathcal N(\bm0,\varsigma^2\bm I_{n_t})
\)
independently. This is the construction used in \Cref{app:esa} of the main
paper. For brevity, write
\(
f_{t,m}^{\rm dec}
=
f_{t,m}^{\rm dec}(\cdot;\bm w,\bm\varepsilon)
\)
below, and define
\begin{equation*}
    \varphi_{\omega,b}(\para)
=
\sqrt{2\kappa(0)}\cos(\omega^\top\para+b),
\qquad
\kappa_m(\para,\para')
=
\bm\phi_m(\para)^\top\bm\phi_m(\para').
\end{equation*}
Set
\(
\bm K_{t,m}
=
\bm\Phi_{t,m}\bm\Phi_{t,m}^\top
\),
\(
\bm k_{t,m}(\para)
=
\bm\Phi_{t,m}\bm\phi_m(\para)
\),
\(
\bm G_t
=
\bm K_t+\varsigma^2\bm I_{n_t},\)
and \(\bm a_t(\para)
=
\bm G_t^{-1}\bm k_t(\para).
\)

Conditional on \((\mathcal D_t,\Xi_m)\), \(f_{t,m}^{\rm dec}\) is Gaussian with
\(
\mu_{t,m}(\para)=\mu_t(\para)
\)
and covariance
\begin{align}
\label{eq:decoupled_rff_covariance}
\kappa_{t,m}(\para,\para')
={}&
\kappa_m(\para,\para')
-\bm a_t(\para)^\top\bm k_{t,m}(\para')
-\bm a_t(\para')^\top\bm k_{t,m}(\para)
\notag\\
&+\bm a_t(\para)^\top
(\bm K_{t,m}+\varsigma^2\bm I_{n_t})
\bm a_t(\para').
\end{align}
Let
\(
\pi_{\tau,t}^{(m)}(\dee\para\mid\Xi_m)
=
\EE\{\pi_\tau(\dee\para\mid f_{t,m}^{\rm dec})
\mid\mathcal D_t,\Xi_m\}
\)
denote the corresponding RFF-induced AO law. Define
\(
\varepsilon_m
=
\|\kappa_m-\kappa\|_{L^\infty(\Omega\times\Omega)}.
\)
The notation \(O_{\Pr_{\Xi_m}}(\cdot)\) refers to stochastic order with
respect to the RFF draw \(\Xi_m\), conditionally on \(\mathcal D_t\). For a
probability measure on \(\Omega\times\mathcal W\), the subscript
\((\cdot)_{\para}\) denotes its \(\para\)-marginal.

\begin{proposition}
\label{prop:oa_rff_ao}
Under the standing GP setup of the main article and \Cref{ass:oa_rff}, for fixed
\((t,\mathcal D_t,\tau)\), there exists a finite constant
\(C_{\tau,t}<\infty\), depending on \(\tau\), \(\mathcal D_t\),
\(\varsigma^2\), and the kernel, such that
\(
\TV\!\left\{
\pi_{\tau,t}^{(m)}(\cdot\mid\Xi_m),
\pi_{\tau,t}
\right\}
\le
C_{\tau,t}\varepsilon_m .
\)
Moreover,
\(
\varepsilon_m
=
O_{\Pr_{\Xi_m}}(m^{-1/2})\),
\(
\TV\!\left\{
\pi_{\tau,t}^{(m)}(\cdot\mid\Xi_m),
\pi_{\tau,t}
\right\}
=
O_{\Pr_{\Xi_m}}(m^{-1/2})
\).
\end{proposition}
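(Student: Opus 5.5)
Both $\pi_{\tau,t}$ and $\pi^{(m)}_{\tau,t}(\cdot\mid\Xi_m)$ are AO laws $\mathcal A_\tau(\cdot)$ of Gaussian path laws with the same mean $\mu_t$ (shown above the statement). So the plan is to apply \Cref{lem:ao_perturb}: the TV distance is at most $W_{1,\infty}/\tau$ between the exact posterior path law $\PP_t$ and the RFF law $\PP_t^{(m)}:=\mathrm{Law}(f^{\rm dec}_{t,m}\mid\mathcal D_t,\Xi_m)$. The difficulty is that a sup-norm Wasserstein bound between two Gaussian processes built from different features gives only a square-root rate in $\varepsilon_m$ without a careful coupling. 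I would therefore not go through $W_{1,\infty}$. Instead I would use a direct comparison of the Gaussian laws, based on the covariance identity \Cref{eq:decoupled_rff_covariance}.

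\textbf{Step 1: covariance discrepancy.} Subtract the exact posterior covariance $\kappa_t=\kappa-\bm k_t^\top\bm G_t^{-1}\bm k_t$ from $\kappa_{t,m}$. Using $\bm a_t=\bm G_t^{-1}\bm k_t$, the difference is
\[
\kappa_{t,m}-\kappa_t=(\kappa_m-\kappa)-\bm a_t^\top(\bm k_{t,m}-\bm k_t)-(\bm k_{t,m}-\bm k_t)^\top\bm a_t+\bm a_t^\top(\bm K_{t,m}-\bm K_t)\bm a_t .
\]
Every entry of $\bm k_{t,m}-\bm k_t$ and $\bm K_{t,m}-\bm K_t$ is a value of $\kappa_m-\kappa$. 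The vector $\bm a_t$ is bounded uniformly on $\Omega$, since $\bm G_t$ is positive definite by \Cref{ass:oa_rff}. Hence $\|\kappa_{t,m}-\kappa_t\|_\infty\le C_t\varepsilon_m$.

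\textbf{Step 2: smooth interpolation in the covariance.} Let $\PP^{(s)}$ be the centred Gaussian law with mean $\mu_t$ and covariance $\kappa_t+s(\kappa_{t,m}-\kappa_t)$, for $s\in[0,1]$. This is a convex combination of two covariances, so it is positive semidefinite. Fix a bounded measurable $h$ with $|h|\le1$ and set $F(g):=\int h\,\dee\pi_\tau(\cdot\mid g)$. Using Gaussian interpolation (the Slepian/Stein identity, justified through finite partitions as in \Cref{lem:tv_from_weighted_variance}),
\[
\frac{\dee}{\dee s}\EE_{\PP^{(s)}}F=\tfrac12\EE\iint(\kappa_{t,m}-\kappa_t)(\para,\bm u)\,D^2F_g[\delta_\para,\delta_{\bm u}] .
\]
The second functional derivative of $F$ is a covariance-type expression of the tilted softmax. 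Its total-variation mass is bounded by $C\tau^{-2}\|h\|_\infty$, exactly as in the finite-dimensional Hessian $w_i(\mathbf 1\{i=j\}-w_j)$. Integrating over $s\in[0,1]$ and taking the supremum over $h$ gives $\TV\le C_{\tau,t}\varepsilon_m$ with $C_{\tau,t}\propto C_t/\tau^2$. I expect this to be the main obstacle: making the functional Hessian bound and the discretisation limit rigorous. Dominated convergence is available through the sup-norm Gaussian moments of \Cref{lem:conditional_gp_supremum}, with the canonical metric of the RFF path controlled via $\EE_\rho\|\omega\|^2<\infty$.

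\textbf{Step 3: rate of $\varepsilon_m$.} The features $\varphi_{\omega_j,b_j}(\para)\varphi_{\omega_j,b_j}(\para')$ are i.i.d. with mean $\kappa(\para-\para')$ and bounded by $2\kappa(0)$. Their gradients are bounded by $C\|\omega_j\|$, which has finite second moment. A standard chaining or Rahimi--Recht covering argument on the compact $\Omega\times\Omega$ then gives $\EE\varepsilon_m\le Cm^{-1/2}$, and Markov's inequality yields $\varepsilon_m=O_{\Pr_{\Xi_m}}(m^{-1/2})$. Combining this with Step 2 gives the TV rate.
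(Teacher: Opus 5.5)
Your proposal is correct and follows the paper's proof essentially step for step. The steps match: the same covariance-difference expansion through $\bm a_t=\bm G_t^{-1}\bm k_t$; the same Gaussian interpolation applied to partition-discretised Gibbs masses, with summed absolute Hessian entries bounded by $C/\tau^2$ uniformly in the partition; a partition limit by bounded convergence; and an empirical-process rate for $\varepsilon_m$, which the paper phrases as a Donsker property rather than chaining plus Markov. One small caution: the clean $m^{-1/2}$ rate needs the chaining route you mention, since a single-scale Rahimi--Recht net with a union bound loses a $\sqrt{\log m}$ factor.
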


\begin{proof}
The uniform RFF kernel approximation
\(\varepsilon_m=O_{\Pr_{\Xi_m}}(m^{-1/2})\) follows from a standard
empirical-process argument for the random-feature class \citep{sutherland2015error}. Indeed, write
\(
g_{\para,\para'}(\omega,b)
=
\varphi_{\omega,b}(\para)\varphi_{\omega,b}(\para')\),
\(
(\para,\para')\in\Omega\times\Omega .
\)
The class
\(
\mathcal G
=
\{g_{\para,\para'}:(\para,\para')\in\Omega\times\Omega\}
\)
has a bounded envelope because
\(
|\varphi_{\omega,b}(\para)|\le \sqrt{2\kappa(0)}
\).
Moreover, under the finite spectral second-moment condition in
\Cref{ass:oa_rff}, it has finite \(L^2\)-entropy integral as a
finite-dimensional Lipschitz-parametric class over compact
\(\Omega\times\Omega\). Hence \(\mathcal G\) is \(\PP_{\omega,b}\)-Donsker, where \(\PP_{\omega,b}\) is the law of \((\omega,b)\), and
\begin{equation*}
    \sqrt m
\sup_{\para,\para'\in\Omega}
\left|
m^{-1}\sum_{j=1}^m
g_{\para,\para'}(\omega_j,b_j)
-
\EE g_{\para,\para'}(\omega,b)
\right|
=
O_{\Pr_{\Xi_m}}(1).
\end{equation*}
Equivalently,
\(
\varepsilon_m=O_{\Pr_{\Xi_m}}(m^{-1/2})
\).

Let
\(
\delta\kappa_m
=
\kappa_m-\kappa,\)
\(\delta\bm k_{t,m}(\para)
=
\bm k_{t,m}(\para)-\bm k_t(\para)\),
\(
\delta\bm K_{t,m}
=
\bm K_{t,m}-\bm K_t.
\)
Expanding the exact posterior covariance in the same form as
\Cref{eq:decoupled_rff_covariance} gives
\begin{equation*}
    \kappa_{t,m}(\para,\para')-\kappa_t(\para,\para')
=
\delta\kappa_m(\para,\para')
-\bm a_t(\para)^\top
\delta\bm k_{t,m}(\para')
-\bm a_t(\para')^\top
\delta\bm k_{t,m}(\para)+
\bm a_t(\para)^\top
\delta\bm K_{t,m}
\bm a_t(\para').
\end{equation*}
Let
\(
M_{a,t}
=
\sup_{\para\in\Omega}
\|\bm a_t(\para)\|_2.
\)
If \(n_t=0\), set \(M_{a,t}=0\). If \(n_t\geq1\),
\Cref{ass:oa_rff} ensures that
\(\bm G_t=\bm K_t+\varsigma^2\bm I_{n_t}\) is positive definite. Therefore,
\begin{equation*}
    M_{a,t}
\leq
\|\bm G_t^{-1}\|_{\mathrm{op}}
\sup_{\para\in\Omega}\|\bm k_t(\para)\|_2
\leq
\frac{
\sqrt{n_t}\,
\sup_{\para,\bm u\in\Omega}
|\kappa(\para,\bm u)|
}{
\lambda_{\min}(\bm G_t)
}
<\infty.
\end{equation*}
The final inequality follows because \(n_t\) is finite,
\(\lambda_{\min}(\bm G_t)>0\), and \(\kappa\) is bounded on compact
\(\Omega\). Moreover,
\begin{equation*}
    \sup_{\para\in\Omega}
\|\delta\bm k_{t,m}(\para)\|_2
\leq
\sqrt{n_t}\,\varepsilon_m,
\qquad
\|\delta\bm K_{t,m}\|_{\rm op}
\leq
n_t\varepsilon_m.
\end{equation*}
Consequently,
\begin{equation*}
    \|\mu_{t,m}-\mu_t\|_{L^\infty(\Omega)}
+
\|\kappa_{t,m}-\kappa_t\|_{L^\infty(\Omega\times\Omega)}
\leq
C_{\mathrm{RFF},t}\varepsilon_m,
\end{equation*}
where
\(
C_{\mathrm{RFF},t}
=
(1+\sqrt{n_t}\,M_{a,t})^2.
\)

It remains to transfer this posterior mean-covariance perturbation to the AO
law. For a finite Borel partition
\(\mathcal Q=\{C_1,\ldots,C_N\}\) of $\Omega$, choose \(\bm \xi_i\in C_i\) and write the
discretised Gibbs mass of a measurable \(A\subseteq\Omega\) as
\begin{equation*}
    H_{A,\mathcal Q}(\bm x)
=
\frac{
\sum_{i=1}^N \lambda(A\cap C_i)\exp(x_i/\tau)
}{
\sum_{i=1}^N \lambda(C_i)\exp(x_i/\tau)
}, \quad \bm x= (x_1, \dots, x_N) \in \mathbb R^N.
\end{equation*}
The sums of the absolute first and second partial derivatives of
\(H_{A,\mathcal Q}\) are bounded by constants depending only on \(\tau\),
uniformly in \(A\), \(\mathcal Q\), and \(N\). Indeed, if
\begin{equation*}
    w_i(\bm x)=\frac{\lambda(C_i)\exp(x_i/\tau)}
{\sum_{\ell=1}^N\lambda(C_\ell)\exp(x_\ell/\tau)},
\end{equation*}
then
\(0\le H_{A,\mathcal Q}\le1\) and
\(
\sum_{i=1}^N
|\partial_i H_{A,\mathcal Q}(\bm x)|
\le 1/\tau,\)
\(\sum_{i,j=1}^N
|\partial_{ij} H_{A,\mathcal Q}(\bm x)|
\le C/\tau^2
\)
for a universal constant \(C<\infty\).

Let
\(
\bm Z_m
=
\{f_{t,m}^{\rm dec}(\bm \xi_1),\ldots,
f_{t,m}^{\rm dec}(\bm \xi_N)\}^\top
\)
and
\(
\bm Z=\{f_t(\bm \xi_1),\ldots,f_t(\bm \xi_N)\}^\top.
\)
Conditionally on \((\mathcal D_t,\Xi_m)\), \(\bm Z_m\) is Gaussian with mean
\(\bm\mu_m\) and covariance \(\bm\Sigma_m\). Conditionally on
\(\mathcal D_t\), \(\bm Z\) is Gaussian with mean \(\bm\mu\) and covariance
\(\bm\Sigma\). The previous display implies
\begin{equation*}
    \max_i|(\bm\mu_m-\bm\mu)_i|
+
\max_{i,j}|(\bm\Sigma_m-\bm\Sigma)_{ij}|
\le
C_{\mathrm{RFF},t}\varepsilon_m .
\end{equation*}
For \(s\in[0,1]\), let \(\bm Z_s\) be Gaussian with mean
\(\bm\mu_s=\bm\mu+s(\bm\mu_m-\bm\mu)\) and covariance
\(\bm\Sigma_s=\bm\Sigma+s(\bm\Sigma_m-\bm\Sigma)\). Gaussian interpolation gives
\begin{equation*}
    \frac{\dee}{\dee s}\EE \left\{H_{A,\mathcal Q}(\bm Z_s)\right\}
=
\sum_{i=1}^N(\bm\mu_m-\bm\mu)_i
\EE\{\partial_iH_{A,\mathcal Q}(\bm Z_s)\} +
\frac12
\sum_{i,j=1}^N(\bm\Sigma_m-\bm\Sigma)_{ij}
\EE\{\partial_{ij}H_{A,\mathcal Q}(\bm Z_s)\}.
\end{equation*}
Using the derivative bounds above yields
\begin{equation*}
    \left|
\EE \left\{H_{A,\mathcal Q}(\bm Z_m)\right\}
-
\EE \left\{H_{A,\mathcal Q}(\bm Z)\right\}
\right|
\le
C_{\tau,t}\varepsilon_m,
\end{equation*}
where \(C_{\tau,t}<\infty\) is independent of \(A\), \(\mathcal Q\), and \(N\).

Finally, take a sequence of finite Borel partitions
\(\mathcal Q_n=\{C_{n,1},\ldots,C_{n,N_n}\}\) of $\Omega$ such that
\(\max_{1\le i\le N_n}\operatorname{diam}(C_{n,i})\to0\).
For each continuous sample path \(h\), the corresponding piecewise-constant
approximations converge uniformly to \(h\) on compact \(\Omega\). The Gibbs
probability of \(A\) under the piecewise-constant approximation therefore
converges to \(\pi_\tau(A\mid h)\), and the convergence is dominated by one.
Applying this to \(f_{t,m}^{\rm dec}\) and \(f_t\), then using dominated
convergence, gives for every measurable \(A\subseteq\Omega\),
\begin{equation*}
    \left|
\pi_{\tau,t}^{(m)}(A\mid\Xi_m)
-
\pi_{\tau,t}(A)
\right|
\le
C_{\tau,t}\varepsilon_m .
\end{equation*}
Since the bound is uniform in \(A\), taking the supremum over measurable
\(A\subseteq\Omega\) proves the total-variation bound. Combining this with the
RFF rate proves the final statement.
\end{proof}

Retain the augmented latent state
\(
\bm z=(\bm w^\top,\bm\varepsilon^\top)^\top
\in\mathcal W\subset\mathbb R^{m+n_t}
\)
from \Cref{app:esa}, and write
\begin{equation*}
h_t^{(m)}(\para,\bm z)
:=
f_{t,m}^{\rm dec}(\para;\bm w,\bm\varepsilon).
\end{equation*}
Let \(\nu_t^{(m)}\) be the compactly supported reference law on
\(\mathcal W\) stipulated in \Cref{ass:oa_esa}. Define
\begin{equation*}
\Pi_{\tau,t}^{(m)}(\dee\para,\dee\bm z)
=
\pi_\tau\bigl(
\dee\para\mid h_t^{(m)}(\cdot,\bm z)
\bigr)
\nu_t^{(m)}(\dee\bm z),
\end{equation*}
using the AO notation of the main article. Let
\begin{equation*}
\pi_{\tau,t}^{(m),\mathcal W}
:=
\{\Pi_{\tau,t}^{(m)}\}_{\para},
\qquad
\eta_{m,\mathcal W}
:=
\TV\!\left\{
\pi_{\tau,t}^{(m),\mathcal W},
\pi_{\tau,t}^{(m)}(\cdot\mid\Xi_m)
\right\}.
\end{equation*}
Let \(K_{\tau,t}^{(m)}\) denote one sweep of the exact-auxiliary version of
the ESA algorithm in the main article, applied to this finite-dimensional
state.

\begin{theorem}
\label{thm:oa_esa}
Under the standing GP setup of the main article and \Cref{ass:oa_esa}, for
every fixed \(\tau>0\) and \(m\ge1\),
\(K_{\tau,t}^{(m)}\) leaves \(\Pi_{\tau,t}^{(m)}\) invariant and is uniformly
ergodic. In particular, there exists \(\delta_{\tau,t,m}>0\), depending on the
fixed \((t,\mathcal D_t,\Xi_m)\), such that
\(
K_{\tau,t}^{(m)}(\bm x,\cdot)
\ge
\delta_{\tau,t,m}\Pi_{\tau,t}^{(m)}(\cdot)\),
\(\forall \bm x\in\Omega\times\mathcal W,
\)
and therefore
\begin{equation*}
    \sup_{\bm x\in\Omega\times\mathcal W}
\TV\!\left\{
\{K_{\tau,t}^{(m)}\}^{s}(\bm x,\cdot),
\Pi_{\tau,t}^{(m)}
\right\}
\le
(1-\delta_{\tau,t,m})^s,
\qquad
s\ge1 .
\end{equation*}
\end{theorem}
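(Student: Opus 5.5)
The proof has two parts: invariance of $\Pi_{\tau,t}^{(m)}$ under each block update, and a Doeblin minorisation for the full sweep, from which uniform ergodicity follows.

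\emph{Invariance.} Write $Z(\bm z):=\mathcal Z_\tau\{h_t^{(m)}(\cdot,\bm z)\}$. By (E1), $|h_t^{(m)}|\le M_t^{(m)}$, so
\[
\lambda(\Omega)e^{-M_t^{(m)}/\tau}\le Z(\bm z)\le\lambda(\Omega)e^{M_t^{(m)}/\tau}.
\]
The target therefore has a joint density with respect to $\bar\lambda\otimes\nu_t^{(m)}$ that is bounded above and below by positive constants, namely
\[
\bar\pi(\para,\bm z)\propto\exp\{h_t^{(m)}(\para,\bm z)/\tau\}/Z(\bm z).
\]
The $\para$-update is a Metropolis--Hastings step targeting $\pi_\tau(\cdot\mid h(\cdot,\bm z))$ with $\bm z$ held fixed, so it preserves every conditional and hence $\Pi$. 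For the $\bm z$-exchange update with an exact auxiliary draw (E3), I would use the standard argument of \citet{murray2006mcmc}. Consider the extended target on $(\para,\bm z,\bm z',\bm u)$, with proposal $q_{\bm z}(\bm z'\mid\bm z)$ and auxiliary $\bm u\sim\pi_\tau(\cdot\mid h(\cdot,\bm z'))$. The unknown $Z(\bm z)$ and $Z(\bm z')$ cancel in the ratio, and the acceptance probability $\alpha_f$ is exactly the Metropolis--Hastings ratio of this augmented move. Detailed balance with respect to $\Pi$ then follows by integrating out $\bm u$. Composing the two kernels preserves invariance.

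\emph{Minorisation.} Each sweep contains at least one $\para$-move and one $\bm z$-move. I would bound each move's kernel below, counting only its accepted-move component.
\begin{itemize}
\item For the $\para$-move, the acceptance ratio is at least
\[
e^{-2M/\tau}\,\underline q_{\para}/\bar q_{\para}=:a_{\para}.
\]
The kernel therefore dominates
\[
a_{\para}\,q_{\para}(\para'\mid\para)\,\bar\lambda(\dee\para')\ge a_{\para}\underline q_{\para}\,\bar\lambda(\dee\para').
\]
\item For the exchange move, $\alpha_f$ is at least
\[
\frac{\underline q_{\bm z}}{\bar q_{\bm z}}\,e^{-4M/\tau}=:a_{\bm z},
\]
uniformly in $\bm u$. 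Its kernel dominates $a_{\bm z}\underline q_{\bm z}\,\nu_t^{(m)}(\dee\bm z')$ with $\para$ unchanged.
\end{itemize}
Composing the two moves, in whichever order the sweep uses and with any extra moves handled by the same bounds (each extra move contributes at least its own lower constant times a probability measure), gives
\[
K(\bm x,\cdot)\ge c\,(\bar\lambda\otimes\nu_t^{(m)})(\cdot).
\]
If the $\bm z$-move comes after the $\para$-move, I would note that the $\bm z$-minorisation does not depend on $\para$, so the product measure still emerges. Because the $\Pi$-density with respect to $\bar\lambda\otimes\nu_t^{(m)}$ is bounded above by
\[
\frac{e^{2M/\tau}}{\lambda(\Omega)}\cdot\lambda(\Omega)=e^{2M/\tau},
\]
we get $\bar\lambda\otimes\nu_t^{(m)}\ge e^{-2M/\tau}\,\Pi$. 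This yields $\delta_{\tau,t,m}=c\,e^{-2M/\tau}>0$.

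\emph{Conclusion and the main obstacle.} With the Doeblin condition in hand, the standard coupling argument gives
\[
\sup_{\bm x}\TV\{K^s(\bm x,\cdot),\Pi\}\le(1-\delta_{\tau,t,m})^s.
\]
I expect the hardest step to be the careful verification of detailed balance for the exchange kernel on the general state space. This requires writing the joint law of $(\para,\bm z,\bm z',\bm u)$ with respect to a symmetric reference measure and checking that $\alpha_f$ is exactly the Metropolis--Hastings ratio of the augmented move. The minorisation step is routine once the boundedness of $h$ from (E1) is used throughout.
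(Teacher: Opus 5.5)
Your proposal is correct and follows the paper's proof essentially step for step. The steps are: Metropolis--Hastings and exchange-move invariance, blockwise minorisation of one sweep by $\bar\lambda\otimes\nu_t^{(m)}$, conversion to a minorisation by $\Pi_{\tau,t}^{(m)}$ through the density bound $e^{2M_t^{(m)}/\tau}$, and Doeblin's condition. The only difference is that your $\para$-move constant carries $e^{-2M_t^{(m)}/\tau}$ instead of the paper's $e^{-4M_t^{(m)}/\tau}$. This is valid, and slightly sharper, because $\mathcal Z_\tau$ cancels in the Metropolis--Hastings ratio.
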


\begin{proof}
The \(\para\)-move is an ordinary Metropolis-Hastings update and is reversible
with respect to the conditional law
\(\pi_\tau\bigl(\cdot\mid h_t^{(m)}(\cdot,\bm z)\bigr)\). For the
\(\bm z\)-move, consider the augmented proposal obtained by adding
\((\bm z',\bm u)\), where
\(
\bm z'\sim q_{\bm z}(\cdot\mid\bm z)\nu_t^{(m)}(\dee\bm z')\),
\(
\bm u\sim\pi_\tau\bigl(\cdot\mid h_t^{(m)}(\cdot,\bm z')\bigr).
\)
Swapping \(\bm z\) and \(\bm z'\) gives exactly the exchange ratio in the ESA
algorithm of the main article; the factors
\(\mathcal Z_\tau\bigl(h_t^{(m)}(\cdot,\bm z)\bigr)\) and
\(\mathcal Z_\tau\bigl(h_t^{(m)}(\cdot,\bm z')\bigr)\) cancel. This is the standard
exchange argument \citep{murray2006mcmc}. Hence the \(\bm z\)-move is
reversible with
respect to the correct conditional law, and the composition leaves
\(\Pi_{\tau,t}^{(m)}\) invariant.

It remains to prove uniform ergodicity. Let
\(
\mu_{\rm ref}^{(m)}:=\bar\lambda\otimes\nu_t^{(m)}.
\)
By \Cref{ass:oa_esa},
\begin{equation*}
    |h_t^{(m)}(\para,\bm z)|\le M_t^{(m)},
\quad
\forall(\para,\bm z)\in\Omega\times\mathcal W .
\end{equation*}
For the \(\para\)-move, the conditional density of
\(\pi_\tau\bigl(\cdot\mid h_t^{(m)}(\cdot,\bm z)\bigr)\) with respect to
\(\bar\lambda\) is bounded
above and below by \(\exp\{2M_t^{(m)}/\tau\}\) and
\(\exp\{-2M_t^{(m)}/\tau\}\), respectively. Hence the one-step \(\para\)-move kernel minorises \(\bar\lambda\) with constant
\begin{equation*}
    \varepsilon_{\para,\tau,t,m}
=
\underline q_{\para}^2
\exp\{-4M_t^{(m)}/\tau\}/\bar q_{\para}
>0.
\end{equation*}
For the exchange move, the one-step exchange kernel minorises
\(\nu_t^{(m)}\) with constant
\begin{equation*}
    \varepsilon_{\bm z,\tau,t,m}
=
\underline q_{\bm z}^2
\exp\{-4M_t^{(m)}/\tau\}/\bar q_{\bm z}
>0.
\end{equation*}
Thus one ESA sweep minorises \(\mu_{\rm ref}^{(m)}\) with constant
\(
\varepsilon_{\tau,t,m}
=
\varepsilon_{\para,\tau,t,m}
\varepsilon_{\bm z,\tau,t,m}.
\)

The density of \(\Pi_{\tau,t}^{(m)}\) with respect to \(\mu_{\rm ref}^{(m)}\) is bounded
above by \(\exp\{2M_t^{(m)}/\tau\}\), because
\begin{equation*}
    \mathcal Z_\tau\bigl(h_t^{(m)}(\cdot,\bm z)\bigr)
\ge
\lambda(\Omega)\exp\{-M_t^{(m)}/\tau\}.
\end{equation*}
Therefore,
\(
\mu_{\rm ref}^{(m)}(A)
\ge
\exp\{-2M_t^{(m)}/\tau\}\Pi_{\tau,t}^{(m)}(A)
\)
for every measurable \(A\subseteq\Omega\times\mathcal W\), and hence
\begin{equation*}
    K_{\tau,t}^{(m)}(\bm x,A)
\ge
\delta_{\tau,t,m}\Pi_{\tau,t}^{(m)}(A),
\qquad
\delta_{\tau,t,m}
=
\varepsilon_{\tau,t,m}
\exp\{-2M_t^{(m)}/\tau\}
>0.
\end{equation*}
Doeblin's condition gives the stated TV convergence
\citep{meyn2009markov}.
\end{proof}

\begin{corollary}
\label{cor:oa_esa_to_exact_ao}
Under the standing GP setup of the main article and
\Cref{ass:oa_rff,ass:oa_esa}, for fixed
\((t,\mathcal D_t,\tau)\), any initial state
\(\bm x\in\Omega\times\mathcal W\), and any \(s\ge1\),
\begin{equation*}
\TV\!\left\{
\{K_{\tau,t}^{(m)}\}^{s}(\bm x,\cdot)_{\para},
\pi_{\tau,t}
\right\}
\le
(1-\delta_{\tau,t,m})^s
+
\eta_{m,\mathcal W}
+
\TV\!\left\{
\pi_{\tau,t}^{(m)}(\cdot\mid\Xi_m),
\pi_{\tau,t}
\right\}.
\end{equation*}
Consequently,
\begin{equation*}
\TV\!\left\{
\{K_{\tau,t}^{(m)}\}^{s}(\bm x,\cdot)_{\para},
\pi_{\tau,t}
\right\}
\le
(1-\delta_{\tau,t,m})^s
+
\eta_{m,\mathcal W}
+
C_{\tau,t}\varepsilon_m .
\end{equation*}
Since \(\varepsilon_m=O_{\Pr_{\Xi_m}}(m^{-1/2})\), the final term is
\(O_{\Pr_{\Xi_m}}(m^{-1/2})\).
\end{corollary}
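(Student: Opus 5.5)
The bound follows from the triangle inequality for TV, splitting the ESA output law into three pieces. With $\bm x$ the initial state and $\{K_{\tau,t}^{(m)}\}^{s}(\bm x,\cdot)_{\para}$ the $\para$-marginal after $s$ sweeps, I insert the two intermediate laws $\pi_{\tau,t}^{(m),\mathcal W}=\{\Pi_{\tau,t}^{(m)}\}_{\para}$ and $\pi_{\tau,t}^{(m)}(\cdot\mid\Xi_m)$. This gives
\begin{align*}
\TV\!\left\{\{K_{\tau,t}^{(m)}\}^{s}(\bm x,\cdot)_{\para},\pi_{\tau,t}\right\}
&\le
\TV\!\left\{\{K_{\tau,t}^{(m)}\}^{s}(\bm x,\cdot)_{\para},\pi_{\tau,t}^{(m),\mathcal W}\right\}
+\eta_{m,\mathcal W}\\
&\quad+\TV\!\left\{\pi_{\tau,t}^{(m)}(\cdot\mid\Xi_m),\pi_{\tau,t}\right\}.
\end{align*}
The middle term is $\eta_{m,\mathcal W}$ by definition.

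For the first term, I use that marginalisation cannot increase TV. The map $\gamma\mapsto\gamma_{\para}$ is pushforward under the coordinate projection, and for any measurable $A\subseteq\Omega$,
\[
|\gamma_1(A\times\mathcal W)-\gamma_2(A\times\mathcal W)|\le\TV(\gamma_1,\gamma_2).
\]
Hence the first term is at most $\TV\{\{K_{\tau,t}^{(m)}\}^{s}(\bm x,\cdot),\Pi_{\tau,t}^{(m)}\}$. By \Cref{thm:oa_esa}, uniform ergodicity under \Cref{ass:oa_esa}, this is at most $(1-\delta_{\tau,t,m})^s$ uniformly in $\bm x$. That proves the first display.

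For the second display, I bound the last term by $C_{\tau,t}\varepsilon_m$ using \Cref{prop:oa_rff_ao}, which needs \Cref{ass:oa_rff}. The stochastic-order claim then follows from $\varepsilon_m=O_{\Pr_{\Xi_m}}(m^{-1/2})$, with $C_{\tau,t}$ deterministic given $(t,\mathcal D_t,\tau)$.

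The only delicate point is consistency of objects across the two prior results. \Cref{thm:oa_esa} is stated for the compactly supported reference law $\nu_t^{(m)}$ on $\mathcal W$, whereas \Cref{prop:oa_rff_ao} uses the Gaussian latent law. The term $\eta_{m,\mathcal W}$ exists precisely to bridge these two, so I will check that every TV is taken between probability measures on the same space, $\Omega$, and conditional on the same fixed $\Xi_m$. Beyond that bookkeeping, I expect no real obstacle.
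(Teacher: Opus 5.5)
Your proposal is correct and matches the paper's proof essentially line for line. The paper uses the same triangle inequality through $\pi_{\tau,t}^{(m),\mathcal W}$ and $\pi_{\tau,t}^{(m)}(\cdot\mid\Xi_m)$, bounds the first term by non-expansiveness of TV under marginalisation plus \Cref{thm:oa_esa}, the second by the definition of $\eta_{m,\mathcal W}$, and the third by \Cref{prop:oa_rff_ao}; your explicit $\gamma(A\times\mathcal W)$ justification of the marginalisation step just spells out what the paper asserts in one line.
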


\begin{proof}
By the triangle inequality,
\begin{align*}
\TV\!\left\{
\{K_{\tau,t}^{(m)}\}^{s}(\bm x,\cdot)_{\para},
\pi_{\tau,t}
\right\}
&\le
\TV\!\left\{
\{K_{\tau,t}^{(m)}\}^{s}(\bm x,\cdot)_{\para},
\pi_{\tau,t}^{(m),\mathcal W}
\right\}
+
\TV\!\left\{
\pi_{\tau,t}^{(m),\mathcal W},
\pi_{\tau,t}^{(m)}(\cdot\mid\Xi_m)
\right\}
\\
&\quad+
\TV\!\left\{
\pi_{\tau,t}^{(m)}(\cdot\mid\Xi_m),
\pi_{\tau,t}
\right\}.
\end{align*}
Taking marginals cannot increase total variation distance, so the first term is
bounded by \((1-\delta_{\tau,t,m})^s\) by
\Cref{thm:oa_esa}. The second term is
\(\eta_{m,\mathcal W}\) by definition. The third term is controlled by
\Cref{prop:oa_rff_ao}.
\end{proof}

\begin{remark}
\label{rem:oa_esa_compact_idealization}
The result is levelwise and assumes an exact exchange auxiliary draw. The
compact \(\mathcal W\) condition is a proof device for Doeblin
minorisation; for a compact truncation of the Gaussian latent law,
\(\eta_{m,\mathcal W}\) records the corresponding truncation error. The
practical noncompact latent state and finite inner chain introduce errors
not controlled by this result.
\end{remark}

\vskip 0.2in
\bibliography{bibliography}

@book{wendland2004scattered,
  author = {Wendland, Holger},
  title = {{Scattered Data Approximation}},
  year = {2004},
  volume = {17},
  publisher = {Cambridge University Press},
}

@article{srinivas2012information,
  author = {Srinivas, Niranjan and Krause, Andreas and Kakade, Sham M. and Seeger, Matthias W.},
  title = {{Information-Theoretic Regret Bounds for Gaussian Process Optimization in the Bandit Setting}},
  journal = {IEEE Trans. Inf. Theory},
  year = {2012},
  volume = {58},
  number = {5},
  pages = {3250--3265},
}

@article{inla,
  author = {H{\aa}vard Rue and Sara Martino and Nicolas Chopin},
  title = {{Approximate Bayesian Inference for Latent Gaussian Models by Using Integrated Nested Laplace Approximations}},
  journal = {J. R. Stat. Soc. Ser. B},
  year = {2009},
  volume = {71},
  number = {2},
  pages = {319--392},
}

@article{aghqtheory,
  author = {Bilodeau, Blair and Stringer, Alex and Tang, Yanbo},
  title = {{Stochastic Convergence Rates and Applications of Adaptive Quadrature in Bayesian Inference}},
  journal = {J. Am. Stat. Assoc.},
  year = {2024},
  volume = {119},
  number = {545},
  pages = {690--700},
}

@article{bivand2014approximate,
  author = {Bivand, Roger S and G{\'o}mez-Rubio, Virgilio and Rue, H{\aa}vard},
  title = {{Approximate Bayesian Inference for Spatial Econometrics Models}},
  journal = {Spat. Stat.},
  year = {2014},
  volume = {9},
  pages = {146--165},
}

@article{gomez2018markov,
  author = {G{\'o}mez-Rubio, Virgilio and Rue, H{\aa}vard},
  title = {{Markov Chain Monte Carlo with the Integrated Nested Laplace Approximation}},
  journal = {Stat. Comput.},
  year = {2018},
  volume = {28},
  pages = {1033--1051},
}

@article{Bachl_2019,
  author = {Fabian E. Bachl and Finn Lindgren and David L. Borchers and Janine B. Illian},
  title = {{{inlabru}: An {R} Package for {Bayesian} Spatial Modelling from Ecological Survey Data}},
  journal = {Methods Ecol. Evol.},
  year = {2019},
  volume = {10},
  pages = {760--766},
}

@article{Li_2022,
  author = {Dayi Li and Gwendolyn M. Eadie and Roberto Abraham and Patrick E. Brown and William E. Harris and Steven R. Janssens and Aaron J. Romanowsky and Pieter van Dokkum and Shany Danieli},
  title = {{Light from the Darkness: Detecting Ultra-Diffuse Galaxies in the Perseus Cluster through Over-Densities of Globular Clusters with a Log-Gaussian Cox Process}},
  journal = {{ApJ}},
  year = {2022},
  volume = {935},
  number = {1},
  pages = {3},
}

@article{VanDokkum2015,
  author = {{van Dokkum}, Pieter and Abraham, Roberto and Merritt, Allison and Zhang, Jielai and Geha, Marla and Conroy, Charlie},
  title = {{Forty-Seven Milky Way-Sized, Extremely Diffuse Galaxies in the Coma Cluster}},
  journal = {{ApJL}},
  year = {2015},
  volume = {798},
  number = {2},
  pages = {L45},
}

@article{Abraham2014,
  author = {Abraham, Roberto G. and van Dokkum, Pieter},
  title = {{Ultra--Low Surface Brightness Imaging with the Dragonfly Telephoto Array}},
  journal = {{PASP}},
  year = {2014},
  volume = {126},
  number = {935},
  pages = {55},
}

@article{russo2014learning,
  author = {Russo, Daniel and Van Roy, Benjamin},
  title = {{Learning to Optimize via Posterior Sampling}},
  journal = {Math. Oper. Res.},
  year = {2014},
  volume = {39},
  number = {4},
  pages = {1221--1243},
}

@article{russo2018tutorial,
  author = {Russo, Daniel J. and Van Roy, Benjamin and Kazerouni, Abbas and Osband, Ian and Wen, Zheng},
  title = {{A Tutorial on Thompson Sampling}},
  journal = {Found. Trends Mach. Learn.},
  year = {2018},
  volume = {11},
  number = {1},
  pages = {1--96},
}

@book{rasmussen2006gaussian,
  author = {Rasmussen, Carl Edward and Williams, Christopher K. I.},
  title = {{Gaussian Processes for Machine Learning}},
  year = {2006},
  publisher = {MIT Press},
  address = {Cambridge, MA},
}

@article{dudley1967sizes,
    author = {Dudley, Richard M.},
    title = {{The Sizes of Compact Subsets of Hilbert Space and Continuity of Gaussian Processes}},
    journal = {J. Funct. Anal.},
    year = {1967},
    volume = {1},
    number = {3},
    pages = {290--330},
}

@book{adler2007random,
  author = {Adler, Robert J. and Taylor, Jonathan E.},
  title = {{Random Fields and Geometry}},
  year = {2007},
  publisher = {Springer},
  address = {New York},
}

@article{li2026boss,
  author = {Li, Dayi and Zhang, Ziang},
  title = {{Bayesian Optimization Sequential Surrogate ({BOSS}) Algorithm: Fast Bayesian Inference for a Broad Class of Bayesian Hierarchical Models}},
  journal = {Comput. Stat. Data Anal.},
  year = {2026},
  volume = {213},
  pages = {108253},
}

@article{wang2018adaptive,
  author = {Wang, Hongqiao and Li, Jinglai},
  title = {{Adaptive Gaussian Process Approximation for Bayesian Inference with Expensive Likelihood Functions}},
  journal = {Neural Comput.},
  year = {2018},
  volume = {30},
  number = {11},
  pages = {3072--3094},
}

@article{kim2025enhancing,
  author = {Kim, Hwanwoo and Sanz-Alonso, Daniel},
  title = {{Enhancing Gaussian Process Surrogates for Optimization and Posterior Approximation via Random Exploration}},
  journal = {SIAM/ASA J. Uncertain. Quantif.},
  year = {2025},
  volume = {13},
  number = {3},
  pages = {1054--1084},
}

@article{pellejeroibanez2020cosmological,
  author = {Pellejero-Iba{\~n}ez, Marcos and Angulo, Raul E. and Aric{\`o}, Giovanni and Zennaro, Matteo and Contreras, Sergio and St{\"u}cker, Jens},
  title = {{Cosmological Parameter Estimation via Iterative Emulation of Likelihoods}},
  journal = {{MNRAS}},
  year = {2020},
  volume = {499},
  number = {4},
  pages = {5257--5268},
}

@article{matsubara2024generalized,
  author = {Matsubara, Takuo and Knoblauch, Jeremias and Briol, Fran\c{c}ois-Xavier and Oates, Chris J.},
  title = {{Generalized Bayesian Inference for Discrete Intractable Likelihood}},
  journal = {J. Am. Stat. Assoc.},
  year = {2024},
  volume = {119},
  number = {547},
  pages = {2345--2355},
}

@article{bingham2024inverse,
  author = {Bingham, Derek and Butler, Troy and Estep, Don},
  title = {{Inverse Problems for Physics-Based Process Models}},
  journal = {Annu. Rev. Stat. Appl.},
  year = {2024},
  volume = {11},
  number = {1},
  pages = {461--482},
}

@article{lan2023spatiotemporal,
  author = {Lan, Shiwei and Li, Shuyi and Pasha, Mirjeta},
  title = {{Bayesian Spatiotemporal Modeling for Inverse Problems}},
  journal = {Stat. Comput.},
  year = {2023},
  volume = {33},
  number = {4},
  pages = {89},
}

@article{zeghal2025sbiwl,
  author = {Zeghal, Justine and Lanzieri, Denise and Lanusse, Fran{\c{c}}ois and Boucaud, Alexandre and Louppe, Gilles and Aubourg, Eric and Bayer, Adrian E. and The LSST Dark Energy Science Collaboration},
  title = {{Simulation-Based Inference Benchmark for Weak Lensing Cosmology}},
  journal = {{A\&A}},
  year = {2025},
  volume = {699},
  pages = {A327},
}

@article{lovell2025learning,
  author = {Lovell, Christopher C. and others},
  title = {{Learning the Universe: Cosmological and Astrophysical Parameter Inference with Galaxy Luminosity Functions and Colours}},
  journal = {{MNRAS}},
  year = {2025},
  volume = {544},
  number = {4},
  pages = {3949--3979},
}

@article{park2018bayesian,
  author = {Park, Jaewoo and Haran, Murali},
  title = {{Bayesian Inference in the Presence of Intractable Normalizing Functions}},
  journal = {J. Am. Stat. Assoc.},
  year = {2018},
  volume = {113},
  number = {523},
  pages = {1372--1390},
}

@article{park2020function,
  author = {Park, Jaewoo and Haran, Murali},
  title = {{A Function Emulation Approach for Doubly Intractable Distributions}},
  journal = {J. Comput. Graph. Stat.},
  year = {2020},
  volume = {29},
  number = {1},
  pages = {66--77},
}

@article{jarvenpaa2021parallel,
  author = {J{\"a}rvenp{\"a}{\"a}, Marko and Gutmann, Michael U. and Vehtari, Aki and Marttinen, Pekka},
  title = {{Parallel Gaussian Process Surrogate Bayesian Inference with Noisy Likelihood Evaluations}},
  journal = {Bayesian Anal.},
  year = {2021},
  volume = {16},
  number = {1},
  pages = {147--178},
}

@article{elgammal2023gpry,
  author = {El Gammal, Jonas and Sch{\"o}neberg, Nils and Torrado, Jes{\'u}s and Fidler, Christian},
  title = {{Fast and Robust Bayesian Inference Using Gaussian Processes with GPry}},
  journal = {{JCAP}},
  year = {2023},
  volume = {2023},
  number = {10},
  pages = {021},
}

@article{gutmann2016bolfi,
  author = {Gutmann, Michael U. and Corander, Jukka},
  title = {{Bayesian Optimization for Likelihood-Free Inference of Simulator-Based Statistical Models}},
  journal = {J. Mach. Learn. Res.},
  year = {2016},
  volume = {17},
  number = {125},
  pages = {1--47},
}

@inproceedings{acerbi2018vbmc,
  author = {Acerbi, Luigi},
  title = {Variational {Bayesian} {Monte Carlo}},
  booktitle = {{NeurIPS 2018}},
  year = {2018},
  pages = {8213--8223},
  editor = {Bengio, Samy and Wallach, Hanna and Larochelle, Hugo and Grauman, Kristen and Cesa-Bianchi, Nicol{\`o} and Garnett, Roman},
  publisher = {Curran Associates, Inc.},
}

@book{settles2012active,
  author = {Settles, Burr},
  title = {{Active Learning}},
  year = {2012},
  volume = {6},
  series = {Synthesis Lectures on Artificial Intelligence and Machine Learning},
  publisher = {Morgan \& Claypool Publishers},
}

@article{surer2024sequential,
  author = {S{\"u}rer, {\"O}zge and Plumlee, Matthew and Wild, Stefan M.},
  title = {{Sequential Bayesian Experimental Design for Calibration of Expensive Simulation Models}},
  journal = {Technometrics},
  year = {2024},
  volume = {66},
  number = {2},
  pages = {157--171},
}

@article{shahriari2016review,
  author = {Shahriari, Bobak and Swersky, Kevin and Wang, Ziyu and Adams, Ryan P. and de Freitas, Nando},
  title = {{Taking the Human Out of the Loop: A Review of Bayesian Optimization}},
  journal = {Proc. IEEE},
  year = {2016},
  volume = {104},
  number = {1},
  pages = {148--175},
}

@article{lartaud2024sequential,
  author = {Paul Lartaud and Philippe Humbert and Josselin Garnier},
  title = {{Solving Bayesian Inverse Problems Using Gaussian Process Regression with Goal-Oriented Active Learning}},
  journal = {Technometrics},
  year = {2026},
  volume = {68},
  number = {1},
  pages = {172--185},
}

@inproceedings{seo2000gaussian,
  author = {Seo, Sambu and Wallat, Marko and Graepel, Thore and Obermayer, Klaus},
  title = {{Gaussian} process regression: active data selection and test point rejection},
  booktitle = {{IJCNN 2000}},
  year = {2000},
  volume = {3},
  pages = {241--246},
}

@book{gramacy2020surrogates,
  author = {Gramacy, Robert B.},
  title = {{Surrogates: Gaussian Process Modeling, Design, and Optimization for the Applied Sciences}},
  year = {2020},
  publisher = {Chapman \& Hall/CRC},
}

@article{kandasamy2017query,
  author = {Kandasamy, Kirthevasan and Schneider, Jeff and P{\'o}czos, Barnab{\'a}s},
  title = {{Query Efficient Posterior Estimation in Scientific Experiments via Bayesian Active Learning}},
  journal = {Artif. Intell.},
  year = {2017},
  volume = {243},
  pages = {45--56},
}

@article{jarvenpaa2019efficient,
  author = {J{\"a}rvenp{\"a}{\"a}, Marko and Gutmann, Michael U. and Pleska, Arijus and Vehtari, Aki and Marttinen, Pekka},
  title = {{Efficient Acquisition Rules for Model-Based Approximate Bayesian Computation}},
  journal = {Bayesian Anal.},
  year = {2019},
  volume = {14},
  number = {2},
  pages = {595--622},
}

@inproceedings{acerbi2019exploration,
  author = {Acerbi, Luigi},
  title = {An exploration of acquisition and mean functions in variational {Bayesian} {Monte Carlo}},
  booktitle = {{AABI 2019}},
  year = {2019},
  volume = {96},
  pages = {1--10},
}

@article{roberts2026surrogatebased,
  author = {Roberts, Andrew Gerard and Dietze, Michael C. and Huggins, Jonathan H.},
  title = {{Surrogate-Based Bayesian Inference: Uncertainty Quantification and Active Learning}},
  journal = {arXiv preprint arXiv:2603.13646},
  year = {2026},
  eprint = {2603.13646},
  archiveprefix = {arXiv},
  primaryclass = {stat.ME},
}

@article{reiser2025uncertainty,
  author = {Reiser, Philipp and Aguilar, Javier Enrique and Guthke, Anneli and B{\"u}rkner, Paul-Christian},
  title = {{Uncertainty Quantification and Propagation in Surrogate-Based Bayesian Inference}},
  journal = {Stat. Comput.},
  year = {2025},
  volume = {35},
  number = {3},
  pages = {66},
}

@article{roberts2026propagating,
  author = {Roberts, Andrew Gerard and Dietze, Michael and Huggins, Jonathan H.},
  title = {{Propagating Surrogate Uncertainty in Bayesian Inverse Problems}},
  journal = {arXiv preprint arXiv:2601.03532},
  year = {2026},
  eprint = {2601.03532},
  archiveprefix = {arXiv},
  primaryclass = {stat.ME},
}

@article{li2025_poisson,
  author = {Li, Dayi and Stringer, Alex and Brown, Patrick E. and Eadie, Gwendolyn M. and Abraham, Roberto G.},
  title = {{Poisson Cluster Process Models for Detecting Ultra-Diffuse Galaxies}},
  journal = {Ann. Appl. Stat.},
  year = {2025},
  volume = {19},
  number = {1},
  pages = {261--285},
}

@article{li2025CDG2,
  author = {Li, Dayi and Liu, Qing and Eadie, Gwendolyn M. and Abraham, Roberto G. and Marleau, Francine R. and Harris, William E. and van Dokkum, Pieter and Romanowsky, Aaron J. and Danieli, Shany and Brown, Patrick E. and Stringer, Alex},
  title = {{Candidate Dark Galaxy-2: Validation and Analysis of an Almost Dark Galaxy in the Perseus Cluster}},
  journal = {{ApJL}},
  year = {2025},
  volume = {986},
  number = {2},
  pages = {L18},
}

@article{li2025mathpop,
  author = {Li, Dayi and Eadie, Gwendolyn M. and Brown, Patrick E. and Harris, William E. and Abraham, Roberto G. and van Dokkum, Pieter and Janssens, Steven R. and Berek, Samantha C. and Danieli, Shany and Romanowsky, Aaron J. and Speagle, Joshua S.},
  title = {{Discovery of Two Ultra-Diffuse Galaxies with Unusually Bright Globular Cluster Luminosity Functions via a Mark-Dependently Thinned Point Process (MATHPOP)}},
  journal = {{ApJ}},
  year = {2025},
  volume = {984},
  number = {2},
  pages = {147},
}

@article{moller1998log,
  author = {M{\o}ller, Jesper and Syversveen, Anne Randi and Waagepetersen, Rasmus Plenge},
  title = {{Log Gaussian Cox Processes}},
  journal = {Scand. J. Stat.},
  year = {1998},
  volume = {25},
  number = {3},
  pages = {451--482},
}

@article{harris2020piper,
  author = {Harris, William E. and Brown, Rachel A. and Durrell, Patrick R. and Romanowsky, Aaron J. and Blakeslee, John and Brodie, Jean and Janssens, Steven and Lisker, Thorsten and Okamoto, Sakurako and Wittmann, Carolin},
  title = {{The PIPER Survey. I. An Initial Look at the Intergalactic Globular Cluster Population in the Perseus Cluster}},
  journal = {{ApJ}},
  year = {2020},
  volume = {890},
  number = {2},
  pages = {105},
}

@article{roberts2009examples,
  author = {Roberts, Gareth O. and Rosenthal, Jeffrey S.},
  title = {{Examples of Adaptive MCMC}},
  journal = {J. Comput. Graph. Stat.},
  year = {2009},
  volume = {18},
  number = {2},
  pages = {349--367},
}

@incollection{delahaye2019simulated,
  author = {Delahaye, Daniel and Chaimatanan, Supatcha and Mongeau, Marcel},
  title = {{Simulated Annealing: From Basics to Applications}},
  booktitle = {Handbook of Metaheuristics},
  year = {2019},
  volume = {272},
  pages = {1--35},
  editor = {Gendreau, Michel and Potvin, Jean-Yves},
  series = {International Series in Operations Research \& Management Science},
  publisher = {Springer},
  address = {Cham},
  edition = {3},
}

@article{van_Dokkum_2024,
  author = {Pieter van Dokkum and Dayi Li and Roberto Abraham and Shany Danieli and Gwendolyn M. Eadie and William E. Harris and Aaron J. Romanowsky},
  title = {{Deep HST/UVIS Imaging of the Candidate Dark Galaxy CDG-1}},
  journal = {{RNAAS}},
  year = {2024},
  volume = {8},
  number = {5},
  pages = {135},
}

@article{berild2022importance,
  author = {Berild, Martin Outzen and Martino, Sara and G{\'o}mez-Rubio, Virgilio and Rue, H{\aa}vard},
  title = {{Importance Sampling with the Integrated Nested Laplace Approximation}},
  journal = {J. Comput. Graph. Stat.},
  year = {2022},
  volume = {31},
  number = {4},
  pages = {1225--1237},
}

@inproceedings{rahimi2007random,
  author = {Rahimi, Ali and Recht, Benjamin},
  title = {Random features for large-scale kernel machines},
  booktitle = {{NeurIPS 2007}},
  year = {2007},
  pages = {1177--1184},
  publisher = {Curran Associates, Inc.},
}

@inproceedings{sutherland2015error,
  author = {Sutherland, Dougal J. and Schneider, Jeff},
  title = {On the error of random {Fourier} features},
  booktitle = {{UAI 2015}},
  year = {2015},
  pages = {862--871},
  publisher = {AUAI Press},
}

@article{matheron1973intrinsic,
  author = {Matheron, Georges},
  title = {{The Intrinsic Random Functions and Their Applications}},
  journal = {Adv. Appl. Probab.},
  year = {1973},
  volume = {5},
  number = {3},
  pages = {439--468},
}

@article{wilson2021pathwise,
  author = {Wilson, James T. and Borovitskiy, Viacheslav and Terenin, Alexander and Mostowsky, Peter and Deisenroth, Marc Peter},
  title = {{Pathwise Conditioning of Gaussian Processes}},
  journal = {J. Mach. Learn. Res.},
  year = {2021},
  volume = {22},
  number = {105},
  pages = {1--47},
}

@inproceedings{murray2006mcmc,
  author = {Murray, Iain and Ghahramani, Zoubin and MacKay, David J. C.},
  title = {{MCMC} for doubly-intractable distributions},
  booktitle = {{UAI 2006}},
  year = {2006},
  pages = {359--366},
  publisher = {AUAI Press},
}

@book{meyn2009markov,
  author = {Meyn, Sean P. and Tweedie, Richard L.},
  title = {{Markov Chains and Stochastic Stability}},
  year = {2009},
  publisher = {Cambridge University Press},
  edition = {2},
}

@article{picheny2010adaptive,
  author = {Picheny, Victor and Ginsbourger, David and Roustant, Olivier and Haftka, Raphael T. and Kim, Nam-Ho},
  title = {{Adaptive Designs of Experiments for Accurate Approximation of a Target Region}},
  journal = {J. Mech. Des.},
  year = {2010},
  volume = {132},
  number = {7},
  pages = {071008},
}

@article{bect2019supermartingale,
  author = {Bect, Julien and Bachoc, Fran{\c{c}}ois and Ginsbourger, David},
  title = {{A Supermartingale Approach to Gaussian Process Based Sequential Design of Experiments}},
  journal = {Bernoulli},
  year = {2019},
  volume = {25},
  number = {4A},
  pages = {2883--2919},
}

@article{Rust1987,
  author = {Rust, John},
  title = {{Optimal Replacement of {GMC} Bus Engines: An Empirical Model of Harold Zurcher}},
  journal = {Econometrica},
  year = {1987},
  volume = {55},
  number = {5},
  pages = {999--1033},
}

@article{ImaiJainChing2009,
  author = {Imai, Susumu and Jain, Neelam and Ching, Andrew T.},
  title = {{Bayesian Estimation of Dynamic Discrete Choice Models}},
  journal = {Econometrica},
  year = {2009},
  volume = {77},
  number = {6},
  pages = {1865--1899},
}

@article{Norets2012,
  author = {Norets, Andriy},
  title = {{Estimation of Dynamic Discrete Choice Models Using Artificial Neural Network Approximations}},
  journal = {Econom. Rev.},
  year = {2012},
  volume = {31},
  number = {1},
  pages = {84--106},
}

@article{NoretsShimizu2024,
  author = {Norets, Andriy and Shimizu, Kenichi},
  title = {{Semiparametric Bayesian Estimation of Dynamic Discrete Choice Models}},
  journal = {J. Econom.},
  year = {2024},
  volume = {238},
  number = {2},
  pages = {105642},
}

@article{Norets2009,
  author = {Norets, Andriy},
  title = {{Inference in Dynamic Discrete Choice Models with Serially Correlated Unobserved State Variables}},
  journal = {Econometrica},
  year = {2009},
  volume = {77},
  number = {5},
  pages = {1665--1682},
}

@article{Chen2025,
  author = {Chen, Ertian},
  title = {{Model-Adaptive Approach to Dynamic Discrete Choice Models with Large State Spaces}},
  journal = {arXiv preprint arXiv:2501.18746},
  year = {2025},
  eprint = {2501.18746},
  archiveprefix = {arXiv},
  primaryclass = {econ.EM},
}

@incollection{Rust1994,
  author = {Rust, John},
  title = {{Structural Estimation of Markov Decision Processes}},
  booktitle = {Handbook of Econometrics},
  year = {1994},
  volume = {4},
  pages = {3081--3143},
  editor = {Engle, Robert F. and McFadden, Daniel L.},
  publisher = {Elsevier},
  address = {Amsterdam},
  chapter = {51},
}

@article{AguirregabiriaMira2010,
  author = {Aguirregabiria, Victor and Mira, Pedro},
  title = {{Dynamic Discrete Choice Structural Models: A Survey}},
  journal = {J. Econom.},
  year = {2010},
  volume = {156},
  number = {1},
  pages = {38--67},
}

@article{whitehouse2023sublinear,
  author = {Whitehouse, Justin and Wu, Zhiwei Steven and Ramdas, Aaditya},
  title = {{On the Sublinear Regret of {GP-UCB}}},
  journal = {arXiv preprint arXiv:2307.07539},
  year = {2023},
  eprint = {2307.07539},
  archiveprefix = {arXiv},
  primaryclass = {cs.LG},
}

@article{wilson2024stopping,
  author = {Wilson, James T.},
  title = {{Stopping Bayesian Optimization with Probabilistic Regret Bounds}},
  journal = {arXiv preprint arXiv:2402.16811},
  year = {2024},
}

@article{roberts1996exponential,
  author = {Roberts, Gareth O. and Tweedie, Richard L.},
  title = {{Exponential convergence of Langevin distributions and their discrete approximations}},
  journal = {Bernoulli},
  year = {1996},
  volume = {2},
  number = {4},
  pages = {341--363},
}

@article{roberts1998optimal,
  author = {Roberts, Gareth O. and Rosenthal, Jeffrey S.},
  title = {{Optimal scaling of discrete approximations to Langevin diffusions}},
  journal = {J. R. Stat. Soc. Ser. B},
  year = {1998},
  volume = {60},
  number = {1},
  pages = {255--268},
}

@article{haario2001adaptive,
  author = {Haario, Heikki and Saksman, Eero and Tamminen, Johanna},
  title = {{An adaptive Metropolis algorithm}},
  journal = {Bernoulli},
  year = {2001},
  volume = {7},
  number = {2},
  pages = {223--242},
}

@book{nocedal2006numerical,
  author = {Nocedal, Jorge and Wright, Stephen J.},
  title = {{Numerical Optimization}},
  year = {2006},
  publisher = {Springer},
  edition = {2},
}

@article{jarvenpaa2024approximate,
  author = {J{\"a}rvenp{\"a}{\"a}, Marko and Corander, Jukka},
  title = {{Approximate Bayesian Inference from Noisy Likelihoods with Gaussian Process Emulated MCMC}},
  journal = {J. Mach. Learn. Res.},
  year = {2024},
  volume = {25},
  number = {366},
  pages = {1--55},
}

@article{hwang1980laplace,
  author = {Hwang, Chii-Ruey},
  title = {{Laplace's Method Revisited: Weak Convergence of Probability Measures}},
  journal = {Ann. Probab.},
  year = {1980},
  volume = {8},
  number = {6},
  pages = {1177--1182},
}

@inproceedings{iwazaki2025improvedregret,
  author = {Iwazaki, Shogo},
  title = {Improved regret bounds for {Gaussian} process upper confidence bound in {Bayesian} optimization},
  booktitle = {{NeurIPS 2025}},
  year = {2025},
  volume = {38},
}

@article{santin2016approximation,
  author = {Santin, Gabriele and Schaback, Robert},
  title = {{Approximation of Eigenfunctions in Kernel-Based Spaces}},
  journal = {Adv. Comput. Math.},
  year = {2016},
  volume = {42},
  number = {4},
  pages = {973--993},
}

@inproceedings{oliveira2021noregret,
  author = {Oliveira, Rafael and Ott, Lionel and Ramos, Fabio},
  title = {No-regret approximate inference via {Bayesian} optimisation},
  booktitle = {{UAI 2021}},
  year = {2021},
  volume = {161},
  pages = {2082--2092},
  series = {{PMLR}},
  publisher = {PMLR},
}

@article{villani2024posterior,
  author = {Villani, Paolo and Andr{\'e}s-Arcones, Daniel and Unger, J{\"o}rg F. and Weiser, Martin},
  title = {{Posterior Sampling with Adaptive Gaussian Processes in Bayesian Parameter Identification}},
  journal = {arXiv preprint arXiv:2411.17858},
  year = {2024},
}

@inproceedings{helin2024introduction,
  author = {Helin, Tapio and Stuart, Andrew M. and Teckentrup, Aretha L. and Zygalakis, Konstantinos C.},
  title = {Introduction to {Gaussian} process regression in {Bayesian} inverse problems, with new results on experimental design for weighted error measures},
  booktitle = {{MCQMC 2024}},
  year = {2024},
  volume = {460},
  pages = {49--79},
  series = {Springer Proceedings in Mathematics \& Statistics},
  publisher = {Springer},
}

@article{llorente2020adaptive,
  author = {Llorente, Fernando and Martino, Luca and Elvira, V{\'i}ctor and Delgado-G{\'o}mez, David and L{\'o}pez-Santiago, Javier},
  title = {{Adaptive Quadrature Schemes for Bayesian Inference via Active Learning}},
  journal = {IEEE Access},
  year = {2020},
  volume = {8},
  pages = {208462--208483},
}

@article{vehtari2021rank,
  author  = {Vehtari, Aki and Gelman, Andrew and Simpson, Daniel and
             Carpenter, Bob and B{\"u}rkner, Paul-Christian},
  title   = {Rank-Normalization, Folding, and Localization:
             An Improved {$\widehat{R}$} for Assessing Convergence
             of {MCMC} (with Discussion)},
  journal = {Bayesian Anal.},
  year    = {2021},
  volume  = {16},
  number  = {2},
  pages   = {667--718},
}

\end{document}